\renewcommand\path[1]{{\normalfont\small\detokenize{#1}}}
\theoremstyle{plain}
\newtheorem{fact}[theorem]{Fact}
\newtheorem{observation}[theorem]{Observation}
\newcommand{\cc}[1]{\ensuremath{\mathrm{#1}}}
\newcommand{\op}[1]{\ensuremath{\operatorname{#1}}}
\newcommand{\W}{\cc{W[1]}}
\newcommand{\Wtwo}{\cc{W[2]}}
\newcommand{\Atwo}{\cc{A[2]}}
\newcommand{\poly}{\op{poly}}
\newcommand{\N}{\mathbb{N}}
\DeclareDocumentCommand{\restrict}{O{}}{\mathord{\restriction}_{#1}}
\newcommand{\homs}[2]{\ensuremath{\mathsf{Hom}(#1 \to #2)}}
\newcommand{\surj}[2]{\ensuremath{\mathsf{Surj}(#1 \to #2)}}
\newcommand{\cfhoms}[2]{\ensuremath{\mathsf{cf}\textsf{-}\mathsf{Hom}(#1 \to #2)}}
\newcommand{\cphoms}[2]{\ensuremath{\mathsf{cp}\textsf{-}\mathsf{Hom}(#1 \to #2)}}
\newcommand{\partinj}[2]{\ensuremath{\mathsf{PartInj}(#1 \to #2)}}
\newcommand{\homsprob}{\ensuremath{\mathsf{Hom}}}
\newcommand{\cfhomsprob}{\ensuremath{\mathsf{cf}\textsf{-}\mathsf{Hom}}}
\newcommand{\cphomsprob}{\ensuremath{\mathsf{cp}\textsf{-}\mathsf{Hom}}}
\newcommand{\partinjprob}{\ensuremath{\mathsf{PartInj}}}
\newcommand{\exclass}{\ensuremath{\Delta}}
\newcommand{\bs}[1]{ \ensuremath{ \left\lbrace #1 \right\rbrace } }
\newcommand{\qgraphof}[1]{Q[#1]}
\newcommand{\labeled}{\text{query }}
\newcommand{\arity}{\mathsf{a}}
\newcommand{\mcA}{\mathcal{A}}
\newcommand{\mcB}{\mathcal{B}}
\newcommand{\mcF}{\mathcal{F}}
\newcommand{\mcG}{\mathcal{G}}
\newcommand{\mcH}{\mathcal{H}}
\newcommand{\aug}{\mathsf{aug}}
\newcommand{\raug}{E_\aug}
\newcommand{\acore}{\mathsf{core}_\mathsf{aug}}
\newcommand{\lmn}{\mathsf{lmn}} 
\newcommand{\supp}{\mathsf{supp}}
\newcommand{\pmc}{\mathsf{p}\text{-}\mathsf{MC}}
\newcommand{\kite}{\text{grate}}
\newcommand{\kkite}[1]{#1\text{-}\kite}
\DeclarePairedDelimiter\abs{\lvert}{\rvert}
\DeclarePairedDelimiter\set{\{}{\}}
\DeclarePairedDelimiterX\setc[2]{\{}{\}}{\,#1 \;|\; #2\,}
\DeclarePairedDelimiterX\parenc[2]{\lparen}{\rparen}{\,#1 \;\delimsize\vert\; #2\,}
\title{Counting Answers to Existential Questions}
\author{Holger Dell}{Cluster of Excellence (MMCI), Saarland Informatics Campus (SIC), Saarbrücken, Germany}{hdell@mmci.uni-saarland.de}{https://orcid.org/0000-0001-8955-0786}{}
\author{Marc Roth}{Cluster of Excellence (MMCI), Saarland Informatics Campus (SIC), Saarbrücken, Germany}{mroth@mmci.uni-saarland.de}{https://orcid.org/0000-0003-3159-9418}{}
\author{Philip Wellnitz}{Max Planck Institute for Informatics, Saarland Informatics Campus (SIC),
Saarbrücken, Germany}{wellnitz@mpi-inf.mpg.de}{https://orcid.org/0000-0002-6482-8478}{Partially funded by the Saarbrücken Graduate School of Computer Science.}
\authorrunning{H. Dell, M. Roth, and P. Wellnitz}
\keywords{conjunctive queries, graph homomorphisms, counting complexity, parameterized complexity, fine-grained complexity}
\begin{document}

\maketitle

\begin{abstract}
	Conjunctive queries select and are expected to return certain tuples from a relational database.
	We study the potentially easier problem of \emph{counting} all selected tuples, rather than enumerating them.
	In particular, we are interested in the problem's parameterized and data complexity, where the query is considered to be small or even fixed, and the database is considered to be large. We identify two structural parameters for conjunctive queries that capture their inherent complexity: The dominating star size and the linked matching number.
	If the \emph{dominating star size} of a conjunctive query is large, then we show that counting solution tuples to the query is at least as hard as counting dominating sets, which yields a fine-grained complexity lower bound under the Strong Exponential Time Hypothesis (SETH) as well as a $\#\Wtwo$-hardness result in parameterized complexity.
  Moreover, if the \emph{linked matching number} of a conjunctive query is large, then we show that the structure of the query is so rich that arbitrary queries up to a certain size can be encoded into it; in the language of parameterized complexity, this essentially establishes a $\#\Atwo$-completeness result.

  Using ideas stemming from Lovász (1967)\nocite{lovasz}, we lift complexity results from the class of conjunctive queries to arbitrary existential or universal formulas that might contain inequalities and negations on constraints over the free variables.
  As a consequence, we obtain a complexity classification that refines and generalizes previous results of Chen, Durand, and Mengel (ToCS 2015; ICDT 2015; PODS 2016)\nocite{DurandM15,ChenM15,ChenM16} for conjunctive queries and of Curticapean and Marx (FOCS 2014)\nocite{CurticapeanM14} for the subgraph counting problem.
  Our proof also relies on graph minors, and we show a strengthening of the Excluded-Grid-Theorem which might be of independent interest: If the linked matching number (and thus the treewidth) is large, then not only can we find a large grid somewhere in the graph, but we can find a large grid whose diagonal has disjoint paths leading into an assumed node-well-linked set.
\end{abstract}

\section{Introduction}
Conjunctive query evaluation is a core problem in database theory.
Using first-order logic, conjunctive queries can be expressed by formulas of the form
\begin{equation}\label{eq:conjunctivequeries}
	x_1\dots x_k
	\exists y_1\dots \exists y_\ell
	( a_1\wedge\dots\wedge a_m )\,,
\end{equation}
where
the $x_i$ are the \emph{free variables},
the $y_i$ are the (existentially) \emph{quantified variables}, and
the $a_i$ are \emph{atomic formulas} (such as edge $E(x_1,y_4)$ or relational $R(x_7,y_3,y_6)$
constraints on the variables).
Conjunctive queries exactly correspond to select-project-join queries; a detailed introduction can be found in the textbook of Abiteboul, Hull, and Vianu~\cite{AbiteboulHV95}.
The \emph{conjunctive query evaluation problem} is given a conjunctive query and a relational database, and is tasked to compute the set of all assignments to the free variables such that the formula is satisfied.
Since enumerating all solution tuples $s_1\dots s_k$ can be costly for reasons not inherent to the problem's complexity, it is more meaningful to consider the decision problem (Does there exist a solution tuple?) or the more general counting problem (How many solution tuples exist?).
The decision problem is equivalent to setting~$k=0$ and also called the constraint satisfaction problem (CSP).
In this paper, we study the problem of counting the number of all solution tuples for conjunctive and more general queries.

Perhaps the most na\"ive way to study the complexity of this problem is via its \emph{combined complexity}, in which both the query and the database are considered to be worst-case inputs.
Since conjunctive queries generalize the clique problem on graphs, the problem is clearly $\cc{NP}$-hard in this setting~\cite{ChandraM77}.
In the real world, however, the database is much larger than the query, and thus the combined complexity may fixate on instances that we do not care about.
Instead, we consider two other models in this paper: the {data complexity} and the {parameterized complexity} of conjunctive query evaluation.

The \emph{data complexity} considers the query to be completely fixed and only the database to be worst-case input.
If the query is fixed, the number of variables $k+\ell$ is a constant, and so the problem is polynomial-time solvable: even the exhaustive search algorithm just needs to try out and check all $n^{k+\ell}$ possible assignments to the variables, where~$n$ is the size of the universe.
Unsurprisingly, exhaustive search is not the best strategy for every query.
For example, Chekuri and Rajaraman~\cite{ChekuriR00} showed that the decision and counting problems can be solved in time $O(n^{t+1})$ where $t$ is the treewidth of the query's Gaifman graph, that is, the graph containing a vertex for every variable and an edge between two vertices whenever the corresponding variables are contained in a common constraint.
Since $t+1$ is typically much smaller than~$k+\ell$, this algorithm is better than exhaustive search.
For each fixed query~$Q$, the guiding question for a fine-grained understanding of data complexity is this:
What is the smallest constant~$c_Q$ such that the query evaluation problem can be solved in time $O(n^{c_Q})$?

\emph{Parameterized complexity} offers a third vantage point from which conjunctive query evaluation can be studied.
Here the query isn't completely fixed, but it's also not completely free either.
Instead, it is assumed that only certain types of queries will be used, meaning that the class of queries that are allowed as input is restricted.
As a hybrid between data complexity and combined complexity, the parameterized complexity of query evaluation is more difficult to study than the combined complexity, but easier than the data complexity, while still offering some insight.
For example, Grohe, Schwentick, and Segoufin~\cite{GroheSS01} used the parameterized complexity approach to show that the treewidth of the Gaifman graph is a necessary structural parameter for determining the complexity of the problem in the sense that there are queries of unbounded treewidth whose query evaluation problem cannot be solved in polynomial time unless the assumption $\cc{FPT}=\cc{W[1]}$ from parameterized complexity fails.

\subsection{Context and Previous Work}\label{sec:prevwork}

When only one constraint type~$E$ of arity two is allowed, the conjunctive query evaluation problem specializes to the graph homomorphism problem: The decision problem (where $k=0$) is given two graphs~$H,G$ to decide whether there is a homomorphism from~$H$ to~$G$.
Dalmau et al.~\cite{DBLP:conf/cp/DalmauKV02} prove that this problem can be solved in polynomial time if the homomorphic core of~$H$ has bounded treewidth, and conversely, Grohe~\cite{homsdicho2} shows that the graph homomorphism problem is~\W-complete even if~$H$ is restricted to be from an arbitrary class of graphs whose homomorphic cores have unbounded treewidth.
Taken together, these two results yield a \emph{dichotomy theorem} for the complexity of detecting graph homomorphisms: Depending on the class of allowed graphs~$H$, the problem is either polynomial-time computable or \W-complete, and in particular there are not infinitely many cases of intermediate complexity.
For the counting problem without quantified variables (where~$\ell=0$), such a dichotomy is also known: Dalmau and Jonsson~\cite{homsdicho1} show that the number of homomorphisms from~$H$ to~$G$ is polynomial-time computable if~$H$ itself has bounded treewidth, and it is~$\#\W$-complete if~$H$ comes from any class of unbounded treewidth.
In the mixed situation when both free and quantified variables may exist (and thus~$k,\ell>0$), then the resulting counting problem actually counts \emph{partial homomorphisms}, that is, homomorphisms from~$k$ vertices of~$H$ that can be extended to a homomorphism on all~$k+\ell$ vertices of~$H$.
A line of work~\cite{PichlerS13,Mengel13}, culminating in Durand and Mengel~\cite{DurandM15} and Chen and Mengel~\cite{ChenM15}, studies the parameterized complexity of this mixed problem, and depending on the class of graphs~$H$ that are allowed, they classify the complexity either as polynomial-time, $\W$-equivalent, or~$\#\W$-hard.
A corollary to the present work is a finer classification that splits up the $\#\W$-hard cases into three classes.

One way to go beyond homomorphisms is to consider \emph{injective} homomorphisms, which leads to the corresponding decision problem that is given~$H,G$ to decide whether~$H$ is a subgraph of~$G$ -- this problem can be solved in time~$f(H) n^{O(t)}$ if~$t$ is the treewidth of~$H$ (e.g.,~\cite{DBLP:journals/jcss/FominLRSR12}), that is, it is \emph{fixed-parameter tractable} when parameterized by~$\abs{H}$ and if the treewidth is bounded.
However, it is an important open problem~\cite{marx2007can} whether the subgraph detection problem is \W-hard when~$H$ is restricted to be from an arbitrary class of unbounded treewidth.
The counting problem is better understood: Vassilevska Williams and Williams~\cite{williams2013finding} (also cf.\ \cite{DBLP:journals/siamdm/KowalukLL13,CurticapeanM14}) show that the number of times~$H$ occurs as a subgraph in~$G$ can be computed
in time $f(H)n^{\mathsf{vc}(H)+O(1)}$ where $\mathsf{vc}(H)$ is the size of the smallest vertex cover, but Curticapean and Marx~\cite{CurticapeanM14} (also cf.~\cite{hombasis2017}) show that the problem is $\#\W$-complete if $H$ is from any class of graphs whose minimum vertex cover is not bounded.
Now, what do injective homomorphisms have to do with conjunctive queries?
As it turns out, what we are doing is to add \emph{inequalities} as an additional, but very restricted constraint type: Injective homomorphisms correspond to queries without quantified variables that have edge constraints and are augmented with inequalities $(x_i\ne x_j)$ for all distinct~$i,j$.
If some, but not all, inequality constraints are present, we obtain partially injective homomorphisms, the complexity of which has a known dichotomy theorem for the counting version~\cite{Roth17}, and has been studied to some extent for the decision version~\cite{Klug88}.
As part of the present work, we are able to classify the mixed situation with free and quantified variables ($k,\ell>0$) \emph{as well as} some inequalities on the free variables.

The mentioned complexity classification for counting partial homomorphisms into three cases~\cite{DurandM15,ChenM15} was actually proved in the more general setting of conjunctive queries.
Chen and Mengel~\cite{ChenM16} extended their classification to queries that are monotone, but not necessarily conjunctive. That is, the corresponding formula is supposed to be an existential positive formula, which may contain existential quantifiers~$\exists$, logical ands $\wedge$, and ors $\vee$.\linebreak
In the present work, we are able to further extend (our finer version of) the classification to existential formulas that may have negations on constraints involving only free variables;
we truly study the complexity of \emph{counting answers to existential questions}.

\subsection{Our contributions}
As already indicated in Section~\ref{sec:prevwork}, we make simultaneous progress on two fronts: Our complexity classifications are finer than previous work, and we can prove the classification for more general classes of queries.
An important feature of our work is that the proofs are modular and largely self-contained: We first prove the complexity results for counting partial homomorphisms, then lift them to conjunctive queries, and then further to a more general class of queries.
So what is the most general class of queries that we study?
We allow queries~$\varphi$ of the form
\begin{equation}\label{eq:generalqueries}
	x_1\dots x_k\exists y_1\dots \exists y_\ell:\,\psi\,,
\end{equation}
where $\psi$ is a quantifier-free formula in first-order logic and all negations in~$\psi$ must be directly applied to constraints that only involve free variables (e.g.~$E(x_1,x_7) \vee (R(x_7,y_7,y_9)\wedge\neg R(x_1,x_4,x_9))$). Constraints of the form $\neg R(x_1,x_4,x_9)$ are referred to as \emph{non-monotone constraints} in the remainder of the paper. Furthermore $\varphi$ may be equipped with a set of inequalities over the free variables (eg.~$x_3\ne x_5$).

All of our theorems also apply to the corresponding \emph{universal} queries, where each~$\exists$ in~\eqref{eq:generalqueries} is replaced with~$\forall$, but for the sake of readability we will often omit this fact.
We are able to generalize from conjunctive queries to queries of the form~\eqref{eq:generalqueries} by using ideas that go back to Lovász's work from~1967~\cite{lovasz} (also cf.~\cite{lovaszbook}):
We prove that queries~$\varphi$ of the form \eqref{eq:generalqueries} can be expressed in a meaningful way as an abstract linear combination of conjunctive queries (which are of the form~\eqref{eq:conjunctivequeries}); positive results (algorithms) as well as negative results (hardness) for each ``summand'' translate to the abstract linear combination and thus to~$\varphi$.

\paragraph*{Data Complexity}
To study the data complexity of the problem, we employ the Strong Exponential Time Hypothesis (SETH) by Impagliazzo and Paturi~\cite{IP01}, which was developed in the context of fine-grained complexity.
The $k$-dominating set problem can be easily expressed as a (universal) conjunctive query, and Williams and Pătraşcu~\cite{DBLP:conf/soda/PatrascuW10} show that this problem cannot be solved in time~$O(n^{k-\varepsilon})$ unless SETH is false.
We are able to lift this hardness result to all queries~$\varphi$ that have the $k$-dominating set query as a \emph{query minor}, a notion that we translate from graphs and formalize later.
The \emph{dominating star size} $\mathsf{dss}(\varphi)$ of a conjunctive query~$\varphi$ is the maximum number~$k$ such that the~$k$-dominating set query is a query minor.
Equivalently, this means that some connected component in the quantified variables of~$\varphi$ has~$k$ neighbors in the free variables.\footnote{The dominating star size coincides with the \emph{strict star size} from \cite{ChenM15}.}
We obtain the following result:

\begin{theorem}
	\label{thm:data complexity}
	Let~$\varphi$ be a fixed query of the form~\eqref{eq:generalqueries}.
	Given a database~$B$ with a domain of size $n$, we wish to compute the number of solutions of~$\varphi$
	in~$B$.
	If SETH holds, this problem
	cannot be solved in time $O(n^{\mathsf{dss}(\varphi)-\varepsilon})$ for any
	$\varepsilon>0$.
\end{theorem}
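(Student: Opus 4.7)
The plan is to reduce the counting version of the $k$-dominating set problem, with $k = \mathsf{dss}(\varphi)$, to the counting problem for $\varphi$. Since Pătraşcu and Williams rule out an $O(n^{k-\varepsilon})$ algorithm for the decision version of $k$-dominating set under SETH, and counting is at least as hard as decision (a non-zero count certifies existence), the same lower bound applies to counting $k$-dominating sets, so it suffices to turn any hypothetical algorithm for $\#\varphi$ into one for counting $k$-dominating sets while paying at most a polynomial overhead that does not affect the exponent of $n$.

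I would first handle the case where $\varphi$ is a conjunctive query as in~\eqref{eq:conjunctivequeries}. By definition of $\mathsf{dss}(\varphi)$, there is a connected component $C$ in the Gaifman graph restricted to the quantified variables whose open neighborhood among the free variables is a set $X$ of size $k$. This ``dominating star'' sub-structure is what enables the encoding of the $k$-dominating set query as a query minor. Given an input graph $G=(V,E)$ to the $k$-dominating set counting problem, I would construct a database $B_G$ whose universe is essentially $V$, augmented with a small constant-size padding so that the remaining connected components of $\varphi$ contribute a fixed, easily computable number of partial satisfying assignments (a standard gadget). The relations interpreting the atoms that involve $C$ are chosen so that the existential quantification over $C$'s variables expresses exactly ``there is a vertex adjacent or equal to every member of $X$'', which is the defining condition for $X$ to dominate $V$. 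Then $\#\varphi(B_G)$ equals the number of $k$-dominating sets of $G$ up to a combinatorial factor independent of $n$.

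Second, to lift the result from conjunctive queries to general queries of the form~\eqref{eq:generalqueries}, I would invoke the Lovász-style linear-combination framework announced in Section~1.2: the number of solutions to $\varphi$ over a database $B$ can be written as $\sum_i \alpha_i \cdot \#\varphi_i(B)$ with each $\varphi_i$ a conjunctive query, and among the summands some $\varphi_{i^\ast}$ attains $\mathsf{dss}(\varphi_{i^\ast}) \geq \mathsf{dss}(\varphi)$. A standard interpolation argument --- evaluating $\varphi$ on polynomially many modified databases obtained by cloning vertices and solving the resulting Vandermonde-like linear system --- isolates $\#\varphi_{i^\ast}(B)$ using only polynomially many oracle calls to $\#\varphi$. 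Composing this with the first step transfers the $n^{k-\varepsilon}$ lower bound from $\varphi_{i^\ast}$ to $\varphi$ without changing the exponent of $n$.

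The main obstacle is the interplay between the encoding and the additional expressive features of~\eqref{eq:generalqueries}: inequalities between free variables and non-monotone (negated) atoms on free variables. One must check that the Lovász-style decomposition is faithful to $\mathsf{dss}$, i.e.\ that some conjunctive summand really attains the dominating star size of $\varphi$, and that the padding used in $B_G$ does not create spurious solutions with respect to the negated or inequality constraints (this is where the restriction ``negations only on constraints over free variables'' becomes essential, since such constraints can be enforced after fixing the free-variable assignment). Once these two points are verified, the SETH-based lower bound itself is immediate from composing the reductions.
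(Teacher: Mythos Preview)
Your high-level plan is correct and matches the paper's: reduce from counting $k$-dominating sets (with $k=\mathsf{dss}(\varphi)$) to the problem for a single conjunctive query, then lift to general queries via the linear-combination framework and an interpolation-style ``complexity monotonicity'' argument that isolates one constituent.

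The paper's execution differs from your direct construction in one instructive way. Rather than building $B_G$ ad hoc and padding for the ``other'' parts of $\varphi$, the paper factors the reduction through \emph{color-prescribed} partial homomorphisms: it first shows that $\#\cphoms{\psi_k}{\star}$ (the star with one quantified centre and $k$ free leaves) is hard under SETH, then observes that $\psi_k$ is a \emph{query minor} of any conjunctive query with dominating star size $\ge k$, and that color-prescribed counting is minor-monotone (Lemma~\ref{lem:minor_closed}). The passage from color-prescribed to uncolored is then handled uniformly (Section~\ref{sec:phom_to_hom}), but requires the query to be \emph{minimal}; this is where your ``constant-size padding'' hand-wave would need real work, since in the uncolored setting the rest of $H$ can interfere. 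The minor/colour route buys modularity and avoids bespoke gadgetry.

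Two small corrections. First, your worry that ``some conjunctive summand really attains the dominating star size of $\varphi$'' is moot: for general queries the paper \emph{defines} $\mathsf{dss}(\varphi)$ as the maximum over the constituents, so this holds by fiat. Second, $\#\varphi(B_G)$ does not directly equal the number of $k$-dominating sets up to a single factor; the paper first recovers the number of dominating \emph{tuples} via a complementation trick, and then extracts the set count by solving a small triangular linear system over $O(k)$ oracle calls (Lemma~\ref{lem:ex_lem_3}). This does not affect the exponent, so your conclusion stands, but the ``up to a combinatorial factor'' phrasing is too optimistic.
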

\pagebreak

\noindent In Remark~\ref{rem:running_time_graphs}, we also obtain an algorithm for the problem in~Theorem~\ref{thm:data complexity}, with a running time of~$O(n^{\mathsf{dss}(\varphi)+t+1}+n^{t'+1})$, where~$t$ and~$t'$ are treewidths related to the query~$\varphi$.
Neglecting many technical details, the proof of Theorem~\ref{thm:data complexity} reduces the $k$-dominating set problem to the model counting problem for~$\varphi$ by following operations of the query minor.
If $\varphi$ is a query of the form~\eqref{eq:generalqueries}, then it can be represented by an abstract linear combination of conjunctive queries~$\varphi'$; in this case, we define $\mathsf{dss}(\varphi)$ as the maximum $\mathsf{dss}(\varphi')$ over all constituents~$\varphi'$ that occur in this abstract linear combination.
We formalize the notion in Section~\ref{sec:4}.

Theorem~\ref{thm:data complexity} is similar in spirit to other known conditional lower bounds for first-order model checking, such as the one of Williams~\cite{Williams14} and Gao et al.~\cite{GaoIKW17}.
One of their results is that first-order sentences with~$k+1$ variables cannot be decided in time $O(m^{k-\varepsilon})$, where~$m$ is the size of the structure, unless SETH fails.
However, these results are incomparable to Theorem~\ref{thm:data complexity} for several reasons:
The results in \cite{Williams14,GaoIKW17} allow negations and consider the decision problem, while we allow only limited negations and consider the counting problem.
More fundamentally, however, Theorem~\ref{thm:data complexity} gives a hardness result for every fixed query~$\varphi$, while the results in~\cite{Williams14,GaoIKW17} show that there exists a query~$\varphi$ that is hard. Moreover, the lower bounds in~\cite{Williams14,GaoIKW17} are in terms of the size~$m$ of the structure, not merely the size~$n$ of the domain.

\paragraph*{Parameterized Complexity}
We refine the classification of Chen and Mengel~\cite{ChenM15} for counting answers to conjunctive queries. For every class of allowed queries they show the problem to be either fixed-parameter tractable, $\W$-equivalent or $\#\W$-hard.
Here, $\W$-equivalent means that there are parameterized Turing reductions from and to the decision version of the $k$-Clique problem.
Understanding the parameterized complexity of problems even beyond the usual classes~$\W$ and~$\#\W$ is interesting from a structural complexity point of view, and it also provides meaningful information about the studied problem. Indeed we show that the dominating star size, i.e., the parameter considered in Theorem~\ref{thm:data complexity}, is a structural parameter for conjunctive queries that, if unbounded, makes the problem $\#\Wtwo$-hard and that, if bounded, keeps the problem $\#\W$-easy.

This extension to $\#\Wtwo$-hard cases only partially resolves the parameterized complexity of the problem of counting answers to conjunctive queries. It is known that the general problem of counting answers to formulas of the form
\begin{equation}\label{eq:queryA2}
	x_1\dots x_k\exists y_1\dots \exists y_\ell:\,\psi\,, \qquad\text{ where } \psi \text{ is a quantifier-free first-order formula,}
\end{equation}
is $\#\Atwo$-equivalent.\footnote{Due to a technicality in the original definition of~$\#\Atwo$, we cannot establish $\#\Atwo$-completeness and will instead only talk about \emph{equivalence} to a $\#\Atwo$-complete problem under parameterized Turing reductions (see Section~\ref{sec:a2} for details).}
For which families of \emph{conjunctive} queries is the counting problem as hard as for unrestricted queries as in~\eqref{eq:queryA2}? Such families have the hardest counting problems, even harder than the $\#\Wtwo$-hard cases unless $\#\Atwo = \#\Wtwo$ holds, which seems unlikely.\footnote{See Chapt.~8 and~14 in~\cite{flumgrohe} for a discussion.}
We prove that families of conjunctive queries are $\#\Atwo$-hard if their \emph{linked matching number} is unbounded. Intuitively a conjunctive query $\varphi$ with free variables $X$ and quantified variables $Y$ has a large linked matching if there is a large well-linked set in $Y$ that cannot be separated from $X$ by removing a small number of variables. It is formally defined in Section~\ref{sec:4}.
We obtain the following refined complexity classification.

\begin{theorem}\label{thm:extension}
	Let $\Phi$ be a family of conjunctive queries. Given a formula $\varphi$ from $\Phi$ and a database~$B$, we wish to compute the number of solutions of $\varphi$ in $B$.
	When parameterized by $|\varphi|$ this problem is
	\begin{enumerate}
		\item $\#\W$-easy if the dominating star size of $\Phi$ is bounded,
		\item $\#\Wtwo$-hard if the dominating star size of $\Phi$ is unbounded, and
		\item $\#\Atwo$-equivalent if the linked matching number of $\Phi$ is unbounded.
	\end{enumerate}
\end{theorem}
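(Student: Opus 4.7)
The plan is to handle the three items separately, reusing technology already set up in the paper. For item~(1), bounded dominating star size means that every connected component~$C$ of the Gaifman graph induced on the quantified variables~$Y$ has a boundary $B_C\subseteq X$ of size at most a fixed constant~$c$. I would proceed by the classical decomposition: for each~$C$ and each assignment $\alpha\colon B_C\to[n]$, the count $f_C(\alpha)$ of satisfying extensions of~$\alpha$ to~$C$ is an instance of a closed (no-free-variable) counting problem, and therefore lies in $\#\W$ by the Dalmau--Jonsson classification. The total answer count then equals the sum, over assignments $\alpha\colon X\to[n]$ satisfying the atoms living purely on~$X$, of $\prod_C f_C(\alpha|_{B_C})$; since each factor depends only on $\alpha|_{B_C}$ and $|B_C|\le c$, this outer sum is itself a weighted partial-homomorphism count with at most $n^{c}$ weighted tuples per component, which is again $\#\W$-easy by the Chen--Mengel and Durand--Mengel classification. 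Composing the two oracle calls yields an FPT Turing reduction to a $\#\W$-complete problem.

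For item~(2), I would mirror the reduction that underlies Theorem~\ref{thm:data complexity}. Unbounded dominating star size means that for every~$k$ there exists $\varphi\in\Phi$ containing a connected quantified component with at least~$k$ free-variable neighbors, i.e., the $k$-dominating set query appears as a query minor of~$\varphi$. Implementing this minor as a parameterized Turing reduction, and using standard interpolation to cancel the contribution of the ``remainder'' of~$\varphi$ on variables outside the minor, yields a reduction from the parameterized $\#k$-Dominating Set problem to counting answers over~$\Phi$, which gives $\#\Wtwo$-hardness.

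For item~(3), $\#\Atwo$-easiness is automatic because conjunctive queries are a special case of the formulas~\eqref{eq:queryA2}, whose counting problem is already $\#\Atwo$-equivalent. The hard direction is to reduce counting of answers to an arbitrary formula of the form~\eqref{eq:queryA2} on~$k$~variables to counting over~$\Phi$. The plan is to exploit the strengthened Excluded Grid Theorem announced in the abstract: once the linked matching number is at least a suitable function $g(k)$, some $\varphi\in\Phi$ contains a $k\times k$ grid minor inside~$Y$ whose diagonal is joined by vertex-disjoint paths to a node-well-linked subset that cannot be separated from~$X$ by a small cut. The grid then serves as a gadget which, suitably tuned by the database, simulates any binary-relational atom (in the spirit of the Curticapean--Marx subgraph-counting lower bound), while the disjoint paths route the gadget's ``output'' into~$X$, so that the free-variable count observably encodes the number of solutions of the target query.

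The main obstacle I expect is exactly this rooting step for item~(3): a plain grid minor buried inside~$Y$ would yield only $\#\W$-hardness and not $\#\Atwo$-hardness, because its ``output'' would be hidden behind the existential quantifiers. It is the strengthening of the grid theorem -- producing vertex-disjoint paths from the grid's diagonal into a well-linked set that is accessible to the free variables -- which provides the missing routing, and proving the existence of such a rooted grid minor under only the assumption that the linked matching number is large is where the most substantive graph-minors work is needed, and is the step I would expect to demand the bulk of the original effort.
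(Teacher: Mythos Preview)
Your high-level plan matches the paper's, but two steps fail as written.

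In item~(1), your formula computes the wrong quantity. If $f_C(\alpha)$ is the \emph{number} of extensions of~$\alpha$ into~$C$, then $\sum_\alpha\prod_C f_C(\alpha|_{B_C})$ equals $\#\homs{H}{G}$, the number of \emph{full} homomorphisms, not $\#\homs{H,X}{G}$. What you need is the indicator $[f_C(\alpha)>0]$, i.e.\ whether an extension \emph{exists}; this is a $\W$ decision problem rather than a $\#\W$ counting problem, and is precisely what the paper does (Theorem~\ref{thm:algo_graphs}, with full proof in Section~\ref{sec:algo_hyper}). The outer sum of a product of $0/1$ indicators is then cast as a $\#\cc{A}[1]=\#\W$ instance.

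In items~(2) and~(3), ``implementing the minor'' and ``standard interpolation'' hide a nonstandard step. The minor operation (Lemma~\ref{lem:minor_closed}) yields a reduction only for the \emph{color-prescribed} variant $\#\cphomsprob$; passing back to the uncolored problem requires the interpolation of Section~\ref{sec:phom_to_hom}, which in turn depends on the query being \emph{minimal} (Observation~\ref{lem:colhom_counting_minimal} via Lemma~\ref{lem:counting_minimality_graphs}). Without minimality this fails, and indeed a non-minimal query can have large dominating star size while being equivalent to one with small dominating star size, so unbounded $\mathsf{dss}$ alone does not force hardness. This is why the formal version (Theorem~\ref{thm:main_thm_cq_graphs}) is stated for classes of minimal queries. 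Your identification of the rooting obstacle in~(3) is exactly right; the paper resolves it via Theorem~\ref{thm:ex_grid_easy} and the Excluded-Grate-Theorem (Theorem~\ref{thm:grate_minor}), producing a half-grid (``grate'') whose diagonal is matched into~$X$, and then proves grates $\#\Atwo$-hard through the intermediate class~$\Gamma$ (Lemmas~\ref{lem:gamma_hard} and~\ref{lem:grate_reduction}) rather than by directly simulating arbitrary atoms inside the grid.
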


It is instructive to provide examples for the application of the above theorem. First consider the problem of, given a graph $G$ without self-loops and a natural number $k$, computing the number of cliques of size $k$ that are not maximal. While the problem of counting cliques of size $k$ is $\#\W$-complete, adding the non-maximality constraint makes the problem hard for $\#\Wtwo$. To see this, we will express the problem as a conjunctive query
\begin{equation}
	\varphi_k := x_1\dots x_k \exists y: \bigwedge_{1\leq i<j \leq k} E(x_i,x_j) \wedge \bigwedge_{1\leq i\leq k} E(x_i,y) \,.
\end{equation}
Note that the number of solutions to $\varphi_k$ in $G$ is precisely $k!$ times the number of non-maximal cliques of size $k$ in $G$. Furthermore, it holds that $\varphi_k$ has dominating star size $k$ and hence that $\Phi=\{\varphi_k~|~k\in \mathbb{N}\}$ has unbounded dominating star size. By Theorem~\ref{thm:extension} the problem of counting answers to queries in $\Phi$ is $\#\Wtwo$-hard. Furthermore, invoking Theorem~\ref{thm:data complexity}, we obtain that counting non-maximal cliques of size $k$ cannot be done in time $O(n^{k-\varepsilon})$ for any $\varepsilon > 0$. Note that this is also in sharp contrast to the problem of counting (not necessarily non-maximal) cliques of size $k$ which can be done in time $O(n^{\omega k/3})$ \cite{nevsetvril1985complexity}. Furthermore deciding the existence of a non-maximal clique of size $k$ is equivalent to deciding the existence of a clique of size $k+1$ and hence the lower bound under SETH crucially depends on the fact that we count the solutions.

On the other hand, counting non-maximal cliques of size $k$ is most likely not $\#\Atwo$-hard as it is $\#\Wtwo$-easy\footnote{If there is a constant bound on the number of quantified variables then the problem of counting answers to conjunctive queries is reducible to a $\#\Wtwo$-complete problem w.r.t. parameterized Turing reductions. We omit a proof of this statement but point out that it can be done by lifting the results of Chapt.~7.4 in~\cite{flumgrohe} to the realm of counting problems.}. An example for a $\#\Atwo$-hard problem would be the following. Assume a graph $G$ and a natural number $k$ are given. Then the goal is to compute the number of $k$-vertex sets that can be (perfectly) matched to a $k$-clique. Let us express the problem as a conjunctive query
\begin{equation}
	\psi_k := x_1\dots x_k \exists y_1\dots \exists y_k: \bigwedge_{1\leq i<j \leq k} E(y_i,y_j) \wedge \bigwedge_{1\leq i\leq k} E(x_i,y_i) \,.
\end{equation}
We point out that $\psi_k$ does not correspond directly to the vertex sets we would like to count as $x_i$ and $x_j$ could be the same vertex in $G$. However, it can be shown along the lines of Section~\ref{sec:hard_w_2} that an oracle for counting answers to $\psi_k$ allows us to compute the desired number efficiently and vice versa. Finally, as the linked matching number of $\psi_k$ is not bounded for $k\rightarrow \infty$, $\#\Atwo$-hardness follows from Theorem~\ref{thm:extension}.

Building up on Theorem~\ref{thm:extension} and using the framework of linear combinations, we obtain the following, extensive classification result.
\pagebreak

\begin{theorem}
	\label{thm:main2_int}
	Let $\Phi$ be a family of existential or universal positive formulas with
	inequalities and non-monotone constraints, both over the free variables.
	Given a formula~$\varphi$ from~$\Phi$ and a database~$B$, we wish to compute the number of solutions of $\varphi$ in $B$.

	When parameterized by $|\varphi|$, this problem is either fixed parameter
	tractable, $\W$-equivalent, $\#\W$-equivalent, $\#\Wtwo$-hard or $\#\Atwo$-equivalent.
\end{theorem}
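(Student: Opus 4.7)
The plan is to reduce Theorem~\ref{thm:main2_int} to the conjunctive-query classification of Theorem~\ref{thm:extension} (together with its tractable sub-cases that split $\#\W$-easiness into FPT, $\W$-equivalent, and $\#\W$-equivalent) by expressing every allowed query $\varphi$ as an abstract linear combination of pure conjunctive queries and transferring complexity in both directions through this decomposition. The first step is normalization. Universal prefixes are handled by complementing the matrix and subtracting from $n^\ell$; after dualization, any surviving negations still touch only free-variable constraints, as required. Then I would put the quantifier-free matrix into disjunctive normal form and apply inclusion--exclusion, expressing the count of satisfying free-variable tuples as a signed sum over counts for pure conjunctions of atoms.

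Each resulting conjunction consists of positive atoms (possibly containing quantified variables) together with inequalities and non-monotone atoms that live entirely on free variables. The free-variable-only atoms do not interact with the $\exists$-quantifiers, so they can be absorbed either as a polynomial-time filter applied after counting, or by passing to a reduced database depending on the values of $(x_1,\dots,x_k)$. To remove the remaining free-variable inequalities I would use the classical Möbius-style inversion over the partition lattice of $\{x_1,\dots,x_k\}$---this is the trick Lovász used to relate injective and general homomorphisms---which writes the count with prescribed inequalities as a signed combination of counts where certain free variables are identified. Composing both layers yields a finite decomposition
\begin{equation*}
	\#\{\text{solutions of }\varphi\text{ in }B\} \;=\; \sum_i c_i \cdot \#\{\text{solutions of }\varphi_i\text{ in }B_i'\}\,,
\end{equation*}
in which each $\varphi_i$ is a pure conjunctive query as in~\eqref{eq:conjunctivequeries} and each $B_i'$ is obtained from $B$ by a simple polynomial-time construction.

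The final step is a complexity-monotonicity argument. The upper bounds transfer directly: an algorithm for the hardest constituent $\varphi_i$ yields an algorithm for $\varphi$, since the number of summands depends only on $|\varphi|$ and each $\varphi_i$ inherits upper bounds on dominating star size and linked matching number from $\varphi$. For hardness, I would isolate the contribution of a single $\varphi_{i^\ast}$ by evaluating the sum on a carefully chosen family of padded or weighted databases, setting up a linear system whose unique solution recovers $\#\{\text{solutions of }\varphi_{i^\ast}\text{ in }B\}$ for arbitrary $B$; this is the Lovász--Curticapean interpolation technique that underlies the entire linear-combination framework. Applying Theorem~\ref{thm:extension} (with its finer sub-division of the bounded-dominating-star-size regime) to each $\varphi_i$ and taking the maximum across the five complexity classes then gives the claimed classification. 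The main obstacle is the isolation step: one must verify that the $\varphi_i$ produced by the decomposition are pairwise distinguishable once reduced to their homomorphic cores, so that no collapse of summands can hide the hardness of an individual term; handling this robustly in the presence of non-monotone atoms and free-variable inequalities is the technical crux of the argument.
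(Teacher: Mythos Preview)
Your approach is essentially the paper's: express every allowed formula as a finite linear combination of pure conjunctive queries, invoke complexity monotonicity in both directions, and then apply the conjunctive-query classification. The ingredients you name (DNF plus inclusion--exclusion for disjunctions, dualization for universal prefixes, M\"obius inversion over contractions of free variables for inequalities, and Lov\'asz-style interpolation for isolation) are exactly the ones used.

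One point does need tightening. You write the decomposition with a \emph{different} database $B_i'$ per summand, in order to absorb the non-monotone free-variable atoms ``as a filter'' or ``by passing to a reduced database''. But the isolation step you describe---padding or tensoring $B$ and solving a linear system---relies on all summands being evaluated on the \emph{same} database, so that $\#\homs{Q}{B\otimes F}=\sum_i \lambda_i\,\#\homs{\varphi_i}{B}\cdot\#\homs{\varphi_i}{F}$ factors term-by-term. If the $B_i'$ genuinely differ, this factorisation breaks and the interpolation argument does not go through as stated. The paper avoids this by treating the non-monotone free-variable atoms with yet another layer of inclusion--exclusion (over subsets of the negated atoms), which turns each such atom into a positive one and keeps every resulting conjunctive query evaluated on the original $B$ (or on $\overline{B}$ uniformly, in the universal case). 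With that adjustment your outline is correct; the distinguishability of constituents that you flag as the crux is then exactly the linear-independence lemma for $\#\homs{H,X}{\star}$ over minimal queries.
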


Note that allowing the inequalities and non-monotone constraints over all variables, not just the free ones, would in particular include the subgraph decision problem.
However, the parameterized complexity of finding a subgraph in~$G$ that is
isomorphic to a small pattern graph~$P$ is a long-standing open question in
parameterized complexity.

\subsection{Techniques and Overview}
Our paper brings together questions and techniques from a wide variety of areas, such as parameterized and fine-grained complexity, logics, database theory, matroid theory, lattice theory, graph minor theory, and the theory of graph limits.
The interested reader should not be alarmed, however, as we put considerable effort into making the presentation as self-contained and smooth as possible, introducing the required background material carefully and only once needed:
After reviewing some basic preliminaries (Section~\ref{sec:prelims}), we establish our refined complexity classification in Sections~\ref{sec:minors_and_colors}--\ref{sec:lin_combs} for the special case of partial graph homomorphisms, rather than the full query evaluation problem.
Only in Section~\ref{sec:structures} do we introduce the notation necessary to deal with arbitrary logical structures, and we lift or generalize the results from the previous sections to this case.
Section~\ref{sec:a2} contains some technical material, which transfers a logical normalization theorem from decision to counting.

\paragraph*{Colors and Query minors}
We will mainly work with a color-prescribed variant of the problem of counting answers to conjunctive queries. Here we assume that the elements of a given database $B$ are colored according to the variables of the given conjunctive query $\varphi$ and the goal is to compute the number of solutions that are additionally color-preserving. For this variant we will show and heavily exploit that the problem of counting answers to a conjunctive query $\varphi$ is at least as hard as counting answers to any query that is a minor of $\varphi$. Minors of a query are defined via the (graph theoretic) minors of its Gaifman graph.
It is then required to show that the color-prescribed variant and the uncolored variant are interreducible for all minimal queries. Intuitively, a query is minimal if it does not contain a proper subquery that produces the same set of solutions for each database. The proof of the interreducibility relies on the theory of homomorphic equivalence.

For Theorem~\ref{thm:data complexity} and the second case of Theorem~\ref{thm:extension} we construct a tight reduction from the problem of counting dominating sets of size $k$ which cannot be solved in time $O(n^{k-\varepsilon})$ for any $\varepsilon > 0$ unless SETH fails~\cite{DBLP:conf/soda/PatrascuW10} and which is hard for $\#\Wtwo$~\cite{flumgrohe_counting}.

\paragraph*{Minor theory}
For $\#\Atwo$-hardness in Theorem~\ref{thm:extension} we take a detour to graph minor theory. In particular we strengthen the Excluded-Grid-Theorem, which might be of independent interest:
\begin{theorem}[Intuitive version]\label{thm:ex_grids_intro}
  There exists an unbounded function $f$ such that every graph containing a node-well-linked set~$S$ of~$k$ vertices has an $(f(k)\times f(k))$-grid minor with the property that every vertex in the first column of the grid is the preimage of at least one element of $S$ with respect to the minor mapping.
\end{theorem}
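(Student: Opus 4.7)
The plan is to combine the classical Excluded-Grid-Theorem with a routing argument driven by the node-well-linkedness of $S$. Throughout, $f$, $g$, and $h$ denote unbounded functions on $\N$; we are free to choose them as slowly growing as convenient. Recall that a \emph{minor model} of a graph $H$ in $G$ is a family of pairwise disjoint, connected branch sets $\{B_v\}_{v \in V(H)}$ such that each edge of $H$ is witnessed by an edge of $G$ between the corresponding branch sets.

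First, I would apply a quantitative form of the Excluded-Grid-Theorem: a node-well-linked set of size $k$ forces treewidth $\Omega(k)$, so $G$ contains a $g(k) \times g(k)$-grid minor $\Gamma$ with branch sets $\{B_{i,j}\}$. This step is standard but gives no control over how $\Gamma$ interacts with $S$. Second, I would use the well-linkedness of $S$ together with the observation that the branch sets of $\Gamma$ form a ``thick'' target set. Because any small separator of $G$ must leave a linear portion of $S$ on one side, and because the branch sets of $\Gamma$ realise a tangle of large order, a Menger-style argument produces $h(k)$ pairwise internally-disjoint paths $P_1,\ldots,P_{h(k)}$ starting in $S$ and ending in (possibly arbitrary) branch sets of $\Gamma$, with interiors disjoint from $\Gamma$ and from one another.

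The third and main step is rerouting. I would invoke the routed-minors machinery of \cite{routedminors}: given a sufficiently large grid $\Gamma$ inside $G$ together with disjoint paths entering at arbitrary positions, one can extract a sub-grid $\Gamma'$ of side length $f(k)$, with $f$ unbounded in $k$, such that (some of) the $P_i$ can be rerouted within $\Gamma$ to terminate precisely on the first-column branch sets of $\Gamma'$, while leaving the interior of $\Gamma'$ intact. Having done this, I absorb each rerouted path into the corresponding first-column branch set of $\Gamma'$; this preserves connectivity, pairwise disjointness, and all grid adjacencies, and because each $P_i$ starts in $S$, the enlarged first-column branch set necessarily contains an element of $S$. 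This is precisely the conclusion of the theorem.

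The main obstacle is the third step. Routing paths \emph{into} a grid is easy; forcing them to land on the first column of a sub-grid while preserving the grid minor itself is delicate, because naively enlarged first-column branch sets may swallow vertices used by other branch sets, destroying the grid structure. The routed-minors machinery resolves this by choosing $\Gamma'$ deep inside $\Gamma$ and routing each $P_i$ through a ``corridor'' lying in $\Gamma \setminus \Gamma'$, so that absorption into the first column of $\Gamma'$ is harmless. Once this has been set up the remaining verification is bookkeeping, and the function $f$ in the theorem is whichever unbounded function the routed-minors lemma produces from the initial grid size $g(k)$ and the linkage size $h(k)$.
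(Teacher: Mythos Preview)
Your three-step plan has a genuine gap between Steps~1 and~2. You obtain the grid $\Gamma$ from the bare treewidth bound and then assert that a Menger-style argument links $S$ to $\Gamma$ by many disjoint paths. But nothing forces $\Gamma$ to live in the same highly-connected region of $G$ as $S$: take two large cliques joined by a single edge, put $S$ in one clique, and observe that the Excluded-Grid-Theorem (via treewidth) is free to produce $\Gamma$ inside the other clique, whence only one path runs from $S$ to $\Gamma$. Your sentence ``any small separator of $G$ must leave a linear portion of $S$ on one side, and the branch sets of $\Gamma$ realise a tangle of large order'' is true of each object separately but does not force them onto the \emph{same} side of a small separation; $S$ and $\Gamma$ may well induce two distinct tangles. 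To rescue the argument you must choose the grid inside the tangle $\mathcal{T}[S]$ induced by $S$, which means invoking the tangle version of the Excluded-Grid-Theorem rather than the treewidth version.

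Once you make that fix, your Steps~1--3 essentially unpack the proof of Theorem~1.2 of~\cite{routedminors} into its constituent pieces. The paper instead applies that theorem as a black box: a tangle $\mathcal{T}$ of large order together with a set $Z$ that no small separation in $\mathcal{T}$ can cut off already yields a grid minor whose first column meets every element of $Z$. The paper constructs the tangle $\mathcal{T}[S]$ directly from the well-linked set via a result of Diestel et al.~\cite{diestel}, takes $Z\subseteq S$ of the right size, and verifies the non-separability hypothesis in one line by Menger. There is no intermediate ``first find a grid, then find paths, then reroute'' phase; the routed-minors theorem delivers the correctly-aligned grid in a single shot, and the whole proof fits in half a page.
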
\pagebreak

\noindent Using Theorem~\ref{thm:ex_grids_intro} we show that every conjunctive query with a large linked matching number contains a minor in which the free variables can be matched to a set of quantified variables that are connected in a grid-like manner. Those queries are then shown to be $\#\Atwo$-hard building up on a $\#\Atwo$-normalization theorem which we will provide at the end of the paper.

\paragraph*{Abstract linear combinations}
To prove Theorem~\ref{thm:main2_int}, we use abstract linear combinations that are called \emph{quantum graphs} (or rather, quantum queries in our setting) and were developed in the theory of graph limits~\cite{lovaszbook}.
For our computational questions, the \emph{complexity monotonicity property}~\cite{hombasis2017} is the useful phenomenon that the quantum graph and its constituents (i.e., its abstract summands) often lead to computational problems that have precisely the same complexity.
Using elementary linear-algebraic and polynomial interpolation arguments, we prove that this property holds (Lemma~\ref{lem:homs linearly independent}), and we use Rota's NBC Theorem from lattice theory~\cite{rota1964foundations} to determine which graphs are constituents of the relevant quantum graphs.
The complexity monotonicity property has been used (implicitly) by Chen and Mengel~\cite{ChenM16} for their extension from conjunctive queries to monotone queries; and the extension from homomorphisms to partially injective homomorphism~\cite{Roth17} used Rota's Theorem in a similar fashion as we do in the present work.

\section{Preliminaries}\label{sec:prelims}
We use the notation $[n]=\{1,\dots,n\}$ and $[m,n]=\set{m,\dots,n}$ for natural numbers with $m<n$.
We write $\#M$ for the cardinality of a finite set~$M$.
We write $f|_M$ for the restriction of a function~$f$ to elements of $M$.
For a function $f\colon A \times B \to C$ and $a \in A$, we write $f(a,\star)$ for the function $b\mapsto f(a,b)$.

\paragraph*{Graphs, homomorphisms and formulas}
Graphs in this paper are unlabeled, undirected, simple and without self-loops, unless stated otherwise.
Let $V(G)$ denote the set of vertices and $E(G)$ denote the set of edges of $G$.
We define the \emph{size} of a graph $G$ to be the number of vertices. Given a subset $Y$ of $V(G)$, we write $G[Y]$ for the subgraph induced by the vertices of $Y$. The \emph{complement graph $\overline{G}$} has the same vertices as $G$ and contains an edge $uv$ if and only if $u \neq v$ and $uv\notin E(G)$. A \emph{homomorphism} $h$ from a graph $F$ to a graph~$G$ is a mapping from $V(F)$ to $V(G)$ that is edge-preserving, that is, all $uv \in E(F)$ satisfy $h(u)h(v)\in E(G)$.
We write $\homs{F}{G}$ for the set of all homomorphisms from~$F$ to~$G$.
A bijective homomorphism whose inverse is also a homomorphism is called an \emph{isomorphism}, and a homomorphism from $F$ to $F$ itself is called \emph{endomorphism}. An endomorphism that is also an isomorphism is called an \emph{automorphism}. We write $\mathsf{Aut}(F)$ for the set of all automorphisms of $F$.

\paragraph*{Parameterized counting complexity}
A \emph{counting problem} is a function~$P\colon\{0,1\}^\ast\to\N$, and a \emph{parameterized counting problem} is a pair $(P,\pi)$ where $\pi\colon\{0,1\}^\ast\to\N$ is computable and called a \emph{parameterization}. Parameterized \emph{decision} problems are defined likewise for decision problems~$P\colon\{0,1\}^\ast\to\bs{0,1}$. A parameterized (decision or counting) problem is \emph{fixed-parameter tractable} if there is a computable function $t\colon\N\to\N$ such that, for every input $x\in\{0,1\}^\ast$, the function~$P$ can be computed in time $t(\pi(x)) \cdot \mathsf{poly}(|x|)$. We denote the class of all fixed-parameter tractable problems as~$\cc{FPT}$.

\noindent A \emph{parameterized Turing-reduction} from $(P,\pi)$ to $(P',\pi')$ is an algorithm $\mathbb{A}$ with oracle access to $P'$ that solves $P$, such that $\mathbb{A}$ runs in fixed-parameter tractable time when parameterized by~$\pi$ and there exists a computable function $r$ such that, for every input $x$, the parameter $\pi'(y)$ of every query~$y$ is bounded by $r(\pi(x))$.
A \emph{parameterized parsimonious reduction} is a parameterized Turing-reduction with the additional requirement that $\mathbb{A}$ is only allowed to query the oracle a single time at the very end of the computation and then outputs the result of the query without further modification.

$\mathsf{Clique}$ is the parameterized (decision) problem to decide whether a given graph~$G$ contains a $k$-clique.
Similarly, $\mathsf{DomSet}$ is to decide whether~$G$ has a dominating set of size $k$.
The parameterized counting problems $\#\mathsf{Clique}$ and $\#\mathsf{DomSet}$ count the number of the respective objects.
We define the parameterized complexity classes that appear in this paper by their well-known complete problems:
$\W$ contains all parameterized problems that are reducible to $\mathsf{Clique}$ with respect to parameterized parsimonious reductions.
Similarly, $\#\W$, $\Wtwo$, and $\#\Wtwo$ contain all problems reducible to $\#\mathsf{Clique}$, $\mathsf{DomSet}$, and $\#\mathsf{DomSet}$, respectively. Furthermore $\#\Atwo$ is the class of all parameterized counting problems that are expressible as model counting problem with one quantifier alternation --- a formal introduction is given in Section~\ref{sec:a2}. It is known that
\begin{equation*}
  \cc{FPT} \leq^T \W \leq^T \#\W \subseteq \#\Wtwo \subseteq \#\Atwo\,,
\end{equation*}
where $\cc{C} \leq^T \cc{D}$ denotes that every problem in $\cc{C}$ can be reduced to a problem in $\cc{D}$ with respect to parameterized Turing-reductions. For further background on parameterized counting complexity, see \cite[Chapter~14]{flumgrohe}. While the parameterized complexity classes are defined via parsimonious reductions, we will rely on Turing reductions. Hence we cannot speak of completeness but instead of equivalence.
\begin{definition}
	Let $\cc{C}$ be a parameterized complexity class. A parameterized counting problem $(P,\pi)$ is $\cc{C}$-\emph{easy} if it can be reduced to a problem in $\cc{C}$ and it is $\cc{C}$-\emph{hard} if every problem in $\cc{C}$ reduces to~$(P,\pi)$, both with respect to parameterized Turing-reductions. A problem is $\cc{C}$-\emph{equivalent} if it is $\cc{C}$-\emph{easy} and $\cc{C}$-\emph{hard}.
\end{definition}

\paragraph*{Exponential-time hypotheses}
The strong exponential time hypothesis (SETH) asserts that for all $\delta>0$ there is some~$k\in\N$ such that $k$-SAT cannot be computed in time $O(2^{(1-\delta)n})$, where $n$ is the number of variables of the input formula.
A dominating set of size~$k$ in an $n$-vertex graph cannot be computed in time~$O(n^{k-\varepsilon})$ for any $\varepsilon>0$ unless SETH is false~\cite{DBLP:conf/soda/PatrascuW10}.
The exponential time hypothesis (ETH) asserts that $3$-SAT cannot be computed in time $\exp(o(m))$, where $m$ is the number of clauses of the input formula.

\section{Graphical conjunctive queries and colored variants}\label{sec:minors_and_colors}

It is instructive to first focus on conjunctive queries with one relation symbol~$E$ of arity two.
An example of such a query is the following formula:
\begin{equation}\label{eq: example query}
	x_1 \dots x_k \exists y : Ex_1y\land\dots\land Ex_ky\,.
\end{equation}
The relation~$E$ corresponds to a graph~$G$ and the free and quantified variables will be assigned vertices of~$G$.
In this example, an assignment~$a_1,\dots,a_k\in V(G)$ to the free variables satisfies the formula if and only if the vertices~$a_1,\dots,a_k$ have a common neighbor in~$G$.
It will be convenient for us to view the formula as a graph $H$ as depicted in Figure~\ref{fig: graphical conjunctive query}.
The vertices of~$H$ are partitioned into a set $X=\set{x_1,\dots,x_k}$ of free variables and a set $Y=\set{y}$ of quantified variables.
\begin{figure}
	\centering
	\begin{tikzpicture}
		\node[circle,inner sep=1.5pt,fill, label=right:{$y$}] (y) at (1,0) {};
		\node[circle,inner sep=1.5pt,fill, label=left:{$x_1$}] (x1) at (0,.5) {};
		\node[circle,inner sep=1.5pt,fill, label=left:{$x_k$}] (xk) at (0,-.5) {};
		\draw (xk) -- (y) -- (x1);
		\draw[dotted,shorten >=0.2cm,shorten <=0.2cm] (xk) -- (x1);
	\end{tikzpicture}
	\qquad
	\begin{tikzpicture}
		\node[circle,inner sep=1.5pt,fill, label=right:{$y$}] (y) at (1,.25) {};
		\node[circle,inner sep=1.5pt,fill, label=right:{$\tilde y$}] (y2) at (1,-.25) {};
		\node[circle,inner sep=1.5pt,fill, label=left:{$x_1$}] (x1) at (0,.5) {};
		\node[circle,inner sep=1.5pt,fill, label=left:{$x_k$}] (xk) at (0,-.5) {};
		\draw (xk) -- (y) -- (x1);
		\draw (xk) -- (y2);
		\draw[dotted,shorten >=0.2cm,shorten <=0.2cm] (xk) -- (x1);
	\end{tikzpicture}
	\caption{\label{fig: graphical conjunctive query}%
		\emph{Left:} Graphical representation of the conjunctive query in~\eqref{eq: example query}.
		\emph{Right:} A graphical conjunctive query that is ``equivalent'' to the example on the left in the sense that an assignment $a\colon\set{x_1,\dots,x_k}\to V(G)$ is a partial homomorphism from the left graph to~$G$ if and only if it is a partial homomorphism from the right graph to~$G$.
	}
\end{figure}
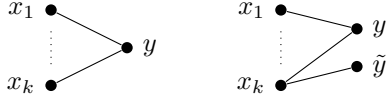
An assignment to the free variables corresponds to a function~$a:X\to V(G)$, and such an assignment satisfies the formula if it can be consistently extended to a homomorphism from~$H$ to~$G$.
This motivates the following definition, where we only consider simple graphs without loops, so we do not allow atomic subformulas of the form $Ezz$.
\begin{definition}
	A \emph{graphical conjunctive query} $(H,X)$ consists of a graph $H$ and a set $X$ of vertices of~$H$.
	We let $\homs{H,X}{G}$ be the set of all mappings from $X$ to $V(G)$ that can be extended to a homomorphism from $H$ to $G$, and we call these mappings partial homomorphisms. Formally, the set of partial homomorphisms is defined via
	\begin{equation}
		\homs{H,X}{G}= \set[\Big]{ a: X \rightarrow V(G) \;\Big|\; \exists h \in \homs{H}{G} : h|_X = a } \,.
  \end{equation}
\end{definition}

  Given two different graphical conjunctive queries $(H,X)$ and $(\hat{H},\hat{X})$ it might be the case that $\#\homs{H,X}{\star}$ and $\#\homs{\hat{H},\hat{X}}{\star}$ are the same functions. An example for this is given in Figure~\ref{fig: graphical conjunctive query}. In this case, we say that $(H,X)$ and $(\hat{H},\hat{X})$ are \emph{equivalent}, denoted as $(H,X)\sim(\hat{H},\hat{X})$, and the subgraph-minimal elements of the induced equivalence classes are called \emph{minimal}. An explicit notion of equivalence is given in Section~\ref{sec:lin_combs}.
  In our proofs, we make use of the following property of minimal queries, whose elementary proof we defer to Section~\ref{sec:counting_minimality}, where we generalize it to arbitrary structures.
	\begin{lemma}~\label{lem:counting_minimality_graphs}
		Let $(H,X)$ be a minimal conjunctive query and let $h$ be an endomorphism of~$H$. If $h$ maps $X$ bijectively to itself then $h$ is an automorphism.
	\end{lemma}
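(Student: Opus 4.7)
The natural approach is to pass from $h$ to a suitable power, exhibit the image of that power as a subquery of $(H,X)$, show that this subquery is equivalent to $(H,X)$, and then invoke minimality.

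First I would replace $h$ by $\tilde h := h^k$ where $k$ is chosen so that $\tilde h\restriction_X = \mathrm{id}_X$. Since $h\restriction_X$ is a permutation of the finite set $X$, such a $k$ exists. Note that $\tilde h$ is still an endomorphism of $H$ because endomorphisms are closed under composition. Next I would introduce the image subgraph $\tilde H$ of $H$ defined by $V(\tilde H) := \tilde h(V(H))$ and $E(\tilde H) := \{\tilde h(u)\tilde h(v) : uv \in E(H)\}$. Since $\tilde h$ is an endomorphism, $\tilde H$ really is a subgraph of $H$, and since $\tilde h$ fixes $X$ pointwise, we have $X \subseteq V(\tilde H)$, so $(\tilde H,X)$ is a well-defined graphical conjunctive query that is a ``subgraph'' of $(H,X)$.

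The main step is to establish $(\tilde H,X) \sim (H,X)$, i.e., $\homs{\tilde H,X}{G} = \homs{H,X}{G}$ for every graph $G$. The inclusion $\homs{H,X}{G} \subseteq \homs{\tilde H,X}{G}$ is immediate: if $a\colon X \to V(G)$ extends to some homomorphism $f\colon H \to G$, then $f\restriction_{V(\tilde H)}$ is a homomorphism $\tilde H \to G$ extending $a$, since $E(\tilde H) \subseteq E(H)$. For the converse, suppose $a$ extends to $g\colon \tilde H \to G$. Then $g\circ \tilde h\colon H \to G$ is a homomorphism because $\tilde h$ sends each edge of $H$ into $E(\tilde H)$ by construction, and its restriction to $X$ equals $g\circ\mathrm{id}_X = a$. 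Thus $(\tilde H,X)\sim(H,X)$, and minimality of $(H,X)$ forces $\tilde H = H$. In particular $\tilde h(V(H)) = V(H)$, so $\tilde h$ is a bijection on the finite set $V(H)$, which already implies that $h$ itself is a bijection on $V(H)$.

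Finally, since $h$ is a permutation of $V(H)$, it has some finite order $m$, and $h^{-1} = h^{m-1}$ is an endomorphism of $H$ as a composition of endomorphisms. Hence both $h$ and $h^{-1}$ preserve edges, so $h \in \mathsf{Aut}(H)$. The only subtle point is the first reduction to a power that fixes $X$ pointwise, without which the identity $g\circ \tilde h\restriction_X = a$ in the third paragraph would fail; the rest of the argument is then a short ``image is a subquery'' application of minimality, and its generalization to arbitrary structures in Section~\ref{sec:counting_minimality} will follow the same template after replacing edges by arbitrary relations.
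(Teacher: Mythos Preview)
Your proof is correct and takes a more direct route than the paper. The paper (in Section~\ref{sec:counting_minimality}) establishes the lemma by introducing the \emph{augmented} structure $\aug(H,X)$, obtained from $H$ by adding a fresh binary relation that forms a clique on $X$; it then shows that $(H,X)$ is minimal if and only if $\aug(H,X)$ is a core in the classical sense, and concludes via the standard fact that every endomorphism of a core is an automorphism (Observation~\ref{obs:core_endo_auto}). Your argument avoids this detour: passing to a power $\tilde h$ that fixes $X$ pointwise lets you compare the image subquery $(\tilde H,X)$ to $(H,X)$ directly via the definition of equivalence, and minimality then forces $\tilde H = H$. The paper's augmented-core machinery is not wasted effort, however---it simultaneously yields the companion facts that $(H,X)\sim\acore(H,X)$ and that minimal equivalent queries are isomorphic (Corollary~\ref{cor:main_counting_equiv}), which are used elsewhere; your approach proves exactly the lemma at hand with less setup but does not immediately deliver those extras. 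Your closing remark is also accurate: the same template generalizes to arbitrary relational structures with no new ideas.
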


	\paragraph*{Color-prescribed Homomorphisms}
	While we are ultimately interested in the complexity of computing the number of partial homomorphisms, our hardness proofs become much more pleasant if we consider vertex-colored graphs (see also \cite[Chapter~5.4.2]{lovaszbook}).
  A graph $G$ is \emph{$H$-colored} if there is a homomorphism $c$ from $G$ to $H$. The value $c(v)$ for a vertex $v\in V(G)$ is called the \emph{color} of~$v$ and given a vertex $u\in V(H)$ we write $c^{-1}(u)$ for the set of all vertices in $G$ that are mapped to $u$. We consider \emph{color-prescribed homomorphisms}, which are homomorphisms $h\in\homs{H}{G}$ with the additional property that every vertex~$u\in V(H)$ maps to a vertex $h(u)$ whose color is~$u$. We write $\cphoms{H}{G}$ for the set of all color-prescribed homomorphisms.
    Given $X\subseteq V(H)$, we write $\cphoms{H,X}{G}$ for the set of \emph{partial} color-prescribed homomorphisms,
    that is, functions $a:X\to V(G)$ that can be extended to color-prescribed homomorphisms. Note that the set $\cphoms{H}{G}$ is empty if the $H$-coloring of $G$ is not surjective. Therefore it will be convenient to assume that the input graphs of the functions $\#\cphoms{H}{\star}$ and $\#\cphoms{H,X}{\star}$ are surjectively $H$-colored.

\subsection{Color-prescribed Homomorphisms Under Taking Minors}\label{subsec:min}

\def\rmedge{\ensuremath -}
\def\rmvtx{\ensuremath -}
\def\ctedge{\ensuremath/}

Recall that a \emph{minor} of a graph $H$ is any graph that can be obtained
from $H$ by deleting vertices and edges and by contracting edges.
In this section, we extend the minor relation to graphical conjunctive queries.
As it turns out, if $M$ is a minor of a graphical conjunctive query~$(H,X)$, then~$M$ can be reduced to~$(H,X)$.

Formally, given a graph $H$ and an edge $e \in E(H)$, we write $H \rmedge e$ to denote the
graph obtained from $H$ by deleting $e$ and $H \ctedge e$ for the graph $H$ where $e$ is
contracted. Note that any multiple edges and self-loops are deleted. Similarly, for an isolated vertex $v \in V(H)$ we write $H \rmvtx v$ for the graph resulting from $H$ by deleting~$v$. A minor of $H$ is any graph that can be
obtained from $H$ by iteratively applying these operations.

The deletion and contraction operations extend to graphical conjunctive queries~$(H,X)$ in the natural way, but we must decide each time how to modify the set~$X$.
For an isolated vertex $v \in V(H)$, we set $(H, X) \rmvtx v := (H \rmvtx v, X \setminus \set{v})$, and for an edge $e \in E(H)$, we set $(H,X) \rmedge e := (H\rmedge e,X)$.
For the contraction operation, let $e\in E(H)$ and let $w_e$ be the vertex that~$e$ is contracted to in~$H/e$.
The contraction of two quantified variables yields a quantified variable, but as soon as one endpoint of~$e$ is a free variable, the contracted variable is a free variable. Formally, we define $(H,X)/e = (H/e, X')$ where
\begin{equation*}
	X'=
	\begin{cases}
		X\,,                             & \text{if $e$ is disjoint from $X$,} \\
		(X \setminus e) \cup \{w_e \}\,, & \text{otherwise.}
	\end{cases}
\end{equation*}
Here $w_e$ denotes the vertex that~$e$ got contracted to.
As before, we say that $(\hat H,\hat X)$ is a \emph{minor} of $(H,X)$ if $(\hat H,\hat X)$ can be obtained from $(H,X)$ by iteratively applying these deletion and contraction operations.
We now that color-prescribed homomorphisms are ``minor-closed'', that is, if we know the color-prescribed homomorphisms from~$(H,X)$ then we know them from any minor as well.
\begin{lemma}\label{lem:minor_closed}
	Let $(H,X)$ be a graphical conjunctive query and let $(\hat H,\hat X)$ be a minor of $(H,X)$.
	Given an~$\hat H$-colored graph~$G$, we can in polynomial time compute an~$H$-colored graph~$G'$ with $\abs{V(G')}\le\abs{V(H)}\cdot\abs{V(G)}$ and
	$
		\#\cphoms{\hat H,\hat X}{G}
		=
		\#\cphoms{H,X}{G'}
	$.
\end{lemma}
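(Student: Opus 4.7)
The plan is to induct on the length of a sequence of minor operations transforming $(H,X)$ into $(\hat H,\hat X)$. Since the minor relation is generated by edge deletion, isolated-vertex deletion, and edge contraction, it suffices to exhibit, for each atomic operation producing $(H',X')$ from $(H,X)$, a polynomial-time construction that turns an $H'$-colored graph $G$ into an $H$-colored graph $G^*$ with $\#\cphoms{H',X'}{G} = \#\cphoms{H,X}{G^*}$ and a controlled blow-up. Composing these constructions along the sequence of operations yields the desired $G'$. For the size bound, I would observe that each vertex produced in the final graph is naturally labeled by a pair consisting of a vertex of $G$ and a vertex of the corresponding branch set of the minor map, so $|V(G')|\le |V(H)|\cdot|V(G)|$ up to at most $|V(H)|$ additional isolated vertices reinstating deleted vertices of $H$.

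The two straightforward cases are: (i) If $(H',X') = (H-uv, X)$, then $V(H')=V(H)$, so the given coloring of $G$ is already an $H$-coloring. Form $G^*$ by adding every edge between a color-$u$ vertex and a color-$v$ vertex. Any cphom $h^*:H\to G^*$ also restricts to a cphom $H'\to G$, because the only edges in $E(G^*)\setminus E(G)$ have endpoint colors $\{u,v\}$ and so could only be in the image of the absent edge $uv\in E(H)\setminus E(H')$; the converse is immediate, and restriction to $X=X'$ is unaffected. (ii) If $(H',X') = (H-v, X\setminus\{v\})$ with $v$ isolated in $H$, then $G$ has no color-$v$ vertices, so adjoining a single isolated color-$v$ vertex $v^*$ yields $G^*$; every cphom extends uniquely by $v\mapsto v^*$, and the partial-cphom bijection is trivial since the image of $v$ is forced whenever $v\in X$.

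The contraction case is where the real work lies. Suppose $(H',X') = (H/uv, X')$ with new vertex $w_e\in V(H')$. The idea is to \emph{split} every color-$w_e$ vertex $w\in V(G)$ into two new vertices $w_u, w_v$ of colors $u$ and $v$, joined by a single edge $w_u w_v$, and to re-route every edge $wz\in E(G)$ with $z$ not of color $w_e$: add $w_u z$ to $E(G^*)$ whenever $uc(z)\in E(H)$ and $w_v z$ whenever $vc(z)\in E(H)$; keep all other edges of $G$ and discard edges between two color-$w_e$ vertices (they are never used by any cphom since $w_e$ is a single vertex of $H'$). A cphom $h:H'\to G$ with $h(w_e)=w$ lifts to $h^*:H\to G^*$ by $h^*(x):=h(x)$ for $x\notin\{u,v\}$ and $(h^*(u),h^*(v)):=(w_u,w_v)$; a direct check on each edge of $H$ shows $h^*$ is a cphom.

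The main obstacle is showing that the inverse map is well defined: given a cphom $h^*:H\to G^*$ with $h^*(u)=w_u$ and $h^*(v)=w'_v$ for possibly distinct $w,w'\in c^{-1}(w_e)$, the edge $h^*(u)h^*(v)$ must lie in $E(G^*)$, but by construction the only edges between a color-$u$ and a color-$v$ vertex of $G^*$ are the paired edges $w_u w_v$ coming from the same $w$; hence $w=w'$, and setting $h(w_e):=w$ recovers a unique cphom into $G$. The paired-edge construction is precisely what forces this, and dropping it (e.g.\ by connecting every pair of color-$u$ and color-$v$ vertices) would break the bijection. Finally, the bijection on partial cphoms follows by a short case analysis on whether $u, v$ lie in $X$, using the bijective correspondence $w\leftrightarrow(w_u,w_v)$ for each $w\in c^{-1}(w_e)$ together with $X'=(X\setminus\{u,v\})\cup\{w_e\}$ when $\{u,v\}\cap X\neq\emptyset$ and $X'=X$ otherwise.
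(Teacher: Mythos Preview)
Your proposal is correct and follows essentially the same approach as the paper: induct on the sequence of minor operations and handle edge deletion, isolated-vertex deletion, and edge contraction by the same three gadgets (complete bipartite join, a single fresh isolated vertex, and vertex splitting along a perfect matching, respectively). Your contraction step is in fact slightly more careful than the paper's---by routing $w_uz$ only when $uc(z)\in E(H)$ (and symmetrically for $v$) you ensure the resulting coloring is genuinely a homomorphism to~$H$, whereas the paper's ``leave the neighborhoods intact'' shortcut can create stray edges whose endpoint colors are non-adjacent in~$H$; this does not affect the cphom count but your version matches the lemma statement more precisely. Your branch-set argument for the size bound is also a clean way to see $|V(G')|\le |V(H)|\cdot|V(G)|$ directly (the ``up to $|V(H)|$ extra'' caveat is unnecessary once $|V(G)|\ge 1$, since each singleton reinstated for a deleted vertex is already absorbed by the product bound).
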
%
\begin{figure}[ptb]
	\centering
	\begin{tikzpicture}[yscale=0.4,xscale=0.34,-,thick]
		\node[circle,inner sep=1pt,fill] (1) at (0,0) {};
		\node[circle,inner sep=1pt,fill] (2) at (0,1) {};
		\node[circle,inner sep=1pt,fill] (3) at (0,2) {};

		\node[draw,inner sep=1pt] (4) at (2,0) {};
		\node[draw,inner sep=1pt] (5) at (2,1) {};
		\node[draw,inner sep=1pt] (6) at (2,2) {};

		\node[circle,inner sep=1pt,fill] (7) at (5,0) {};
		\node[circle,inner sep=1pt,fill] (8) at (5,1) {};
		\node[circle,inner sep=1pt,fill] (9) at (5,2) {};

		\node[draw,inner sep=1pt] (10) at (7,0) {};
		\node[draw,inner sep=1pt] (11) at (7,1) {};
		\node[draw,inner sep=1pt] (12) at (7,2) {};
		\draw[dashed] (7) -- (12); \draw[dashed] (7) -- (10); \draw[dashed] (7) -- (11);
		\draw[dashed] (8) -- (12); \draw[dashed] (8) -- (10); \draw[dashed] (8) -- (11);
		\draw[dashed] (9) -- (12); \draw[dashed] (9) -- (10); \draw[dashed] (9) -- (11);

		\node[circle,inner sep=1pt,fill,label={90:{\small $u$}}] (23) at (0,3.5) {};
		\node[draw,inner sep=1pt,label={90:{\small $v$}}] (24) at (2,3.5) {};
		\node[circle,inner sep=1pt,fill,label={90:{\small $u$}}] (25) at (5,3.5) {};
		\node[draw,inner sep=1pt,label={90:{\small $v$}}] (26) at (7,3.5) {};
		\draw (25) -- (26);

		\node (a) at (1,5.5) {\small $H\rmedge uv$};
		\draw[thin] (-.5,3) rectangle (2.5,4.75);
		\node (a) at (6,5.5) {\small $H$};
		\draw[thin] (4.5,3) rectangle (7.5,4.75);

		\node (a) at (1,-1.25) {\small $G$};
		\draw[thin] (-.5,-.5) rectangle (2.5,2.5);
		\node (a) at (6,-1.25) {\small $G'$};
		\draw[thin] (4.5,-.5) rectangle (7.5,2.5);

		\node (1000) at (3.5,-3.5) {\textbf{Edge deletion}};


		\node[circle,inner sep=1pt,fill] (101) at (11,0) {};
		\node[circle,inner sep=1pt,fill] (102) at (11,1) {};
		\node[circle,inner sep=1pt,fill] (103) at (11,2) {};

		\node[diamond,fill,inner sep=1pt] (104) at (12.9,0) {};
		\node[diamond,fill,inner sep=1pt] (105) at (12.9,1) {};
		\node[diamond,fill,inner sep=1pt] (106) at (12.9,2) {};

		\node[draw,inner sep=1pt] (104a) at (13.1,0) {};
		\node[draw,inner sep=1pt] (105a) at (13.1,1) {};
		\node[draw,inner sep=1pt] (106a) at (13.1,2) {};

		\node[circle,inner sep=1pt,draw] (107) at (15,0) {};
		\node[circle,inner sep=1pt,draw] (108) at (15,1) {};
		\node[circle,inner sep=1pt,draw] (109) at (15,2) {};

		\draw[thin] (101) -- (105);
		\draw[thin] (102) -- (106);
		\draw[thin] (103) -- (105);
		\draw[thin] (104a) -- (108);
		\draw[thin] (105a) -- (109);

		\node[circle,inner sep=1pt,fill] (110) at (18,0) {};
		\node[circle,inner sep=1pt,fill] (111) at (18,1) {};
		\node[circle,inner sep=1pt,fill] (112) at (18,2) {};

		\node[diamond,inner sep=1pt,fill] (113) at (20,0) {};
		\node[diamond,inner sep=1pt,fill] (114) at (20,1) {};
		\node[diamond,inner sep=1pt,fill] (115) at (20,2) {};

		\node[draw,inner sep=1pt] (116) at (22,0) {};
		\node[draw,inner sep=1pt] (117) at (22,1) {};
		\node[draw,inner sep=1pt] (118) at (22,2) {};

		\node[circle,inner sep=1pt,draw] (119) at (24,0) {};
		\node[circle,inner sep=1pt,draw] (120) at (24,1) {};
		\node[circle,inner sep=1pt,draw] (121) at (24,2) {};

		\draw[thin] (110) -- (114);
		\draw[thin] (111) -- (115);
		\draw[thin] (112) -- (114);
		\draw[thin,dashed] (110) -- (117);
		\draw[thin,dashed] (111) -- (118);
		\draw[thin,dashed] (112) -- (117);

		\draw[thin,dashed] (116) -- (120);
		\draw[thin,dashed] (117) -- (121);
		\draw[thin] (113) -- (120);
		\draw[thin] (114) -- (121);

		\draw[dashed] (113) -- (116);
		\draw[dashed] (114) -- (117);
		\draw[dashed] (115) -- (118);

		\node[circle,inner sep=1pt,fill] (120) at (11,3.5) {};
		\node[diamond,inner sep=1pt,fill,label={90:{\small $\,uv$}}] (121) at (12.9,3.5) {};
		\node[draw,inner sep=1pt] (121a) at (13.1,3.5) {};
		\node[circle,inner sep=1pt,draw] (122) at (15,3.5) {};
		\draw[thin] (120) -- (121); \draw[thin] (122) -- (121a);

		\node[circle,inner sep=1pt,fill] (123) at (18,3.5) {};
		\node[diamond,inner sep=1pt,fill,label={90:{\small $u$}}] (124) at (20,3.5) {};
		\node[inner sep=1pt,draw,label={90:{\small $v$}}] (125) at (22,3.5) {};
		\node[circle,inner sep=1pt,draw] (126) at (24,3.5) {};
		\draw[thin] (123) -- (124); \draw (124) -- (125); \draw[thin] (126) -- (125);

		\node (a) at (13,5.5) {\small $H\ctedge uv$};
		\draw[thin] (10.5,3) rectangle (15.5,4.75);
		\node (a) at (21,5.5) {\small $H$};
		\draw[thin] (17.5,3) rectangle (24.5,4.75);

		\node (a) at (13,-1.25) {\small $G$};
		\draw[thin] (10.5,-.5) rectangle (15.5,2.5);
		\node (a) at (21,-1.25) {\small $G'$};
		\draw[thin] (17.5,-.5) rectangle (24.5,2.5);
		\node (2000) at (17.5,-3.5) {\textbf{Edge contraction}};


		\node[circle,inner sep=1pt,fill] (201) at (28,0) {};
		\node[circle,inner sep=1pt,fill] (202) at (28,1) {};
		\node[circle,inner sep=1pt,fill] (203) at (28,2) {};

		\node[diamond,inner sep=1pt,fill] (204) at (30,0) {};
		\node[diamond,inner sep=1pt,fill] (205) at (30,1) {};
		\node[diamond,inner sep=1pt,fill] (206) at (30,2) {};

		\node[circle,inner sep=1pt,draw] (207) at (32,0) {};
		\node[circle,inner sep=1pt,draw] (208) at (32,1) {};
		\node[circle,inner sep=1pt,draw] (209) at (32,2) {};
		\draw[thin] (201) -- (204); \draw[thin] (202) -- (204);
		\draw[thin] (202) -- (205); \draw[thin] (202) -- (206);
		\draw[thin] (203) -- (205);
		\draw[thin] (206) -- (207); \draw[thin] (206) -- (209);
		\draw[thin] (205) -- (208);

		\node[circle,inner sep=1pt,fill] (210) at (35,0) {};
		\node[circle,inner sep=1pt,fill] (211) at (35,.7) {};
		\node[circle,inner sep=1pt,fill] (212) at (35,1.4) {};

		\node[diamond,inner sep=1pt,fill] (213) at (37,-.2) {};
		\node[diamond,inner sep=1pt,fill] (214) at (37,.5) {};
		\node[diamond,inner sep=1pt,fill] (215) at (37,1.2) {};

		\node[circle,inner sep=1pt,draw] (216) at (39,0) {};
		\node[circle,inner sep=1pt,draw] (217) at (39,.7) {};
		\node[circle,inner sep=1pt,draw] (218) at (39,1.4) {};
		\draw[thin] (210) -- (213); \draw[thin] (211) -- (213);
		\draw[thin] (211) -- (214); \draw[thin] (211) -- (215);
		\draw[thin] (212) -- (214);
		\draw[thin] (215) -- (216); \draw[thin] (215) -- (218);
		\draw[thin] (214) -- (217);

		\node[circle,inner sep=1pt,fill] (220) at (28,3.5) {};
		\node[diamond,inner sep=1pt,fill] (221) at (30,3.5) {};
		\node[circle,inner sep=1pt,draw] (222) at (32,3.5) {};
		\draw[thin] (220) -- (221); \draw[thin] (222) -- (221);

		\node[circle,inner sep=1pt,fill] (224) at (35,3.5) {};
		\node[diamond,inner sep=1pt,fill] (225) at (37,3.25) {};
		\node[circle,inner sep=1pt,draw] (226) at (39,3.5) {};
		\draw[thin] (224) -- (225); \draw[thin] (225) -- (226);

		\node[inner sep=1.5pt,draw,label={90:{\small $z$}}] (223) at (37,3.75) {};

		\node[inner sep=1.5pt,draw] (219) at (37,2.2) {};

		\node (a) at (30,5.5) {\small $H\rmvtx z$};
		\draw[thin] (27.5,3) rectangle (32.5,4.75);
		\node (a) at (37,5.5) {\small $H$};
		\draw[thin] (34.5,3) rectangle (39.5,4.75);

		\node (a) at (30,-1.25) {\small $G$};
		\draw[thin] (27.5,-.5) rectangle (32.5,2.5);
		\node (a) at (37,-1.25) {\small $G'$};
		\draw[thin] (34.5,-.5) rectangle (39.5,2.5);
		\node (3000) at (33.5,-3.5) {\textbf{Vertex deletion}};
	\end{tikzpicture}
	\caption{\label{fig:minor}%
		Illustration of the reduction for each operation as demonstrated in Lemma~\ref{lem:minor_closed}. Edges that are added in the reduction are dashed.}
\end{figure}
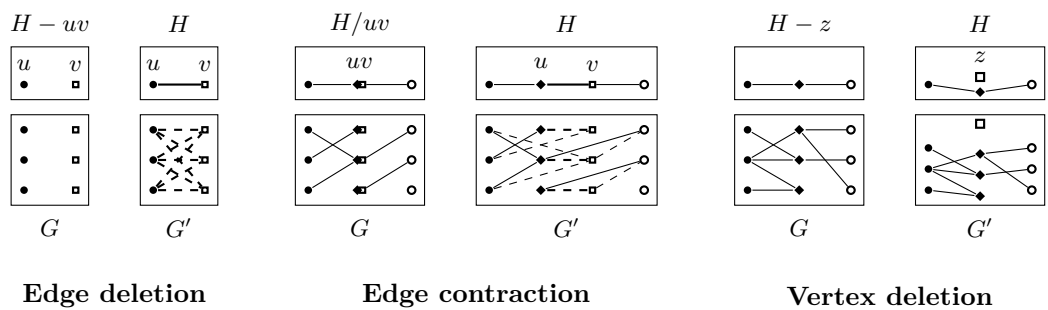%
\begin{proof}
	The claim is trivial if~$(\hat H,\hat X)$ and~$(H,X)$ are equal.
	We prove the claim in case~$(\hat H,\hat X)$ is obtained from~$(H,X)$ by a single deletion or contraction operation, the full result then follows by induction.
	Figure~\ref{fig:minor} illustrates the proof for each of the three operations.

	\emph{Edge deletions.}
	Let $e\in E(H)$ be an edge with~$e=uv$ and suppose that $\hat H=H \rmedge e$ and $\hat X=X$.
	Let $G$ be a~$\hat H$-colored graph given as input, together with the coloring~${c\colon V(G)\to V(\hat H)}$.
	To construct~$G'$, we start from~$G$ and simply add all possible edges between the color classes $c^{-1}(u)$ and $c^{-1}(v)$; clearly this construction takes polynomial time, $G'$ has the same number of vertices as~$G$, and $c$ is a homomorphism from~$G'$ to~$H$.
	To verify the correctness, we show that $\cphoms{\hat H,X}{G}=\cphoms{H,X}{G'}=$ holds.
	Indeed, let~$h:V(H)\to V(G)$ a color-prescribed mapping.
	Since~$h(e)\in E(G')$ holds by construction and~$e$ is the only constraint where~$H$ and~$\hat H$ differ, the addition of the edge~$e$ does not matter.
	Hence~$h$ is an element of~$\cphoms{\hat H,X}{G}$ if and only if it is an element of~$\cphoms{H,X}{G'}$.
	Moreover, the set of partial color-prescribed homomorphisms stays the same.

	\emph{Vertex deletions.}
	Let $z\in V(H)$ be an isolated vertex and suppose~$\hat H=H \rmvtx z$ and $\hat X=X\setminus\set{z}$.
	Let $G$ be a~$\hat H$-colored graph given as input, together with the coloring~${c\colon V(G)\to V(\hat H)}$.
	To construct~$G'$, we start from~$G$ and simply add an isolated vertex~$z'$ to it, whose color~$c(z')$ we define as~$z$; clearly $c$ is now a homomorphism from~$G'$ to~$H$.
	To verify the correctness, observe that $\#\cphoms{\hat H,\hat X}{G}=\#\cphoms{H,X}{G'}$ holds:
	Any color-prescribed homomorphism~$h$ from~$H$ to~$G'$ remains a color-prescribed homomorphism from~$\hat H$ to~$G$ by restricting~$h$ to~$V(\hat H)$.
	Conversely, any~$h$ from~$\hat H$ to~$G$ can be extended in exactly one color-prescribed way by setting~${h(z)=z'}$.
	Thus the number of partial color-prescribed homomorphisms stays the same.

	\emph{Edge contractions.}
	Let $e\in E(H)$ be an edge with $e=uv$, and suppose~$(\hat H,\hat X)=(H,X)\ctedge e$.
	Contracting the edge $e$ in $H$ identifies the vertices $u$ and $v$; let us call the new vertex~$w\in V(\hat H)$.
	Let $G$ be a~$\hat H$-colored graph given as input, together with the coloring~${c\colon V(G)\to V(\hat H)}$.
	We want to use~$G'$ ensure that any color-prescribed homomorphism~$h$ from~$H$ to~$G'$ assigns~$u$ and~$v$ to the same value, that is, satisfies the equality constraint~$h(u)=h(v)$.
	To do this, we simply put an induced perfect matching in~$G'$ between the color class of~$u$ and the color class of~$v$.
	More formally, we start from~$G$ and split every vertex~$x\in c^{-1}(w)$ into an edge~$x_u x_v$ in~$G'$, but we leave their neighborhoods intact, that is, we have $N_{G'}(x_u)\cap V(G)=N_{G'}(x_v)\cap V(G)=N_{G}(x)$.
	Clearly $G'$ is now~$H$-colored, and it has exactly $\abs{c^{-1}(w)}$ vertices more than~$G$.
	To verify correctness, again observe that
	$\#\cphoms{\hat H,\hat X}{G}=\#\cphoms{H,X}{G'}$ holds:
	Our construction forces any color-prescribed homomorphism~$h$ from~$H$ to~$G'$ to satisfy~$h(u)=h(v)$ and thus gives rise to a color-prescribed homomorphism~$\hat h$ from~$\hat H$ to~$G$ by setting $\hat h(w)=h(u)$; this mapping $h\mapsto\hat h$ is a bijection.
	If~$e$ is disjoint from~$X$, then~$X=\hat X$ holds and the set of partial homomorphisms is the same because~$h|_X=\hat h|_{\hat X}$ holds.
	If~$e$ is not disjoint from~$X$, then $\hat X\subsetneq X$ holds, but still the mapping $h|_X\mapsto\hat h|_{\hat X}$ is bijective.
	In any case, the number of partial homomorphisms is the same.
\end{proof}

\subsection{Reducing Color-prescribed to Uncolored Homomorphisms}\label{sec:phom_to_hom}

We show that the number of color-prescribed homomorphism for graphical conjunctive queries can be expressed by using the number of uncolored homomorphisms.
For the reduction, we need yet another type of homomorphisms as an intermediate step, namely \emph{colorful homomorphisms}.
In contrast to color-prescribed homomorphisms, \emph{colorful homomorphisms} are homomorphisms $h\in\homs{H}{G}$
with the less prescriptive property that the image of $h$ contains a vertex for each color, that is, we have $c(h(V(H))) = V(H)$.
We write $\cfhoms{H}{G}$ for the set of all colorful homomorphisms. Given $X\subseteq V(H)$,
we write $\cfhoms{H,X}{G}$ for the set of \emph{partial} colorful homomorphism, i.e., functions~$a$ from~$X$ to~$c^{-1}(X)$ that can be extended to a colorful homomorphism from~$H$ to~$G$.
We emphasize that we only consider functions $a$ satisfying $c(a(X))=X$.\newline
In the reduction, we need the following simple observation about the relationship between
$H$-colored graphs and homomorphisms into them.
\begin{fact}\label{fact:canonical_endo}
	Let $c$ be the $H$-coloring of a graph $G$ and let $h\in\homs{H}{G}$.
	Then the function $\pi: v\mapsto c(h(v))$ is an endomorphism of $H$.
	If $h$ is colorful and satisfies $c(h(X))=X$ for a set $X \subseteq H$, then $\pi$ is an automorphism that maps $X$ to $X$ in such a way that the function composition~$h\circ \pi^{-1}$ is a color-prescribed homomorphism.
\end{fact}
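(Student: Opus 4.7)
The plan is to read the three claims off the definitions, using the finiteness of $V(H)$ as the only nontrivial ingredient. Since the $H$-coloring $c$ is by definition a homomorphism from $G$ to $H$, and $h$ is a homomorphism from $H$ to $G$, the composition $\pi = c \circ h$ is a composition of homomorphisms and hence an endomorphism of $H$. This settles the first claim immediately.

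For the automorphism assertion under the colorfulness hypothesis, the definition of colorful gives $\pi(V(H)) = c(h(V(H))) = V(H)$, so $\pi$ is a surjection of the finite set $V(H)$ onto itself and therefore a bijection on vertices. To upgrade this to an automorphism I must argue that $\pi^{-1}$ is also edge-preserving. Here I would observe that the vertex-bijection $\pi$ sends distinct edges of $H$ to distinct unordered pairs (since a pair $\{u,v\}$ is determined by $\{\pi(u),\pi(v)\}$), and each such pair lies in $E(H)$ because $\pi$ is an endomorphism. Hence $\pi$ induces an injection $E(H) \hookrightarrow E(H)$, which must be a bijection by finiteness of $E(H)$, so every edge of $H$ is the $\pi$-image of an edge and $\pi^{-1}$ is a homomorphism. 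The hypothesis $c(h(X)) = X$ then reads as $\pi(X) = X$, so the automorphism $\pi$ permutes $X$ setwise.

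For the final claim, $h \circ \pi^{-1}$ is a homomorphism as a composition of the homomorphisms $\pi^{-1}$ and $h$. To see that it is color-prescribed I would simply compute, for every $u \in V(H)$,
\begin{equation*}
c\bigl((h \circ \pi^{-1})(u)\bigr) \;=\; c\bigl(h(\pi^{-1}(u))\bigr) \;=\; \pi(\pi^{-1}(u)) \;=\; u\,,
\end{equation*}
which is exactly the color-prescription condition. The only mildly delicate point in the whole argument is the step that a surjective endomorphism of a finite graph is actually an automorphism (i.e.\ that edge-injectivity plus finiteness forces $\pi(E(H)) = E(H)$); every other step is a direct unpacking of definitions.
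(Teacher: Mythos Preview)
Your proof is correct and follows essentially the same approach as the paper: compose homomorphisms to get the endomorphism, use colorfulness plus finiteness to upgrade to an automorphism, and verify the color-prescription of $h\circ\pi^{-1}$ by direct computation. You are in fact more careful than the paper on the automorphism step (explicitly arguing that $\pi$ bijects $E(H)$ via finiteness, where the paper simply asserts that vertex-surjectivity of the endomorphism suffices), and your final computation $c(h(\pi^{-1}(u)))=\pi(\pi^{-1}(u))=u$ is cleaner than the paper's, which somewhat abusively writes $(c\circ h)^{-1}=h^{-1}\circ c^{-1}$ even though $c$ and $h$ need not be individually invertible.
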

\begin{proof}
The first statement holds as $\pi$ is the composition of two homomorphisms $h: V(H) \rightarrow V(G)$ and $c: V(G)\rightarrow V(H)$. For the second statement observe that colorfulness of $h$ implies that $\pi(V(H))=V(H)$ and hence, together with the assumption that $c(h(X))=X$, the endomorphism $\pi$ is an automorphism that maps $X$ to $X$. Finally we have that
\begin{equation*}
c(h\circ\pi^{-1}(v)) = c(h\circ (c\circ h)^{-1}(v)) = c(h(h^{-1}(c^{-1}(v)))) = v \,,
\end{equation*}
and hence that $h\circ \pi^{-1}$ is color-prescribed.
\end{proof}

Using Fact~\ref{fact:canonical_endo} and by defining a suitable equivalence relation,
we obtain the first part of the reduction, namely from $\#\cphomsprob(\Delta)$ to
$\#\cfhomsprob(\Delta)$.

\begin{lemma}[Reduction from Color-prescribed to Colorful Homomorphisms]~\\
	Let $(H,X)$ be a graphical conjunctive query and let $G$ be an $H$-colored graph.
  Then
  \begin{equation*}
    \#\cfhoms{H,X}{G} = \#\mathsf{Aut}(H,X) \cdot \#\cphoms{H,X}{G},
  \end{equation*}
  where $\mathsf{Aut}(H,X):=\{b:X\rightarrow X\mid\exists h\in\mathsf{Aut}(H):h|_X=b\}$.
\end{lemma}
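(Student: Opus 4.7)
\medskip

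\noindent\textbf{Plan.} The plan is to exhibit an explicit bijection
\begin{equation*}
	\Psi\colon \cfhoms{H,X}{G} \;\longrightarrow\; \cphoms{H,X}{G} \times \mathsf{Aut}(H,X)\,,
\end{equation*}
which immediately yields the claimed identity on cardinalities. The main engine is Fact~\ref{fact:canonical_endo}: every colorful homomorphism $h\in\homs{H}{G}$ extending a partial colorful homomorphism $a$ gives rise to an automorphism $\pi=c\circ h$ of $H$ such that $h\circ \pi^{-1}$ is color-prescribed.

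\medskip

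\noindent\textbf{Defining $\Psi$.} Given $a\in\cfhoms{H,X}{G}$, first observe that the function $b_a\colon X\to X$ defined by $b_a(x):=c(a(x))$ is well-defined with image in $X$, since by our convention on partial colorful homomorphisms we have $c(a(X))=X$. Crucially, $b_a$ depends only on $a$, not on any extension. Now choose any colorful extension $h\in\cfhoms{H}{G}$ of $a$ and set $\pi:=c\circ h$. By Fact~\ref{fact:canonical_endo}, $\pi$ is an automorphism of $H$ with $\pi(X)=X$; and $\pi|_X=b_a$, so in particular $b_a\in\mathsf{Aut}(H,X)$. Moreover, Fact~\ref{fact:canonical_endo} tells us that $h\circ\pi^{-1}$ is a color-prescribed homomorphism, and its restriction to $X$ equals $a\circ b_a^{-1}$ (using that $\pi$ restricts to the bijection $b_a$ on $X$, so $\pi^{-1}$ maps $X$ into $X$). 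Therefore the assignment
\begin{equation*}
	\Psi(a) := \paren[\big]{\,a\circ b_a^{-1},\ b_a\,}
\end{equation*}
lands in $\cphoms{H,X}{G}\times\mathsf{Aut}(H,X)$.

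\medskip

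\noindent\textbf{The inverse.} Define $\Phi\colon \cphoms{H,X}{G}\times\mathsf{Aut}(H,X)\to\cfhoms{H,X}{G}$ by $\Phi(a',b):=a'\circ b$. To see that this is well-defined, extend $a'$ to a color-prescribed $h'\in\cphoms{H}{G}$ and extend $b$ to an automorphism $\beta\in\mathsf{Aut}(H)$; then $h'\circ\beta$ is a homomorphism $H\to G$. It is colorful because $\beta$ is a bijection on $V(H)$ and $h'$ is color-prescribed, so $c(h'(\beta(V(H))))=c(h'(V(H)))=V(H)$. Since $(h'\circ\beta)(x)=h'(b(x))=a'(b(x))$ for $x\in X$, the map $h'\circ\beta$ extends $a'\circ b$, and the color constraint $c(a'\circ b(X))=b(X)=X$ holds because $a'$ is color-prescribed and $b$ bijects $X$ to $X$.

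\medskip

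\noindent\textbf{Mutual inverseness.} To check $\Psi\circ\Phi=\mathrm{id}$, let $a=a'\circ b$. For $x\in X$, $b_a(x)=c(a(x))=c(a'(b(x)))=b(x)$ since $a'$ is color-prescribed; hence $b_a=b$ and consequently $a\circ b_a^{-1}=a'\circ b\circ b^{-1}=a'$. For $\Phi\circ\Psi=\mathrm{id}$, write $a\in\cfhoms{H,X}{G}$ as $\Phi(a\circ b_a^{-1},b_a)=(a\circ b_a^{-1})\circ b_a=a$. This establishes the bijection and hence the lemma.

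\medskip

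\noindent\textbf{Expected obstacle.} There is no deep difficulty here; the only subtlety is ensuring that every object in sight is well-defined independently of auxiliary choices. The key point is that $b_a$ is intrinsic to $a$ (it reads off the colors of the image), even though it arises via the non-canonical choice of an extension $h$ in Fact~\ref{fact:canonical_endo}. Once this is pinned down, the algebra collapses because $\mathsf{Aut}(H,X)$ acts freely and transitively on the fiber of $\Psi$ through the formula $a\mapsto a\circ b^{-1}$.
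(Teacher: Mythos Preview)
Your proof is correct and follows essentially the same approach as the paper: both use Fact~\ref{fact:canonical_endo} to extract the partial automorphism $b_a = c\circ a|_X$ from a colorful $a$ and observe that precomposition by elements of $\mathsf{Aut}(H,X)$ relates colorful and color-prescribed partial homomorphisms. The only cosmetic difference is packaging: you construct an explicit bijection $\Psi$ with inverse $\Phi$, whereas the paper phrases the same computation via an equivalence relation on $\cfhoms{H,X}{G}$ (equal image on $X$) and argues that each class has size $\#\mathsf{Aut}(H,X)$ with a unique color-prescribed representative.
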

\begin{proof}
	We define an equivalence relation $\sim$ on the set
	\mbox{$\cfhoms{H,X}{G}$} as follows: Two mappings $a, a'\in \cfhoms{H,X}{G}$ are equivalent,
	written $a \sim a'$, if and only if their image is equal, that is, $a(X)=a'(X)$ holds.
	We denote the equivalence class of $a$ with $\llbracket a\rrbracket$.

	To show~``$\ge$'', let $a\in\cphoms{H,X}{G}$ be a partial color-prescribed homomorphism.
	We show that~$\llbracket a\rrbracket$ contains at least~$\#\mathsf{Aut}(H,X)$ elements, exactly one of which is color-prescribed.
	Indeed, composing~$a$ with a bijection~$b\in\mathsf{Aut}(H,X)$ yields distinct functions~$a\circ b$, each of which has the same image as~$a$ and thus is an element of~$\llbracket a\rrbracket$.
	Moreover, each~$a\circ b$ can be extended to a colorful homomorphism by composing the assumed automorphism extension of~$b$ with the assumed colorful homomorphism extension for~$a$.
	Finally,~$a\circ b$ is color-prescribed only if~$b$ is the identity function.
	Thus each color-prescribed~$a$ leads to at least~$\#\mathsf{Aut}(H,X)$ distinct colorful~$a\circ b$, which proves~``$\ge$''.

	To show~``$\le$'', it suffices to prove that every colorful~$a\in\cfhoms{H,X}{G}$ has some partial automorphism~$b\in\mathsf{Aut}(H,X)$ such that~$a\circ b\in\cphoms{H,X}{G}$ holds.
	Let $h$ denote the assumed colorful extension of~$a$.
	Using Fact~\ref{fact:canonical_endo}, $h$ induces a canonical automorphism
	$\pi \in \mathsf{Aut}(H)$ which maps~$X$ to~$X$. Thus the function~$b$ with~$b=\pi^{-1}|_X$ is a member of~$\mathsf{Aut}(H,X)$.
	Moreover, by definition of~$\pi$, the mapping~$a\circ b$ is a partial color-prescribed homomorphism.
\end{proof}

It remains to reduce colorful homomorphisms to uncolored homomorphisms.
We first observe that for minimal queries $(H,X)$, the property $c(a(X))=X$
already implies the existence of a colorful extension of $a$.
\begin{observation}\label{lem:colhom_counting_minimal}
	Let $(H,X)$ be a minimal graphical conjunctive query
	and let $G$ be an $H$-colored graph with coloring~$c$.
	Furthermore let $a$ be a function from $X$ to $V(G)$ satisfying $c(a(X))=X$.
	If $h \in\homs{H}{G}$ is a homomorphism that extends $a$, then~$h$ is colorful.
\end{observation}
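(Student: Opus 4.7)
The plan is to combine Fact~\ref{fact:canonical_endo} with Lemma~\ref{lem:counting_minimality_graphs}, which already does most of the work. The statement $c(a(X))=X$ together with the fact that $h$ extends $a$ will force the canonical endomorphism $v\mapsto c(h(v))$ to fix $X$ setwise, and minimality then upgrades it to an automorphism of all of $H$.

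More concretely, I would first set $\pi\colon V(H)\to V(H)$ by $\pi(v):=c(h(v))$. By Fact~\ref{fact:canonical_endo}, $\pi$ is an endomorphism of $H$, since it is the composition of the homomorphisms $h\colon H\to G$ and $c\colon G\to H$. Next I would check that $\pi$ restricts to a bijection $X\to X$. Since $h|_X=a$ and $c(a(X))=X$ by hypothesis, we have $\pi(X)=c(h(X))=c(a(X))=X$; finiteness of $X$ then upgrades this surjection to a bijection from $X$ onto itself.

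At this point Lemma~\ref{lem:counting_minimality_graphs} applies to the endomorphism $\pi$ of the minimal query $(H,X)$: any endomorphism of $H$ that maps $X$ bijectively to itself must in fact be an automorphism of $H$. Consequently, $\pi(V(H))=V(H)$, which rewrites as $c(h(V(H)))=V(H)$. This is exactly the definition of colorfulness of $h$.

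There is no real obstacle here; the only mildly subtle point is observing that the assumption $c(a(X))=X$ is precisely what is needed to conclude that $\pi$ permutes $X$, so that minimality can be invoked. Once that is recognized, the statement follows by direct substitution.
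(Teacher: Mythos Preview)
Your proof is correct and essentially identical to the paper's: define the canonical endomorphism $\pi=c\circ h$, use the hypothesis $c(a(X))=X$ to see that $\pi$ maps $X$ bijectively to itself, and invoke Lemma~\ref{lem:counting_minimality_graphs} to conclude $\pi$ is an automorphism and hence $h$ is colorful. Your extra remark that finiteness of $X$ upgrades the surjection to a bijection is the only elaboration beyond what the paper writes.
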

\begin{proof}
	By Fact~\ref{fact:canonical_endo}, $h$ induces the canonical endomorphism $\pi: v\mapsto c(h(v))$.
	As $h$ extends $a$ and $c(a(X))=X$ holds, the endomorphism $\pi$ bijectively maps $X$ to $X$.
	Therefore, by Lemma~\ref{lem:counting_minimality_graphs}, $\pi$ is an automorphism, which implies that $h$ is colorful.
\end{proof}

We proceed with the reduction to uncolored homomorphisms.
\begin{lemma}[Reduction from Colorful to Uncolored Homomorphisms]
	\label{lem:colorful to uncolored}
	Let $(H,X)$ be a minimal graphical conjunctive query.
	Then there exists a deterministic algorithm $\mathbb{A}$ with oracle access to
	$\#\homs{H,X}{\star}$ that computes $\#\cfhoms{H,X}{\star}$.
	Furthermore $\mathbb{A}$ runs in time $O(f(|H,X|)\cdot n^c)$ for some computable
	function $f$ and some constant $c$ independent of $(H,X)$.
\end{lemma}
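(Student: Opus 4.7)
The plan is to reduce, by multivariate polynomial interpolation on a parameterized blowup of $G$, the colorful partial-homomorphism count to the uncolored partial-homomorphism counts that the oracle provides.
Given an $H$-colored input $G$ with coloring $c$, for every tuple $\vec k = (k_u)_{u \in V(H)}$ of positive integers I construct the uncolored blowup $G^{(\vec k)}$ in which each vertex $v \in V(G)$ is replaced by $k_{c(v)}$ independent copies and a copy of $u$ is adjacent to a copy of $v$ iff $uv \in E(G)$.
Every partial homomorphism $a \colon X \to V(G^{(\vec k)})$ decomposes uniquely into a \emph{projection} $\bar a \in \homs{H,X}{G}$ together with an independent choice among the $k_{c(\bar a(x))}$ copies available for each $x \in X$; and since every $k_u \ge 1$, such an $a$ extends to a homomorphism in $G^{(\vec k)}$ iff $\bar a$ extends in $G$. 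Consequently,
\[
  P(\vec k) \;:=\; \#\homs{H,X}{G^{(\vec k)}} \;=\; \sum_{\bar a \in \homs{H,X}{G}} \prod_{x \in X} k_{c(\bar a(x))}
\]
is a multivariate polynomial of total degree $|X|$ in the variables $(k_u)_{u \in V(H)}$; grouping by the \emph{color profile} $\phi := c \circ \bar a \colon X \to V(H)$, the coefficient of the monomial $\prod_{x \in X} k_{\phi(x)}$ equals $N_\phi := \#\{\bar a \in \homs{H,X}{G} \colon c \circ \bar a = \phi\}$.

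I next single out the coefficient of the distinguished monomial $\prod_{x \in X} k_x$, which, by multiset equality, aggregates the $N_\phi$ over all profiles $\phi$ that are bijections $X \to X$. This is where minimality of $(H,X)$ does the essential work: for any such $\bar a$ and any extension $h \in \homs{H}{G}$ with $h|_X = \bar a$, the canonical endomorphism $c \circ h$ of $H$ (Fact~\ref{fact:canonical_endo}) is bijective on $X$, hence by Lemma~\ref{lem:counting_minimality_graphs} a true automorphism of $H$, so $\phi \in \mathsf{Aut}(H,X)$. Conversely, for any $\phi \in \mathsf{Aut}(H,X)$ and any $\pi \in \mathsf{Aut}(H)$ with $\pi|_X = \phi$, the assignment $\bar a \mapsto \bar a \circ \pi|_X^{-1}$ is a bijection between the extendable $\bar a$ with profile $\phi$ and $\cphoms{H,X}{G}$ (the inverse being $\bar a' \mapsto \bar a' \circ \pi|_X$). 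Hence $N_\phi = \#\cphoms{H,X}{G}$ uniformly over $\phi \in \mathsf{Aut}(H,X)$, and the sought coefficient equals $\#\mathsf{Aut}(H,X) \cdot \#\cphoms{H,X}{G}$, which is $\#\cfhoms{H,X}{G}$ by the preceding lemma.

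The algorithm $\mathbb{A}$ then evaluates $P(\vec k)$ at every $\vec k \in [1,|X|+1]^{V(H)}$ -- a grid of size depending only on $(H,X)$ -- by constructing each $G^{(\vec k)}$ of size $O(|V(H)|\cdot(|X|+1)\cdot|V(G)|)$ in linear time and invoking the oracle once per tuple, and then runs standard multivariate polynomial interpolation to recover the coefficient of $\prod_{x\in X}k_x$. The number of queries and the interpolation cost are absorbed into a computable factor $f(|H,X|)$, while each blowup construction and oracle query runs in $\poly(|V(G)|)$ time with a universal exponent $c$, yielding the claimed $O(f(|H,X|)\cdot n^c)$ bound. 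The only non-routine step is the correctness argument above, which crucially depends on Lemma~\ref{lem:counting_minimality_graphs} to rule out spurious bijective profiles outside $\mathsf{Aut}(H,X)$.
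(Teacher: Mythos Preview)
Your proof is correct and follows essentially the same approach as the paper: the same color-indexed blowup $G^{(\vec k)}$, the same polynomial identity for $\#\homs{H,X}{G^{(\vec k)}}$, and extraction of the same square-free monomial $\prod_{x\in X}k_x$ by multivariate interpolation. The only cosmetic difference is that the paper identifies this coefficient directly as $\#\cfhoms{H,X}{G}$ via Observation~\ref{lem:colhom_counting_minimal}, whereas you first decompose it as $\sum_{\phi\text{ bij.}}N_\phi=\#\mathsf{Aut}(H,X)\cdot\#\cphoms{H,X}{G}$ and then invoke the preceding lemma---both routes use minimality (Lemma~\ref{lem:counting_minimality_graphs}) at the same crucial point.
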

\begin{proof}
	Our reduction will use multivariate polynomial interpolation.
	We start by providing an intuition.
	Let $k=|V(H)|$, $\ell = |X|$ and assume that the vertices of $H$ are the integers $1,\dots,k$
	from which the first $\ell$ are in $X$.
	Further, let an $H$-colored graph $G$ with coloring $c$ be given.
	For every color $i \in [k]$, we clone (including incident edges) all vertices with color~$i$ precisely $z_i-1$ times for some positive integer $z_i$.
	We denote the resulting graph as $G^{\vec{z}}$, which is still $H$-colored.

	Next, the numbers $z_i$ for $i \in [k]$ are interpreted as formal variables
	and it will turn out that $\#\homs{H,X}{G^{\vec{z}}}$ is a polynomial
	in $\mathbb{Q}[z_1,\dots,z_k]$.
	Additionally, the coefficient of $\Pi_{i=1}^\ell z_i$ is the number of assignments
	$a$ from~$X$ to $V(G)$ such that $c(a(X))=X$ and that $a$ can be extended to a homomorphism.

	Applying Lemma~\ref{lem:colhom_counting_minimal} we obtain that those homomorphisms
	are indeed colorful.
	We will be able to compute the coefficient by standard multivariate interpolation.
	Note that the evaluation of the polynomial in $\vec{z}$ can be done by querying the oracle
	for $G^{\vec{z}}$.

	Formally, we define an equivalence relation on $\homs{H,X}{G^{\vec{z}}}$ as follows.
	Two assignments $a_{\vec{z}}$ and $a_{\vec{z}}'$ are equivalent if and only if for every
	$x\in X$ it holds that $a_{\vec{z}}(x)$ and $a_{\vec{z}}'(x)$ are clones of the same vertex.
	Note that every equivalence class corresponds to precisely one mapping
	$a \in \homs{H,X}{G}$ and we write $\llbracket a \rrbracket_{\vec{z}}$ for that class.
	Next observe that every $a \in \homs{H,X}{G}$ induces a color-vector
	$c_a =(c(a(1),\dots,c(a(\ell)))\in [k]^{\ell}$, which allows us to express the size of
	$\llbracket a \rrbracket_{\vec{z}}$ as $\Pi_{i=1}^\ell z_{c_a(i)}$.
	This yields the following polynomial for $\#\homs{H,X}{G^{\vec{z}}}$:
	\begin{align*}
		\#\homs{H,X}{G^{\vec{z}}} & = \sum_{a\in\homs{H,X}{G}}
		\#\llbracket a \rrbracket_{\vec{z}}                    \\
		                          & = \sum_{v \in [k]^{\ell}}~
		\sum_{\substack{a\in\homs{H,X}{G}                      \\c_a=v}}~
		\prod_{i=1}^\ell z_{c_a(i)}
	\end{align*}

	Finally it can be verified easily that the coefficient of $\Pi_{i=1}^\ell z_i$ is indeed the
	number of assignments $a$ from $X$ to $V(G)$ such that $c(a(X))=X$ and that $a$ can be
	extended to a homomorphism $h$. Note that $h$ is colorful by minimality of $H$ and
	Lemma~\ref{lem:colhom_counting_minimal}.
	In other words, the coefficient of $\Pi_{i=1}^\ell z_i$ is precisely $\#\cfhoms{H,X}{G}$.

	As we can evaluate the polynomial for every vector $\vec{z}\in \mathbb{N}_{>0}^k$
	the coefficient can be computed using standard multivariate polynomial interpolation
	(see e.g.\ \cite[Theorem~1.38]{radu_dis}).
\end{proof}\pagebreak

\section{The complexity of graphical conjunctive queries}\label{sec:4}

In this section, we classify the complexity of counting homomorphisms for classes of graphical conjunctive queries.
Formally, we consider the parameterized counting problem $\#\homsprob(\Delta)$ for each fixed class~$\Delta$ of graphical conjunctive queries.
This problem is given as input a query $(H,X)\in \Delta$ and a graph~$G$ and the task is to compute the number $\#\homs{H,X}{G}$. The problem is parameterized by the size of~$H$. From Subsection~\ref{sec:hard_w_2} on we will also consider the color-prescribed variant which yields significantly more pleasant proofs as Lemma~\ref{lem:minor_closed} allows us to reduce from minors of conjunctive queries in this case. Using the observations in Section~\ref{sec:phom_to_hom}, the hardness results we discuss for color-prescribed homomorphisms carry over to the uncolored situation and yield our refined complexity classification for the case of graphs (Theorem~\ref{thm:extension}).

The first five subsections correspond to the five cases in the Complexity Pentachotomy.
In each case, we define the precise parameters that we need in order to classify the complexity of~$\#\homsprob(\Delta)$, and we also give an example class of queries that exhibits that complexity.
All five example classes along with their structural properties are depicted in Figure~\ref{fig:ex_table}.
In the sixth subsection, we are then in position to restate the Complexity Pentachotomy in an explicit fashion, relying on the structural parameters defined in subsections 1--5.
The first three subsections should be considered a review of previous work~\cite{DurandM15,ChenM15}, which is necessary to formally state our techniques and results.

\begin{figure}[t!]
	\centering

	\begin{center}
		\begin{tabular}{ l  c  c  c  c  c }
			\toprule
			{\large Query Classes}           & {\large {\LARGE $\exclass$}$_{ \mathsf{poly}}$} & {\large {\LARGE $\exclass$}$_{\W}$}          & {\large {\LARGE $\exclass$}$_{\#\W}$}          & {\large {\LARGE $\exclass$}$_{\#\Wtwo}$}         & {\large {\LARGE $\exclass$}$_{\#\Atwo}$}             \\ \midrule%
			\begin{tikzpicture}[-,xscale=0.575,yscale = 0.75,thick]
				\node (42) at (0.5,3) {$~$};
				\node (42) at (0.5,0) {$~$};
				\node[align=left] (0) at (0,1.5) {Query\\for $k=4$};
			\end{tikzpicture}       &
			\begin{tikzpicture}[-,xscale=0.575,yscale = 0.75,thick]
				\node (42) at (0.5,3) {$~$};
				\node (42) at (0.5,0) {$~$};
				\node[circle,inner sep=1.5pt,fill] (1) at (1,0) {};
				\node[circle,inner sep=1.5pt,fill] (2) at (1,1) {};
				\node[circle,inner sep=1.5pt,fill] (3) at (1,2) {};
				\node[circle,inner sep=1.5pt,fill] (4) at (1,3) {};

				\node[rectangle,inner sep=1.5pt,draw] (10) at (0,0.5) {};
				\node[rectangle,inner sep=1.5pt,draw] (20) at (0,1.5) {};
				\node[rectangle,inner sep=1.5pt,draw] (30) at (0,2.5) {};

				\draw (1) --(10);
				\draw (2) --(10);
				\draw (2) --(20);
				\draw (3) --(20);
				\draw (3) --(30);
				\draw (4) --(30);
			\end{tikzpicture}       &
			\begin{tikzpicture}[-,xscale=0.575,yscale = 0.75,thick]
				\node (42) at (0.5,3) {$~$};
				\node (42) at (0.5,0) {$~$};
				\node[rectangle,inner sep=1.5pt,draw] (1) at (0,0.5) {};
				\node[rectangle,inner sep=1.5pt,draw] (2) at (1,0.5) {};
				\node[rectangle,inner sep=1.5pt,draw] (3) at (0,2.5) {};
				\node[rectangle,inner sep=1.5pt,draw] (4) at (1,2.5) {};

				\draw (1) --(2);
				\draw (1) --(3);
				\draw (1) --(4);
				\draw (2) --(3);
				\draw (2) --(4);
				\draw (3) --(4);
			\end{tikzpicture}       &
			\begin{tikzpicture}[-,xscale=0.575,yscale = 0.75,thick]
				\node (42) at (0.5,3) {$~$};
				\node (42) at (0.5,0) {$~$};
				\node[circle,inner sep=1.5pt,fill] (1) at (1,0) {};
				\node[circle,inner sep=1.5pt,fill] (2) at (1,1) {};
				\node[circle,inner sep=1.5pt,fill] (3) at (1,2) {};
				\node[circle,inner sep=1.5pt,fill] (4) at (1,3) {};

				\node[rectangle,inner sep=1.5pt,draw] (10) at (0,0) {};
				\node[rectangle,inner sep=1.5pt,draw] (20) at (0,0.5) {};
				\node[rectangle,inner sep=1.5pt,draw] (30) at (0,1.25) {};
				\node[rectangle,inner sep=1.5pt,draw] (40) at (0,1.75) {};
				\node[rectangle,inner sep=1.5pt,draw] (50) at (0,2.5) {};
				\node[rectangle,inner sep=1.5pt,draw] (60) at (0,3) {};
				\draw (1) --(10);
				\draw (2) --(10);
				\draw (1) --(20);
				\draw (3) --(20);
				\draw (1) --(30);
				\draw (4) --(30);
				\draw (2) --(40);
				\draw (3) --(40);
				\draw (2) --(50);
				\draw (4) --(50);
				\draw (3) --(60);
				\draw (4) --(60);
			\end{tikzpicture}       &
			\begin{tikzpicture}[-,xscale=0.575,yscale = 0.75,thick]
				\node (42) at (0.5,3) {$~$};
				\node (42) at (0.5,0) {$~$};
				\node[circle,inner sep=1.5pt,fill] (1) at (1,0) {};
				\node[circle,inner sep=1.5pt,fill] (2) at (1,1) {};
				\node[circle,inner sep=1.5pt,fill] (3) at (1,2) {};
				\node[circle,inner sep=1.5pt,fill] (4) at (1,3) {};

				\node[rectangle,inner sep=1.5pt,draw] (10) at (0,1.5) {};
				\draw (1) --(10);
				\draw (2) --(10);
				\draw (3) --(10);
				\draw (4) --(10);
			\end{tikzpicture}       &
			\begin{tikzpicture}[-,xscale=0.575,yscale = 0.75,thick]
				\node (42) at (0,3) {$~$};
				\node (42) at (0,0) {$~$};
				\node[circle,inner sep=1.5pt,fill] (1) at (1.5,0) {};
				\node[circle,inner sep=1.5pt,fill] (2) at (1.5,1) {};
				\node[circle,inner sep=1.5pt,fill] (3) at (1.5,2) {};
				\node[circle,inner sep=1.5pt,fill] (4) at (1.5,3) {};

				\node[rectangle,inner sep=1.5pt,draw] (10) at (0,0) {};
				\node[rectangle,inner sep=1.5pt,draw] (20) at (0,1) {};
				\node[rectangle,inner sep=1.5pt,draw] (30) at (0,2) {};
				\node[rectangle,inner sep=1.5pt,draw] (40) at (0,3) {};

				\node[rectangle,inner sep=1.5pt,draw] (100) at (-1,0.5) {};
				\node[rectangle,inner sep=1.5pt,draw] (200) at (-1,1.5) {};
				\node[rectangle,inner sep=1.5pt,draw] (300) at (-1,2.5) {};

				\node[rectangle,inner sep=1.5pt,draw] (1000) at (-2,1) {};
				\node[rectangle,inner sep=1.5pt,draw] (2000) at (-2,2) {};

				\node[rectangle,inner sep=1.5pt,draw] (10000) at (-3,1.5) {};
				\draw (1) --(10);
				\draw (2) --(20);
				\draw (3) --(30);
				\draw (4) --(40);

				\draw (100) --(10);
				\draw (100) --(20);
				\draw (200) --(20);
				\draw (200) --(30);
				\draw (300) --(30);
				\draw (300) --(40);

				\draw (1000) --(100);
				\draw (1000) --(200);
				\draw (2000) --(200);
				\draw (2000) --(300);

				\draw (10000) --(1000);
				\draw (10000) --(2000);
			\end{tikzpicture}                                                                                                                                               \\ \midrule	
			\begin{tikzpicture}[-,xscale=0.575,yscale = 0.75,thick]
				\node (42) at (0,3) {$~$};
				\node (42) at (0,0) {$~$};
				\node[align=left] (0) at (0,1.5) {$\mathsf{contract}$\\for $k=4$};
			\end{tikzpicture}       &
			\begin{tikzpicture}[-,xscale=0.575,yscale = 0.75,thick]
				\node (42) at (1,3) {$~$};
				\node (42) at (1,0) {$~$};
				\node[circle,inner sep=1.5pt,fill] (1) at (1,0) {};
				\node[circle,inner sep=1.5pt,fill] (2) at (1,1) {};
				\node[circle,inner sep=1.5pt,fill] (3) at (1,2) {};
				\node[circle,inner sep=1.5pt,fill] (4) at (1,3) {};

				\draw (1) --(2);
				\draw (2) --(3);
				\draw (3) --(4);
			\end{tikzpicture}       &
			\begin{tikzpicture}[-,xscale=0.575,yscale = 0.75,thick]
				\node (42) at (0.5,3) {$~$};
				\node (42) at (0.5,0) {$~$};
				\node (0) at (0.5,1.5) {\LARGE $\emptyset$};
			\end{tikzpicture}       &
			\begin{tikzpicture}[-,xscale=0.575,yscale = 0.75,thick]
				\node (42) at (0.5,3) {$~$};
				\node (42) at (0.5,0) {$~$};
				\node[circle,inner sep=1.5pt,fill] (1) at (0,0.5) {};
				\node[circle,inner sep=1.5pt,fill] (2) at (1,0.5) {};
				\node[circle,inner sep=1.5pt,fill] (3) at (0,2.5) {};
				\node[circle,inner sep=1.5pt,fill] (4) at (1,2.5) {};

				\draw (1) --(2);
				\draw (1) --(3);
				\draw (1) --(4);
				\draw (2) --(3);
				\draw (2) --(4);
				\draw (3) --(4);
			\end{tikzpicture}       &
			\begin{tikzpicture}[-,xscale=0.575,yscale = 0.75,thick]
				\node (42) at (0.5,3) {$~$};
				\node (42) at (0.5,0) {$~$};
				\node[circle,inner sep=1.5pt,fill] (1) at (0,0.5) {};
				\node[circle,inner sep=1.5pt,fill] (2) at (1,0.5) {};
				\node[circle,inner sep=1.5pt,fill] (3) at (0,2.5) {};
				\node[circle,inner sep=1.5pt,fill] (4) at (1,2.5) {};

				\draw (1) --(2);
				\draw (1) --(3);
				\draw (1) --(4);
				\draw (2) --(3);
				\draw (2) --(4);
				\draw (3) --(4);
			\end{tikzpicture}       &
			\begin{tikzpicture}[-,xscale=0.575,yscale = 0.75,thick]
				\node (42) at (0.5,3) {$~$};
				\node (42) at (0.5,0) {$~$};
				\node[circle,inner sep=1.5pt,fill] (1) at (0,0.5) {};
				\node[circle,inner sep=1.5pt,fill] (2) at (1,0.5) {};
				\node[circle,inner sep=1.5pt,fill] (3) at (0,2.5) {};
				\node[circle,inner sep=1.5pt,fill] (4) at (1,2.5) {};

				\draw (1) --(2);
				\draw (1) --(3);
				\draw (1) --(4);
				\draw (2) --(3);
				\draw (2) --(4);
				\draw (3) --(4);
			\end{tikzpicture}                                                                                                                                               \\ \midrule
			$\mathsf{tw}$                    & $O(1)$              & $\infty$                 & $\infty$                   & $O(1)$                  & $\infty$                      \\
			$\mathsf{tw}(\mathsf{contract})$ & $O(1)$              & $O(1)$                   & $\infty$                   & $\infty$                & $\infty$                      \\
			$\mathsf{dss}$                   & $O(1)$              & $O(1)$                   & $O(1)$                     & $\infty$                & $\infty$                      \\
			$\mathsf{lmn}$                   & $O(1)$              & $O(1)$                   & $O(1)$                     & $O(1)$                  & $\infty$                      \\
			Complexity                       & {\small $\cc{P}$}   & {\small $\W$-eq.} & {\small $\#\W$-eq.} & {\small $\#\Wtwo$-hard} & {\small $\#\Atwo$-eq.} \\
			\bottomrule
		\end{tabular}
	\end{center}
	\caption{\label{fig:ex_table}%
		Illustration of the five typical classes of conjunctive queries that we discuss in Section~\ref{sec:4}.
		Depicted is the query~$(H,X)$ for~$k=4$, where free variables (i.e., vertices in~$X$) are drawn as solid discs and quantified variables (i.e., vertices in~$V(G)\setminus X$) are drawn as hollow squares.
		We also display the contract (see Definition~\ref{def:contract}) of each query.
		We write $O(1)$ whenever a parameter is bounded by a constant in the entire query class, and $\infty$ whenever it is unbounded.%
	}
\end{figure}

\subsection{Query Classes That Are Polynomial-time}
\label{sec:parameters}\label{sec:queries polytime}

Which classes~$\Delta$ of graphical conjunctive queries make the problem~$\#\homsprob(\Delta)$ polynomial-time computable?
Chen, Mengel and Durand~\cite{DurandM15,ChenM15} proved that the problem is polynomial-time computable if all graphs in~$\Delta$ as well as their \emph{contracts} have at most a constant treewidth.
While we do not need to define treewidth in this paper (see, e.g., \cite[Chap.~11]{flumgrohe}), we do define contracts.

\begin{definition}[Contract]
	\label{def:contract}
	The \emph{contract} of a conjunctive query $(H,X)$ is a graph on the vertex set~$X$, obtained by adding an edge between two vertices $u$ and $v$ in $X$ if~$uv$ is an edge of~$H$ or if there exists a connected component $C$ in $H\setminus X$ that is adjacent to both~$u$ and~$v$.
	Given a class $\Delta$ of conjunctive queries, we write $\mathsf{contract}(\Delta)$ for the set of all of its contracts.
\end{definition}
The following example class~$\exclass_{\poly}$ of queries is satisfied by the $k$-tuples~$v_1,v_3,\dots,v_{2k-1}$ of vertices in the input graph~$G$ for which there exists an extension~$v_2,v_4,\dots,v_{2k-2}$ such that~$v_1,\dots,v_{2k-1}$ is a walk in~$G$:
\begin{equation}\label{eq:example poly}
	\exclass_{\poly} = \setc{\psi_k}{k \in \mathbb{N}}\,,\quad\text{where } \psi_k
	:= x_1\dots x_k ~\exists y_1 \dots \exists y_{k-1}:
	\bigwedge_{1 \leq i < k} Ex_iy_i \wedge Ey_ix_{i+1}
	\,.
\end{equation}
Since these queries and their contracts are just paths (cf.\ Figure~\ref{fig:ex_table}), their treewidth is bounded by a constant.
Thus the corresponding problem~$\#\homsprob(\exclass_{\poly})$ is polynomial-time computable by the complexity trichotomy of Chen, Mengel and Durand~\cite{DurandM15,ChenM15}.
This can be seen more directly using dynamic programming, or by considering the square~$A^2$ of the adjacency matrix of~$G$ and replacing each positive entry by a~$1$ to obtain~$B$ -- then the sum of all entries in $B^{k}$ is the desired number.

Formally, the trichotomy theorem of~\cite{DurandM15,ChenM15} is as follows:
\begin{theorem}[\cite{DurandM15,ChenM15}]\label{thm:old_classification}
	Let $\exclass$ be a set of minimal conjunctive queries.
	If the treewidth of the formulas in $\exclass$ and their contracts,
	is bounded then $\#\homsprob(\exclass)$ is solvable in polynomial time.
	If the treewidth of the formulas is unbounded but the treewidth of the contracts is bounded,
	then $\#\homsprob(\exclass)$ is $\W$-equivalent.
	If the treewidth of the contracts is unbounded, then $\#\homsprob(\exclass)$ is $\#\W$-hard.
\end{theorem}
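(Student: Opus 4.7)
The plan follows the approach of Chen, Mengel, and Durand, splitting into an algorithmic upper bound for the first two cases and parameterized reductions for the two hardness statements. The unifying device is a two-level dynamic program. For each graphical conjunctive query $(H,X)$ and each connected component $C$ of $H[V(H)\setminus X]$, let $N(C)\subseteq X$ denote the set of free variables adjacent to $C$. The \emph{inner} level computes, for every $C$ and every boundary assignment $a\colon N(C)\to V(G)$, the number $f_C(a)$ of homomorphisms from $H[V(C)\cup N(C)]$ to $G$ that extend $a$. The \emph{outer} level then aggregates the tables $f_C$ by a standard DP on a tree decomposition of the contract to obtain $\#\homs{H,X}{G}$.

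In the polynomial case, both levels run in polynomial time: the inner DP needs $|V(G)|^{O(\mathsf{tw}(H))}$ time via the classical homomorphism-count algorithm on a tree decomposition of $H[V(C)\cup N(C)]$, and the outer DP needs $|V(G)|^{O(\mathsf{tw}(\mathsf{contract}))}$ time. For the $\W$-equivalent case, the outer DP remains polynomial because only polynomially many boundary assignments need to be considered (since $|N(C)|$ is bounded by the contract treewidth), but the inner tables are no longer polynomial-time computable. Here I would plug in an oracle for a $\W$-complete decision problem such as $\mathsf{Clique}$ to recover $f_C(a)$ in $\W$-time, using standard counting-by-decision techniques together with the bounded size of the boundary. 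The matching $\W$-hardness then follows by a direct application of Grohe's theorem to the subclass of queries whose underlying $H$ has unbounded treewidth.

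For the $\#\W$-hardness in the third case, I would reduce from the Dalmau--Jonsson problem of counting homomorphisms from patterns of unbounded treewidth. The argument exploits Lemma~\ref{lem:minor_closed}: unbounded treewidth of the contract implies, via the excluded-grid theorem, that every fixed pattern graph arises as a minor of the contract; by lifting this minor to the ambient query, it also arises as a minor of $(H,X)$ itself. Hence any color-prescribed counting problem on a hard bounded-size pattern reduces to $\#\cphomsprob(\exclass)$. The reductions of Section~\ref{sec:phom_to_hom} then transport the hardness back to the uncolored problem $\#\homsprob(\exclass)$, with the minimality of $\exclass$ enabling Observation~\ref{lem:colhom_counting_minimal} and Lemma~\ref{lem:colorful to uncolored}.

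The main obstacle is the last step: ensuring that minors of the contract genuinely lift to minors of $(H,X)$ whose quantified part remains controlled enough for the hardness reduction to go through. This is exactly where the minimality assumption on $\exclass$ does the work, together with the careful minor-closure setup from Section~\ref{sec:minors_and_colors}: minimality rules out spurious endomorphisms that would otherwise break the correspondence between homomorphism counts at the contract level and at the full-query level, and the minor-closure of color-prescribed homomorphisms packages the reduction along the excluded-grid structure uniformly.
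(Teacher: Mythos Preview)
This theorem is cited from \cite{DurandM15,ChenM15} and the paper does not give its own proof; it is used as a black box on which the paper's refinements (Theorems~\ref{thm:algo_graphs}, \ref{thm:wtwo_hardness_graphs}, \ref{thm:atwo_hardness_graphs}) are built. So there is nothing in the paper to compare your sketch against directly.

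That said, your outline is largely sound and close to the original arguments, with two points worth flagging. First, in the $\W$-equivalent case your inner table $f_C(a)$ should record only the \emph{Boolean} value $[f_C(a)>0]$, not the count: the quantity $\#\homs{H,X}{G}$ sums over assignments $a\colon X\to V(G)$ that \emph{can} be extended, so the outer DP needs existence, not multiplicity, from each quantified component. This matters because the count $f_C(a)$ cannot be recovered from a $\W$ decision oracle, whereas the bit $[f_C(a)>0]$ is exactly an instance of homomorphism existence and hence $\W$-easy by Grohe's theorem. With this correction your two-level scheme gives both the polynomial-time and the $\W$-easiness statements cleanly; $\W$-hardness does follow from Grohe once you observe that minimality of $(H,X)$ forces the core of $H$ to have treewidth comparable to that of $H$ (this is implicit in the augmented-core machinery of Section~\ref{sec:counting_minimality}).

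Second, your $\#\W$-hardness route via Lemma~\ref{lem:minor_closed} is not the original Chen--Mengel argument but rather the template the present paper uses for its own $\#\Wtwo$- and $\#\Atwo$-hardness proofs, and it does go through: a minor $F$ of the contract can be lifted to a minor $(F,V(F))$ of $(H,X)$ by first contracting the relevant $Y$-paths into direct $X$-edges and then deleting everything superfluous. Combined with Dalmau--Jonsson and the colour-removal of Section~\ref{sec:phom_to_hom} (which is precisely where minimality is used), this yields $\#\W$-hardness. So your approach here is a genuine alternative that trades Chen--Mengel's direct interpolation argument for the paper's uniform minor-reduction framework; the gain is conceptual uniformity with the later hardness proofs, at the cost of importing the excluded-grid theorem where the original did not need it.
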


\subsection{Query Classes That Are W[1]-equivalent}
As it turns out, the situation in which the treewidth of the queries and their cores is bounded appears to be the only one that is polynomial-time computable:
If the treewidth of~$\Delta$ or~$\mathsf{contract}(\Delta)$ is unbounded, then Theorem~\ref{thm:old_classification} implies that $\#\homsprob(\Delta)$ is not polynomial-time computable, unless~$\mathsf{FPT}=\W$ holds.
More precisely, when the treewidth of~$\mathsf{contract}(\Delta)$ is bounded but the treewidth of~$\Delta$ is unbounded, then the problem is~$\W$-equivalent.
To exemplify this latter situation further, note that the $\W$-complete~$k$-clique problem is a special case: The following query~$\psi_k$ in the class~$\exclass_{\W}$ is satisfiable (i.e., has the empty tuple as a satisfying assignment) if and only if the input graph has a clique of size~$k$.
Formally,
\begin{equation}\label{eq:obstructions_3}
	\exclass_{\W} = \setc{\psi_k}{k \in \mathbb{N}},\quad\text{where } \psi_k
	:= \exists y_1\dots \exists y_k: \bigwedge_{1\leq i<j\leq k} Ey_iy_j
	\,.
\end{equation}
Indeed, the contract of each query in~$\exclass_{\W}$ is the empty graph, but the treewidth of the~$k$-th query is equal to~$k-1$, so $\homsprob(\exclass_{\W})$ is $\W$-equivalent by Theorem~\ref{thm:old_classification}.

\subsection{Query Classes That Are \#W[1]-equivalent}\label{sec:in_W_1}

If the treewidth of~$\mathsf{contract}(\Delta)$ is unbounded, then $\#\homsprob(\Delta)$ is $\#\W$-hard by Theorem~\ref{thm:old_classification}.
We now define the \emph{dominating star size}, a structural parameter with the property that, if all elements of~$\Delta$ have bounded dominating star size, then~$\#\homsprob(\Delta)$ is $\#\W$-easy.
\begin{definition}[Dominating star size]
	Let $(H,X)$ be a conjunctive query and let $Y_1,\dots, Y_\ell$ be
	the connected components of the subgraph~$H[V(H)\setminus X]$ induced by the quantified variables.
	Further, let~$k_i$ be the number of vertices~$x\in X$ for which there exists a vertex
	$y \in Y_i$ that is adjacent to $x$.
	The \emph{dominating star size} of $(H,X)$ is defined via
  \begin{equation*}
    \mathsf{dss}(H,X) = \max \setc[\big]{k_i}{i \in \ell}\,.
  \end{equation*}
\end{definition}
This notion is identical to the notion of \emph{strict star size}, which was used by Chen and Mengel~\cite{ChenM15} in an intermediate step of their $\#\W$-hardness proof.

Before we prove that bounded~$\mathsf{dss}$ implies~$\#\W$-easiness, we first give an example query class~$\exclass_{\#\W}$ that fits into this situation.
The query~$\psi_k$ contains as satisfying assignments exactly those tuples~$v_1,\dots,v_k$ of vertices in~$G$ such that there is a length-$2$ walk $v_i w_{ij} v_j$ in~$G$ for any distinct~$i,j$. Formally,
\begin{equation}\label{eq:obstructions_4}
	\exclass_{\#\W} = \setc{\psi_k}{k \in \mathbb{N}},\quad\text{where } \psi_k
	:= x_1\dots x_k:
	\bigwedge_{1\leq i<j\leq k}
	\exists y_{ij}: Ex_iy_{ij}\wedge Ey_{ij}x_j
	\,.
\end{equation}
Note that the graphical representation of~$\psi_k$ corresponds to a subdivided $k$-clique (cf.\ Figure~\ref{fig:ex_table}), and its contract is a $k$-clique.
Thus the queries of~$\exclass_{\#\W}$ and their contracts have unbounded treewidth.
However, the dominating star size is equal to~$2$ because each connected component of~$H[V(G)\setminus X]$ consists of a variable~$y_{ij}$ which has two neighbors.

The $\#\W$-hardness of $\#\homsprob(\exclass_{\#\W})$ as claimed by Theorem~\ref{thm:old_classification} can be proved using a straightforward reduction from counting multicolored cliques of size~$k$, where each edge is subdivided once.
Conversely, we establish that $\#\homsprob(\exclass_{\#\W})$ is $\#\W$-easy by reducing it \emph{to} counting cliques. We prove the special case of~$\exclass_{\#\W}$ here for illustration and then sketch the proof of the general result when the dominating star size is bounded.
\begin{lemma}
	$\#\homsprob(\exclass_{\#\W})$ is $\#\W$-easy.
\end{lemma}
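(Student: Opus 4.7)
The plan is to give a parameterized Turing reduction from $\#\homsprob(\exclass_{\#\W})$ to $\#\mathsf{Clique}$. The first step is a structural reformulation: since each quantified variable $y_{ij}$ in $\psi_k$ appears only in the two atoms $Ex_iy_{ij}$ and $Ey_{ij}x_j$, a tuple $(v_1,\dots,v_k)\in V(G)^k$ lies in $\homs{\psi_k,X_k}{G}$ if and only if for every pair $1\le i<j\le k$ the vertices $v_i$ and $v_j$ have a common neighbor in~$G$; in the degenerate case $v_i=v_j$ this simply requires $v_i$ to have at least one neighbor.

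Next I would construct, in polynomial time, the auxiliary graph $G'$ on $V(G)$ whose edges are the pairs $uv$ with $u\ne v$ sharing a common neighbor in~$G$, and let $V^\ast\subseteq V(G)$ denote the set of vertices that have at least one neighbor in~$G$; both can be read off from the square of the adjacency matrix of~$G$. Then $\#\homs{\psi_k,X_k}{G}$ equals the number of homomorphisms from the looped $k$-clique $K_k^\circ$ (the $k$-clique with a self-loop at every vertex) into the graph $G''$ obtained from $G'$ by attaching self-loops exactly at the vertices of~$V^\ast$.

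To count homomorphisms from $K_k^\circ$ to $G''$, I would apply the standard stratification by preimage partition: every homomorphism $h\colon K_k^\circ\to G''$ induces a partition $\pi$ of $[k]$ via $i\sim j\iff h(i)=h(j)$, and factors through an injective homomorphism from the quotient $K_k^\circ/\pi$ into $G''$. Since $K_k^\circ$ is looped at every vertex, the quotient $K_k^\circ/\pi$ is just the looped $|\pi|$-clique, and injective homomorphisms from it to~$G''$ are in bijection with ordered $|\pi|$-cliques in $G'[V^\ast]$. Stratifying by $m=|\pi|$ yields
\[
	\#\homs{\psi_k,X_k}{G} \;=\; \sum_{m=1}^{k} S(k,m)\cdot m!\cdot \#\{m\text{-cliques in } G'[V^\ast]\}\,,
\]
where $S(k,m)$ is the Stirling number of the second kind. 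Each of the $k$ summands can be obtained via a single call to $\#\mathsf{Clique}$ with parameter $m\le k$, so the whole procedure is a parameterized Turing reduction to $\#\mathsf{Clique}\in\#\W$.

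The only step that requires any care is the partition-based decomposition, in particular verifying that the quotient of the looped clique by any partition is itself a looped clique (so that injective homomorphisms really correspond to ordinary cliques in $G'[V^\ast]$); the remaining ingredients are routine polynomial-time manipulations of the adjacency matrix.
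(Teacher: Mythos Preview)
Your argument is correct for $k\ge 2$. (For $k=1$ the conjunction in $\psi_1$ is empty, so $\#\homs{\psi_1,X_1}{G}=|V(G)|$ rather than $|V^\ast|$; handle this trivial case separately.) Collapsing each quantified variable into the common-neighbor relation and then expanding homomorphisms from the looped clique $K_k^\circ$ via the Stirling decomposition into ordered cliques gives a valid parameterized Turing reduction to $\#\mathsf{Clique}$ with $k$ oracle calls of parameter at most~$k$.

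The paper takes a shorter route. Instead of stratifying by partitions, it builds a $k$-partite graph $G'$ on vertex set $V(G)\times[k]$, with $(u,i)(v,j)$ an edge iff $i\ne j$ and $u,v$ share a common neighbor in~$G$. Since there are no edges within a part, every $k$-clique in $G'$ hits each part exactly once, so $k$-cliques in $G'$ correspond directly to satisfying tuples for $\psi_k$, and a single call to $\#\mathsf{Clique}$ suffices. The $k$-partite trick absorbs the repeated-coordinate issue transparently (two copies of the same $G$-vertex in different parts are distinct in $G'$), whereas you deal with it explicitly through self-loops and the partition sum. Both are standard devices; the paper's version is more direct, while yours makes the combinatorics of non-injective tuples visible.
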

\begin{proof}
	Given $\psi_k$ for some $k\in \mathbb{N}$ and a graph $G$,
	we wish to compute $\#\homs{\psi_k}{G}$.
	We reduce to the problem of counting cliques.
	First, we construct a graph $G'$ from $G$ as follows.
	The vertex set of $G'$ consists of $k$ copies of the vertex set of $G$.
	We add an edge between two vertices $u$ and $v$ in $G'$ if and only if they are contained in different copies and if there exists a vertex $z$ such that $\{u,z\}$ and $\{z,v\}$ are
	edges in $G$.
	Now it can easily be observed that $\#\homs{\psi_k}{G}$ equals $k!$ times the number of cliques of size $k$ in $G'$.
	As counting cliques of size $k$ is the canonical $\#\W$-complete
	problem~\cite{flumgrohe_counting}, this concludes the proof.
\end{proof}

Important in this proof is the preprocessing phase, where for every pair of vertices $u$ and $v$ we check if they have a common neighbor.
After that, we expressed the problem as a homomorphism counting problem without quantified variables.
Indeed, whenever the dominating star size is bounded, the preprocessing works and allows us to get rid of the quantified variables.
The remainder of the reduction to a problem in $\#\W$ follows from the fact that counting answers to model-checking problems without quantified variables is complete for $\#\W$.\footnote{The complexity class $\#\cc{A[1]}$, which is known to be equal to $\#\W$, is defined by the problem of counting answers to first-order formulas without quantified variables. See~\cite[Chapter~14]{flumgrohe} for an overview.}
We remark that, in general, we are not able to show containment in $\#\W$, since we use the oracle for $\#\W$ already in the preprocessing phase.

\begin{theorem}\label{thm:algo_graphs}
	Let $\Delta$ be a class of graphical conjunctive queries with bounded dominating star size.
	Then $\#\homsprob(\Delta)$ is ${\#\W}$-easy.
\end{theorem}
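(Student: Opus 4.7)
I plan to follow the outline sketched in the paragraph preceding the theorem: use the bounded dominating star size to eliminate quantified variables via a preprocessing step, and then reduce the resulting quantifier-free counting problem to a problem in $\#\W$.

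Let $d$ be a uniform bound on $\mathsf{dss}(H,X)$ for queries in~$\Delta$, and let $(H,X)\in\Delta$ together with an $n$-vertex graph~$G$ be the input. First I would decompose $H[V(H)\setminus X]$ into its connected components~$Y_1,\dots,Y_\ell$, and for each~$i$ set $N_i:=N_H(Y_i)\cap X$; by assumption, $|N_i|\leq d$. The preprocessing step then computes, for every $i$ and every tuple $\vec v\in V(G)^{N_i}$ (at most $|H|\cdot n^d$ tuples in total), the Boolean $R_i(\vec v)\in\{0,1\}$ indicating whether the partial assignment $x\mapsto v_x$ on $N_i$ extends to a homomorphism $H[N_i\cup Y_i]\to G$. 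Each of these values is determined by a single oracle call: color $G$ by $H[N_i\cup Y_i]$ so that each $x\in N_i$ may only map to $v_x$ while the color classes for $Y_i$-vertices are unrestricted, and query whether the associated color-prescribed hom-count is nonzero. The color-prescribed-to-uncolored reductions of Section~\ref{subsec:min} and Section~\ref{sec:phom_to_hom} translate each such query into a standard $\#\homsprob$ query on a slightly larger graph, which can be handled by the $\#\W$-oracle.

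With the relations $R_1,\dots,R_\ell$ in hand, the original count $\#\homs{H,X}{G}$ equals the number of satisfying assignments $a\colon X\to V(G)$ of the quantifier-free first-order formula
\begin{equation*}
	\varphi'(x_1,\dots,x_k) \;=\; \bigwedge_{uv\in E(H[X])} E(u,v) \;\wedge\; \bigwedge_{i=1}^{\ell} R_i(\vec x_{N_i})
\end{equation*}
interpreted over the auxiliary structure with universe $V(G)$, binary edge relation $E(G)$, and relations $R_1,\dots,R_\ell$ of arity at most~$d$. Counting satisfying assignments to quantifier-free first-order formulas is the canonical complete problem for the class $\#\cc{A[1]}$, and $\#\cc{A[1]}=\#\W$ is known (see~\cite[Chapter~14]{flumgrohe}). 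So this final step is itself a parameterized Turing reduction to a problem in $\#\W$, and composing it with the preprocessing yields the desired $\#\W$-easiness.

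The main obstacle is making the preprocessing rigorous inside the graph homomorphism framework: we need a clean way to ``pin'' the $N_i$-variables onto a specified tuple $\vec v$ while only issuing well-defined $\#\homsprob$ queries with parameter bounded in terms of~$|H|$. The color-prescribed mechanism of Section~\ref{subsec:min} gives exactly this, since restricting the color class of $x\in N_i$ to $\{v_x\}$ enforces the desired pinning and the translation in Section~\ref{sec:phom_to_hom} produces a genuine uncolored instance suitable for the oracle. Once the preprocessing is in place, the iteration over $\vec v$, the construction of $\varphi'$, and the final appeal to $\#\cc{A[1]}=\#\W$ are routine.
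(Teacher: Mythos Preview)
Your overall architecture coincides with the paper's: decompose the quantified part into its connected components, run a preprocessing phase that materializes one bounded-arity relation $R_i$ per component, and then count solutions to a quantifier-free formula over the resulting relational structure via $\#\cc{A}[1]=\#\W$. The final step is exactly as in the paper.

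The gap is in how you justify the preprocessing oracle calls. You propose to pin the $N_i$-variables via a color-prescribed instance and then invoke the reductions of Section~\ref{sec:phom_to_hom} to obtain ``a standard $\#\homsprob$ query on a slightly larger graph, which can be handled by the $\#\W$-oracle.'' Two things go wrong here. First, Lemma~\ref{lem:colorful to uncolored} (colorful to uncolored) explicitly requires the query $(H',X')$ to be \emph{minimal}, and the subquery $(H[N_i\cup Y_i],N_i)$ need not be minimal. Second, and more importantly, even if you could translate to an uncolored $\#\homsprob$ instance, the resulting pattern still contains the entire component $Y_i$, whose treewidth is not bounded in terms of $d$; computing $\#\homs{H[N_i\cup Y_i],N_i}{\cdot}$ is therefore not a $\#\W$-problem in general (indeed, the full $\#\homsprob$ is $\#\Atwo$-equivalent). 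So the $\#\W$-oracle cannot answer these queries as stated.

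The paper sidesteps both issues by observing that you only need the \emph{decision} version in the preprocessing phase: ``does the assignment $x\mapsto v_x$ on $N_i$ extend to a homomorphism $H[N_i\cup Y_i]\to G$?'' is an instance of conjunctive-query evaluation with all variables existentially quantified, which is in $\W$ (Theorem~7.22 in~\cite{flumgrohe}), and a $\#\W$-oracle trivially simulates a $\W$-oracle. Replace your color-prescribed detour with this direct appeal and the proof goes through.
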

We give a sketch here, the formal proof is deferred to Section~\ref{sec:algo_hyper}.
\begin{proof}[Sketch]
	Let $c\in\N$ be the maximum dominating star size among all queries in~$\Delta$.
	Let $\delta \in \Delta$ be a conjunctive query with free variables $X$ and
	quantified variables $Y$.
	Furthermore let $(H,X)$ be the associated \labeled graph of $\delta$.
	Recall that $H[Y]$ and $H[X]$ are the induced subgraphs of $H$ that only contain vertices
	in $Y$ and in $X$, respectively.
	Furthermore, we let $Y_1,\dots,Y_\ell$ be the connected components of $H[Y]$.
	Given a graph $G$, we wish to compute $\#\homs{\delta}{G}$.
	Since $\mathsf{dss}(H,X)\le c$, the number of vertices in $X$ that are adjacent to a vertex in $Y_i$ in $H$ is bounded by $c$.

	This allows us to perform the following preprocessing:
	For every tuple $\vec{v}=(v_1,\dots,v_c)$ of vertices in $G$ and for every $i \in [\ell]$,
	we check whether $\vec{v}$ is a candidate for the image of the neighbors of $Y_i$ in an
	answer to $\delta$.
	Note that these checks can be done using an oracle for $(\#)\W$ as they can equivalently
	be expressed as a (decision version of a) model checking problem where all variables
	are existentially quantified, which is known to be in $\W$
	(see e.g. Theorem~7.22 in \cite{flumgrohe}).

	After performing all of those checks---at most $\ell \cdot n^c$ many---we need to count
	the number of homomorphisms from $H[X]$ to $G$ that additionally are
	consistent with the checks.
	This final step can be expressed as a counting model checking problem such that every
	variable is free, which is known to be in $\#\W$ (see Chapter~14 in~\cite{flumgrohe}).
\end{proof}

\begin{remark}\label{rem:running_time_graphs}
For each fixed conjunctive query $(H,X)$, we can use Theorem~\ref{thm:algo_graphs} to obtain a deterministic algorithm for computing $\#\homs{H,X}{\star}$: Each oracle query is answered by a subroutine that uses standard dynamic programming over the tree decompositions of $H[Y]$ and the contract of $(H,X)$. The overall running time of the algorithm is bounded by
\begin{equation*}
O\left(n^{\mathsf{dss}(H,X) + \mathsf{tw}(H[Y])+1} + n^{\mathsf{tw}(\mathsf{contract}(H,X))+1}\right) \,.
\end{equation*}
\end{remark}

\subsection{Query Classes That Are \#W[2]-hard}\label{sec:hard_w_2}

We have seen that $\#\homsprob(\Delta)$ is $\#\W$-easy if the dominating star size of~$\Delta$ is bounded.
We now show that the dominating star size is the right parameter for this complexity demarcation, since if it is unbounded for~$\Delta$, then we show the problem to be $\#\Wtwo$-hard. To this end, we will from now on consider its color-prescribed variant. Formally, the problem $\#\cphomsprob(\Delta)$ is given $\delta \in \Delta$ and a $\delta$-colored graph $G$ and the task is to compute $\#\cphoms{\delta}{G}$. It is parameterized by the size of~$\delta$.

As an example, consider the queries~$\psi_k$ in~$\exclass_{\#\Wtwo}$, which have as satisfying assignments exactly the tuples~$v_1,\dots,v_k$ of vertices in~$G$ whose neighborhood contains at least one common vertex. Formally,
\begin{equation}\label{eq:example psi}
	\exclass_{\#\Wtwo} = \setc{\psi_k}{k \in \mathbb{N}}\,,\quad\text{where } \psi_k
	:= x_1 \dots x_k \exists y: \bigwedge_{1 \leq i \leq k} Ex_iy
	\,.
\end{equation}
Note that~$\mathsf{dss}(\psi_k)=k$ holds because the only quantified variable~$y$ has~$k$ neighbors. Thus $\Delta_{\#\Wtwo}$ has unbounded dominating star size.
Moreover, the negated formula~$\neg\psi_k$ on the complement graph~$\overline{G}$ has exactly the dominating sets (or rather, tuples) of size at most~$k$ as its satisfying assignments.
Since the counting problem $\#\cphomsprob(\Delta)$ allows for this negation by subtracting the number of color-prescribed solutions of $\psi_k$ from the number of all possible color-prescribed tuples, it is clear that $\#\cphomsprob(\Delta)$ is indeed $\#\Wtwo$-hard.\linebreak
 Using the same observation, it is also clear that counting solutions of~$\psi_k$ cannot be done in time~$O(n^{k-\varepsilon})$ for any~$\varepsilon>0$ unless SETH is false. This implies the following Lemma.
\begin{lemma}\label{lem:ex_lem_3}
$\#\cphomsprob(\exclass_{\#\Wtwo})$ is $\#\Wtwo$-hard. Furthermore, for every $k \geq 3$, counting answers to $\psi_k$ cannot be done in time $O(n^{k-\varepsilon})$ for any $\varepsilon > 0$ unless SETH fails.
\end{lemma}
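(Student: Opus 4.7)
The plan is to follow the intuition given just before the lemma by reducing suitable variants of the dominating set problem directly to counting homomorphisms of~$\psi_k$. For $\#\Wtwo$-hardness I will reduce from a colored version of $\#\mathsf{DomSet}$ to $\#\cphomsprob(\exclass_{\#\Wtwo})$, and for the SETH lower bound I will apply essentially the same construction on the complement graph to reduce the $k$-dominating set decision problem directly to the uncolored problem $\#\homsprob$ restricted to~$\psi_k$.

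For $\#\Wtwo$-hardness, let $(G, V_1, \dots, V_k)$ be an input to the multicolored $\#\mathsf{DomSet}$ problem, where $V_1, \dots, V_k$ partition $V(G)$ and the goal is to count sets $\{w_1, \dots, w_k\}$ with $w_i \in V_i$ that dominate~$V(G)$. I would construct a $\psi_k$-colored graph $G'$ by taking a copy $V_{x_i}$ of $V_i$ as the color class of $x_i$ for each $i \in [k]$, a fresh copy $V_y$ of $V(G)$ as the color class of~$y$, and placing an edge between $w \in V_{x_i}$ and $y \in V_y$ precisely when $w \neq y$ (as elements of $V(G)$) and $\{w,y\} \notin E(G)$. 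A tuple $(w_1, \dots, w_k) \in V_1 \times \dots \times V_k$ then extends to a partial color-prescribed homomorphism into $G'$ if and only if there is some $y \in V(G) \setminus \{w_1, \dots, w_k\}$ non-adjacent in $G$ to every~$w_i$ --- equivalently, if and only if $\{w_1, \dots, w_k\}$ fails to be a dominating set of~$G$. Consequently the number of multicolored dominating sets equals $|V_1| \cdot |V_2| \cdots |V_k| - \#\cphoms{\psi_k, X}{G'}$, which is a Turing reduction from a known $\#\Wtwo$-hard problem.

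For the SETH statement I apply the same construction to the uncolored problem: given a graph $G$ on $n$ vertices simply take $G' := \overline{G}$ directly, so that $|V(G')| = n$. An identical analysis shows that a tuple $(w_1, \dots, w_k) \in V(G)^k$ fails to satisfy $\psi_k$ in $\overline{G}$ exactly when $\{w_1, \dots, w_k\}$ is a dominating set of~$G$. Hence $n^k - \#\homs{\psi_k}{\overline{G}}$ counts ordered dominating $k$-tuples of~$G$, and this is nonzero if and only if $G$ admits a dominating set of size at most~$k$. Any algorithm computing $\#\homs{\psi_k}{\star}$ in time $O(n^{k-\varepsilon})$ would therefore decide $k$-dominating set in time $O(n^{k-\varepsilon})$, contradicting the Pătraşcu--Williams SETH lower bound~\cite{DBLP:conf/soda/PatrascuW10}, which is valid for every~$k \geq 3$.

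The only mild subtlety --- really the only obstacle --- is the $y \neq w_i$ inequality built into the edge construction on~$G'$ (respectively, the absence of loops in~$\overline{G}$): this inequality guarantees that the witness $y$ may be taken from $V(G) \setminus \{w_1, \dots, w_k\}$, which is exactly what is needed for the ``common non-neighbor'' condition in the complement to match the logical negation of dominating a vertex. It also ensures that ordered $k$-tuples with repeated entries still witness a genuine dominating set in their support, so no further inclusion--exclusion over partitions of $[k]$ is required. Beyond this careful bookkeeping the reduction is a direct syntactic translation via the complement graph and needs no further combinatorial argument.
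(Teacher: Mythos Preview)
Your SETH argument via the complement graph is correct and pleasantly direct. One remark: the lemma (and its later use in Theorem~\ref{thm:wtwo_hardness_graphs}) is stated for the colour-prescribed quantity $\#\cphoms{\psi_k}{\star}$, whereas you prove the lower bound for uncoloured $\#\homs{\psi_k}{\star}$. The fix is immediate --- either invoke Lemma~\ref{lem:hom_to_phom} (whose blow-up factor $k{+}1$ is a constant for fixed $k$), or simply run your first construction with $V_{x_i}=V(G)$ for every~$i$, which already produces a $\psi_k$-coloured graph on $(k{+}1)n$ vertices.

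The $\#\Wtwo$-hardness part, however, has a real gap: you start from the \emph{partitioned} counting dominating-set problem and call it ``a known $\#\Wtwo$-hard problem'', but no such result is cited, and the $\#\Wtwo$-completeness in~\cite{flumgrohe_counting} is for counting dominating sets of size exactly~$k$ in an uncoloured graph. There is no evident parsimonious reduction from the latter to the partitioned version (one cannot easily force pairwise distinctness of the $w_i$ with domination gadgets), so this step needs justification. The paper avoids this issue by a genuinely different route: it takes every colour class to be all of $V(G)$, so the complement identity yields the number $\#\mathsf{Dom}_k(G)$ of \emph{ordered} tuples (with possible repetitions) whose image dominates~$G$; it then queries the oracle on graphs $G^j$ obtained from $G$ by adding $j$ isolated vertices for $j=1,\dots,k$, observes that $\#\mathsf{Dom}_k(G^j)=\sum_i \binom{k}{i}\,\mathsf{Surj}(i,j)\,\#\mathsf{Dom}_{k-i}(G)$ forms a triangular linear system, and finally unwinds a second triangular system $\#\mathsf{Dom}_\ell(G)=\sum_i D_i\,\mathsf{Surj}(\ell,i)$ to recover the exact counts $D_1,\dots,D_k$. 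In short, the step you dismiss (``no further inclusion--exclusion over partitions of $[k]$ is required'') is exactly where the paper's extra work lives: for the SETH part decision suffices and your shortcut is valid, but for $\#\Wtwo$-hardness one must actually recover the count of dominating sets of size exactly~$k$, and the paper's two triangular systems are what accomplish this.
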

\begin{proof}
	We construct a reduction from the problem of counting dominating sets of size $k$, which is known to be $\#\Wtwo$-hard when parameterized by $k$~\cite{flumgrohe_counting} and which cannot be solved in time $O(n^{k-\varepsilon})$ for any $\varepsilon > 0$ assuming SETH holds~\cite{DBLP:conf/soda/PatrascuW10}. Intuitively, the proof exploits that the set of solutions to~$\psi_k$ is in some sense the complement of the set of all $k$-dominating sets and that the ability to count solutions allows us to compute the cardinality of the complementary set.
	Let $k \in \mathbb{N}$ and let $G$ be a graph. It will be convenient to relabel the quantified variable in $\psi_k$ with $0$ and the $k$ free variables with $1,\dots,k$. Recall that a subset $S$ of vertices dominates a graph $G$ if every vertex in $V(G)\setminus S$ is adjacent a vertex in $S$. We first show how to compute the cardinality of the following set using an oracle for $\#\cphoms{\psi_k}{\star}$:
\begin{equation*}
	\mathsf{Dom}_k(G) := \{ a: [k] \rightarrow V(G) ~|~ \mathsf{im}(a) \text{ dominates } G \}\,.
\end{equation*}
	We assume a given graph $G$ to be not complete as otherwise $\mathsf{Dom}_k(G)$ can be computed trivially. Now a $\psi_k$-colored graph $G'$ is constructed from $G$ as follows. First, we take $k+1$ copies $V^0,\dots,V^{k}$ of the vertex set of $G$ and color $V^0$ with the quantified variable $0$ and $V^i$ with the free variable $i$ for $i \in [k]$. Finally, for every $i \in [k]$, we add an edge between a pair of vertices $u \in V^0$ and $v \in V^i$ if and only if the primal vertices of $u$ and $v$ are \emph{not} adjacent in $G$. Observe that $G'$ is indeed $\psi_k$-colored as $G$ is not a complete graph.
	Now let $F$ be the set of all assignments $a$ from $[k]$ to~$V(G')$ such that for all $i \in [k]$ the vertex $a(i)$ is colored with $i$, i.e., contained in $V^i$. Then we have that $\cphoms{\psi_k}{G'} \subseteq F$ and, in particular,
	\begin{align*}
		~&~F \setminus \cphoms{\psi_k}{G'} \\
		=& ~F ~\setminus  ~\{a : [k] \rightarrow V(G') ~|~ a(i) \in V^i \wedge \exists y \in V^0 : \{a(i),y\} \in E(G') \}    \\
		=                                              & ~\{a : [k] \rightarrow V(G') ~|~ a(i) \in V^i \wedge \forall y \in V^0 : \{a(i),y\} \notin E(G') \}
	\end{align*}
	Observe that by construction of $G'$, the cardinality of the latter set is equal to $\#\mathsf{Dom}_k(G)$. As $\#F = \#V(G)^k$ we hence obtain
\begin{equation*}
V(G)^k - \#\cphoms{\psi_k}{G'} = \#\mathsf{Dom}_k(G) \,.
\end{equation*}
Now, given a graph $G$ and $j \in \mathbb{N}$, we define $G^j$ to be the graph obtained from $G$ by adding $j$ isolated vertices. Furthermore we let $\mathsf{Surj}(i,j)$ be the number of surjections from $[i]$ to $[j]$. Then we claim that
	\begin{equation}\label{eq:triangular}
		\#\mathsf{Dom}_k(G^j) = \sum_{i=1}^k \binom{k}{i} \cdot \mathsf{Surj}(i,j) \cdot \#\mathsf{Dom}_{k-i}(G) \,.
	\end{equation}
	To see this we observe that every isolated vertex has to be in the image of every $a \in \mathsf{Dom}_k(G^j)$. Hence we can partition the elements in $\mathsf{Dom}_k(G^j)$ by the number of elements in $[k]$ that are mapped to the isolated vertices. Let $i$ be this number. Then there are $\binom{k}{i}$ possibilities to choose these elements and $\mathsf{Surj}(i,j)$ to map them to the isolated vertices. Finally, the remaining $k-i$ elements have to be mapped to $V(G)$ such that their image dominates $G$.
	We observe that \ref{eq:triangular} yields a system of linear equations such that the corresponding matrix is triangular if proper values for $j$ are chosen. Hence we can compute $\#\mathsf{Dom}_\ell(G)$ for all $\ell \leq k$.\linebreak
	Finally we show how to use these numbers to compute the numbers $D_1,\dots,D_k$ of dominating sets of size $i=1,\dots,k$ in $G$. We proceed inductively. If $i=1$ we have that $D_1=\#\mathsf{Dom}_1(G)$. Otherwise let $\ell\leq k$ and assume that $D_1,\dots,D_{\ell-1}$ have be computed so far. It can easily be seen that
	\begin{equation}
		\#\mathsf{Dom}_\ell(G) = \sum_{i=1}^\ell D_i \cdot \mathsf{Surj}(\ell,i)\,.
	\end{equation}
	Hence $D_\ell = \mathsf{Surj}(\ell,\ell)^{-1} \cdot \left(\#\mathsf{Dom}_k(G)-\sum_{i=1}^{\ell-1} D_i \cdot \mathsf{Surj}(\ell,i) \right)$. The above steps constitute a tight reduction from counting dominating sets of size $k$ to counting solutions to $\#\cphoms{\psi_k}{\star}$ which implies both, the lower bound under SETH and $\#\Wtwo$-hardness of $\#\cphomsprob(\exclass_{\#\Wtwo})$.
\end{proof}

The class~$\exclass_{\#\Wtwo}$ is not only an example of a class for which $\homsprob(\exclass_{\#\Wtwo})$ is $\#\Wtwo$-hard, but it is the minimal one.
Indeed, every class $\Delta$ of unbounded dominating star size contains arbitrarily large elements of~$\exclass_{\#\Wtwo}$ as a minor, and we have already seen that this implies that $\homsprob(\exclass_{\#\Wtwo})$ reduces to~$\homsprob(\Delta)$. Using the fact that counting color-prescribed answers to a conjunctive query is at least as hard as counting color-prescribed answers for any minor of the query (Lemma~\ref{lem:minor_closed}), we are now able to prove the following theorem.
\begin{theorem}\label{thm:wtwo_hardness_graphs}
	Let $\Delta$ be a recursively enumerable class of conjunctive queries with unbounded dominating star size. Then $\#\cphomsprob(\Delta)$ is $\#\Wtwo$-hard. In particular, given a formula $\delta$ with $\mathsf{dss}(\delta) \geq 3$, computing $\#\cphoms{\delta}{\star}$ cannot be done in time $O(n^{\mathsf{dss}(\delta) - \varepsilon})$ for any $\varepsilon > 0$ unless SETH fails.
\end{theorem}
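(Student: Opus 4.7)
The plan is to reduce from the star-query class $\exclass_{\#\Wtwo}$, whose $\#\Wtwo$-hardness and SETH-based lower bound are provided by Lemma~\ref{lem:ex_lem_3}, by combining them with the minor-monotonicity of color-prescribed homomorphism counting (Lemma~\ref{lem:minor_closed}). The key structural claim is: whenever $\delta \in \Delta$ satisfies $\mathsf{dss}(\delta) \geq k$, the star query $\psi_k$ is a minor of $\delta$, so Lemma~\ref{lem:minor_closed} yields a polynomial-time Turing reduction from computing $\#\cphoms{\psi_k}{\star}$ to computing $\#\cphoms{\delta}{\star}$. Since $\Delta$ is recursively enumerable and has unbounded $\mathsf{dss}$, we can, given a parameter $k$, effectively produce such a $\delta_k \in \Delta$ in time depending only on $k$, which makes the overall reduction uniform and FPT.

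To witness $\psi_k$ as a minor of $\delta = (H, X)$, fix a connected component $C$ of the quantified subgraph $H[V(H) \setminus X]$ together with $k$ pairwise distinct vertices $x_1, \ldots, x_k \in X$, each having a neighbor in $C$; such a component and vertices exist by the definition of $\mathsf{dss}$. We then perform the following minor operations on $(H, X)$: first, contract the edges of a spanning tree of $C$ into a single vertex $y$, which remains a quantified variable by the contraction rule since both endpoints of every contracted edge lie in $V(H) \setminus X$; next, delete every remaining edge except $y x_1, \ldots, y x_k$; finally, delete every now-isolated vertex outside $\{y, x_1, \ldots, x_k\}$. The outcome is exactly $\psi_k$.

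The reduction chain above yields $\#\Wtwo$-hardness of $\#\cphomsprob(\Delta)$ immediately. For the fine-grained statement about a fixed $\delta$ with $d = \mathsf{dss}(\delta) \geq 3$, note that $|V(H)|$ is a constant, so by Lemma~\ref{lem:minor_closed} the graph in the reduction from $\#\cphoms{\psi_d}{\star}$ to $\#\cphoms{\delta}{\star}$ blows up by only a constant factor. Consequently, an $O(n^{d - \varepsilon})$-algorithm for $\#\cphoms{\delta}{\star}$ would yield an $O(n^{d - \varepsilon})$-algorithm for $\#\cphoms{\psi_d}{\star}$, contradicting the SETH-based lower bound of Lemma~\ref{lem:ex_lem_3}.

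The main obstacle I anticipate is purely bookkeeping: verifying that the chosen sequence of minor operations is legitimate in the graphical-query sense, in particular that no contraction accidentally moves a vertex from $V(H) \setminus X$ into $X$ and that every vertex deletion indeed targets an isolated vertex. Both issues are handled by the chosen order of operations, since all contractions happen strictly inside the quantified set and all edge deletions are performed before the corresponding vertex deletions.
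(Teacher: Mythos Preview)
Your proposal is correct and follows essentially the same approach as the paper: reduce from $\exclass_{\#\Wtwo}$ via Lemma~\ref{lem:ex_lem_3}, using Lemma~\ref{lem:minor_closed} together with the observation that $\psi_k$ is a minor of any query with dominating star size at least~$k$. You have simply made explicit the minor construction (spanning-tree contraction of the witnessing component, then edge and vertex deletions) and the uniformity/size bookkeeping that the paper leaves implicit.
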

\begin{proof}
	Assume that we are given such a class $\Delta$. As the dominating star size of $\Delta$ is unbounded, we have that for every $k$, there exists $\delta_k \in \Delta$ with dominating star size $\geq k$ and hence $\psi_{k}$ is a minor of $\delta_k$. Therefore we have that the set $\exclass_{\#\Wtwo}$ is a set of minors of $\Delta$. By Lemma~\ref{lem:ex_lem_3} and Lemma~\ref{lem:minor_closed} the claim of the theorem follows.
\end{proof}

\subsection{Query Classes That Are \#A[2]-equivalent and the Excluded-Grate-Theorem}\label{sec:hard_a_2}

Recall that the parameterized complexity class~$\#\Atwo$ is defined via the model checking problem of universally quantified first-order formulas, and it is not known to be equal to~$\#\Wtwo$.
By the same observation as in the preceding subsection, it follows that~$\#\homsprob(\Delta)$ and $\#\cphomsprob(\Delta)$ are $\#\Atwo$-easy, which is made formal in Section~\ref{sec:a2}.
We now introduce the structural parameter \emph{linked matching number} for conjunctive queries that, if unbounded for~$\Delta$, leads to $\#\Atwo$-equivalence.
To define the parameter, we use the notion of a node-well-linked set.

\begin{definition}[Node-well-linked]
	Let $G$ be a graph. A set $S \subseteq V(G)$ is called \emph{node-well-linked} if, for
	every two disjoint and equal-sized subsets~$A,B$ of $S$, there are $\abs{A}$ vertex
	disjoint paths in~$G$ that connect the vertices in $A$ with the vertices in $B$.
\end{definition}

Node-well-linked sets play a central role in the theory of graph minors, particularly in the proof of Chekuri and Chuzhoy~\cite{ChekuriC16,Chuzhoy15} for the Excluded-Grid-Theorem.
Indeed, if large node-well-linked sets exist in a graph, then its treewidth is large and it contains a large grid as a minor.
We now introduce a structural parameter for conjunctive queries that measures the size of the largest set that is node-well-linked in the quantified variables and has a saturating matching to the free variables.

\begin{definition}[Linked matching number]
	Let $(H,X)$ be a conjunctive query, let $Y$ be the set $V(H)\setminus X$ of quantified variables, and let $M$ be a matching from $X$ to $Y$.
	The matching $M$ is called \emph{linked} if the set $V(M) \cap Y$ is node-well-linked in the graph $H[Y]$.
	The \emph{linked matching number} $\lmn$ of $(H,X)$ is defined to be the size of the
	largest linked matching of $H$.
\end{definition}

We prove later that queries with a large linked matching number not only have large treewidth, but they also contain a large \emph{grate} as a minor. Informally, a grate is just half of a grid that lives in the quantified variables and is cut along its diagonal, and the diagonal has a saturating matching to the free variables
(the upper right corner of Figure~\ref{fig:ex_table} depicts the $\kkite{4}$).
\begin{definition}\label{def:grates}
	For a positive integer~$k$, the \emph{\kkite{k}} is the graphical conjunctive query
	whose $k$ free variables are $x^i_j$ for $i,j\ge 0$ with $i+j=k-1$, and whose
	quantified variables are $y^i_j$ for $i,j\ge 0$ with $i+j \leq k-1$.
	The edges between free and quantified variables are $x^i_jy^i_j$ for $i+j=k-1$.
	The edges on the quantified variables are~$y^i_jy^i_{j+1}$ and $y^i_jy^{i+1}_j$.
	Let $\mathsf{Grates}$ be the set of all grates.
\end{definition}

We now sketch a proof that $\#\cphomsprob(\mathsf{Grates})$ is $\#\Atwo$-hard.
The full proof (Sections~\ref{sec:gamma_hard} and~\ref{sec:grates_hard}), requires lifting a rather technical normalization theorem for~$\Atwo$ to~$\#\Atwo$. In particular, this normalization implies that the general problem of counting answers to conjunctive queries in graphs is $\#\Atwo$-equivalent.
\begin{lemma}\label{lem:grates_are_hard}
	The problem $\#\cphomsprob(\mathsf{Grates})$ is $\#\Atwo$-hard.
\end{lemma}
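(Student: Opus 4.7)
The plan is a two-step reduction to a canonical $\#\Atwo$-complete problem. First, I will appeal to a $\#\Atwo$-normalization theorem (to be established in Section~\ref{sec:a2}) which identifies a $\#\Atwo$-hard problem in the form of counting color-prescribed homomorphisms from a specific family $\Gamma$ of graphical conjunctive queries encoding the $\forall\exists$-alternation structure. Second, I will show that every query $(H,X) \in \Gamma$ embeds as a query minor into a sufficiently large grate, after which Lemma~\ref{lem:minor_closed} will transfer the hardness from $\Gamma$ to $\mathsf{Grates}$.

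The combinatorial heart of the argument is the universality of grates as minor-containers: for any graphical conjunctive query $(H,X)$ of bounded size $m$, there exists a parameter $k=f(m)$ such that $(H,X)$ is a query minor of the $\kkite{k}$. The construction exploits the two defining features of the $k$-grate: the perfect matching between the diagonal free variables $x^i_j$ (with $i+j=k-1$) and their adjacent diagonal quantified variables $y^i_j$, and the triangular grid structure on all quantified variables $y^i_j$ with $i+j\le k-1$. I plan to place each free variable of $(H,X)$ on a distinct diagonal position of the grate, and then realize the remaining edges of $H$ --- both among quantified variables and between free and quantified variables --- as vertex-disjoint paths routed through the triangular grid, contracting each path into the required adjacency. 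Classical grid-routing arguments show that $f(m)$ can be taken polynomial in $m$.

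The main obstacle lies in this routing step: we must realize an arbitrary pattern of incidences in $(H,X)$ as vertex-disjoint paths within a triangular grid, while simultaneously respecting the diagonal matching that identifies free variables of $(H,X)$ with the grate's free variables, and while ensuring that contractions do not inadvertently merge free variables. Careful bookkeeping is needed to allocate disjoint row- and column-strips of the triangular grid to each required edge of $H$, and to reserve the diagonal itself for the free-variable matching. Once universality is established, the reduction is assembled by combining Lemma~\ref{lem:minor_closed} --- which ensures that counting color-prescribed homomorphisms from $\kkite{k}$ is at least as hard as from any of its minors --- with the normalization theorem, yielding $\#\Atwo$-hardness of $\#\cphomsprob(\mathsf{Grates})$.
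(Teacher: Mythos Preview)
Your proposal is correct but differs from the paper's proof. Both start from the $\#\Atwo$-hardness of $\#\cphomsprob(\Gamma)$ (the ``clique-matched'' queries $\gamma_k$) established via the normalization in Section~\ref{sec:a2}. The paper then gives a \emph{direct} $\mathsf{GridTiling}$-style reduction from $\#\cphoms{\gamma_k}{\star}$ to $\#\cphoms{\omega_k}{\star}$: from a $\gamma_k$-colored graph $G$ it constructs an $\omega_k$-colored graph $G'$ whose interior grid cells are populated with pairs $(u,u')$ encoding edges of $G$, so that a color-prescribed half-grid in $G'$ exists precisely when the diagonal images form a clique in $G$; this is not a minor argument and keeps the parameter unchanged. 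Your approach instead exhibits $\gamma_k$ (or even an arbitrary small query) as a query minor of a larger grate $\omega_{k'}$ and invokes Lemma~\ref{lem:minor_closed}; this works because the triangular grid on the quantified side of $\omega_{k'}$ admits, for $k'$ polynomial in $k$, a $K_k$-minor whose branch sets are rooted at $k$ chosen diagonal vertices, and the matching edges then supply the free-variable side of $\gamma_k$. Your route is more modular, reusing the minor machinery already in place and in fact establishing minor-universality of grates; the paper's route avoids the grid-routing argument altogether and yields a tight parameter ($k' = k$), which would matter for fine-grained bounds but is immaterial for the $\#\Atwo$-hardness claimed here.
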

\begin{proof}[Sketch]
	The construction is quite similar to the $\#\W$-hardness proof of the $\mathsf{GridTiling}$ problem in~\cite[Chapter~1.2.2]{radu_dis}. We will reduce from $\#\cphomsprob(\Gamma)$ where $\Gamma$ contains the queries
\begin{equation}
	\gamma_k := x_1 \dots x_k \exists y_1 \dots \exists y_k : \bigwedge_{i=1}^k Ex_iy_i \wedge \bigwedge_{1 \leq i < j \leq k} Ey_iy_j
\end{equation}
for all $k \in \mathbb{N}$. $\#\cphomsprob(\Gamma)$ is shown to be $\#\Atwo$-hard in Section~\ref{sec:gamma_hard}. Intuitively, answers to $\gamma_k$ are vertex sets of size at most $k$ that can be perfectly matched to a clique.

Now let $\omega_k$ be the $k$-grate. Roughly speaking, given a $\gamma_k$-colored graph $G$ for which we want to compute $\#\cphoms{\gamma_k}{G}$, we just need to modify the part of $G$ that is colored with quantified variables. To this end recall that in case of $\gamma_k$, the free variables have to be connected to a clique of quantified variables by a matching and in case of $\omega_k$, the free variables have to be connected to vertices on the diagonal of a half-grid by a matching. Now, given $G$, we obtain a new graph $G'$ by first deleting all edges between vertices that are colored with quantified variables, and then adding blocks of vertices that correspond to the former edges in a half-grid like manner. Then, given two vertices $a$ and $a'$ corresponding to the former edges $\{u,v\}$ and $\{u',v'\}$ we add an edge between $a$ and $a'$ if and only if either $u = u'$ and the blocks of $a$ and $a'$ are adjacent horizontally or $v=v'$ and the blocks of $a$ and $a'$ are adjacent vertically. We encourage the reader to verify the correctness of the construction in the case $k=3$ using Figure~\ref{fig:grate_reduction}.
\end{proof}

\begin{figure}[ptb]
	\centering
	\begin{tikzpicture}[scale=0.5,-,thick]
		\node[circle,inner sep=1pt,fill,label={[label distance = -1mm]315:{\small $1$}}] (1) at (0,8) {};
		\node[circle,inner sep=1pt] (11) at (-0.75,8) {\small $\hdots$};
		\node[circle,inner sep=1pt,fill,label={[label distance = -1mm]315:{\small $2$}}] (2) at (0,4) {};
		\node[circle,inner sep=1pt] (22) at (-0.75,4) {\small $\hdots$};
		\node[circle,inner sep=1pt,fill,label={[label distance = -1mm]315:{\small $3$}}] (3) at (0,0) {};
		\node[circle,inner sep=1pt] (33) at (-0.75,0) {\small $\hdots$};
		\draw(1)--(2);\draw(3)--(2);\draw[bend right=40](1) to (3);

		\node[circle,inner sep=1pt,fill,label={[label distance = -1mm]315:{\small $a$}}] (4) at (4,8) {};
		\node[circle,inner sep=1pt] (44) at (4.75,8) {\small $\hdots$};
		\node[circle,inner sep=1pt,fill,label={[label distance = -1mm]315:{\small $b$}}] (5) at (4,4) {};
		\node[circle,inner sep=1pt] (55) at (4.75,4) {\small $\hdots$};
		\node[circle,inner sep=1pt,fill,label={[label distance = -1mm]315:{\small $c$}}] (6) at (4,0) {};
		\node[circle,inner sep=1pt] (66) at (4.75,0) {\small $\hdots$};
		\draw(1)--(4);\draw(5)--(2);\draw(6) to (3);

		\draw (0,0) circle (35pt);
		\draw (0,4) circle (35pt);
		\draw (0,8) circle (35pt);

		\draw (4,0) circle (35pt);
		\draw (4,4) circle (35pt);
		\draw (4,8) circle (35pt);

		\node[circle,inner sep=1pt,fill,label={[label distance = -2mm]315:{\tiny $(1,1)$}}] (7) at (18,8) {};
		\node[circle,inner sep=1pt] (77) at (17.25,8) {\small $\hdots$};
		\node[circle,inner sep=1pt,fill,label={[label distance = -2mm]315:{\tiny $(2,2)$}}] (8) at (18,4) {};
		\node[circle,inner sep=1pt] (88) at (17.25,4) {\small $\hdots$};
		\node[circle,inner sep=1pt,fill,label={[label distance = -2mm]315:{\tiny $(3,3)$}}] (9) at (18,0) {};
		\node[circle,inner sep=1pt] (99) at (17.25,0) {\small $\hdots$};

		\node[circle,inner sep=1pt,fill,label={[label distance = -2mm]135:{\tiny $(1,2)$}}] (13) at (14,6) {};
		\node[circle,inner sep=1pt] (1313) at (14.75,6) {\small $\hdots$};
		\node[circle,inner sep=1pt,fill,label={[label distance = -2mm]225:{\tiny $(2,3)$}}] (14) at (14,2) {};
		\node[circle,inner sep=1pt] (1414) at (14.75,2) {\small $\hdots$};
		\node[circle,inner sep=1pt,fill,label={[label distance = -2mm]135:{\tiny $(1,3)$}}] (15) at (10,4) {};
		\node[circle,inner sep=1pt] (1515) at (10.75,4) {\small $\hdots$};

		\node[circle,inner sep=1pt,fill,label={[label distance = -1mm]315:{\small $a$}}] (10) at (22,8) {};
		\node[circle,inner sep=1pt] (1010) at (22.75,8) {\small $\hdots$};
		\node[circle,inner sep=1pt,fill,label={[label distance = -1mm]315:{\small $b$}}] (11) at (22,4) {};
		\node[circle,inner sep=1pt] (1111) at (22.75,4) {\small $\hdots$};
		\node[circle,inner sep=1pt,fill,label={[label distance = -1mm]315:{\small $c$}}] (12) at (22,0) {};
		\node[circle,inner sep=1pt] (1212) at (22.75,0) {\small $\hdots$};

		\draw(7)--(13);\draw(7)--(10);\draw(15)--(13);\draw(15)--(14);\draw(8)--(11);
		\draw(8)--(13);\draw(8)--(14);\draw(9)--(14);\draw(9)--(12);

		\draw (10,4) circle (35pt);
		\draw (14,2) circle (35pt);
		\draw (14,6) circle (35pt);

		\draw (22,0) circle (35pt);
		\draw (22,4) circle (35pt);
		\draw (22,8) circle (35pt);

		\draw (18,0) circle (35pt);
		\draw (18,4) circle (35pt);
		\draw (18,8) circle (35pt);

		\node (100) at (1.4,-1.4) { $y_0(G)$};
		\node (100) at (1.4,2.6) { $y_1(G)$};
		\node (100) at (1.4,6.6) { $y_2(G)$};
		\node (100) at (5.4,-1.4) { $x_0(G)$};
		\node (100) at (5.4,2.6) { $x_1(G)$};
		\node (100) at (5.4,6.6) { $x_2(G)$};

		\node (100) at (19.4,-1.4) { $y^2_0(G')$};
		\node (100) at (19.4,2.6) { $y^1_1(G')$};
		\node (100) at (19.4,6.6) { $y^0_2(G')$};
		\node (100) at (23.4,-1.4) { $x^2_0(G')$};
		\node (100) at (23.4,2.6) { $x^1_1(G')$};
		\node (100) at (23.4,6.6) { $x^0_2(G')$};

		\node (100) at (14,7.75) { $y^0_1(G')$};
		\node (100) at (14,0.25) { $y^1_0(G')$};
		\node (100) at (10,5.75) { $y^0_0(G')$};

	\end{tikzpicture}
	\caption{\label{fig:grate_reduction} Illustration of the construction of $G'$ for $k=3$. The graph $G$ (\emph{left}) is $\gamma_3$-colored and the mapping $m=\{x_2 \mapsto a, x_1 \mapsto b, x_0 \mapsto c\}$ is contained in $\cphoms{\gamma_3}{G}$ as $a,b,c$ are connected to $1,2,3$ by a matching and $1,2,3$ form a clique. The graph $G'$ (\emph{right}) is $\omega_3$-colored and $m$ corresponds to the mapping $m'=\{x^0_2 \mapsto a, x^1_1 \mapsto b, x^2_0 \mapsto c\}$. Now $m'$ is contained in $\cphoms{\omega_3 }{G'}$ as $a,b,c$ are connected to $(1,1),(2,2),(3,3)$ by a matching and $(1,1),(2,2),(3,3)$ are the diagonal of a half-grid. Note that the latter is only the case as there are vertices $(1,2),(2,3)$ and $(1,3)$ that correspond to the edges of the clique in $G$.}
\end{figure}
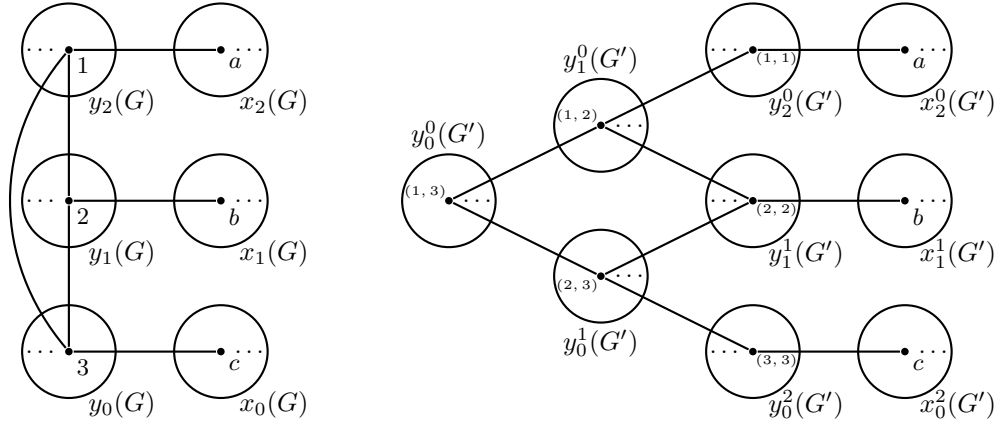

Next we show that every class $\Delta$ with unbounded linked matching number contains arbitrarily large grate minors.
Due to the hardness of $\#\cphomsprob(\mathsf{Grates})$ (Lemma~\ref{lem:grates_are_hard}) and using the fact that the homomorphism counting problem is ``minor-closed'' (Lemma~\ref{lem:minor_closed}), this yields the $\#\Atwo$-hardness of $\#\cphomsprob(\Delta)$.\pagebreak

\noindent We use the work of Marx, Seymour and Wollan~\cite{routedminors} as well as of Diestel et al.~\cite{diestel} for an easy proof of the following version of the Excluded-Grid-Theorem.
The $(g\times g)$-grid $\mathcal{G}_g$ has
vertices $v_{i,j}$ for $i,j\in\set{1,\dots,g}$ and edges $v_{i,j}v_{i+1,j}$ for $i<g$ and $v_{i,j}v_{i,j+1}$ for $j<g$.
A \emph{minor mapping}\footnote{We use the definition from~\cite[Chapter~13.2]{flumgrohe}.} $\eta$ from a graph $H$ to a graph $G$ is a function mapping vertices~$v\in V(H)$ to sets $\eta(v)\subseteq V(G)$ such that the following constraints are satisfied:
\begin{itemize}
	\item For every $v \in V(H)$ the graph $G[\eta(v)]$ is nonempty and connected.
	\item For all $u,v \in V(H)$ with $u \neq v$ the sets $\eta(u)$ and $\eta(v)$ are disjoint.
	\item For all edges $\{u,v\} \in E(H)$ there exist $u' \in \eta(u)$ and $v' \in \eta(v)$ such that $\{u',v'\} \in E(G)$.
\end{itemize}
A set of vertices $S$ of a graph $G$ is called $\ell$\emph{-connected} if, for every pair $A,B$ of disjoint size-$\ell$ subsets of $S$, there are $\ell$ vertex-disjoint paths in $G$ connecting $A$ and $B$. A \emph{separation} of a graph is an ordered pair $(A,B)$ of vertex subsets of $G$ such that $V(A)\cup V(B) = V(G)$ and there are no edges between $A\setminus B$ and $B\setminus A$.

\begin{theorem}[Theorem~\ref{thm:ex_grids_intro} restated] \label{thm:ex_grid_easy}
	For all integers $g>0$ there exists $\kappa \geq 1$ such that the following is true. Let $G$ be a graph and $X\subseteq V(G)$ be a node-well-linked set of size at least $\kappa$. Then there exists a minor mapping $\eta$ from $\mathcal{G}_g$ to $G$ satisfying that for all $j \in [g]$ there exists $x \in X$ such that $x \in \eta(v_{1,j})$.
\end{theorem}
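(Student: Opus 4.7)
The plan is to apply a rooted (or ``routed'') version of the Excluded-Grid-Theorem in the style of Marx, Seymour, and Wollan~\cite{routedminors}. Their work provides precisely the type of grid minor we need: one in which a designated side of the grid is forced to lie inside (or can be linked to) a prescribed node-well-linked set. Our strategy is therefore to invoke this machinery as a black box and then translate the output to the minor-mapping formulation used in the statement.

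More concretely, I would proceed in three steps. First, choose $\kappa$ large enough so that any node-well-linked set of that size guarantees the existence of a $(g' \times g')$-grid minor~$\eta_0$ of $G$ for some $g'$ considerably larger than $g$; the existence of such a function $\kappa \mapsto g'$ follows from the Excluded-Grid-Theorem of Chekuri--Chuzhoy together with the fact that a node-well-linked set of size $s$ forces treewidth~$\Omega(s)$. Second, use the node-well-linkedness of $X$ to find a family of $g$ pairwise vertex-disjoint paths $P_1, \dots, P_g$ in $G$ whose endpoints are $g$ distinct vertices of $X$ on one side and hit $g$ distinct first-column branch sets $\eta_0(v_{1,j_1}), \dots, \eta_0(v_{1,j_g})$ of the grid on the other side; here we may restrict the paths to have interiors disjoint from all first-column branch sets by rerouting along the grid's first column and then taking a subgrid that excludes the ``interference'' rows. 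Third, restrict $\eta_0$ to the $(g \times g)$-sub-grid anchored at columns $j_1, \dots, j_g$, and extend the first column's branch sets to absorb the (vertex-disjoint) path interiors and the matched endpoints in $X$, yielding the desired minor mapping $\eta$ with $X \cap \eta(v_{1,j}) \neq \emptyset$ for each $j \in [g]$.

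The main obstacle is step two: the linking paths produced by the node-well-linkedness of $X$ have no reason to avoid the interior of the grid minor $\eta_0$, and naively absorbing them into the first column's branch sets destroys the grid structure. To handle this, one needs a sufficient ``slack'' between $g'$ and $g$: starting with a much larger grid $\eta_0$, we can afford to sacrifice a large number of rows and columns to reroute the paths away from the surviving sub-grid. This is exactly the quantitative content of the routed-grid-minor theorem in~\cite{routedminors}, and the tangle-based framework of Diestel et~al. provides an equivalent viewpoint. The final bound on $\kappa$ as a function of $g$ is essentially the bound on $g'$ from the Excluded-Grid-Theorem, composed with the slack constant that the rerouting step consumes; this yields the claimed (unbounded) function relating $\kappa$ and $g$.
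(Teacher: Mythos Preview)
Your instinct to invoke~\cite{routedminors} as the key black box is correct, and indeed this is exactly what the paper does. However, your three-step plan is more circuitous than necessary and has a genuine gap in step two that you do not quite resolve.

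The issue is this: node-well-linkedness of $X$ only guarantees vertex-disjoint paths between two equal-sized subsets \emph{of $X$}; it says nothing about paths between a subset of $X$ and an arbitrary target set such as the first-column branch sets of some grid minor $\eta_0$ found via the plain Excluded-Grid-Theorem. So the existence of your paths $P_1,\dots,P_g$ does not follow from the hypothesis you have, and ``rerouting along the first column'' or ``sacrificing rows and columns'' does not address this---you first need the paths to exist at all before you can worry about where they run. You correctly sense that this is the crux and that~\cite{routedminors} is what handles it, but once you invoke that theorem, your steps~1 and~3 become superfluous scaffolding.

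The paper's proof is more direct. It applies Theorem~1.2 of~\cite{routedminors} as a single black box: for any tangle of large enough order and any set $Z$ of size $g$ that cannot be separated from the tangle by a cut of order less than $g$, there is a $(g\times g)$-grid minor whose first-column branch sets each contain an element of $Z$. The paper then (i) constructs a tangle $\mathcal{T}[X]$ of the required order directly from the node-well-linked set $X$ using a result of Diestel et~al.~\cite{diestel}, (ii) picks any $Z\subseteq X$ of size $g$, and (iii) verifies via a short Menger argument that no separation $(A,B)\in\mathcal{T}[X]$ of order less than $g$ can have $Z\subseteq V(A)$. No preliminary grid, no hand-routing of paths.
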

\begin{proof}
	Given a graph $G$, we write $\mathsf{Sep}(G)$ for the set of all separations of $G$. We apply Theorem~1.2 of~\cite{routedminors} with $k=g$. By the theorem, there exists $K\in \mathbb{N}$ such that for any tangle\footnote{For the purpose of this proof we do not need the definition of a tangle. The interested reader is referred to e.g. Chapter~4 in~\cite{diestel}.} $\mathcal{T}$ of order at least $K$ in $G$ and any set $Z\subseteq V(G)$ with $|Z|=g$ the following is true:
	\begin{fact}\label{fac:MSW_theorem}
		If there is no separation $(A,B) \in \mathcal{T}$ with $|V(A \cap B)|< g$ and $Z \subseteq V(A)$ then there is a minor mapping $\eta$ from $\mathcal{G}_g$ to $G$ satisfying that for all $j \in [g]$ there exists $z \in Z$ such that $z\in \eta(v_{1,j})$.
	\end{fact}
	Now let $\ell := \max\{g,K\}$ and $\kappa := 3\ell$. Hence $X$ is an node-well-linked set of size $3\ell$. In particular, $X$ is an $\ell$-connected set of size $3\ell$. Diestel et al. (see Chapter~4 in~\cite{diestel}) have shown that the following is a tangle of order $\ell \geq K$ in $G$:
	\begin{equation}
		\mathcal{T}[X] := \{ (A,B)~|~ (A,B)\in\mathsf{Sep}(G) \wedge |V(A\cap B)| < \ell \wedge |V(A)\cap X| \leq |V(B) \cap X| \}
	\end{equation}
	Next let $Z$ be any subset of $X$ of size $g \leq \ell$ and assume for contradiction that there exists a separation $(A,B) \in \mathcal{T}[X]$ such that $|V(A\cap B)|< g$ and $Z \subseteq V(A)$. By the definition of $\mathcal{T}[X]$ we have that
	\begin{equation}
		|V(B) \cap X| \geq |V(A)\cap X| \geq |Z| = g \,.
	\end{equation}
	Consequently there exists $Z' \in V(B)\cap X$ with $|Z'|=g$ and $Z \cap Z' = \emptyset$. As $X$ is $\ell \geq g$-connected, there are $g$ vertex-disjoint paths from $Z$ to $Z'$. Therefore, by Menger's Theorem, $|V(A\cap B)|< g$ is false and we obtain a contradiction. We hence conclude the proof with the application of Fact~\ref{fac:MSW_theorem} and the observation that $Z$ was chosen to be a subset of $X$.
\end{proof}

\begin{theorem}[Excluded-Grate-Theorem]\label{thm:grate_minor}
  Let $\Delta$ be a class of graphical conjunctive queries.
  If the linked matching number of~$\Delta$ is unbounded, then $\Delta$ contains arbitrarily large grates as minors.
\end{theorem}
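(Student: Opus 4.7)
The plan is to combine the strengthened Excluded-Grid-Theorem (Theorem~\ref{thm:ex_grid_easy}) with an L-shaped routing scheme to carve a $k$-grate out of a sufficiently large grid minor. Fix a target size $k$, let $\kappa := \kappa(2k)$ be the constant from Theorem~\ref{thm:ex_grid_easy}, and pick $(H,X) \in \Delta$ admitting a linked matching $M$ of size at least $\kappa$, which exists because the linked matching number of $\Delta$ is unbounded. Setting $Y := V(H) \setminus X$ and $S := V(M) \cap Y$, the set $S$ is node-well-linked in $H[Y]$ by definition. Applying Theorem~\ref{thm:ex_grid_easy} with $g = 2k$ to $H[Y]$ yields a minor mapping $\eta_G \colon V(\mathcal{G}_{2k}) \to 2^Y$ such that for every $j \in [2k]$ there is some $s_j \in \eta_G(v_{1,j}) \cap S$; each $s_j$ is matched by $M$ to a distinct free variable $\tilde x_j \in X$, providing $2k$ free-variable ``entry points'' aligned with the first row of the grid.

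Next I would construct a minor mapping $\eta$ of the $k$-grate into $(H,X)$ as follows. Place the grate's quantified triangle in the top-left corner of the grid: for every $a,b \geq 0$ with $a+b \leq k-1$, set $\eta(y^a_b) := \eta_G(v_{a+1, b+1}) \subseteq Y$. For each free vertex $x^a_{k-1-a}$ on the hypotenuse, define the L-shaped path $P_a$ in $\mathcal{G}_{2k}$ that starts at the hypotenuse cell $v_{a+1, k-a}$, travels right along row $a+1$ until it reaches column $k+a+1$, and then climbs column $k+a+1$ up to $v_{1, k+a+1}$. Writing $P_a^\circ$ for $P_a$ with its starting cell removed, set
\begin{equation*}
  \eta(x^a_{k-1-a}) := \{\tilde x_{k+a+1}\} \cup \bigcup_{v \in P_a^\circ} \eta_G(v).
\end{equation*}
Since this set contains a vertex of $X$, contracting it to a single vertex yields a free vertex, matching the free/quantified semantics of graphical conjunctive query minors.

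The main combinatorial checks I would carry out are: (i) every interior cell of $P_a^\circ$ lies outside the top-left $k \times k$ triangle, because row cells in row $a+1$ have column $\geq k-a+1$ and thus $i+j \geq k+2$, while column cells sit in column $k+a+1 > k$ outside the $k \times k$ block entirely; (ii) for $a \neq a'$ the paths $P_a^\circ$ and $P_{a'}^\circ$ are vertex-disjoint, as their row portions lie in different rows, their column portions lie in different columns $k+a+1 \neq k+a'+1$, and a short case analysis rules out the row portion of one path meeting the column portion of another; (iii) each $\eta(x^a_{k-1-a})$ is connected, since the chain of path preimages is linked by the grid-edge witnesses of $\eta_G$ and $\tilde x_{k+a+1}$ is adjacent in $H$ to $s_{k+a+1} \in \eta_G(v_{1,k+a+1})$. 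Every grate edge is then witnessed: edges inside the triangle by the corresponding grid edges, and each hypotenuse-to-free edge $x^a_{k-1-a}\,y^a_{k-1-a}$ by the first grid edge of $P_a$, which carries an $H$-edge from $\eta_G(v_{a+1,k-a})$ into $\eta_G(v_{a+1,k-a+1}) \subseteq \eta(x^a_{k-1-a})$.

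The main obstacle is the routing itself: attaching each free variable requires a path that simultaneously avoids the quantified triangle and remains disjoint from the other attachment paths. Doubling the grid dimension to $2k$ and using the L-shaped routes described above resolves both requirements in one stroke, because the triangle occupies only the top-left $k \times k$ block while the L-shapes have enough room to turn through the right half of the grid. Once the routing is in place, the remaining bookkeeping---disjointness of the $\eta(\cdot)$-sets, connectedness, and witnessing of grate edges---reduces entirely to the defining properties of the minor mapping $\eta_G$.
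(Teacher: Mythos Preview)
Your argument is correct and complete. The main difference from the paper's proof is the routing strategy. The paper applies Theorem~\ref{thm:ex_grid_easy} with parameter $g=k$ and then asserts that the resulting $k\times k$ grid ``can now further be contracted to obtain a half-grid'' in such a way that the first column becomes the diagonal; the matching edges from the $S$-elements sitting in the first column then attach the free variables directly to the diagonal. That contraction is valid but is left to the reader, and it is not entirely trivial (for small $k$ one already needs non-singleton branch sets). You instead invoke Theorem~\ref{thm:ex_grid_easy} with $g=2k$, embed the triangular half-grid verbatim in the top-left $k\times k$ block, and use the extra room in the right half of the grid for pairwise disjoint L-shaped routes from the hypotenuse up to first-row cells $v_{1,k+a+1}$, where the $S$-elements and hence the matching edges live. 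Your approach trades a factor of two in the grid size for an explicit, easily verifiable routing; the paper's approach is more economical in grid size but relies on a contraction step that is only sketched. Both yield the same conclusion.
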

\begin{proof}
	Let $g \in \mathbb{N}$ and let $\omega_g$ be the $g$-grate. We show that $\omega_g$ is a minor of some query in $\Delta$. To this end, invoke Theorem~\ref{thm:ex_grid_easy} with $g$ to obtain $\kappa$ for which its statement is true. Now let $\delta \in \Delta$ be a query with linked matching number at least $\kappa$ and let $(H,X)$ be the query graph of $\delta$. By assumption there exists a set $S$ of at least $\kappa$ many vertices in the quantified variables that is node-well-linked in the $H\setminus X$ and that is connected to $X$ by a matching. By Theorem~\ref{thm:ex_grid_easy} there exists a minor mapping $\eta$ from the $g\times g$-grid, such that for every vertex $v$ in the first column of the grid, we have that $\eta(v)$ contains an element of $S$. The grid minor can now further be contracted to obtain a half-grid. Finally, we obtain the $g$-grate $\omega_g$ as a minor by deleting all vertices and edges in $X$ and then all edges between $X$ and $H\setminus Y$ except for the matching connecting $S$ to $X$.
\end{proof}

\begin{theorem}\label{thm:atwo_hardness_graphs}
	Let $\Delta$ be a class of conjunctive queries with unbounded linked matching number. Then $\#\cphomsprob(\Delta)$ is $\#\Atwo$-equivalent.
\end{theorem}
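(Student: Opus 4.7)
The plan is to establish both directions of the equivalence separately, with essentially all of the content concentrated in the hardness direction. For $\#\Atwo$-hardness, I would chain together three results from earlier in this subsection. Fix an arbitrary target size $g$; the Excluded-Grate-Theorem (Theorem~\ref{thm:grate_minor}), applied to $\Delta$ (whose linked matching number is unbounded by assumption), produces a query $\delta_g\in\Delta$ that contains the $g$-grate $\omega_g$ as a minor. Lemma~\ref{lem:minor_closed} then turns any $\#\cphomsprob(\omega_g)$-instance on an input graph $G$ into an equivalent $\#\cphomsprob(\delta_g)$-instance on a graph $G'$ of size polynomial in $|G|$, with $|\delta_g|$ bounded by a computable function of $g$. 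Assembling these reductions uniformly over $g$ yields a parameterized Turing reduction from $\#\cphomsprob(\mathsf{Grates})$ to $\#\cphomsprob(\Delta)$. Composing this with Lemma~\ref{lem:grates_are_hard}, which provides the $\#\Atwo$-hardness of $\#\cphomsprob(\mathsf{Grates})$, transfers $\#\Atwo$-hardness to $\#\cphomsprob(\Delta)$.

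For the $\#\Atwo$-easiness direction, the plan is to observe that every conjunctive query of the form~\eqref{eq:conjunctivequeries} is a purely existential first-order formula, and that the color prescription is easily encoded by conjoining unary atoms over a relation recording the color class of every vertex of $G$. Since $\#\Atwo$ is defined via counting satisfying assignments to first-order formulas with a single quantifier alternation, existential formulas are a degenerate subcase and $\#\cphomsprob(\Delta)$ trivially fits inside $\#\Atwo$. The formal verification of this embedding --- in particular, packaging a given pair $(\delta,G)$ into the shape of $\#\Atwo$-oracle query that the formal definition of $\#\Atwo$ expects --- is deferred to Section~\ref{sec:a2}, because it relies on the same logical machinery used to make the $\#\Atwo$-hardness in Lemma~\ref{lem:grates_are_hard} precise.

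The main obstacle lives not in the present theorem but in two of its inputs. The Excluded-Grate-Theorem combines the routed-minors theorem of Marx, Seymour, and Wollan with a tangle argument of Diestel et~al.\ to sharpen the classical Excluded-Grid-Theorem so that a prescribed column of the grid minor meets a given node-well-linked set; this is Theorem~\ref{thm:ex_grid_easy} and carries the graph-theoretic weight. The $\#\Atwo$-hardness of grates (Lemma~\ref{lem:grates_are_hard}) rests on a non-trivial $\#\Atwo$-normalization theorem developed in Section~\ref{sec:a2}, together with a grid-tiling style reduction from the auxiliary class $\Gamma$ of matching-to-clique queries. Conditional on these two ingredients, the present theorem drops out cleanly as the composition of a minor-extraction step with a minor-closedness reduction, together with the syntactic embedding into $\#\Atwo$, and requires no further combinatorial work.
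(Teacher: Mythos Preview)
Your proposal is correct and follows essentially the same route as the paper: the paper's proof cites exactly the three ingredients you identify (Lemma~\ref{lem:grates_are_hard}, Theorem~\ref{thm:grate_minor}, and Lemma~\ref{lem:minor_closed}) and defers easiness to Section~\ref{sec:a2}. Your write-up is in fact more explicit than the paper's one-line proof about how the pieces compose into a parameterized Turing reduction.
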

\begin{proof}
	Follows from the $\#\Atwo$-equivalence of $\#\cphomsprob(\mathsf{Grates})$ (Lemma~\ref{lem:grates_are_hard}), the Excluded-Grate-Theorem (Theorem~\ref{thm:grate_minor}) as well as the minor reduction (Lemma~\ref{lem:minor_closed}).
\end{proof}

\subsection{Full Statement of our Complexity Classification for Graphical Conjunctive Queries}

We are now in position to state our main result, the full classification for counting answers to conjunctive queries. Note that Theorem~\ref{thm:extension} is subsumed by the full classification in the case of graphs. The general version, that is, the case of arbitrary logical signatures with bounded arity, is proved in Section~\ref{sec:main_result_structs}.\pagebreak

\begin{theorem}\label{thm:main_thm_cq_graphs}
	Let $\Delta$ be a recursively enumerable class of minimal conjunctive queries.
	\begin{enumerate}
		\item If the treewidth of $\Delta$ and $\mathsf{contract}(\Delta)$ is bounded, then $\#\homsprob(\Delta)$ can be computed in polynomial time.
		\item If the treewidth of $\Delta$ is unbounded and the treewidth of $\mathsf{contract}(\Delta)$ is bounded, then $\#\homsprob(\Delta)$ is $\W$-equivalent.
		\item If the treewidth of $\mathsf{contract}(\Delta)$ is unbounded and the dominating star size of $\Delta$ is bounded, then $\#\homsprob(\Delta)$ is $\#\W$-equivalent.
		\item If the dominating star size of $\Delta$ is unbounded, then $\#\homsprob(\Delta)$ is $\#\Wtwo$-hard.
		Moreover, for any fixed query~$\delta$ with $\mathsf{dss}(\delta) \geq 3$, the problem $\#\homs{\delta}{\star}$ cannot be computed in time $O(n^{\mathsf{dss}(\delta) - \varepsilon})$ for any $\varepsilon > 0$ unless SETH fails.
		\item If the linked matching number of $\Delta$ is unbounded, then $\#\homsprob(\Delta)$ is $\#\Atwo$-equivalent.
	\end{enumerate}
\end{theorem}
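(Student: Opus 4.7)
The plan is to assemble the machinery already developed throughout Section~\ref{sec:4} without establishing any substantial new result. Cases~1, 2 and the $\#\W$-hardness direction of case~3 are immediate consequences of the Chen--Mengel--Durand trichotomy (Theorem~\ref{thm:old_classification}); the $\#\W$-easiness direction of case~3 is furnished by Theorem~\ref{thm:algo_graphs}, which shows that bounded dominating star size suffices to make $\#\homsprob(\Delta)$ reducible to $\#\W$ via a $\W$-oracle preprocessing that purges the quantified variables, even though the contract may still have unbounded treewidth.

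For cases~4 and~5, I would invoke the two color-prescribed hardness results that are already in place, namely Theorem~\ref{thm:wtwo_hardness_graphs} (unbounded dominating star size yields $\#\Wtwo$-hardness of $\#\cphomsprob(\Delta)$) and Theorem~\ref{thm:atwo_hardness_graphs} ($\#\Atwo$-equivalence of $\#\cphomsprob(\Delta)$ for unbounded linked matching number). The task is then to transfer these hardness results to the uncolored problem $\#\homsprob(\Delta)$. This is where the hypothesis that $\Delta$ consists of \emph{minimal} queries earns its keep: chaining the two reductions of Section~\ref{sec:phom_to_hom} gives
\begin{equation*}
\#\cphomsprob(\Delta)\ \leq^T\ \#\cfhomsprob(\Delta)\ \leq^T\ \#\homsprob(\Delta)\,.
\end{equation*}
The first step uses the identity $\#\cfhoms{H,X}{G}=\#\mathsf{Aut}(H,X)\cdot\#\cphoms{H,X}{G}$ and merely divides by a constant that depends only on the fixed query; the second step is the polynomial-interpolation reduction of Lemma~\ref{lem:colorful to uncolored}, which relies on minimality through Observation~\ref{lem:colhom_counting_minimal}. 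The complementary $\#\Atwo$-easiness needed for case~5 will follow by observing that a conjunctive query is in particular an existential formula and therefore fits into $\#\Atwo$ via the normalization theorem deferred to Section~\ref{sec:a2}.

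The main obstacle is to verify that the fine-grained SETH lower bound in case~4 survives the chain of reductions. The color-prescribed to colorful step scales the answer by the constant $\#\mathsf{Aut}(H,X)$, introducing no blow-up in $n$. The colorful to uncolored step of Lemma~\ref{lem:colorful to uncolored} performs polynomial interpolation by querying the oracle on clone-blown-up graphs $G^{\vec{z}}$ whose vertex count is $O(n)$ up to a factor depending only on the fixed query, and the number of oracle queries is likewise polynomial in $n$ with constants depending only on $\delta$. Consequently, a hypothetical $O(n^{\mathsf{dss}(\delta)-\varepsilon})$ algorithm for $\#\homsprob$ would yield an $O(n^{\mathsf{dss}(\delta)-\varepsilon'})$ algorithm for $\#\cphomsprob$ for some $\varepsilon'>0$, contradicting the lower bound from Lemma~\ref{lem:ex_lem_3} under SETH. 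Confirming that the interpolation reduction is tight enough for this SETH transfer is the only step that requires nontrivial bookkeeping; the rest reduces to citing the modular results built up in Sections~\ref{sec:minors_and_colors}--\ref{sec:4}.
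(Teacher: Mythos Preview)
Your proposal is correct and follows essentially the same route as the paper's proof: cases~1--2 and the hardness half of case~3 from Theorem~\ref{thm:old_classification}, easiness of case~3 from Theorem~\ref{thm:algo_graphs}, and cases~4--5 by combining Theorems~\ref{thm:wtwo_hardness_graphs} and~\ref{thm:atwo_hardness_graphs} with the color-prescribed-to-uncolored pipeline of Section~\ref{sec:phom_to_hom}. One minor wording issue: in Lemma~\ref{lem:colorful to uncolored} the number of oracle queries is bounded by a function of $|H,X|$ alone (a constant for fixed~$\delta$), not genuinely polynomial in~$n$; this is in fact what makes the SETH transfer go through cleanly, since polynomially-in-$n$ many calls would not preserve the exponent.
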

\begin{proof}
	The first two claims and the $\#\W$-hardness in the third follow from Theorem~\ref{thm:old_classification}. The $\#\W$-easiness in the third claim follows from Theorem~\ref{thm:algo_graphs} and Lemma~\ref{lem:hom_to_phom}. The fourth and fifth claim follow from Theorem~\ref{thm:wtwo_hardness_graphs} and Theorem~\ref{thm:atwo_hardness_graphs}, respectively, as well as from the fact that $\#\cphomsprob(\Delta)$ reduces to $\#\homsprob(\Delta)$ as shown in Section~\ref{sec:minors_and_colors}.
\end{proof}
Our classification leaves open the question whether every class $\Delta$ that has a bounded linked matching number is in fact $\#\Wtwo$-easy; this question is related to some exotic parameterized complexity classes between $\#\Wtwo$ and $\#\Atwo$.\footnote{The interested reader is encouraged to make themself familiar with the class $\cc{W}^{\mathsf{func}}[2]$ (see Chapt.~8 in~\cite{flumgrohe}) and to observe that strengthening the classification as suggested would imply $\#\Wtwo = \#\cc{W}^{\mathsf{func}}[2]$ or $\#\Atwo = \#\cc{W}^{\mathsf{func}}[2]$.}

\section{Linear combinations of conjunctive queries}\label{sec:lin_combs}

In this section, we extend our results to disjunctions of conjunctive queries, and to conjunctive queries with inequality constraints and negations on the free variables.
We show in this section that both of these extensions are captured by considering abstract linear combinations of conjunctive queries.
To this end, we first adapt the notion of quantum graphs as used by Lovász~\cite[Chapter~6]{lovaszbook} to the setting of graphical conjunctive queries.

\begin{definition}\label{def:q_graphs}
	A \emph{quantum query} $Q$ is a formal linear combination of a finite number of graphical conjunctive queries. We write
	\begin{equation}\label{eq:quantumquery}
		Q = \sum_{i=1}^\ell \lambda_i \cdot (H_i,X_i) \,,
	\end{equation}
	where all $\lambda_i$ are non-zero rational numbers. The \emph{support} of $Q$ is the set $\mathsf{supp}(Q)=\setc{(H_i,X_i)}{i \in \set{1,\dots,\ell}}$. The number of homomorphisms extends to quantum queries linearly, i.e., if $Q$ is a quantum query as above and $G$ is a simple graph, then we define
	\begin{equation}\label{eq: Quantum Hom}
		\#\homs{Q}{G} = \sum_{i=1}^\ell \lambda_i \cdot \#\homs{H_i,X_i}{G} \,.
	\end{equation}
\end{definition}
In the subsequent sections we are going to collect for equivalent queries in a quantum query and hence consider the support to be a set of pairwise non-equivalent and minimal conjunctive queries. The structural parameters discussed in Section~\ref{sec:4} then extend to quantum queries by taking the maximum over all queries in the support.
\begin{definition}
	Let $(H,X)$ and $(\hat H,\hat X)$ be two graphical conjunctive queries.
	\begin{enumerate}
		\item
		      \emph{$(H,X)$ maps surjectively to $(\hat H,\hat X)$}, written $(H,X)\ge (\hat H,\hat X)$, if there is a surjective function~$s:X\to\hat X$ that extends to a homomorphism, that is, which satisfies $s\in\homs{H,X}{\hat H}$.
		      Let $\surj{H,X}{\hat H,\hat X}$ be the set of all surjective mappings~$s:X\to\hat X$ that can be extended to a homomorphism from $H$ to $\hat H$.
		\item
		      If $(H,X)\ge(\hat H,\hat X)\ge(H,X)$ holds, then we adopt the notation of Chen and Mengel~\cite{ChenM16} and say that the two queries are \emph{renaming equivalent}.
		      Moreover, $(H,X)$ is a \emph{minimal representative} if it has a lexicographically smallest pair~$(\abs{V(H)},\abs{E(H)})$ among all queries that are renaming equivalent to~$(H,X)$.
		      For each equivalence class, we arbitrarily fix one minimal representative.
		      If $(H,X)$ is the selected minimal representative of its class, we also call it \emph{renaming minimal}.
	\end{enumerate}
\end{definition}
It is clear that $\ge$ defines a partial order and so this notion of equivalence is indeed an equivalence relation. If $X=\hat X=\emptyset$ holds, then the notion specializes to homomorphic equivalence, whereas for $X=V(H)$ and $\hat X=V(\hat H)$, it specializes to isomorphism. It will turn out that renaming equivalence is identical to equivalence of conjunctive queries as introduced in Section~\ref{sec:minors_and_colors}. In what follows we will therefore omit ``renaming'' and only speak of equivalent and minimal queries.
The next lemma generalizes the fact that all homomorphic cores of a graph are isomorphic.
\begin{lemma}\label{lem:min equiv implies iso}
	If two minimal graphical conjunctive queries are equivalent, then they are isomorphic.
\end{lemma}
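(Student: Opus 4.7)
The plan is to reduce the isomorphism of two equivalent minimal queries to the statement of Lemma~\ref{lem:counting_minimality_graphs}, which promotes ``partially bijective'' endomorphisms to full automorphisms. Unfolding the definition of equivalence, I would first pick surjections $s \in \surj{H,X}{\hat H,\hat X}$ and $\hat s \in \surj{\hat H,\hat X}{H,X}$, extended respectively to homomorphisms $f \in \homs{H}{\hat H}$ and $g \in \homs{\hat H}{H}$. The composition $g \circ f$ is then an endomorphism of $H$ whose restriction to $X$ equals $\hat s \circ s$, which is a surjection $X \to X$ and hence a bijection of the finite set $X$ onto itself. By Lemma~\ref{lem:counting_minimality_graphs}, $g \circ f$ is an automorphism of $H$. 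Symmetrically, $f \circ g$ is an automorphism of $\hat H$.

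With both compositions automorphisms, I would next argue that $f$ and $g$ are themselves vertex-bijections. Indeed, $g \circ f$ being bijective on $V(H)$ forces $f$ to be injective and $g$ to be surjective, and $f \circ g$ being bijective on $V(\hat H)$ forces $f$ to be surjective and $g$ to be injective. Hence $f\colon V(H) \to V(\hat H)$ is a bijection (with inverse $g$ up to the above automorphisms), and in particular $|V(H)| = |V(\hat H)|$. Moreover, $f|_X = s$ is both surjective and injective (as a restriction of an injection), so $f$ maps $X$ bijectively onto $\hat X$.

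It remains to show that $f$ is an isomorphism, not merely a bijective homomorphism. Since $H$ has no loops and $f$ is injective on vertices, distinct edges of $H$ are sent to distinct edges of $\hat H$, yielding $|E(H)| \le |E(\hat H)|$; symmetrically $g$ gives $|E(\hat H)| \le |E(H)|$. Thus $f$ induces a bijection $E(H) \to E(\hat H)$, so $f^{-1}$ is also edge-preserving, and $f$ is an isomorphism from $(H,X)$ to $(\hat H,\hat X)$.

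The only subtlety I anticipate is verifying that the equivalence notion used in the lemma is precisely the one needed to invoke Lemma~\ref{lem:counting_minimality_graphs}, i.e., that renaming minimality coincides with the subgraph minimality of Section~\ref{sec:minors_and_colors}; once this is noted, the argument is a two-step composition trick plus a routine edge-count. No hard combinatorial or algebraic obstacle is expected.
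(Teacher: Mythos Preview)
Your proof is correct and takes a genuinely different route from the paper's. The paper argues directly from the lexicographic minimality of $(\abs{V(H)},\abs{E(H)})$: it lets $F$ be the image of the extending homomorphism $h\colon H\to\hat H$, observes that $(F,\hat X)$ is still renaming equivalent to $(H,X)$, and invokes minimality of $(\hat H,\hat X)$ to force $F=\hat H$, whence $h$ is surjective on vertices and edges and therefore an isomorphism. No appeal to Lemma~\ref{lem:counting_minimality_graphs} is made.

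Your argument instead bootstraps from Lemma~\ref{lem:counting_minimality_graphs}: composing the two extending homomorphisms gives endomorphisms that are bijective on the free variables, hence automorphisms, from which you squeeze out that $f$ and $g$ are vertex bijections, and finish by an edge count. This is perfectly sound; the only care needed, which you correctly flag, is that ``minimal'' in Lemma~\ref{lem:counting_minimality_graphs} is subgraph-minimality for counting equivalence while here it is lexicographic minimality for renaming equivalence. Since lex-minimality trivially implies subgraph-minimality and the paper identifies the two equivalence relations (stated just before this lemma and proved independently via augmented cores in Section~\ref{sec:counting_minimality}), the invocation is legitimate and there is no circularity. The paper's proof is more self-contained at this point in the exposition; yours is slightly slicker once Lemma~\ref{lem:counting_minimality_graphs} is available as a black box.
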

\begin{proof}
	Let $(H,X)$ and $(\hat H,\hat X)$ be minimal graphical conjunctive queries that are equivalent.
	By equivalence, we get bijective functions~$s:X\to\hat X$ and $\hat s:\hat X\to X$ that can be extended to homomorphisms~$h$ and~$\hat h$, respectively.
	Let~$F$ be the subgraph of~$\hat H$ that is the image of~$h$, that is we have~$V(F)=h(V(H))$ and $E(F)=h(E(H))$.
	We claim that~$F=\hat H$ must hold by minimality.
	Indeed, when~$\hat h$ is restricted to the vertices of~$F$, it must remain a homomorphism that extends~$\hat s$, and so~$(H,X)$ and~$(F,\hat X)$ are equivalent.
	Minimality implies~$\abs{V(H)}=\abs{V(F)}$ and $\abs{E(H)}=\abs{E(F)}$, and so~$h$ must have every vertex and edge of~$\hat H$ in its image.
	Thus~$h$ is in fact an isomorphism between~$H$ and~$\hat H$, which is what we claimed.
\end{proof}

We can easily express the number of all partial homomorphisms as a linear combination of the number of partial surjective homomorphisms.
\begin{lemma}\label{lem:homs surj}
	For all graphical queries~$(H,X)$ and all graphs~$G$, we have the following identity:
	\begin{equation}
		\#\homs{H,X}{G}=\sum_{Z\subseteq V(G)}\#\surj{H,X}{G,Z}
		\,.
	\end{equation}
\end{lemma}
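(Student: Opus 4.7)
The plan is to partition $\homs{H,X}{G}$ according to the image $a(X)$ of each partial homomorphism $a$, and to observe that the partial homomorphisms with a prescribed image $Z\subseteq V(G)$ are in canonical bijection with $\surj{H,X}{G,Z}$. Concretely, for each $Z\subseteq V(G)$ I would consider the set
\begin{equation*}
  \homs{H,X}{G}_Z \;=\; \setc{a \in \homs{H,X}{G}}{a(X)=Z}\,.
\end{equation*}
These sets are pairwise disjoint and their union over all $Z\subseteq V(G)$ is all of $\homs{H,X}{G}$, since every $a\in\homs{H,X}{G}$ has a unique image $a(X)$.

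The next step is to exhibit a bijection $\homs{H,X}{G}_Z \cong \surj{H,X}{G,Z}$. Given $a\in\homs{H,X}{G}_Z$, the map $a$ is by definition a function $X\to V(G)$ whose image is exactly $Z$, hence a surjection $X\to Z$; since $a$ also extends to a homomorphism $H\to G$, it lies in $\surj{H,X}{G,Z}$ by the definition of surjective mappings between graphical conjunctive queries. Conversely, any $s\in\surj{H,X}{G,Z}$ is a surjection $s\colon X\to Z$ that extends to a homomorphism $H\to G$; post-composing with the inclusion $Z\hookrightarrow V(G)$ yields an element of $\homs{H,X}{G}$ whose image is exactly $Z$. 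These two operations are manifestly inverse to each other.

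Summing the resulting identity $\#\homs{H,X}{G}_Z = \#\surj{H,X}{G,Z}$ over all subsets $Z\subseteq V(G)$ gives the claim. There is no real obstacle here; this is purely a partitioning argument. The only point worth stating explicitly is that the notation $\surj{H,X}{G,Z}$ makes sense because we interpret $(G,Z)$ as a graphical conjunctive query in which $Z$ plays the role of the distinguished free-variable set, and all homomorphisms from $H$ to the underlying graph $G$ are admissible.
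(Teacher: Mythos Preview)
Your proposal is correct and follows essentially the same approach as the paper: partition $\homs{H,X}{G}$ according to the image $a(X)$ and identify each block with $\surj{H,X}{G,Z}$. The paper's proof is just a terser version of yours, omitting the explicit description of the bijection that you spell out.
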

\begin{proof}
	Every element~$a\in\homs{H,X}{G}$ has a unique set~$Z=a(X)\subseteq V(G)$ such that~$a$ is surjective on~$Z$.
	Thus the sets $\surj{H,X}{G,Z}$ are disjoint for distinct~$Z$, and their union is $\homs{H,X}{G}$, so the claimed identity follows.
\end{proof}
In the following lemma, we show that the functions~$\#\homs{H,X}{\star}$ are linearly independent for all minimal conjunctive queries~$(H,X)$.
It was proved implicitly by Chen and Mengel~\cite{ChenM16}.
\begin{lemma}\label{lem:homs linearly independent}
	Let $k\ge 0$ and let $\mathcal M$ be the (finite) set of all minimal graphical conjunctive queries with at most~$k$ vertices, and let~$\mathcal G$ be the finite set of all (unlabeled) simple graphs with at most~$k^k$ vertices.
	\begin{enumerate}[(i)]
		\item\label{item:L}
		      Let $L$ be the $(\mathcal M\times\mathcal M)$-matrix with
		      $L[(H,X), (\hat H,\hat X)] = \#\surj{H,X}{\hat H,\hat X}$.
		      If we linearly sort~$\mathcal M$ consistent with the partial order~``$\leq$'',
		      then $L$ is a lower-triangular matrix whose diagonal entries are positive integers.
		\item\label{item:B}
		      Let $B$ be the $(\mathcal M\times\mathcal G)$-matrix where
		      $B[(\hat H,\hat X), G]$ is the number of sets~$Z\subseteq V(G)$ such that $(\hat H,\hat X)$ and $(G,Z)$ are equivalent.
		      The matrix $B$ has full rank.
		\item\label{item:A}
		      Let $A$ be the $(\mathcal M\times\mathcal G)$-matrix with
		      $A[(H,X), G] = \#\homs{H,X}{G}$.
		      Then $A=LB$ holds and thus~$A$ has full rank.
	\end{enumerate}
\end{lemma}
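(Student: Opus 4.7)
I would prove the three parts in the order (i), (iii), (ii): part (i) is a short combinatorial observation; part (iii) follows by combining Lemma~\ref{lem:homs surj} with a bijection between surjective partial homomorphisms into equivalent queries; and part (ii) follows from (i) and (iii) once one has the linear independence of hom-count functions on the $\star$ slot.

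For part (i), I would linearly order $\mathcal{M}$ extending the partial order $\leq$. The entry $L[(H,X),(\hat H,\hat X)] = \#\surj{H,X}{\hat H,\hat X}$ is nonzero only if there is a surjection $X\to\hat X$ extending to a homomorphism $H\to\hat H$, i.e., $(H,X)\ge(\hat H,\hat X)$, so nonzero entries lie at column $\le$ row, giving lower-triangularity. On the diagonal, $L[(H,X),(H,X)] = \#\surj{H,X}{H,X}$ counts bijections $X\to X$ extending to endomorphisms of $H$; by Lemma~\ref{lem:counting_minimality_graphs} each such endomorphism is an automorphism, so the diagonal entry is $\#\mathsf{Aut}(H,X)\ge 1$.

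For part (iii), I would apply Lemma~\ref{lem:homs surj} to write $\#\homs{H,X}{G} = \sum_{Z\subseteq V(G)} \#\surj{H,X}{G,Z}$ and then group the subsets $Z$ by the equivalence class of $(G,Z)$, each class having a unique minimal representative $(\bar H,\bar X)\in\mathcal{M}$. The key claim is that $\#\surj{H,X}{G,Z}=\#\surj{H,X}{\bar H,\bar X}$ whenever $(G,Z)\sim(\bar H,\bar X)$. To prove it, fix surjective partial homomorphisms $\alpha\colon Z\to\bar X$ and $\beta\colon \bar X\to Z$ witnessing the equivalence, with homomorphism extensions $\hat\alpha\colon G\to\bar H$ and $\hat\beta\colon\bar H\to G$. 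Then $\hat\alpha\circ\hat\beta$ is an endomorphism of $\bar H$ whose restriction to $\bar X$ is $\alpha\circ\beta$, which is a bijection (both $\alpha,\beta$ are surjections between finite sets of the same size); by Lemma~\ref{lem:counting_minimality_graphs} applied to the minimal query $(\bar H,\bar X)$, this endomorphism is an automorphism, and similarly for $\hat\beta\circ\hat\alpha$. The maps $a\mapsto \alpha\circ a$ and $a'\mapsto \beta\circ a'$ are therefore mutually-bijective-up-to-composition with a permutation, so both are bijections. Rearranging the resulting decomposition yields $A[(H,X),G]=\sum_{(\bar H,\bar X)} L[(H,X),(\bar H,\bar X)]\cdot B[(\bar H,\bar X),G]$, i.e., $A=LB$.

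For part (ii), since $L$ is invertible by (i), the identity $A=LB$ from (iii) gives $\operatorname{rank}(A)=\operatorname{rank}(B)$, so it suffices to show that $A$ has full row rank, i.e., the functions $\#\homs{H,X}{\star}$ for distinct $(H,X)\in\mathcal{M}$ are linearly independent on $\mathcal{G}$. This is the partial-homomorphism analogue of Lovász's classical linear independence theorem. I would prove it by polynomial interpolation, in the spirit of Lemma~\ref{lem:colorful to uncolored}: for each minimal $(\hat H,\hat X)\in\mathcal{M}$, consider the blow-up $\hat H^{\vec z}$ obtained by cloning each vertex $v\in V(\hat H)$ precisely $z_v$ times, observe that $\#\homs{H,X}{\hat H^{\vec z}}$ is a polynomial in $\vec z$, and extract, via a chosen multilinear monomial, the color-prescribed partial homomorphism count $\#\cphoms{H,X}{\hat H}$ when $\hat H$ is viewed as $\hat H$-colored by the identity. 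Combining Lemma~\ref{lem:min equiv implies iso} with this monomial-extraction shows that the resulting "evaluation functional" at $\hat H$ detects precisely the equivalence class of $(H,X)$; applying this inductively along a linear extension of $\leq$ in the row index provides the triangular structure forcing all coefficients in an alleged dependence to vanish. The main obstacle is making the interpolation argument separate orbits: within a single underlying graph $\hat H$ there may be several non-equivalent minimal queries sharing the same vertex set but different free-variable sets, and one must verify that the polynomial coefficient $[\prod_{v\in \hat X}z_v]\#\homs{H,X}{\hat H^{\vec z}}$ is sensitive enough to the choice of $\hat X$ to distinguish distinct $\text{Aut}(\hat H)$-orbits; this is where Lemma~\ref{lem:counting_minimality_graphs} and Lemma~\ref{lem:min equiv implies iso} are brought to bear jointly, ensuring that equivalent minimal queries are isomorphic and that the extracted monomial coefficient reduces to a nonzero colorful partial hom count.
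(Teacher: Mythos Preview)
Your overall strategy is sound and your treatment of (i) and (iii) is essentially the same as the paper's, with one caveat: in (iii) you write ``similarly for $\hat\beta\circ\hat\alpha$'', invoking Lemma~\ref{lem:counting_minimality_graphs} for an endomorphism of $G$. This is not justified, since $(G,Z)$ need not be minimal. Fortunately you don't need it: once $|Z|=|\bar X|$ (which follows from mutual surjections between finite sets), both $\alpha$ and $\beta$ are bijections, so $a\mapsto(\beta\circ\alpha)\circ a$ is an injective self-map of the finite set $\surj{H,X}{G,Z}$, hence a bijection; together with the analogous statement for $f\circ g$ this forces $f$ and $g$ to be bijections. No automorphism property of $G$ is required.

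For (ii) you take a genuinely different route from the paper. The paper shows directly that $B$ has full rank: for each $(\hat H,\hat X)$ it clones only the vertices in $\hat X$ to form $G^z$, then argues that the coefficient of $\prod_{v\in\hat X} z_v$ in $B[(H,X),G^z]$ is nonzero iff $(H,X)=(\hat H,\hat X)$, producing an indicator functional for each row. Your plan instead shows $A$ has full rank by cloning \emph{all} vertices of $\hat H$, and then transfers to $B$ via $A=LB$. Your extracted monomial coefficient is not quite ``$\#\cphoms{H,X}{\hat H}$'' as you write (that expression is not well-defined when $V(H)\ne V(\hat H)$); it is
\[
[\textstyle\prod_{v\in\hat X} z_v]\,\#\homs{H,X}{\hat H^{\vec z}}
=\#\{a_0\in\homs{H,X}{\hat H}: a_0 \text{ is a bijection } X\to\hat X\},
\]
which equals $L[(H,X),(\hat H,\hat X)]$ when $|X|=|\hat X|$ and $0$ otherwise. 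These functionals therefore recover the block-diagonal restriction of $L$ to blocks of constant $|X|$; each block is lower-triangular with positive diagonal by (i), so the resulting matrix is invertible and $A$ has full row rank. This works, but be precise about what the coefficient actually is, and note that the ``triangular structure'' you allude to is exactly this block form of $L$, not a direct detection of the equivalence class of $(H,X)$.
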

\begin{proof}
	First we discuss how to sort the elements of~$\mathcal M$.
	Since $\mathcal M$ only contains minimal queries, any two distinct elements of~$\mathcal M$ are not equivalent, and thus
	$(H,X)\not\geq(\hat H,\hat X)$ or $(H,X)\not\leq(\hat H,\hat X)$ holds.
	Thus we can linearly order~$\mathcal M$ in such a way, that~$(H,X)\not\geq(\hat H,\hat X)$ holds whenever~$(H,X)$ occurs before~$(\hat H,\hat X)$ in the order, and this is the order we choose.

	(\ref{item:L}).
	The identity function~$s:X\to X$ is clearly an element of~$\surj{H,X}{H,X}$, so all diagonal entries of~$L$ are positive integers.
	Now let $(H,X)$ and $(\hat H,\hat X)$ be distinct elements of~$\mathcal M$ such that $(H,X)$ occurs before $(\hat H,\hat X)$ in the linear order and so $(H,X)\not\geq(\hat H,\hat X)$ holds.
	By definition, this means that no surjective function~$X\to\hat X$ extends to a homomorphism from~$H$ to~$\hat H$, which implies $L[(H,X), (\hat H,\hat X)] = 0$.
	Thus, $L$ is lower-triangular.

	(\ref{item:B}).
	The proof is similar to the proof of Lemma~\ref{lem:colorful to uncolored}.
	To prove the claim, we show that each~$(\hat H,\hat X)\in\mathcal M$ has a linear combination of the columns~$\sum_{z} \lambda_z \cdot B[\star,G^z]=1$ such that $\sum_{z} \lambda_z \cdot B[(H,X),G^z]\ne 0$ holds if and only if~$(H,X)=(\hat H,\hat X)$.

	For each~$(\hat H,\hat X)$ and each~$z\in\set{1,\dots,k}^{\hat X}$, we construct graphs~$G^{z}$ as follows:
	Start from~$G:=\hat H$ and clone each vertex~$v\in \hat X$ exactly~$z_v-1$ times (i.e. replace it with an independent set of size~$z_v$ where each vertex has the same neighborhood as~$v$).
	Note that $G^z$ is $\hat H$-colored, and let $c\in\homs{G^z}{\hat H}$ be the coloring.
	Now recall that $B[(H,X),G^z]$ counts the sets~$Z'\subseteq V(G^z)$ such that~$(G^z,Z')$ and $(H,X)$ are equivalent.
	Clearly~$\abs{Z'}=\abs{X}$ must hold for this to be the case.

	\noindent We call~$Z'$ \emph{proper} if $c(Z')=\hat X$ holds, and~\emph{improper} otherwise.
	Moreover, we say that~$Z'$ is $H$-equivalent if~$(H,X)$ and $(G^z,Z)$ are equivalent.
	We have:
	\begin{equation}
		B[(H,X),G^z]
		=
		\#\set{\text{proper $H$-equivalent }Z'} + \#\set{\text{improper $H$-equivalent }Z'}
		\,.
	\end{equation}
	If~$Z'$ is proper, then~$(G^z,Z')$ is equivalent to $(\hat H,\hat X)$ by construction of~$G^z$.
	Thus if a proper~$H$-equivalent~$Z'$ exists, then~$(\hat H,\hat X)=(H,X)$ holds and the number of proper $H$-equivalent~$Z'$ in~$G^z$ is equal to~$\prod_{v\in \hat X} z_v$.
	On the other hand, if~$(\hat H,\hat X)\ne(H,X)$, then the number of proper $H$-equivalent~$Z'$ is equal to zero.
	In any case, the number of improper $H$-equivalent~$Z'$ is a polynomial in the~$z_v$ variables which however does not contain the monomial $\prod_{v\in \hat X} z_v$.
	By multivariate Lagrange interpolation, there is a linear combination~$\sum_z \lambda_z B[(H,X), G^z]$ which is equal to the coefficient of the monomial~$\prod_{v\in\hat X} z_v$. This monomial is zero if and only if~$(H,X)\ne(\tilde H,\tilde X)$.

	(\ref{item:A}).
	The fact that~$A=LB$ holds follows directly from Lemma~\ref{lem:homs surj} by collecting terms for equivalent~$(G,Z)$.
	Since~$L$ is invertible and $B$ has full rank, this also implies that~$A$ has full rank.
\end{proof}

We point out, that Lemma~\ref{lem:homs linearly independent} implies that renaming equivalence of two conjunctive queries is an explicit notion for equivalence. Note that the following was also shown by Chen and Mengel~\cite{ChenM16} with a more complicated proof.
\begin{corollary}\label{cor:counting_equivalence_explicit_graphs}
Two conjunctive queries are renaming equivalent if and only if they are equivalent.
\end{corollary}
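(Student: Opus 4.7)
\medskip

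\noindent\textbf{Proof plan.} I would prove the two directions separately, with the forward direction essentially a cardinality argument and the reverse direction leveraging Lemma~\ref{lem:homs linearly independent}.

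For the easy direction, assume $(H,X)$ and $(\hat H,\hat X)$ are renaming equivalent, so that there are surjective maps $s\colon X\to\hat X$ and $\hat s\colon\hat X\to X$ extending to homomorphisms $h\colon H\to\hat H$ and $\hat h\colon \hat H\to H$. Since a surjection from $X$ to $\hat X$ and one from $\hat X$ to $X$ both exist, $\abs{X}=\abs{\hat X}$ and both $s,\hat s$ are bijections. Define $\phi\colon\homs{H,X}{G}\to\homs{\hat H,\hat X}{G}$ by $\phi(a)=a\circ\hat s$: if $h_a\colon H\to G$ extends $a$, then $h_a\circ\hat h\colon\hat H\to G$ is a homomorphism whose restriction to $\hat X$ is $a\circ \hat h|_{\hat X}=a\circ\hat s$, so $\phi$ is well defined. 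Since $\hat s$ is a bijection, precomposition with $\hat s$ is injective, so $\phi$ is injective. Symmetrically using $s$ one obtains an injection $\homs{\hat H,\hat X}{G}\hookrightarrow\homs{H,X}{G}$. For the finite sets in question, mutual injections give equal cardinalities, hence $\#\homs{H,X}{G}=\#\homs{\hat H,\hat X}{G}$ for every $G$, which is precisely the notion of equivalence from Section~\ref{sec:minors_and_colors}.

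For the reverse direction, assume $(H,X)$ and $(\hat H,\hat X)$ are equivalent in the counting sense, and let $(H',X')$ and $(\hat H',\hat X')$ be their renaming-minimal representatives. By the forward direction applied to each pair we get
\begin{equation*}
\#\homs{H',X'}{\star}=\#\homs{H,X}{\star}=\#\homs{\hat H,\hat X}{\star}=\#\homs{\hat H',\hat X'}{\star}.
\end{equation*}
Choosing $k$ large enough that both $(H',X')$ and $(\hat H',\hat X')$ belong to the set $\mathcal M$ of Lemma~\ref{lem:homs linearly independent}, the matrix $A$ in that lemma has full rank, so its rows—indexed by renaming-minimal queries—represent pairwise distinct functions on graphs. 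Two rows yielding the identical function must therefore be the same row, giving $(H',X')=(\hat H',\hat X')$. Hence $(H,X)$ and $(\hat H,\hat X)$ are each renaming equivalent to the same representative and therefore renaming equivalent to each other.

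The forward direction is routine bookkeeping; the only nontrivial step is passing to bijective maps, which follows automatically from the two-sided surjectivity. The substantive content is the reverse direction, where without Lemma~\ref{lem:homs linearly independent} one could not conclude that the function $G\mapsto\#\homs{H,X}{G}$ determines the renaming class. That lemma—in particular the full-rank statement obtained via the triangular matrix $L$ of partial surjections composed with the interpolation argument giving~$B$—is the real engine of the corollary.
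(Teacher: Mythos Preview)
Your proposal is correct and follows essentially the same approach as the paper: the forward direction is routine (the paper just calls it ``immediate''), and the reverse direction is deduced from the full rank of the matrix~$A$ in Lemma~\ref{lem:homs linearly independent}. Your version is more explicit than the paper's two-sentence proof—in particular, you spell out the passage through renaming-minimal representatives, which the paper leaves implicit since the rows of~$A$ are indexed by~$\mathcal M$—but the argument is the same.
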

\begin{proof}
The forward implication is immediate and the reverse follows from the third item of Lemma~\ref{lem:homs linearly independent}. To see this, we observe that the full rank of $A$ certainly implies that its row vectors are pairwise different.
\end{proof}

\subsection{Complexity monotonicity}
\emph{Complexity monotonicity} informally refers to the following concept:
\begin{quote}
	\begin{minipage}{0.8\textwidth}
		\textit{Computing $\#\homs{Q}{\star}$ is precisely as hard as computing the hardest term $\#\homs{H_i,X_i}{G}$ for which $(H_i,X_i)$ is in the support of $Q$.}
	\end{minipage}
\end{quote}
Complexity monotonicity properties of linear combinations of counting problems were explicitly or implicitly used multiple times in recent publications (see, e.g., \cite{ChenM16,hombasis2017,Roth17,Fockeetal,arxivChen,arxivHolger}).

While the fact that computing $\#\homs{Q}{G}$ is at most as hard as computing the hardest term in the support is trivial --- just compute the sum $\sum_{i=1}^k \lambda_i \cdot \#\homs{H_i,X_i}{G}$ naively --- the reverse implication is usually more involved and was proven independently by Chen and Mengel~\cite{ChenM16} and, in a special case, by Curticapean, Dell, and Marx~\cite{hombasis2017}. We state the result in terms of the existence of a parameterized Turing reduction:

\begin{lemma}[Complexity Monotonicity, implicit in \cite{ChenM16}]
	\label{lem:new_pods_monotonicity}
	Let $Q$ be a quantum query.
	There is an oracle algorithm~$\mathbb A$ that is given~$G$ as input and oracle access to the function~$\#\homs{Q}{\star}$, and computes $\#\homs{H,X}{G}$ for all $(H,X) \in \mathsf{supp}(Q)$ in time $t(|Q|)\cdot n$, where $n = |V(G)|$ and $t$ is a computable function. Furthermore, every oracle query $\#\homs{Q}{G'}$ satisfies $|V(G')| \leq t(|Q|) \cdot n$.
\end{lemma}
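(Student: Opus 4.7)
The plan is to reduce the computation of each $\#\homs{H_i,X_i}{G}$ to inverting a small linear system whose entries are provided by the oracle. The key ingredient I would use is the following multiplicativity identity for partial-homomorphism counts under the tensor (categorical) graph product: for any simple graph $F$,
\begin{equation*}
\#\homs{H,X}{G\times F} \;=\; \#\homs{H,X}{G}\cdot \#\homs{H,X}{F}\,.
\end{equation*}
This holds because a partial homomorphism $a\colon X\to V(G)\times V(F)$ decomposes canonically as a pair $(a_G,a_F)$, and the standard pointwise decomposition of a homomorphism $h\colon V(H)\to V(G)\times V(F)$ as $(h_G,h_F)$ shows that $a$ extends to a full homomorphism from $H$ to $G\times F$ iff $a_G$ and $a_F$ independently extend to homomorphisms from $H$ into $G$ and into $F$, respectively.

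By the standing convention introduced just after Definition~\ref{def:q_graphs}, I may assume that the support of $Q$ consists of pairwise non-equivalent minimal queries $(H_1,X_1),\ldots,(H_\ell,X_\ell)$. Let $k$ be the maximum query size in the support. By Lemma~\ref{lem:homs linearly independent}(\ref{item:A}), the matrix $A$ indexed by minimal queries of size at most $k$ and simple graphs of size at most $k^k$ has full row rank, so in particular its submatrix restricted to the rows $(H_i,X_i)$ still has full row rank. Hence there exist graphs $F_1,\ldots,F_\ell$ on at most $k^k$ vertices such that the $\ell\times\ell$ matrix $M$ with $M[j,i]:=\#\homs{H_i,X_i}{F_j}$ is invertible; locating such $F_j$'s, assembling $M$, and precomputing $M^{-1}$ all take time depending only on $|Q|$.

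I would then form the graphs $G_j:=G\times F_j$ and query the oracle to obtain $v_j:=\#\homs{Q}{G_j}$. By linearity of $\#\homs{Q}{\cdot}$ together with the multiplicativity above,
\begin{equation*}
v_j \;=\; \sum_{i=1}^{\ell}\lambda_i\cdot\#\homs{H_i,X_i}{G}\cdot M[j,i]\,,
\end{equation*}
so setting $x_i:=\lambda_i\cdot\#\homs{H_i,X_i}{G}$ turns this into the linear system $M\vec{x}=\vec{v}$. Since $M$ is invertible and every $\lambda_i\ne 0$, I recover each $\#\homs{H_i,X_i}{G}=x_i/\lambda_i$. Each $G_j$ has at most $k^k\cdot n$ vertices, which equals $t(|Q|)\cdot n$ for a computable $t$, and constructing the $G_j$'s dominates the non-oracle work, yielding the claimed running-time bound.

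The main obstacle is establishing the multiplicativity identity for \emph{partial} homomorphisms rather than ordinary ones (where the analogous identity is classical): the subtlety is that one must verify that the existence of extensions decouples across the two tensor factors. Once this identity is in place, the argument is a routine Lovász-style linear-independence argument, already packaged for us by Lemma~\ref{lem:homs linearly independent}(\ref{item:A}), and the minimality assumption on the support is precisely what allows us to invoke that lemma.
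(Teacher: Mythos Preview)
Your proof is correct and follows essentially the same approach as the paper: both use the multiplicativity of partial-homomorphism counts under the tensor product together with the full-rank matrix $A$ from Lemma~\ref{lem:homs linearly independent}(\ref{item:A}) to set up and invert a linear system whose right-hand side is obtained by oracle calls on $G\otimes F_j$. The only cosmetic difference is that the paper works with the full matrix $A$ over all of $\mathcal M$ (padding $Q$ with zero coefficients) while you explicitly extract an invertible $\ell\times\ell$ submatrix; your presentation is arguably the cleaner of the two.
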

\begin{proof}
	Let $k$ be the largest number of vertices among the graphs in the support of~$Q$ and let~$\mathcal M$ be the set from Lemma~\ref{lem:homs linearly independent}.
	Let $G\otimes F$ denote the tensor product of two graphs\footnote{The adjacency matrix of $G\otimes F$ is given by the Kronecker product of the adjacency matrices of $G$ and $F$.} and note that $\homs{H,X}{G\otimes F}=\homs{H,X}{G}\cdot\homs{H,X}{F}$ holds.
	Let $Q=\sum_{H\in\mathcal M} \lambda_H H$ and write $x_H=\lambda_H\cdot\homs{H,X}{G}$.
	Moreover, set $b_F=\homs{Q}{G\otimes F}$ and let $A$ be the matrix from Lemma~\ref{lem:homs linearly independent}.
	Then we have $x^TA=b$.
	To compute the vector~$b$, we simply query the oracle, and the queries have the required size bound.
	The matrix~$A$ and in fact its inverse~$A^{-1}$ can be hard-wired into the algorithm.
	Then $x^T = bA^{-1}$ holds, and we can compute the values~$x_H/\lambda_H$ for $H$ in the support of~$Q$ in the time required.
\end{proof}

We are now ready to lift our classification to quantum queries. The theorem follows from the classification for conjunctive queries (Theorem~\ref{thm:main_thm_cq_graphs}) and the complexity monotonicity property.
\begin{theorem}\label{thm:main_quantum}
	Let $\Delta$ be a recursively enumerable class of quantum queries and let $\hat{\Delta}$ be the set of all minimal conjunctive queries that are contained in the support of some query in $\Delta$.
	\begin{enumerate}
		\item If the treewidth of $\hat{\Delta}$ and $\mathsf{contract}(\hat{\Delta})$ is bounded, then $\#\homsprob(\Delta)$ is fixed-parameter tractable.
		\item If the treewidth of $\hat{\Delta}$ is unbounded and the treewidth of $\mathsf{contract}(\hat{\Delta})$ is bounded, then $\#\homsprob(\Delta)$ is $\W$-equivalent.
		\item If the treewidth of $\mathsf{contract}(\hat{\Delta})$ is unbounded and the dominating star size of $\hat{\Delta}$ is bounded, then $\#\homsprob(\Delta)$ is $\#\W$-equivalent.
		\item If the dominating star size of $\hat{\Delta}$ is unbounded, then $\#\homsprob(\Delta)$ is $\#\Wtwo$-hard.
		Moreover, for any fixed quantum query~$\delta$ with $\mathsf{dss}(\delta) \geq 3$, the problem $\#\homs{\delta}{\star}$ cannot be computed in time $O(n^{\mathsf{dss}(\delta) - \varepsilon})$ for any $\varepsilon > 0$ unless SETH fails.
		\item If the linked matching number of $\hat{\Delta}$ is unbounded, then $\#\homsprob(\Delta)$ is $\#\Atwo$-equivalent.
	\end{enumerate}
\end{theorem}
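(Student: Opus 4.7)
My proof plan splits each of the five cases into an easiness direction and a hardness direction, and exploits that both directions follow readily from the corresponding statements for conjunctive queries (Theorem~\ref{thm:main_thm_cq_graphs}) once Lemma~\ref{lem:new_pods_monotonicity} is in hand. The key observation is that, since a quantum query $Q$ expands as $\#\homs{Q}{G}=\sum_i \lambda_i\cdot\#\homs{H_i,X_i}{G}$ over the finitely many minimal conjunctive queries in its support, and since $|\mathsf{supp}(Q)|$ together with $\max_i|H_i|$ is bounded in $|Q|$, easiness of all summands implies easiness of the sum up to an additive FPT overhead; conversely, Lemma~\ref{lem:new_pods_monotonicity} shows that every individual $\#\homs{H_i,X_i}{G}$ can be recovered by an FPT-reduction to $\#\homs{Q}{\star}$, so hardness of any summand transfers to $Q$.

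For the easiness directions in cases~(1)--(3), I would proceed as follows. Given a quantum query $Q=\sum_{i=1}^\ell\lambda_i(H_i,X_i)$ with every minimal constituent $(H_i,X_i)$ belonging to the relevant complexity class by Theorem~\ref{thm:main_thm_cq_graphs}, an algorithm for $\#\homsprob(\Delta)$ simply invokes $\ell$ times the algorithm guaranteed for conjunctive queries and forms the rational linear combination. Since $\ell$ and the running time per call depend only on $|Q|$ (for FPT), and since an $\W$-, $\#\W$-, or $\#\Atwo$-oracle call on each constituent has its parameter bounded by a function of $|Q|$, the sum stays FPT, $\W$-easy, $\#\W$-easy, or $\#\Atwo$-easy respectively. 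The hardness directions in~(2)--(5) use Lemma~\ref{lem:new_pods_monotonicity}: for any quantum query $Q\in\Delta$, there is a parameterized Turing reduction from each $\#\homs{H_i,X_i}{\star}$ with $(H_i,X_i)\in\hat\Delta$ to $\#\homs{Q}{\star}$. If $\hat\Delta$ has unbounded treewidth of the contract, unbounded dominating star size, or unbounded linked matching number, then Theorem~\ref{thm:main_thm_cq_graphs} supplies $\#\W$-, $\#\Wtwo$-, or $\#\Atwo$-hard instances inside $\hat\Delta$, and the monotonicity reduction transports this hardness to $\Delta$.

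For the fine-grained SETH statement in~(4), I would argue as follows. Let $\delta$ be a fixed quantum query with $\mathsf{dss}(\delta)\ge 3$ and pick a minimal constituent $(H,X)\in\mathsf{supp}(\delta)$ attaining $\mathsf{dss}(H,X)=\mathsf{dss}(\delta)$. Suppose $\#\homs{\delta}{G}$ were computable in time $O(n^{\mathsf{dss}(\delta)-\varepsilon})$. Applying Lemma~\ref{lem:new_pods_monotonicity} with the fixed query $\delta$ (so that $t(|\delta|)$ is a constant $c$), computing $\#\homs{H,X}{G}$ reduces to a linear number of oracle calls on graphs of size at most $cn$, each answered in time $O((cn)^{\mathsf{dss}(\delta)-\varepsilon})=O(n^{\mathsf{dss}(H,X)-\varepsilon})$. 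This contradicts the SETH lower bound for $(H,X)$ established in Theorem~\ref{thm:main_thm_cq_graphs}(4), completing the reduction. I anticipate that the main subtlety is bookkeeping the polynomial blow-up and the dependence on $|Q|$ in Lemma~\ref{lem:new_pods_monotonicity}: one must verify that the constant $t(|Q|)$ multiplies $n$ only (not a power of $n$ with a larger exponent), so that the SETH-tight exponent $\mathsf{dss}(\delta)$ is preserved. Inspecting the proof of Lemma~\ref{lem:new_pods_monotonicity}, the oracle is queried on tensor products $G\otimes F$ with $|V(F)|$ depending only on $|Q|$, which gives exactly the required $|V(G')|\le c\cdot n$ bound, so the argument carries through cleanly.
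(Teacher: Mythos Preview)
Your proposal is correct and follows precisely the approach the paper takes: the paper's own proof is a one-line reference to Theorem~\ref{thm:main_thm_cq_graphs} together with complexity monotonicity (Lemma~\ref{lem:new_pods_monotonicity}), and you have simply spelled out how these two ingredients combine for each of the five cases. One small imprecision: in the SETH argument you speak of ``a linear number of oracle calls,'' but in fact for a fixed quantum query~$\delta$ the number of oracle calls in Lemma~\ref{lem:new_pods_monotonicity} is a constant depending only on~$|\delta|$ (the matrix~$A^{-1}$ is hard-wired), which only strengthens your bound and makes the exponent preservation immediate.
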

We remark that, in case of graphs, the classification for quantum queries implies both, Theorem~\ref{thm:data complexity} and Theorem~\ref{thm:main2_int}, if we can express existential and universal positive queries with inequalities and non-monotone constraints over the free variables as quantum queries. This is proved in the subsequent Sections~\ref{sec:cqs_with_ineqs}-\ref{sec:most_general_graphs}. Again the general version for arbitrary logical signatures with bounded arity is deferred to Section~\ref{sec:main_result_structs}.

\subsection{Conjunctive queries with inequalities}\label{sec:cqs_with_ineqs}
In what follows, we will generalize Theorem~\ref{thm:main_thm_cq_graphs} to conjunctive queries that may contain inequalities over free variables\footnote{At the end of this subsection, we argue why a similar result which also takes inequalities into account that may contain quantified variables, would require to solve a long standing open problem in parameterized complexity theory.}. In particular we will show that the support of the resulting quantum query can be given explicitly. Answers to conjunctive queries with inequalities are modeled via partially injective homomorphisms.

\paragraph*{Conjunctive queries with inequalities}
A \emph{conjunctive query with inequalities over the free variables} is a triple $(H,I,X)$ where $(H,X)$ is a conjunctive query and $I$ is an irreflexive and symmetric relation $I \subseteq X^2$. We say that $I$ is a set of inequalities. Intuitively, given a graph $G$ and a conjunctive query with inequalities $(H,I,X)$, an assignment $a: X \rightarrow V(G)$ is an answer to $(H,I,X)$ if and only if $a$ is an answer to $(H,X)$ and, additionally, for every inequality $(x,x') \in I$, it holds that $a(x) \neq a(x')$. Formally, we define the set of answers to $(H,I,X)$ in terms of partially injective homomorphisms
\begin{equation*}
 \partinj{H,I,X}{G} := \setc{ a \in \homs{H,X}{G} }{ \forall (x,x') \in I : a(x) \neq a(x')} \,.
\end{equation*}
If there are no quantified variables, this definition coincides with the notion of graphically restricted homomorphisms in~\cite{Roth17}.

We will use the following contraction operation induced by the subsets of $I$. Given a conjunctive query $(H,X)$ and a set $\sigma \subseteq I$ the \emph{contracted query} $(H/\sigma,X/\sigma)$ is obtained by identifying every pair of vertices $x$ and $\hat{x}$ as a single vertex for every inequality $(x,\hat{x})\in \sigma$. Multiple edges are deleted and self-loops are kept. We point out that it is possible that the contraction of all pairs in $\sigma$ might also contract vertices $x$ and $\hat{x}$ that are not contained in $\sigma$. Consider for example $\sigma = \{ \{x_1,x_2\},\{x_2,x_3\} \}$. Then contracting $\{x_1,x_2\}$ and $\{x_2,x_3\}$ will also contract~$x_1$ and~$x_3$.

\begin{theorem}
	\label{thm:quantum_graph_for_ineqs}
	Let $\chi = (H,I,X)$ be a conjunctive query with inequalities over the free variables. Then there exists a quantum query $\qgraphof{\chi}$ such that
\begin{equation*}
 \#\partinj{H,I,X}{\star} = \#\homs{\qgraphof{\chi}}{\star} \,.
\end{equation*}
Furthermore, the mapping $\chi \mapsto \qgraphof{\chi}$ is computable and the support of $\qgraphof{\chi}$ is, up to equivalence, the set of all contracted queries $(H/\sigma,X/\sigma)$ where $\sigma$ is a subset of $I$.
\end{theorem}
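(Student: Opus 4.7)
The plan is to build $\qgraphof{\chi}$ by inclusion–exclusion over the inequality constraints. Concretely, I define
\[
\qgraphof{\chi} \;:=\; \sum_{\sigma\subseteq I} (-1)^{|\sigma|} \cdot (H/\sigma,\, X/\sigma)\,.
\]
To verify the homomorphism-count identity, I would observe that for any graph $G$ and any $\sigma \subseteq I$, the partial homomorphisms $a \in \homs{H,X}{G}$ satisfying $a(x) = a(x')$ for every $(x,x') \in \sigma$ are in canonical bijection with the partial homomorphisms in $\homs{H/\sigma, X/\sigma}{G}$: the equality constraints force every block of the partition of $X$ generated by $\sigma$ to collapse to a single value, while conversely any partial homomorphism from the contracted query lifts back. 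Writing $\llbracket \cdot \rrbracket$ for the Iverson bracket and using the expansion $\prod_{e \in I}(1 - I_e) = \sum_{\sigma \subseteq I}(-1)^{|\sigma|}\prod_{e \in \sigma} I_e$ for the indicators $I_e(a) = \llbracket a(x) = a(x')\rrbracket$, this bijection yields
\[
\#\partinj{H,I,X}{G} \;=\; \sum_{a \in \homs{H,X}{G}} \sum_{\sigma \subseteq I}(-1)^{|\sigma|}\prod_{(x,x') \in \sigma}\llbracket a(x) = a(x')\rrbracket \;=\; \sum_{\sigma \subseteq I}(-1)^{|\sigma|}\,\#\homs{H/\sigma,X/\sigma}{G}\,,
\]
which by linearity of $\#\homsprob$ is exactly $\#\homs{\qgraphof{\chi}}{G}$.

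Next I would group summands according to the partition of $X$ they induce. Two subsets $\sigma,\sigma' \subseteq I$ whose transitive closures yield the same equivalence relation on $X$ produce literally identical contractions $(H/\sigma, X/\sigma) = (H/\sigma', X/\sigma')$, so letting $\pi$ range over partitions of $X$ whose blocks are connected in the graph with edge set $I$,
\[
\qgraphof{\chi} \;=\; \sum_\pi c_\pi \cdot (H/\pi,\, X/\pi)\,,\qquad c_\pi \;:=\; \sum_{\sigma \subseteq I:\ \sigma\text{ induces }\pi}(-1)^{|\sigma|}\,.
\]
A set $\sigma$ induces $\pi$ if and only if, for every block $B$ of $\pi$, the restriction $\sigma \cap I[B]$ is a spanning connected subgraph of the induced subgraph $I[B]$. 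Hence $c_\pi = \prod_B s(I[B])$, where $s(F) := \sum_\tau (-1)^{|\tau|}$ sums over spanning connected edge subsets $\tau$ of a connected graph $F$ (with $s(F) = 1$ when $F$ is a single vertex).

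The main obstacle—and the only non-routine step—is ruling out cancellation after collecting terms that are equivalent as graphical conjunctive queries. For this I would invoke Rota's NBC theorem in the form used in~\cite{Roth17}: for a connected graph $F$ on $m \geq 2$ vertices, $s(F)$ equals $(-1)^{m-1}$ times the absolute value of the Möbius function of the cycle matroid's bond lattice evaluated at $(\hat 0, \hat 1)$, and this value is strictly positive. It follows that every $c_\pi$ is non-zero and $\op{sign}(c_\pi) = (-1)^{|X|-|X/\pi|}$, which depends only on the number of blocks of $\pi$. Since any two renaming-equivalent contractions $(H/\pi, X/\pi)$ and $(H/\pi', X/\pi')$ must satisfy $|X/\pi| = |X/\pi'|$ (the witnessing surjections between finite sets of equal size are bijections), the corresponding coefficients share a common sign, so the pooled coefficient of each equivalence class is a sum of nonzero terms of one sign and is itself nonzero. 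Therefore, upon collecting equivalent summands, the support of $\qgraphof{\chi}$ is exactly the set of equivalence classes of contractions $(H/\sigma, X/\sigma)$ for $\sigma \subseteq I$, completing the proof.
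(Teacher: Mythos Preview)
Your proposal is correct and follows essentially the same route as the paper: inclusion--exclusion over subsets $\sigma\subseteq I$, grouping by the induced partition of $X$ (equivalently, by flats of the graphic matroid $M(X,I)$), invoking Rota's NBC theorem to pin down the sign of each coefficient as $(-1)^{|X|-|X/\pi|}$, and then using that renaming-equivalent contractions must have the same number of free variables to rule out cancellation when collecting equivalent terms. Your product factorization $c_\pi=\prod_B s(I[B])$ is just the Boolean expansion formula for $\mu(\emptyset,\rho)$ split along connected components, so it is a notational variant rather than a different argument.
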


In other words, given $\chi = (H,I,X)$, we can contract arbitrary variables in $(H,X)$ that are connected by an inequality in $I$. Then Theorem~\ref{thm:quantum_graph_for_ineqs} guarantees that a minimal equivalent of the resulting query is contained in the support of the quantum query $\qgraphof{\chi}$. The proof requires matroid and lattice theory and is hence, together with some further preliminaries, encapsulated in Subsection~\ref{sec:proof_ineqs}.

We remark that a general theorem in the above form that also includes inequalities over quantified variables remains elusive, as this would require to completely understand the subgraph decision problem, which is one of the most famous open problems in parameterized complexity (see e.g. Chap.~13 in~\cite{flumgrohe}). In terms of conjunctive queries with inequalities, the subgraph decision problem can be formulated by a query without free variables and with all inequalities over the quantified variables. Then the empty assignment is in the set of solutions if and only if there is an injective homomorphism, that is, a subgraph embedding from the quantified variables to the host graph.

\subsubsection{Matroid lattices and the proof of Theorem~\ref{thm:quantum_graph_for_ineqs}}\label{sec:proof_ineqs}

The proof is in the same spirit as in \cite{Roth17}; the key idea is that the coefficients of the quantum queries can be computed using the Möbius function over the lattice of flats of the graphic matroid induced by the inequalities. Hence we first proceed with a detour to matroid theory.

\paragraph*{Matroids} We will follow the definitions of Chapter~1 of the textbook of Oxley \cite{oxley}. A \emph{matroid} $M$ is a pair $(E,\mathcal{I})$ where $E$ is a finite set and $\mathcal{I}\subseteq \mathcal{P}(E)$ such that (1) $\emptyset \in \mathcal{I}$,
(2) if $A \in \mathcal{I}$ and $B \subseteq A$ then $B \in \mathcal{I}$, and (3) if $A,B \in \mathcal{I}$ and $|B|<|A|$ then there exists $a \in A \setminus B$ such that $B\cup \bs{a} \in \mathcal{I}$. We call $E$ the \emph{ground set} and an element $A \in \mathcal{I}$ an \emph{independent set}. A maximal independent set is called a \emph{basis}. The \emph{rank} $\mathsf{rk}(M)$ of $M$ is the size of its bases\footnote{This is well-defined as every maximal independent set has the same size due to (3).}.

Given a subset $X \subseteq E$ we define $\mathcal{I}|X := \bs{A \subseteq X ~|~A \in \mathcal{I}}$. Then $M|X := (X,\mathcal{I}|X)$ is also a matroid and called the restriction of $M$ to $X$. Now the \emph{rank} $\mathsf{rk}(X)$ of $X$ is the rank of $M|X$. Equivalently, the rank of $X$ is the size of the largest independent set $A \subseteq X$.
Furthermore we define the \emph{closure} of $X$ as follows:
\begin{equation}\label{eq:closures}
	\mathsf{cl}(X) := \bs{e \in E ~|~ \mathsf{rk}(X \cup \bs{e}) = \mathsf{rk}(X)} \,.
\end{equation}
Note that by definition $\mathsf{rk}(X)=\mathsf{rk}(\mathsf{cl}(X))$. We say that $X$ is a \emph{flat} if $\mathsf{cl}(X)=X$. We denote $L(M)$ as the set of flats of $M$. It holds that $L(M)$ together with the relation of inclusion is a lattice, called the \emph{lattice of flats} of $M$. The least upper bound of two flats $X$ and $Y$ is $\mathsf{cl}(X \cup Y)$ and the greatest lower bound is $X \cap Y$. It is known that the lattices of flats of matroids are exactly the geometric lattices\footnote{For the purpose of this paper we do not need the definition of geometric lattices but rather the equivalent one in terms of lattices of flats and therefore omit it. We recommend e.g. Chapter~3 of \cite{welsh} and Chapter~1.7 of \cite{oxley} to the interested reader.} and we denote the set of those lattices as $\mathcal{L}$.
Given a graph $H=(V,E)$, the \emph{graphic matroid} $M(H)$ has ground set $E$ and a set of edges is independent if and only if it does not contain a cycle.
If $H$ is connected then a basis of $H$ is a spanning tree of $H$. If $H$ consists of several connected components then a basis of $M(H)$ induces spanning trees for each of those. Every subset $\rho$ of $E$ induces a partition of the vertices of $H$ where the blocks are the vertices of the connected components of $H|_\rho=(V(H),\rho)$ and it holds that
\begin{equation}
	\label{eqn:graphic_rank}
	\mathsf{rk}(\rho) = |V(H)| - c(H|_\rho) \,,
\end{equation}
where $c(H|_\rho)$ is the number of connected components of $H|_\rho$. In particular, the flats of $M(H)$ correspond bijectively to the partitions of vertices of $H$ into connected components as adding an element to $\rho$ such that the rank does not change, will not change the connected components, too. For convenience we will therefore abuse notation and say, given an element $\rho$ of the lattice of flats of $M(H)$, that $\rho$ partitions the vertices of $H$ where the blocks are the vertices of the connected components of $H|_\rho$. The following observation will be useful in the remainder of this section.
\begin{lemma}
	\label{lem:same_blocks_same_ranks}
	Let $\rho,\sigma \in L(M(H))$ for a graph $H$. If the number of blocks of $\rho$ and $\sigma$ are equal then $\mathsf{rk}(\rho)=\mathsf{rk}(\sigma)$.
\end{lemma}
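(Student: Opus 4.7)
The proof is essentially a direct unfolding of the definitions and the rank formula~\eqref{eqn:graphic_rank}, so the plan is short. The plan is to trace through the correspondence between blocks of the partition induced by a flat and connected components of the spanning subgraph $H|_\rho$, and then to invoke~\eqref{eqn:graphic_rank}.

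Concretely, I would first recall the convention established just before the statement: for a flat $\rho \in L(M(H))$ the partition associated to $\rho$ has as its blocks exactly the vertex sets of the connected components of $H|_\rho = (V(H), \rho)$. This convention is well-defined precisely because $\rho$ is a flat: adding any edge $e \notin \rho$ to $\rho$ would decrease the number of connected components and hence raise the rank, contradicting $\mathsf{cl}(\rho) = \rho$. In particular, the number of blocks of $\rho$ equals $c(H|_\rho)$, and likewise for $\sigma$.

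Given this correspondence, the lemma reduces to a one-line computation. By hypothesis the number of blocks of $\rho$ and of $\sigma$ coincide, so $c(H|_\rho) = c(H|_\sigma)$. Applying~\eqref{eqn:graphic_rank} to both sides yields
\begin{equation*}
\mathsf{rk}(\rho) = |V(H)| - c(H|_\rho) = |V(H)| - c(H|_\sigma) = \mathsf{rk}(\sigma),
\end{equation*}
which is the desired equality. I do not anticipate any obstacle here; the only subtlety worth double-checking is the equivalence ``blocks of the partition $=$ connected components of $H|_\rho$'', which is already built into the notation introduced above the lemma.
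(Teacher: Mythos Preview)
Your proposal is correct and follows exactly the paper's approach: the paper's proof is the single sentence ``Immediately follows from Equation~(\ref{eqn:graphic_rank}),'' and your write-up is simply a careful unfolding of that sentence via the identification of blocks with connected components.
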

\begin{proof}
	Immediately follows from Equation~(\ref{eqn:graphic_rank}).
\end{proof}

We now have everything we need to proceed with the proof of Theorem~\ref{thm:quantum_graph_for_ineqs}.
\begin{proof}
	Let $G$ be a graph. We will prove that
	\begin{equation}\label{eq:main_eq_ineqs}
		\#\partinj{H,X,I}{G} = \sum_{\rho \in \mathcal{L}(M(X,I))} \mu(\emptyset,\rho) \cdot \#\homs{H/\rho,X/\rho}{G} \,,
	\end{equation}
	where $\mu$ is the Möbius function of the lattice of flats $\mathcal{L}(M(X,I))$ of the graphic matroid $M(X,I)$. To avoid a lengthy introduction to the concept of Möbius inversion,\footnote{We refer the interested reader to~\cite{lovaszbook} where Möbius inversion is introduced and used in a similar setting.} we point out that we only need the following two properties of $\mu$:
	\begin{enumerate}
		\item[1.] Boolean Expansion Formula (see e.g. Proposition~7.1.4 in~\cite{zaslavsky87}): For every $\rho \in \mathcal{L}(M(X,I))$ it holds that
		      \begin{equation}\label{eq:boolean_expansion}
			      \mu(\emptyset,\rho) =\sum_{\substack{\sigma \subseteq I\\ \mathsf{cl}(\sigma)=\rho}} (-1)^{\#\sigma} \,.
		      \end{equation}
		\item[2.] Corollary of Rota's NBC Theorem \cite{rota1964foundations}: For every $\rho \in \mathcal{L}(M(X,I))$ it holds that
		      \begin{equation}\label{eq:rota}
			      \mathsf{sgn}(\mu(\emptyset,\rho)) = (-1)^{\mathsf{rk}(\rho)} \,.
		      \end{equation}
	\end{enumerate}
	For the proof of~\ref{eq:main_eq_ineqs} we will apply the principle of inclusion-exclusion in the following way: First, we compute $\#\homs{H,X}{G}$, then we subtract the number of elements in $\#\homs{H,X}{G}$ that violate at least one inequality, then we add the number of elements that violate at least two inequalities and so forth. It will be convenient to define the following subsets of partial homomorphisms that are subject to a set of vertex identifications corresponding to violated inequalities. To this end, let $\rho \subseteq I$ and define
	\begin{equation}
	\homs{H,X}{G}[\rho] := \{a \in \homs{H,X}{G} ~|~ \forall (x,\hat{x}) \in \rho : a(x)=a(\hat{x}) \}\,.
	\end{equation}
	Next we observe that $\#\homs{H,X}{G}[\rho]$ can be computed by applying the contraction operation according to $\rho$ and that $\#\homs{H,X}{G}[\rho]$ and $\#\homs{H,X}{G}[\sigma]$ are equal whenever $\mathsf{cl}(\rho) = \mathsf{cl}(\sigma)$.
	\begin{fact}\label{fac:ineqs_claim_1}
		For every pair of sets $\rho, \sigma \subseteq I$ with $\mathsf{cl}(\rho) = \mathsf{cl}(\sigma)$, we have:
		\begin{align}\label{eq:block_identity}
			\#\homs{H/\rho,X/\rho}{G}&= \#\homs{H,X}{G}[\rho] \,,\text{ and}\\
		\label{eq:block_equality}
			\#\homs{H/\rho,X/\rho}{G}&= \#\homs{H/\sigma,X/\sigma}{G} \,.
		\end{align}
	\end{fact}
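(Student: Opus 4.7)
The plan is to establish identity~\eqref{eq:block_identity} via an explicit bijection between $\homs{H,X}{G}[\rho]$ and $\homs{H/\rho,X/\rho}{G}$ induced by the quotient map, and then obtain identity~\eqref{eq:block_equality} as an immediate consequence of the matroid fact that $\mathsf{cl}(\rho)=\mathsf{cl}(\sigma)$ forces $\rho$ and $\sigma$ to induce the same partition of $X$ into connected components.

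For~\eqref{eq:block_identity}, I would let $\pi\colon V(H)\to V(H/\rho)$ be the map that collapses each connected component of $(X,\rho)$ to a single vertex and is the identity on the remaining vertices, and then define $\Phi(a)\colon X/\rho\to V(G)$ by $\Phi(a)(\pi(x)):=a(x)$. This is well-defined because $a\in\homs{H,X}{G}[\rho]$ is forced to be constant on each connected component of $(X,\rho)$, as equality propagates through the transitive closure of $\rho$. Any homomorphism extension $h$ of $a$ factors through $\pi$ to a map $\bar h$ on $V(H/\rho)$, and since every edge of $H/\rho$ is the $\pi$-image of an edge of $H$, $\bar h$ is still a homomorphism and witnesses $\Phi(a)\in\homs{H/\rho,X/\rho}{G}$. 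Conversely, given $a'\in\homs{H/\rho,X/\rho}{G}$ with extension $h'$, the composition $h'\circ\pi$ is a homomorphism from $H$ to $G$ whose restriction to $X$ is constant on $\rho$-connected components, so $\Phi$ and the assignment $a'\mapsto a'\circ\pi|_X$ are mutually inverse bijections.

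The one subtle point is that $H/\rho$ may acquire a self-loop whenever some edge $\{x,\hat x\}$ of $H$ has both endpoints in the same connected component of $(X,\rho)$. Since $G$ is simple, this forces $\#\homs{H/\rho,X/\rho}{G}=0$; simultaneously any $a\in\homs{H,X}{G}[\rho]$ would have to satisfy both $a(x)=a(\hat x)$ and $\{a(x),a(\hat x)\}\in E(G)$, which is impossible, so $\#\homs{H,X}{G}[\rho]=0$ as well and the identity holds trivially in this degenerate case.

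Identity~\eqref{eq:block_equality} then follows because $\mathsf{cl}(\rho)=\mathsf{cl}(\sigma)$ forces $(X,\rho)$ and $(X,\sigma)$ to share the same connected-component partition (by the graphic-matroid correspondence noted before Lemma~\ref{lem:same_blocks_same_ranks}), and therefore the quotients $(H/\rho,X/\rho)$ and $(H/\sigma,X/\sigma)$ coincide as graphical conjunctive queries, hence have the same number of partial homomorphisms into $G$. Overall the argument is essentially routine once the matroid--partition correspondence is in hand; the main (mild) obstacle is just the self-loop case, which only requires the sanity check above.
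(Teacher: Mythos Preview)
Your proof is correct and follows the natural approach; the paper itself states this fact without proof, treating both identities as routine observations (the sentence preceding the Fact merely says ``we observe that $\#\homs{H,X}{G}[\rho]$ can be computed by applying the contraction operation according to $\rho$ and that $\#\homs{H,X}{G}[\rho]$ and $\#\homs{H,X}{G}[\sigma]$ are equal whenever $\mathsf{cl}(\rho) = \mathsf{cl}(\sigma)$''). Your explicit bijection via the quotient map and your handling of the self-loop degeneracy are exactly the details one would fill in, and your use of the flat--partition correspondence for~\eqref{eq:block_equality} is precisely what the paper sets up in the paragraph before Lemma~\ref{lem:same_blocks_same_ranks}.
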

	We proceed with the following, aforementioned application of the principle of inclusion and exclusion. \begin{small} \begingroup \allowdisplaybreaks
	\begin{align*} 
		~ & \#\partinj{H,I,X}{G}                                                                                                                                                        \\
		~ & =\#\homs{H,X}{G} -\#\bs{a \in \homs{H,X}{G} ~|~ \exists \bs{x,\hat{x}} \in I:  a(x) = a(\hat{x})}                                                                           \\
		~ & =\#\homs{H,X}{G} -\#\left(\bigcup_{(x,\hat{x})\in I} \bs{a \in \homs{H,X}{G} ~|~a(x) = a(\hat{x})} \right)                                                                  \\
		~ & =\#\homs{H,X}{G} - \sum_{\emptyset \neq \rho \subseteq I} (-1)^{\#\rho-1} \cdot \#\left(\bigcap_{(x,\hat{x})\in \rho} \bs{a \in \homs{H,X}{G} ~|~a(x) = a(\hat{x})} \right) \\
		~ & =\sum_{\rho \subseteq I} (-1)^{\#\rho} \cdot\#\homs{H,X}{G}[\rho]                                                                                                           \\
		~ & \stackrel{\ref{eq:block_identity}}{=}\sum_{\rho \subseteq I} (-1)^{\#\rho} \cdot\#\homs{H/\rho,X/\rho}{G}                                                                   \\
		~ & \stackrel{\ref{eq:block_equality}}{=}\sum_{\rho \in \mathcal{L}(M(X,I))} \sum_{\substack{\sigma \subseteq I                                                                 \\ \mathsf{cl}(\sigma)=\rho}} (-1)^{\#\sigma} \cdot\#\homs{H/\rho,X/\rho}{G}\\
		~ & \stackrel{\ref{eq:boolean_expansion}}{=}\sum_{\rho \in \mathcal{L}(M(X,I))} \mu(\emptyset,\rho) \cdot\#\homs{H/\rho,X/\rho}{G}
	\end{align*} \endgroup
		\end{small}
	Now let $\llbracket H_1,X_1 \rrbracket,\dots,\llbracket H_k,X_k \rrbracket$ be the equivalence classes of the set \[\{ (H/\rho,X/\rho)~|~\rho \in \mathcal{L}(M(X,I))\}\] with minimal representatives. Then, we can define the desired quantum query to be 
	\begin{equation*}
	\qgraphof{\chi}:= \sum_{i=1}^k \lambda_i \cdot (H_i,X_i)~~~~~\text{ where }~~~
	\lambda_i = \sum_{\substack{\rho \in  \mathcal{L}(M(X,I))\\(H/\rho,X/\rho)~ \sim~ (H_i,X_i)}} \mu(\emptyset,\rho) \,.
	\end{equation*}
	It remains to show that for all $i \in [k]$ we have that $\lambda_i \neq 0$. To this end, we observe that $(H/\rho,X/\rho) \sim (H/\sigma,X/\sigma)$ implies that $\#X/\rho=\#X/\sigma$ and hence that $\rho$ and $\sigma$ have the same number of blocks with respect to the graphic matroid $M(X,I)$. Therefore, by Lemma~\ref{lem:same_blocks_same_ranks}, $\mathsf{rk}(\rho)=\mathsf{rk}(\sigma)$. Now fix $i$ and let $r_i$ be the rank of the flats that contribute to $\lambda_i$. Using Equation~\ref{eq:rota}, we obtain that $\mathsf{sgn}(\mu(\emptyset,\rho)) =(-1)^{r_i}$ for all $\rho$ such that $(H/\rho,X/\rho) \sim (H_i,X_i)$. Consequently, $\mathsf{sgn}(\lambda_i) = (-1)^{r_i}$ and hence, $\lambda_i \neq 0$. Finally, we have that for every \emph{subset} $\sigma$ of $I$, there exists a flat $\rho$ of $M(X,I)$ such that $(H/\rho,X/\rho) = (H/\sigma,X/\sigma)$. In particular it can be observed that $\rho = \mathsf{cl}(\sigma)$. This concludes the proof.
\end{proof}

\subsection{Existential and universal positive formulas with inequalities}
\label{sec:ep_up_ineqs_graphs}
In this subsection we will lift the classification once more, namely to existential and universal positive formulas with inequalities over the free variables. Our goal is hence to find quantum queries~$Q$ that allow us to express the number of solutions to the more general queries as $\#\homs{Q}{\star}$. To this end, we provide a concise introduction to existential and universal positive queries. We refer the reader e.g. to Chapter~4 of~\cite{flumgrohe} for a detailed introduction to the semantics of first-order formulas. An \emph{existential positive formula} (or \emph{query}) is of the form
\begin{equation*}
\psi = x_1 \dots x_k \exists y_1\dots\exists  y_\ell : \psi' \,,
\end{equation*}
and a \emph{universal positive formula} is of the form
\begin{equation*}
\theta = x_1 \dots x_k \forall y_1\dots\forall y_\ell : \theta' \,.
\end{equation*}
Here $X=\{x_1,\dots,x_k\}$ is a set of free variables and $Y=\{y_1,\dots,y_\ell\}$ is a set of quantified variables. Furthermore, $\psi'$ and $\theta'$ are inductively build on atoms $a_i$ and logical connectives~$\vee$ and~$\wedge$ (but without negations). As in case of conjunctive queries, the atoms are of the form $Evv'$ for $v,v' \in X \cup Y$.

Given a graph $G$, an assignment $a: X \rightarrow V(G)$ is a solution to $\psi$ if there is an assignment $h: X \cup Y \rightarrow V(G)$ such that $h|_X=a$ and $h$ satisfies $\psi'$. Similarly, $a$ is a solution to $\theta$, if for all assignments $h: X \cup Y \rightarrow V(G)$ such that $h|_X=a$ it holds that $h$ satisfies $\theta'$. Here, an assignment $h: X \cup Y \rightarrow V(G)$ satisfies an atom $Evv'$ if and only if $\{h(v),h(v')\} \in E(G)$. The semantics of $\vee$ and $\wedge$ are defined inductively in the canonical way. We write $h\models_G \varphi$ if $h$ satisfies a formula $\varphi$ and we write $\psi(G)$ for the set of all solutions to $\psi$ and $\theta(G)$ for the set of all solutions to $\theta$, both with respect to $G$. In what follows we denote $\Sigma^+_1$ as the set of all existential positive queries and $\Pi^+_1$ as the set of all universal positive queries. Now, given a class $\Delta$ of formulas, we define $\#\pmc(\Delta)$ to be the problem of, given a graph $G$ and $\delta \in \Delta$, computing $\#\delta(G)$. It is parameterized by $|\delta|$. Note that $\#\pmc(\Delta)$ coincides with $\#\homsprob(\Delta)$ if $\Delta$ is a set of conjunctive queries and that $\#\pmc(\Delta)$ coincides with $\#\partinjprob(\Delta)$ if $\Delta$ is a set of conjunctive queries with inequalities over the free variables. We point out that we are going to revisit those notions in Section~\ref{sec:a2} where we deal with general parameterized model checking and model counting.

The following is due to Chen and Mengel --- we state their result in terms of quantum queries.
\begin{theorem}[\cite{ChenM16}] \label{thm:chen_mengel_ep}
	Let $\psi \in \Sigma^+_1$ be an existential positive query. Then there exists a quantum query $\qgraphof{\psi}$ such that for every graph $G$ it holds that $\#\psi(G) = \#\homs{\qgraphof{\psi}}{G}$. Furthermore, the mapping $\psi \mapsto \qgraphof{\psi}$ is computable.
\end{theorem}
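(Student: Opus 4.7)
The plan is to convert $\psi$ to disjunctive normal form (DNF) and then apply inclusion-exclusion to express $\#\psi(\star)$ as a signed sum of counts of conjunctive queries. Write $\psi = x_1\dots x_k\, \exists y_1\dots\exists y_\ell : \psi'$, where $\psi'$ is a positive quantifier-free formula. Standard propositional manipulation gives $\psi' \equiv \bigvee_{j=1}^{m} \phi_j$ where each $\phi_j$ is a conjunction of atoms, and since $\exists$ distributes over $\vee$, we have $\psi \equiv \bigvee_{j=1}^{m} \psi_j$, where $\psi_j := x_1\dots x_k\, \exists y_1\dots \exists y_\ell : \phi_j$ is a conjunctive query. (Any atom of the form $Ezz$ can be dropped together with the entire disjunct containing it, since such atoms are never satisfied in the simple loopless graphs we consider.)

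Next I would apply inclusion-exclusion on the solution set $\psi(G) = \bigcup_{j=1}^{m} \psi_j(G)$, obtaining
\begin{equation*}
\#\psi(G) \;=\; \sum_{\emptyset \neq S \subseteq [m]} (-1)^{|S|+1}\, \#\Big(\bigcap_{j \in S} \psi_j(G)\Big)\,.
\end{equation*}
To interpret each intersection as the solution set of a single conjunctive query, rename the quantified variables of the $\psi_j$ so that they become pairwise disjoint across $j$; call the renamed versions $\tilde\psi_j$ with bodies $\tilde\phi_j$. Because the renamed quantified variable blocks are disjoint, an assignment $a : \{x_1,\dots,x_k\} \to V(G)$ lies in $\bigcap_{j \in S}\psi_j(G) = \bigcap_{j \in S} \tilde\psi_j(G)$ if and only if $a$ extends to a satisfying assignment of the conjunctive query
\begin{equation*}
\chi_S \;:=\; x_1\dots x_k\, \exists \vec{y}^{(j_1)}\dots \exists \vec{y}^{(j_{|S|})} : \bigwedge_{j \in S} \tilde\phi_j\,.
\end{equation*}
Let $(H_S, X)$ denote the graphical conjunctive query associated with $\chi_S$, so that $\#\bigcap_{j \in S}\psi_j(G) = \#\homs{H_S,X}{G}$.

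Combining these identities gives
\begin{equation*}
\#\psi(G) \;=\; \sum_{\emptyset \neq S \subseteq [m]} (-1)^{|S|+1}\, \#\homs{H_S,X}{G}\,,
\end{equation*}
so setting $\qgraphof{\psi} := \sum_{\emptyset \neq S \subseteq [m]} (-1)^{|S|+1} \cdot (H_S, X)$ yields a quantum query with the desired property. To finish, I would normalize $\qgraphof{\psi}$ by replacing every $(H_S,X)$ with the minimal representative of its equivalence class (which exists and is unique up to isomorphism by Lemma~\ref{lem:min equiv implies iso}), collecting coefficients of equivalent queries, and dropping terms whose coefficients sum to zero. Every step of the construction—DNF conversion, renaming, forming conjunctions, and minimization—is effective, so the mapping $\psi \mapsto \qgraphof{\psi}$ is computable.

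The only delicate point is the renaming step: one must ensure that taking the conjunction of the $\tilde\phi_j$ really captures the set-theoretic intersection of the $\psi_j(G)$, which requires that witnesses for different $j$ can be chosen independently, which is precisely what disjointness of the quantified-variable blocks provides. Beyond that, the argument is routine; note that $m$, and hence the support of $\qgraphof{\psi}$, may be exponential in $|\psi|$, but this only affects the size of $\qgraphof{\psi}$, not the existence of an effective construction.
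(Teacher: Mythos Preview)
Your proposal is correct. The paper does not give its own proof of this theorem---it is cited from Chen and Mengel~\cite{ChenM16}---but in Section~\ref{sec:most_general_graphs} the authors summarize that construction as ``inclusion and exclusion over conjunctions of subformulas of $\psi$,'' which is exactly what you do: pass to DNF, take the union $\psi(G)=\bigcup_j \psi_j(G)$, apply inclusion--exclusion, and realize each intersection as a single conjunctive query by making the existential blocks disjoint. Your handling of the renaming step (the only nontrivial point) is correct, and your remarks about dropping disjuncts containing $Ezz$ and about normalizing to minimal representatives are appropriate for the paper's setting of simple loopless graphs.
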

We observe that this result can easily be extended to universal positive queries.
\begin{corollary}\label{cor:up_to_ep}
	Let $\theta \in \Pi^+_1$ be a universal positive query. Then
	there exists a quantum query $\qgraphof{\theta}$ such that for every graph $G$ it holds that $\#\theta(G) = \#\homs{\qgraphof{\theta}}{\overline{G}}$. Furthermore, the mapping $\theta \mapsto \qgraphof{\theta}$ is computable.
\end{corollary}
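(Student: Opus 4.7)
The plan is to reduce the universal positive query $\theta = x_1\dots x_k\,\forall y_1\dots \forall y_\ell : \theta'$ to a linear combination of existential conjunctive queries evaluated on $\overline G$. First, by complementation,
\[
\#\theta(G) \;=\; n^k \,-\, \#\bar\theta(G), \qquad \bar\theta := x_1\dots x_k\,\exists y_1\dots\exists y_\ell\;:\; \bar\theta',
\]
where $\bar\theta'$ is the de~Morgan dual of $\theta'$, i.e., a positive Boolean combination of literals $\neg E v v'$. The constant $n^k$ equals $\#\homs{(H_0,X)}{\overline G}$ for $H_0$ the graph on $k$ isolated free vertices, so it is already a (trivial) quantum query on $\overline G$.

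The key move is to translate the negated atoms using the pointwise identity
\[
h \models_G \neg E v v' \;\Longleftrightarrow\; h(v) = h(v') \;\vee\; h \models_{\overline G} E v v',
\]
valid because both $G$ and $\overline G$ are loopless. Substituting into $\bar\theta'$ produces a formula $\tilde\theta'$ whose atoms are equalities $(v = v')$ and edge atoms $Evv'$ interpreted on $\overline G$. Putting $\tilde\theta'$ into disjunctive normal form $\bigvee_{j=1}^m C_j$ and setting $S_j := \{a : \exists h\text{ extending }a\text{ with }h\text{ satisfying }C_j\}$, we have $\bar\theta(G) = \bigcup_{j=1}^m S_j$, which I would evaluate by inclusion--exclusion.

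The subtle step is that $\bigcap_{j\in T} S_j$ consists of those $a$ that admit \emph{separate} witnesses $h_j$ for each $j \in T$, so I would duplicate the quantified variables: for each $j \in T$ rename $y$'s to fresh copies $y^{(j)}_1,\dots,y^{(j)}_\ell$ and replace $C_j$ by $C_j^{(j)}$ accordingly. The formula $\bigwedge_{j\in T} C_j^{(j)}$ is a conjunction of equalities and $\overline G$-edge atoms over the enlarged variable set; identifying variables along the equivalence relation $\sim_T$ that its equalities generate yields a graphical conjunctive query $Q_T$ with free variables $X/\!\sim_T$ (if any identification produces a self-loop, $Q_T$ is unsatisfiable and drops out). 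A bookkeeping argument then gives $\#\bigcap_{j\in T} S_j = \#\homs{Q_T}{\overline G}$, and summing,
\[
\#\theta(G) \;=\; \#\homs{(H_0,X)}{\overline G} \;+\; \sum_{\emptyset \neq T \subseteq [m]} (-1)^{|T|}\,\#\homs{Q_T}{\overline G},
\]
which we identify as $\#\homs{\qgraphof{\theta}}{\overline G}$. Computability of $\theta \mapsto \qgraphof{\theta}$ is immediate since de~Morgan, DNF conversion, variable duplication, and computing $\sim_T$ are all algorithmic and the sum over $T$ is finite. I expect the main obstacle to be the variable-duplication bookkeeping, which is what correctly distinguishes $\exists h : \bigwedge_j C_j$ from $\bigwedge_j \exists h_j : C_j$ in the intersections $\bigcap_{j\in T} S_j$; once this is handled, Theorem~\ref{thm:chen_mengel_ep} itself is not even needed, as $\qgraphof{\theta}$ is assembled directly from the conjunctive queries $Q_T$.
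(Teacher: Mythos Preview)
Your proposal is correct and takes a different route from the paper. The paper's proof is a two-line black-box reduction: it asserts that $\#\theta(G) = n^k - \#\overline\theta(\overline G)$, where $\overline\theta$ is the existential positive query obtained by replacing $\forall$ with $\exists$ and the matrix $\theta'$ by its de~Morgan dual $\overline{\theta'}$ (now a positive combination of atoms $Evv'$, interpreted on $\overline G$), and then invokes Theorem~\ref{thm:chen_mengel_ep} on $\overline\theta$ to set $\qgraphof{\theta} := (\mathsf{IS}_k,[k]) - \qgraphof{\overline\theta}$. You instead unfold the Chen--Mengel machinery by hand---DNF, inclusion--exclusion over disjuncts, duplication of quantified variables for the intersections, contraction along equalities---and assemble $\qgraphof{\theta}$ directly from the conjunctive queries $Q_T$. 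What your route buys is self-containment (Theorem~\ref{thm:chen_mengel_ep} is not invoked) and, more substantively, a careful treatment of the fact that $h \models_G \neg Evv'$ is \emph{not} equivalent to $h \models_{\overline G} Evv'$ when $h(v)=h(v')$, since both $G$ and $\overline G$ are loopless; your equality disjunct and the quotient by $\sim_T$ absorb exactly this case into the query structure, whereas the paper's de~Morgan step glosses over it. What the paper's route buys is brevity and modularity: once one accepts the reduction to an existential positive query on $\overline G$, the entire DNF/inclusion--exclusion apparatus is delegated to Theorem~\ref{thm:chen_mengel_ep}, and your variable-duplication bookkeeping is precisely what that theorem encapsulates.
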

\begin{proof}
	Let $\theta = x_1,\dots x_k \forall y_1,\dots y_\ell : \theta'$. Without loss of generality assume that $\theta' = \bigvee_{i=1}^{m_1} \bigwedge_{j=1}^{m_2} Ev_iv_j$. For every graph $G$ with $n$ vertices, it holds that \[\#\theta(G) = n^k - \#\{a: X \rightarrow V(G) ~|~\exists h: X \cup Y \rightarrow V(G): h|_X=a \wedge  h \nvDash_G \theta' \}\,,\] i.e., we can count all assignments $a$ that can be extended to an assignment $h$ that does not satisfy~$\theta'$ and subtract this number from the number $n^k$ of all assignments from $X$ to $V(G)$. Now, using DeMorgan's Law, it holds that $h \nvDash_G \theta'$ if and only if $h \models_{\overline{G}} \overline{\theta'}$, where $\overline{\theta'} = \bigwedge_{i=1}^{m_1} \bigvee_{j=1}^{m_2} Ev_iv_j$. Finally, we let $\overline{\theta} = x_1,\dots x_k \exists y_1,\dots y_\ell : \overline{\theta'}$ and obtain
	\[
		\#\theta(G)  =n^k - \#\{a: X \rightarrow V(G) ~|~\exists h: X \cup Y \rightarrow V(G): h|_X=a \wedge  h \nvDash_G \theta' \} 
		~            =n^k - \#\overline{\theta}(\overline{G})
	\]
	As $\overline{\theta}$ is an existential positive query, we can apply Theorem~\ref{thm:chen_mengel_ep} to compute $\qgraphof{\overline{\theta}}$. Furthermore, it holds that $\#\homs{\mathsf{IS}_k}{\overline{G}} = n^k$, where $\mathsf{IS}_k$ is the graph consisting of vertices $[k]$ without edges. We conclude by setting $\qgraphof{\theta} = (\mathsf{IS}_k,[k]) - \qgraphof{\overline{\theta}}$.
\end{proof}

Now the generalization to existential und universal positive formulas with inequalities over the free variables is straightforward. Formally, we equip the formulas with an additional set $I$ of inequalities as we did for conjunctive queries in Section~\ref{sec:cqs_with_ineqs} and we count only those answers that satisfy all inequality constraints. The definition of $\#\pmc$ is lifted accordingly. Using the inclusion-exclusion principle we can express this number as a linear combination of existential or universal positive formulas, respectively, without inequalities --- the proof is completely analogous to the case of conjunctive queries in Section~\ref{sec:cqs_with_ineqs}. Finally, we apply Corollary~\ref{cor:up_to_ep} or Theorem~\ref{thm:chen_mengel_ep}, depending on whether we are considering existential or universal positive queries, and collect equivalent terms to obtain a quantum query.

\noindent We remark that, in contrast to conjunctive queries with inequalities, the support of the resulting quantum queries and hence the criteria for the classification (Theorem~\ref{thm:main_quantum}) cannot be given explicitly, which is due to the fact that the cancellation behavior in the transformation of Theorem~\ref{thm:chen_mengel_ep} is not yet understood. More precisely, there is no explicit criterion for a conjunctive query to be contained in the support of the quantum query in case an existential positive formula is expressed as quantum query.

\subsection{Non-monotone constraints over free variables}\label{sec:most_general_graphs}
Last but not least we will lift the classification theorem to existential and universal positive queries with inequalities over the free variables that additionally may contain non-monotone constraints of the form $\neg Ex\hat{x}$ over free variables. The idea is quite simple: Just perform inclusion-exclusion over the non-monotone constraints. Unfortunately, this requires us to circumvent the following tedious technicality: We have to guarantee that a transformation of existential or universal positive queries with non-monotone constraints over free variables to a linear combination of conjunctive queries does not create queries that contain non-monotone contraints over quantified variables. This latter issue will be dealt with by taking a closer into the proof of Theorem~\ref{thm:chen_mengel_ep} by Chen and Mengel~\cite{ChenM16}.

We start by considering conjunctive queries with non-monotone constraints over the free variables. For technical reasons, given a formula $\varphi$, a set $J$ of atoms containing only variables that are new or free in $\varphi$ and a set $\mathcal{V}=\{v_1,\dots,v_k\}$ disjoint from all variables in $\varphi$ and~$J$, we define
\[[\varphi \wedge J]^\mathcal{V} := v_1,\dots,v_k:\varphi \wedge \bigwedge_{a \in J} a   \,. \]
We write $[\varphi \wedge J]$ if $\mathcal{V}$ is empty and $[\varphi]^\mathcal{V}$ if $J$ is empty.
Now let $G$ be a graph and let $\varphi$ be a conjunctive query with non-monotone constraints
\begin{equation}\label{eq:sdef}
\neg S = \{\neg s_1 , \neg s_2 , \dots , \neg s_\ell \} 
\end{equation} 
where each $s_i$ is an atom $Ex_i\hat{x}_i$ for some free variables $x_i$ and $x'_i$. 

\noindent As $\varphi$ is a conjunctive query, there is a subquery $\delta$ without non-monotone constraints such that $\varphi = [\delta \wedge \neg S]$.
 Now observe that
\begin{equation}\label{eq:non_monotone_technicality}
\varphi(G) = [\delta\wedge \neg S](G)=  \left\lbrace a \in [\delta ]^{\mathcal{V}(\neg S \setminus \delta)}(G) ~\middle|~ \bigwedge_{i=1}^\ell \{a(x_i),a(\hat{x}_i)\} \notin E(G) \right\rbrace \,,
\end{equation} 
where $\mathcal{V}(\neg S \setminus \delta)$ is the set of variables occuring only in $\neg S$ and not $\delta$.

\noindent In other words,~\eqref{eq:non_monotone_technicality} states that the assignments satisfying $\varphi$ are precisely those assignments that satisfy $\delta$ and all non-monotone constraints in $\neg S$. As, however, it might be possible that there are free variables in $\varphi$ that only occur in the non-monotone part $\neg S$, we have to extend $\delta$ by those variables.

Again, we will use the principle of inclusion and exclusion to first get rid of the non-monotone constraints and then build up on the prior transformations to quantum queries.

\begin{lemma}\label{thm:non_monotone_constraints_cqs}
Let $\varphi=[\delta \wedge \neg S]$ be a conjunctive query with non-monotone constraints $\neg S$ as given by~\eqref{eq:sdef}. Then we have that
\[\#\varphi(\star) = \sum_{J \subseteq S} (-1)^{\#J} \cdot \#[\delta \wedge J]^{\mathcal{V}(S,\delta,J)}(\star)  \,,\]
where $S := \{s_1 , s_2 , \dots , s_\ell \}$ and $\mathcal{V}(S,\delta,J)$ is the set of all variables occurring in $S$ but neither in $\delta$ nor in $J$.
\end{lemma}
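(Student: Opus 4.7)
The plan is to prove the identity by a standard inclusion--exclusion argument over the non-monotone constraints in $\neg S$. Recalling equation~\eqref{eq:non_monotone_technicality}, the set $\varphi(G)$ consists of those assignments $a \in U := [\delta]^{\mathcal{V}(\neg S \setminus \delta)}(G)$ such that for every $i \in [\ell]$ we have $\{a(x_i),a(\hat{x}_i)\} \notin E(G)$. For each atom $s_i \in S$, define $A_i \subseteq U$ to be the set of those $a \in U$ that \emph{violate} $\neg s_i$, i.e., that satisfy $\{a(x_i),a(\hat{x}_i)\} \in E(G)$. Then $\varphi(G) = U \setminus \bigcup_{i=1}^\ell A_i$.

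Next, I would apply the usual inclusion--exclusion formula to this complement, obtaining
\[
\#\varphi(G) \;=\; \#U \;-\; \sum_{\emptyset \neq J \subseteq S} (-1)^{\#J+1}\,\#\!\!\bigcap_{s \in J} A_s \;=\; \sum_{J \subseteq S} (-1)^{\#J}\,\#\!\!\bigcap_{s \in J} A_s\,,
\]
under the standard convention that the empty intersection equals $U$. This reduces the lemma to showing, for each $J \subseteq S$, the identity $\#\bigcap_{s \in J} A_s = \#[\delta \wedge J]^{\mathcal{V}(S,\delta,J)}(G)$.

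The key step is this last identification, which is really a matter of unpacking definitions. An element of $\bigcap_{s \in J} A_s$ is an assignment $a$ whose domain is the set of free variables of $\varphi$ (which by construction includes all variables of $\delta$, all variables of $\neg S$, and in particular all variables in $\mathcal{V}(S,\delta,J)$) and which both extends to a satisfying assignment of $\delta$ and satisfies every atom in $J$. Conversely, any satisfying assignment of $[\delta \wedge J]^{\mathcal{V}(S,\delta,J)}$ has precisely this domain and these two properties. Thus the two sets coincide, and substituting back yields the claimed formula.

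The main obstacle is the careful bookkeeping of the free variables: the decoration $\mathcal{V}(S,\delta,J)$ is needed precisely to ensure that, for every $J$, the formula $[\delta \wedge J]^{\mathcal{V}(S,\delta,J)}$ has \emph{the same} set of free variables as $\varphi$ itself, namely all free variables of $\delta$ together with all variables occurring in $S$. Without this explicit adjunction, those variables that occur in $\neg S$ but in neither $\delta$ nor the current subset $J$ would silently disappear from the domain of assignments and the counts on the two sides would not match. Once the variable sets are aligned, the proof is a direct application of inclusion--exclusion as sketched above.
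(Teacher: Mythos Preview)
Your proposal is correct and follows essentially the same inclusion--exclusion argument as the paper's own proof: both start from the universe $U=[\delta]^{\mathcal{V}(\neg S\setminus\delta)}(G)$ given by~\eqref{eq:non_monotone_technicality}, apply inclusion--exclusion over the subsets $J\subseteq S$, and identify each intersection term with $\#[\delta\wedge J]^{\mathcal{V}(S,\delta,J)}(G)$. One minor wording slip: where you say the domain of $a$ ``includes all variables of $\delta$'' you mean all \emph{free} variables of $\delta$ (the quantified variables of $\delta$ are not in the domain of $a$); this does not affect the argument.
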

\begin{proof}
Let $G$ be a graph. Using inclusion-exclusion, we obtain that
	\begin{align*}
		\#\varphi(G) &= \#[\delta \wedge \neg S](G)\\
		~ & = \#\left\lbrace a \in [\delta ]^{\mathcal{V}(\neg S \setminus \delta)}(G) ~\middle|~ \bigwedge_{i=1}^\ell \{a(x_i),a(\hat{x}_i)\} \notin E(G) \right\rbrace \\
		              & = \#[\delta ]^{\mathcal{V}(\neg S \setminus \delta)}(G) - \#\left\lbrace a \in [\delta ]^{\mathcal{V}(\neg S \setminus \delta)}(G) ~\middle|~ \bigvee_{(Ex\hat{x})\in S} \{a(x),a(\hat{x})\} \in E(G) \right\rbrace                                                         \\
		              & = \sum_{J \subseteq S} (-1)^{\#J} \cdot \#\left\lbrace a \in [\delta ]^{\mathcal{V}(\neg S \setminus \delta)}(G) ~\middle|~ \bigwedge_{(Ex\hat{x})\in J} \{a(x),a(\hat{x})\} \in E(G) \right\rbrace \\
		              & =\sum_{J \subseteq S} (-1)^{\#J} \cdot \#[\delta \wedge J]^{\mathcal{V}(S,\delta,J)}(G)\,.
	\end{align*}
\end{proof}

Our next goal is to generalize to existential and universal positive formulas with inequalities and non-monotone constraints over the free variables. To this end, we wish to invoke the transformation given by Theorem~\ref{thm:chen_mengel_ep}. However, the statement of the latter theorem does formally not apply to formulas with non-monotone constraints. 

\noindent To circumvent this issue, we will just add a the relation symbol $\overline{E}$ to the signature of graphs; we will argue in Chapter~\ref{sec:structures} that all results for the signature of graphs readily extend to arbitrary signatures of bounded arity.
Now let $\psi$ be an existential or universal positive formula over the signature of graphs with non-monotone constraints of the form $\neg Ex\hat{x}$. The formula $\psi_\uparrow$ is obtained from $\psi$ by substituting every atom $\neg Ex\hat{x}$ by $\overline{E}x\hat{x}$, where $\overline{E}$ is a new relation symbol of arity $2$. Consequently, the signature of $\psi_\uparrow$ is $\tau=(E,\overline{E})$. Similarly, given a graph $G$, that is, a structure over the signature $(E)$, we let $G_\uparrow$ be the following structure over signature $\tau$: The vertices $V(G_\uparrow)$ of $G_\uparrow$ are precisely the vertices of $G$ and a pair $(x,\hat{x})$ is in~$E(G_\uparrow)$ if and only if $\{x,\hat{x}\}$ is an edge of $G$. Furthermore, a pair $(x,\hat{x})$ is in~$\overline{E}(G_\uparrow)$ if and only if $\{x,\hat{x}\}$ is not an edge of $G$.

The operation $\downarrow$ is defined analogously: Given an existential or universal positive formula $\varphi$ over the signature $\tau$, we obtain the formula $\varphi_\downarrow$ from~$\varphi$ by substituting every atom $\overline{E}x\hat{x}$ by a non-monotone constraint $\neg Ex\hat{x}$. Consequently, the signature of $\varphi_\downarrow$ is $(E)$. Similarly, given a structure $G$ over signature $\tau$, we obtain a graph $G_\downarrow$ without self-loops from $G$ by taking the same set of vertices and adding an edge $\{x,\hat{x}\}$ to $E(G_\downarrow)$ if and only if $x\neq \hat{x}$ and $(x,\hat{x})\in E(G)$. The following is immediate.

\begin{fact}\label{fac:help_monotonicity}
We have that
\begin{enumerate}
\item $\psi_I(G) = \psi_{\uparrow I}(G_\uparrow)$ for every graph~$G$ without self-loops, existential or universal positive formula $\psi$ over the signature of graphs, and inequalities $I$ over the free variables of $\psi$,
\item $\varphi_I(G) = \varphi_{\downarrow I}(G_\downarrow)$ for every structure~$G$ and existential or universal positive formula $\varphi$ over the signature $(E,\overline{E})$, and for every set of inequalities $I$ over the free variables of $\varphi$,
\item and $G_{\uparrow\downarrow} = G$ for every graph $G$ without self-loops.
\end{enumerate}
\end{fact}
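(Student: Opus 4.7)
The plan is to verify each of the three items by directly unwinding the definitions of $\uparrow$, $\downarrow$ and the semantics of first-order satisfaction; no new ideas are needed. I would handle the items in the order (3), (1), (2).

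First I would dispatch item (3), which is a purely set-theoretic identity. Since both $\uparrow$ and $\downarrow$ preserve the vertex set, it suffices to check that $E(G) = E(G_{\uparrow\downarrow})$. Unwinding the definitions, an unordered pair $\{u,v\}$ belongs to $E(G_{\uparrow\downarrow})$ exactly when $u \neq v$ and $(u,v) \in E(G_\uparrow)$, which by the definition of $G_\uparrow$ is the same as $u \neq v$ and $\{u,v\} \in E(G)$. Because $G$ has no self-loops, the side condition $u \neq v$ is already implied by $\{u,v\} \in E(G)$, so the two edge sets coincide.

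For items (1) and (2) I would argue by structural induction on the quantifier-free part of the formula. Since $V(G) = V(G_\uparrow)$ and $V(G) = V(G_\downarrow)$, the quantifier cases $\exists y$ and $\forall y$, the Boolean connectives $\wedge$ and $\vee$, and the inequality constraints of $I$ (which reference only variables, not relation symbols) descend to the induction hypothesis verbatim; hence the entire content lies in the atomic cases. For item (1), a positive atom $Ex\hat{x}$ is left unchanged by $\uparrow$, and the equivalence $\{a(x),a(\hat{x})\} \in E(G) \Leftrightarrow (a(x),a(\hat{x})) \in E(G_\uparrow)$ is exactly the definition of $E(G_\uparrow)$; a non-monotone atom $\neg Ex\hat{x}$ is replaced by $\overline{E}x\hat{x}$, and the equivalence $\{a(x),a(\hat{x})\} \notin E(G) \Leftrightarrow (a(x),a(\hat{x})) \in \overline{E}(G_\uparrow)$ is exactly the definition of $\overline{E}(G_\uparrow)$.

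Item (2) is handled symmetrically, with the substitution $\overline{E}x\hat{x} \mapsto \neg Ex\hat{x}$ unwound through the definition of $G_\downarrow$. The one mild wrinkle I anticipate is the self-loop boundary case: when $a(x) = a(\hat{x})$, the pair $\{a(x),a(\hat{x})\}$ is absent from $E(G_\downarrow)$ by construction, so $\neg Ex\hat{x}$ is satisfied in $G_\downarrow$; for the equivalence to go through, the structure $G$ must then also satisfy $(v,v) \in \overline{E}(G)$ for every vertex $v$. Since the fact is applied downstream in conjunction with $\uparrow$, which populates the diagonal of $\overline{E}$ by its very definition, this condition is automatic in the relevant context. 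This is the only place where a routine induction could trip up, and it resolves by inspection.
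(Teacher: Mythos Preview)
Your proposal is correct. The paper itself offers no proof of this fact, simply declaring it ``immediate''; your structural-induction verification is exactly the routine unwinding of definitions that this assertion summarizes.

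One minor addendum on the wrinkle you flag in item~(2): the issue is slightly broader than just the diagonal of $\overline{E}$. For an arbitrary structure over $(E,\overline{E})$ the two relations are a priori unrelated, so both atomic cases can fail --- the $E$ atom requires $E(G)$ to be irreflexive, and the $\overline{E}$ atom requires $\overline{E}(G)$ to be the full complement of $E(G)$, not merely to contain the diagonal. But your diagnosis and resolution are right: in the paper item~(2) is only ever invoked on structures of the form $G_\uparrow$, where both conditions hold by construction, so the statement is adequate for its intended use.
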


\begin{theorem}\label{thm:main_quantum_query}
Let $\psi$ be an existential or universal positive formula with non-monotone constraints over the free variables and let $I$ be a set of inequalities over the free variables of~$\psi$. There exists a quantum query $Q[\psi,I]$ satisfying that
\[\#\psi_I(\star) = \#\homs{Q[\psi,I]}{\star}\,.\]
Furthermore, the mapping $(\psi,I)\mapsto Q[\psi,I]$ is computable.
\end{theorem}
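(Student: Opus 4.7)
The plan is to combine the $\uparrow/\downarrow$ transformations introduced immediately above with the constructions of Sections~\ref{sec:cqs_with_ineqs}--\ref{sec:ep_up_ineqs_graphs}, inspecting the transformation of Chen and Mengel just carefully enough to preserve an invariant on where $\overline{E}$-atoms may appear.

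First, I would form $\psi_\uparrow$ over the extended signature $\tau = (E,\overline{E})$, keeping the inequalities $I$ over the free variables. By construction, $\psi_\uparrow$ is an existential or universal positive formula matching the quantifier type of $\psi$, and every atom of $\psi_\uparrow$ using the new symbol $\overline{E}$ arose from a non-monotone constraint of $\psi$ and therefore involves only free variables. Extending the pipeline of Section~\ref{sec:ep_up_ineqs_graphs} from the graph signature to $\tau$ (a routine generalization justified formally in Section~\ref{sec:structures}) then produces a quantum query $Q'$ over $\tau$ satisfying $\#\psi_{\uparrow I}(\mathcal{A}) = \#\homs{Q'}{\mathcal{A}}$ for every $\tau$-structure $\mathcal{A}$.

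The hard part will be to verify the invariant that, in every $(H,X) \in \supp(Q')$, the $\overline{E}$-atoms touch only free variables in $X$. Establishing this requires opening up the proof of Theorem~\ref{thm:chen_mengel_ep}: Chen and Mengel's construction rewrites the input positive formula as a union of conjunctive queries and then applies inclusion--exclusion over intersections of disjuncts; each conjunctive query in the support of the resulting quantum query is a conjunction of atoms already present in the input formula, up to identifications forced by projecting out quantified variables. Since quantified variables never occur in $\overline{E}$-atoms of $\psi_\uparrow$, they cannot occur in $\overline{E}$-atoms of the output either. An induction on the structure of $\psi_\uparrow$ makes this precise; the universal case is handled analogously via Corollary~\ref{cor:up_to_ep}. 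Moreover, the inequality-handling step via Theorem~\ref{thm:quantum_graph_for_ineqs} only contracts free variables along flats of the induced graphic matroid, which preserves the invariant.

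With the invariant in hand, the argument concludes by inverting $\uparrow$ one conjunctive query at a time. For each $(H,X) \in \supp(Q')$, form $(H,X)_\downarrow$, a conjunctive query over the graph signature with non-monotone constraints confined to free variables, and apply Lemma~\ref{thm:non_monotone_constraints_cqs} to obtain a quantum query $Q_{(H,X)}$ over the signature of graphs such that $\#(H,X)_\downarrow(\star) = \#\homs{Q_{(H,X)}}{\star}$. Setting
\begin{equation*}
  \qgraphof{\psi,I} \;:=\; \sum_{(H,X) \in \supp(Q')} \lambda_{(H,X)} \cdot Q_{(H,X)},
\end{equation*}
where $\lambda_{(H,X)}$ is the coefficient of $(H,X)$ in $Q'$, and combining with Fact~\ref{fac:help_monotonicity}(1), we obtain
\begin{equation*}
  \#\psi_I(G) \;=\; \#\psi_{\uparrow I}(G_\uparrow) \;=\; \#\homs{Q'}{G_\uparrow} \;=\; \#\homs{\qgraphof{\psi,I}}{G}
\end{equation*}
for every graph $G$, as required. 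Computability of the mapping $(\psi,I) \mapsto \qgraphof{\psi,I}$ is inherited from the constructive nature of every intermediate step.
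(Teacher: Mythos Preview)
Your proposal is correct and follows essentially the same route as the paper: lift to the signature $\tau=(E,\overline{E})$ via $\uparrow$, apply the Section~\ref{sec:ep_up_ineqs_graphs} pipeline (inequalities by inclusion--exclusion, then Chen--Mengel / Corollary~\ref{cor:up_to_ep}) over $\tau$, verify by inspecting Chen--Mengel's construction that $\overline{E}$-atoms remain on free variables, push each constituent back down via $\downarrow$ and Lemma~\ref{thm:non_monotone_constraints_cqs}, and assemble the result. The only cosmetic difference is that the paper makes the intermediate chain of equalities explicit using Fact~\ref{fac:help_monotonicity} and then collects equivalent conjunctive queries at the end, which you should also do to ensure the coefficients in $Q[\psi,I]$ are nonzero as required by Definition~\ref{def:q_graphs}.
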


\begin{proof}
We have that for every graph $G$
\begin{align}
\#\psi_I(G) & = \#\psi_{\uparrow I}(G_\uparrow)\label{eq:nm_1}\\
~&= \#\homs{Q[\psi_{\uparrow},I]}{G_\uparrow}\label{eq:nm_2}\\
~&= \sum_{\varphi \in \supp(Q[\psi_{\uparrow},I])} \lambda_{\varphi}\cdot  \#\varphi(G_\uparrow)\label{eq:nm_3}\\
~&= \sum_{\varphi \in \supp(Q[\psi_{\uparrow},I])} \lambda_{\varphi}\cdot  \#\varphi_\downarrow(G)\label{eq:nm_4}\,,
\end{align}
where~\eqref{eq:nm_1} holds by Fact~\ref{fac:help_monotonicity} and~\eqref{eq:nm_2} holds by the generalized version of Theorem~\ref{thm:chen_mengel_ep} that works for arbitrary signatures, including inequalities over the free variables (see Section~\ref{sec:structures}). Furthermore,~\eqref{eq:nm_3} holds by definition of a quantum query and \eqref{eq:nm_4} is again due to Fact~\ref{fac:help_monotonicity}. Now consider the conjunctive queries $\varphi_\downarrow$: Those formulas might contain non-monotone constraints and we wish to get rid of them by invoking Lemma~\ref{thm:non_monotone_constraints_cqs}. However, this requires that the non-monotone constraints are only over free variables of $\varphi_\downarrow$. Equivalently, $\varphi$ must satisfy that each atom~$\overline{E}x\hat{x}$ in only over free variables. To this end, we observe that the quantum query~$Q[\psi_{\uparrow},I]$ in~\eqref{eq:nm_2} is obtained in two steps:

In the first step $\#\psi_{\uparrow I}(G_\uparrow)$ is transformed into a linear combination of quotient formulas $\#\psi_{\uparrow}/J$ for $J\subseteq I$. As, by assumption, all non-monotone constraints of $\psi$ are over free variables, it hence holds that all atoms $\overline{E}x\hat{x}$ in  $\#\psi_{\uparrow}/J$ are over free variables as well --- recall that the quotient only contracts free variables. 

Note that all quotient formulas $\#\psi_{\uparrow}/J$ are universal or existential positive. Now, in the second step, depending on whether $\psi$ is existential or universal positive, the formulas $\#\psi_{\uparrow}/J$ are transformed to a linear combination of conjunctive queries by either Theorem~\ref{thm:chen_mengel_ep} or Corollary~\ref{cor:up_to_ep}. The latter does not change the free variables and also relies on Theorem~\ref{thm:chen_mengel_ep}. Consequently, we have to guarantee that the construction of the quantum queries~$Q[\#\psi_{\uparrow}/J]$ yields as constituents only conjunctive queries satisfying that every atom $\overline{E}x\hat{x}$ is only over free variables. Now taking a look into the proof of Chen and Mengel~\cite[Section~4 and~5.3]{ChenM16} reveals that they perform inclusion and exclusion over conjunctions of subformulas of $\psi_{\uparrow}/J$. Consequently, no atom $\overline{E}x\hat{x}$ such that either $x$ or $\hat{x}$ is quantified, can be constructed.
This allows us to continue from~\eqref{eq:nm_4} by invoking Lemma~\ref{thm:non_monotone_constraints_cqs}:
\begin{align*}
\#\psi_I(G) &= \sum_{\varphi \in \supp(Q[\psi_{\uparrow},I])} \lambda_{\varphi}\cdot  \#\varphi_\downarrow(G)\\
~&=\sum_{\varphi \in \supp(Q[\psi_{\uparrow},I])} \lambda_{\varphi}\cdot \sum_{J \subseteq S} (-1)^{\#J} \cdot \#[\delta \wedge J]^{\mathcal{V}(S,\delta,J)}(G)\,,
\end{align*}
where $\varphi\downarrow = [\delta \wedge \neg S]$ and $\mathcal{V}(S,\delta,J)$ are as in Lemma~\ref{thm:non_monotone_constraints_cqs}. Finally, we collect for equivalent conjunctive queries and obtain the quantum query $Q[\psi,I]$.
\end{proof}

We are now able to proof Theorem~\ref{thm:main2_int} in case of the signature of graphs.
\begin{proof}[Proof of Theorem~\ref{thm:main2_int}]
Given a family $\Phi$ of existential or universal positive formulas with non-monotone constraints and inequalities over the free variables, we let $\Delta$ be the set of the corresponding quantum queries as given by Theorem~\ref{thm:main_quantum_query}. Then the problems $\#\pmc(\Phi)$ and $\#\homsprob(\Delta)$ are equivalent. The claim follows hence by Theorem~\ref{thm:main_quantum}.
\end{proof}\pagebreak

\section{Generalization to hypergraphs}\label{sec:structures}
In this section we are going to generalize all results that have been proved for graphs to logical structures. It will be very convenient to speak of hypergraphs instead of structures to adopt the notions of subgraphs and edges. However, let us make clear that what is called hypergraph in the following subsections is usually referred to as logical structure in the literature (see e.g. Chapter~4 in~\cite{flumgrohe}).\footnote{Readers that are used to hypergraphs with only one edge-relation should view our hypergraphs as edge-colored ones.}

\subsection{Further preliminaries}\label{sec:more_prelims}
\paragraph*{Hypergraphs} A \emph{signature} $\tau$ is a finite set of \emph{relation symbols} $E_1,\dots,E_\ell$ with arities $\arity_1,\dots,\arity_\ell$. We set $\arity(\tau) = \max\{ \arity_i~|~i\in [\ell]\}$ to be the arity of $\tau$. A \emph{hypergraph} $\mathcal{H}$ with signature $\tau$ consists of a finite set of vertices $V(\mathcal{H})$ and sets of (hyper-)edges $E_i(\mathcal{H}) \subseteq V(\mathcal{H})^{\arity_i}$ for every $i \in [\ell]$. The \emph{complementary hypergraph} $\overline{\mcH}$ of $\mcH$ has vertices $V(\mcH)$ and for every $i \in [\ell]$ and every $\vec{a} \in V(\mcH)^{\arity_i}$ it holds that $\vec{a} \in E_i(\overline{\mcH})$ if and only $\vec{a} \notin E_i(\mcH)$. Given hypergraphs $\mcH$ and $\mcF$ over the same signature $\tau$, we say that $\mcF$ is a \emph{subgraph} of $\mcH$ if $V(\mcF) \subseteq V(\mcH)$ and $E_i(\mcF) \subseteq E_i(\mcH)$ for every $i \in [\ell]$.

Given two hypergraphs $\mcH$ and $\mcF$ with signature $\tau$, a \emph{homomorphism} from $\mcH$ to $\mcF$ is a function $h: V(\mcH) \rightarrow V(\mcF)$ such that the following holds
\begin{equation*}
 \forall i \in [\ell]: \forall \vec{a} \in E_i(\mcH) : h(\vec{a}) \in E_i(\mcF) \,,
\end{equation*}
where $h(\vec{a})=(h(\vec{a}_1),\dots,h(\vec{a}_{\arity_i}))$. We denote $\homs{\mcH}{\mcF}$ as the set of all homomorphisms from $\mcH$ to $\mcF$. Now the notions of isomorphisms, endomorphisms and automorphisms, as well as color-prescribed and colorful homomorphisms are defined similarly to the case of graphs (see Section~\ref{sec:prelims} and Section~\ref{sec:minors_and_colors}). In particular, given two hypergraphs $\mathcal{H}, \mathcal{G}$ and a set of vertices $X \subseteq V(\mathcal{H})$, the set $\homs{\mathcal{H},X}{\mathcal{G}}$ is defined to be the set of all assignments $a:X \rightarrow V(\mathcal{G})$ that can be extended to a homomorphism $h \in \homs{\mathcal{H}}{\mathcal{G}}$. $\#\cphomsprob$ and $\#\cfhomsprob$ are defined likewise for color-prescribed and colorful homomorphisms.

\paragraph*{Example} Let $\tau = (E)$ such that $\arity(E) = 2$. Then the set of hypergraphs with signature $\tau$ is precisely the set of directed graphs. If we consider the subset of hypergraphs $\mcH$ such that additionally $E(\mcH)$ is symmetric and irreflexive, then this set is precisely the set of undirected graphs without self-loops. In this case the notion of homomorphisms and graph homomorphisms coincide.

\begin{definition}[Gaifman graph]\label{def:gaifman_graph}
	Given a hypergraph $\mcH$ of signature $\tau=(E_1,\dots,E_\ell)$, the \emph{Gaifman graph} $\mathbb{G}(\mcH)$ of $\mcH$ has vertices $V(\mcH)$ and contains an edge $\{u,v\}$ if and only if $u\neq v$ and $u$ and $v$ contained in a common edge of $\mcA$, i.e., there exists $i\in[\ell]$ and $\vec{a} \in E_i(\mcH)$ such that $u = \vec{a}_j$ and $v = \vec{a}_k$ for some $j,k \in [\arity_i]$.
\end{definition}

\paragraph*{First-order formulas and model checking for hypergraphs}
Let $\tau =(E_1,\dots,E_\ell)$ be a fixed signature and let $\mathcal{V}$ be a countably infinite set of variables. For every $i\in[\ell]$ and $\vec{z} \in \mathcal{V}^{\arity_i}$, ``$\vec{z} \in E_i$'' is called an \emph{atom}. Now \emph{first-order formulas} (over~$\tau$) are defined inductively over atoms, boolean connectives ($\wedge,\vee,\neg$) and quantifiers ($\exists,\forall$). We consider the following subsets of first-order formulas: As in case of graphs, a \emph{conjunctive query} is a first-order formula of the form
\begin{equation*}
 \varphi = x_1 \dots x_k \exists y_1 \dots \exists y_\ell : a_1 \wedge \dots \wedge a_m \,,
\end{equation*}
where each $a_i$ is an atom.  An \emph{existential positive formula} (or \emph{query}) is of the form
\begin{equation*}
\varphi = x_1 \dots x_k \exists y_1 \dots \exists y_\ell : \psi \,,
\end{equation*}
where $\psi$ is either a disjunctive or a conjunctive normal form of atoms without negations\footnote{As we will parameterize by the length of the formula, computing the DNF out of a CNF and vice versa only takes time depending on the parameter and hence does not influence the (parameterized) complexity results.}. Note that every conjunctive query is also an existential positive query. A \emph{universal positive formula} is of the form
\begin{equation*}
\varphi = x_1 \dots x_k \forall y_1\dots\forall y_\ell : \psi \,,
\end{equation*}
where $\psi$ is either a disjunctive or a conjunctive normal form of atoms without negations.

Given a hypergraph $\mcG$ and an existential positive query $\varphi$ with free variables $X$ and quantified variables $Y$ as above. The set $\varphi(\mcG)$ of solutions (or answers) to $\varphi$ in $\mcG$ is the set of all assignments $a: X \rightarrow V(\mcG)$ such that there is an assignment $h: X \cup Y \rightarrow V(\mcG)$ such that $h|_X = a$ and $h \models \psi$, where $\models$ is defined as follows: Given an atom  ``$\vec{z} \in E_i$'', we define $h \models \vec{z} \in E_i$ to be true if and only if $h(\vec{z}) \in E_i(\mcG)$. Given a DNF or CNF $\psi$ of atoms, $h \models \psi$ is defined inductively over $\vee$ and $\wedge$. Given a universal positive query $\varphi$, the set $\varphi(\mcG)$ is the set of all assignments $a: X \rightarrow V(\mcG)$ such that for all assignments $h: X \cup Y \rightarrow V(\mcG)$ that satisfy $h|_X = a$ it holds that $h \models \psi$.

Every conjunctive query $\varphi$ with free variables $X$ and quantified variables $Y$ is associated with a pair $(\mcH,X)$ where $\mcH$ is a hypergraph over the same signature as $\varphi$. Here, $V(\mcH) =X \cup Y$ and for every $i \in [\ell]$ we add a vector $\vec{z} \in (X \cup Y)^{\arity_i}$ to $E_i(\mcH)$ if and only if ``$\vec{z} \in E_i$'' is an atom of $\varphi$. Observe that for all hypergraphs $\mcG$ it holds that $\varphi(\mcG) = \homs{\mcH,X}{\mcB}$. We adapt the notion for graphs and call $(\mcH,X)$ a graphical conjunctive query. The definition of a quantum query transfers in the canonical way.

Now let $\Delta$ be a set of conjunctive queries, existential positive queries, or universal positive queries. We say that $\Phi$ has \emph{bounded arity} if there is a constant $C \in \mathbb{N}$ such that every signature of some formula in $\Delta$ has arity at most $C$. We assume all classes of queries in this work to have bounded arity.

\subsection{Reduction from the Gaifman graph}

In what follows we prove that counting color-preserving partial homomorphisms from a hypergraph is at least as hard as counting color-preserving partial homomorphisms from its Gaifman graph.

\begin{lemma}\label{lem:red_gaifman_graph}
	Let $(\mcH,X)$ be a graphical conjunctive query of signature $\tau =(E_1,\dots,E_\ell)$. Then there exists a deterministic algorithm $\mathbb{A}$ with oracle access to $\#\cphoms{\mcH,X}{\star}$ that computes $\#\cphoms{\mathbb{G}(\mcH),X}{\star}$. Furthermore $\mathbb{A}$ runs in time $O(f(\mcH)\cdot n^{\arity(\tau)})$ for some computable function~$f$.
\end{lemma}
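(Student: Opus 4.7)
The plan is to reduce by a polynomial-time construction that turns any $\mathbb{G}(\mcH)$-colored graph $G$ into an $\mcH$-colored hypergraph $\mcG$ on the same vertex set, with the property that the color-prescribed homomorphisms $\mcH \to \mcG$ correspond bijectively to the color-prescribed homomorphisms $\mathbb{G}(\mcH) \to G$. A single oracle call to $\#\cphoms{\mcH,X}{\mcG}$ then returns the desired count, since restricting the bijection to $X$ yields $\#\cphoms{\mcH,X}{\mcG} = \#\cphoms{\mathbb{G}(\mcH),X}{G}$.

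Concretely, I would set $V(\mcG) := V(G)$ with coloring $c' : V(\mcG) \to V(\mcH)$ equal to the given $\mathbb{G}(\mcH)$-coloring $c$ of $G$. For each relation symbol $E_i$ of arity $\arity_i$ and each tuple $\vec{w} = (w_1,\dots,w_{\arity_i}) \in V(G)^{\arity_i}$, include $\vec{w}$ in $E_i(\mcG)$ precisely when $(c(w_1),\dots,c(w_{\arity_i})) \in E_i(\mcH)$ and, in addition, $\{w_j,w_k\} \in E(G)$ for every pair of positions $j,k$ with $c(w_j) \neq c(w_k)$. Enumerating all such tuples and performing the constant-size check determined by $\mcH$ takes $O(f(\mcH) \cdot n^{\arity(\tau)})$ time, where $n = |V(G)|$, matching the claimed bound; one also checks that $c'$ is an $\mcH$-coloring of $\mcG$, which is immediate from the construction, and that surjectivity on vertices is inherited from $c$.

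For correctness, I would consider any color-prescribed assignment $h: V(\mcH) \to V(G)$, that is, one satisfying $c \circ h = \mathrm{id}_{V(\mcH)}$. For the forward direction, any edge $\{u,v\}$ of $\mathbb{G}(\mcH)$ arises from some hyperedge $\vec{z} \in E_i(\mcH)$ with $u,v$ at distinct positions; if $h$ is a hypergraph homomorphism $\mcH \to \mcG$ then $h(\vec{z}) \in E_i(\mcG)$, and since $c(h(u)) = u \neq v = c(h(v))$, the definition of $E_i(\mcG)$ forces $\{h(u),h(v)\} \in E(G)$, so $h$ is a Gaifman homomorphism. Conversely, if $h$ preserves every Gaifman edge, then for every $\vec{z} \in E_i(\mcH)$ the coordinate-wise color tuple of $h(\vec{z})$ is exactly $\vec{z} \in E_i(\mcH)$, and any two coordinates of $h(\vec{z})$ with distinct colors correspond to vertices co-occurring in $\vec{z}$, hence forming a Gaifman edge preserved by $h$; therefore $h(\vec{z}) \in E_i(\mcG)$ by construction.

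The only mildly delicate point, and the step I expect needs the most care, is the treatment of hyperedges $\vec{z} \in E_i(\mcH)$ with repeated variables: a color-prescribed $h$ is automatically constant on equally-colored positions, so the guard ``$c(w_j) \neq c(w_k)$'' in the definition of $E_i(\mcG)$ correctly skips those pairs and imposes no spurious edge constraints on $G$. Once this observation is made explicit, the bijection argument above goes through and the claimed time bound follows.
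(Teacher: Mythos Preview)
Your proof is correct and follows essentially the same idea as the paper's: build an $\mcH$-colored hypergraph on $V(G)$ whose hyperedges are exactly the appropriately colored ``cliques'' in $G$, so that color-prescribed homomorphisms from $\mcH$ and from $\mathbb{G}(\mcH)$ coincide as sets of functions. The paper first treats the special case of a single relation with no repeated vertices in any hyperedge, explicitly searching for cliques of size $\arity$, and then patches in repeated vertices and multiple relations; your formulation (enumerate tuples $\vec{w}$, require $c(\vec{w})\in E_i(\mcH)$, and impose $G$-edges only on coordinate pairs with distinct colors) handles all cases uniformly and is a bit cleaner, but the underlying construction and correctness argument are the same.
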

\begin{proof}
	We will first provide the intuition behind the proof by considering the following restriction on $\mcH$. We assume that $\ell=1$ and let $\arity = \arity_i$. Furthermore, we assume that $\mcH$ does not have any edge that contains a multiple occurrence of the same vertex. Now given a $\mathbb{G}(\mcH)$-colored graph $G$ for which we want to compute $\#\cphoms{\mathbb{G}(\mcH),X}{G}$, we can construct a $\mcH$-colored hypergraph $\mathcal{G}'$ from $G$ as follows. For every $\vec{u}=(u_1,\dots,u_\arity)\in E_1(\mcH)$ we search all cliques in $G$ of size $\arity$ that are colored with $u_1,\dots,u_\arity$. Every clique $\vec{c}=(c_1,\dots,c_\arity)$ such that~$c_i$ has color $u_i$ is then added to $G$ as a hyperedge in $E_1$. If this is done for all $\vec{u} \in E_1(\mcH)$ we delete all former edges of $G$. It is easy to see that the resulting hypergraph $\mathcal{G}'$ is $\mcH$-colored, except for the case that there was a $\vec{u} \in E_1(\mcH)$ for which there was no corresponding clique in $G$. In this case, however, there is no color-preserving homomorphism from $\mathbb{G}(\mcH)$ to $G$ at all and we can just output $0$. Otherwise we claim that
	\begin{equation}\label{eq:gaifman_to_hyper}
		\cphoms{\mathbb{G}(\mcH),X}{G} = \cphoms{\mcH,X}{\mathcal{G}'} \,.
	\end{equation}
	For the first direction, let $a \in \cphoms{\mathbb{G}(\mcH),X}{G}$. Then there exists a homomorphism $h \in \cphoms{\mathbb{G}(\mcH)}{G}$ such that $h|_X = a$. We claim that $h$ is contained in $\cphoms{\mcH}{\mathcal{G}'}$ as well. To this end, let $\vec{u}=(u_1,\dots,u_\arity) \in E_1(\mcH)$. By the definition of the Gaifman graph it holds that $\vec{u}$ is a clique in $\mathbb{G}(\mcH)$ (recall that we assumed the absence of multiple occurrences). As $h \in \cphoms{\mathbb{G}(\mcH)}{G}$ it
	hence holds that $h(\vec{u})$ is a clique of size $\arity$ in $G$ with colors $u_1,\dots,u_a$. By the construction of $\mathcal{G}'$ we have that $h(\vec{u}) \in E_1(\mcG')$. Consequently $h \in \cphoms{\mcH}{\mathcal{G}'}$ and $a \in \cphoms{\mcH,X}{\mathcal{G}'}$.

	For the backward direction, let $a \in \cphoms{\mcH,X}{\mathcal{G}'}$. Then there exists a homomorphism $h \in \cphoms{\mcH}{\mathcal{G}'}$ such that $h|_X = a$. We claim that $h$ is contained in $\cphoms{\mathbb{G}(\mcH)}{G}$ as well. To see this, let $e=\{v,w\} \in E(\mathbb{G}(\mcH))$. By the definition of the Gaifman graph, there exists an edge $\vec{u}=\{u_1,\dots,u_\arity\}$ in $E_1(\mcH)$ such that $v=u_i$ and $w=u_j$ for some $1\leq i < j \leq \arity$. As $h \in \cphoms{\mcH}{\mathcal{G}'}$ it holds that $h(\vec{u})\in E_1(\mathcal{G}')$. By the construction of $\mathcal{G}'$ we have that $h(\vec{u})$ is a clique in $G$ colored with $u_1,\dots,u_\arity$. In particular it holds that $\{h(u_i),h(u_j)\} \in E(G)$ and that $h(u_i)$ has color $u_i$ and $h(u_j)$ has color $u_j$. As $v = u_i$ and $w=u_j$ we conclude that $h \in \cphoms{\mathbb{G}(\mcH)}{G}$ and hence $a \in \cphoms{\mathbb{G}(\mcH),X}{G}$.

	This completes the reduction for the restricted case. We remark that the claimed running time bound follows from the fact that $\mcG'$ can be constructed in time $f(\mcH)\cdot n^{\arity(\tau)}$ as we only need to search for cliques of size $\leq \arity(\tau)$.

	Let us now explain how to get rid of the restrictions. First, consider an edge of $\mcH$ that contains multiple occurrences of a vertex, say $\vec{u}=(u_1,u_2,u_1,u_1,u_3,u_2) \in E_1(\mcH)$. We observe that there are exactly $3$ different vertices: $u_1,u_2$ and $u_3$. Hence, when constructing $\mathcal{G}'$, we search for cliques of size $3$ that are colored with $u_1,u_2$ and $u_3$. Then, for every clique $(a,b,c)$ in $G$ colored with $(u_1,u_2,u_3)$ we add $(a,b,a,a,c,b)$ to $E_1(\mcG')$.

	Finally, we can assume that $\tau$ contains more than one relation symbol by employing the construction for every relation. It is easy to see that Equation~\ref{eq:gaifman_to_hyper} remains true for the unrestricted case if the construction is modified as explained above.
\end{proof}

\subsection{Equivalence of Conjunctive Queries}\label{sec:counting_minimality}
In this subsection we prove that every endomorphism of a minimal graphical conjunctive that bijectively maps the free variables to itself is already an automorphism. Recall that two conjunctive queries $(\mathcal{H},X)$ and $(\hat{\mathcal{H}},\hat{X})$ are called equivalent, we write $(\mathcal{H},X) \sim (\hat{\mathcal{H}},\hat{X})$, if $\#\homs{\mathcal{H},X}{\star}$ and $\#\homs{\hat{\mathcal{H}},\hat{X}}{\star}$ are the same functions and a query $(\mathcal{H},X)$ is called minimal if it is a vertex-minimal element in its equivalence class.

Chen and Mengel provided an explicit criterion for equivalence.\footnote{Note that in case of graphs, Lemma~\ref{lem:count_equiv_chen_mengel} is identical with Corollary~\ref{cor:counting_equivalence_explicit_graphs}. While Chen and Mengel gave a more involved proof of the lemma, we point out that the generalization to hypergraphs can be proven just as easy as we demonstrated it for graphs in Section~\ref{sec:lin_combs}.}
\begin{lemma}[\cite{ChenM16}]\label{lem:count_equiv_chen_mengel}
	Two conjunctive queries $(\mcH,X)$ and $(\hat{\mcH}, \hat{X})$ are equivalent if and only if there exist surjective functions $s:X \rightarrow \hat{X}$ and $\hat{s}: \hat{X} \rightarrow X$ that can be extended to homomorphisms $h \in \homs{\mcH}{\hat{\mcH}}$ and $\hat{h} \in \homs{\hat{\mcH}}{\mcH}$, respectively.
\end{lemma}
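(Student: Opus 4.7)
The plan is to treat the two implications separately: $(\Leftarrow)$ ``surjective homomorphism extensions imply counting equivalence'' follows from a direct injection argument, whereas $(\Rightarrow)$ rests on the hypergraph generalization of Lemma~\ref{lem:homs linearly independent}.

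First I would handle $(\Leftarrow)$. Given surjective $s: X\to\hat{X}$ and $\hat{s}: \hat{X}\to X$ with homomorphism extensions $h\in\homs{\mcH}{\hat{\mcH}}$ and $\hat{h}\in\homs{\hat{\mcH}}{\mcH}$, I observe that $\hat{s}\circ s$ is a surjective self-map of the finite set $X$ and hence a bijection, and symmetrically for $s\circ\hat{s}$; consequently $s$ and $\hat{s}$ themselves are bijections. For any hypergraph $\mcG$, I would define $\Phi: \homs{\mcH,X}{\mcG}\to\homs{\hat{\mcH},\hat{X}}{\mcG}$ by $a\mapsto a\circ\hat{s}$ and $\Psi$ analogously by $b\mapsto b\circ s$. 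Well-definedness of $\Phi$ follows by composition: if $\tilde{a}$ extends $a$, then $\tilde{a}\circ\hat{h}$ is a homomorphism $\hat{\mcH}\to\mcG$ whose restriction to $\hat{X}$ coincides with $a\circ\hat{s}$, since $\hat{h}|_{\hat{X}}=\hat{s}$ and $\tilde{a}|_{X}=a$. Both $\Phi$ and $\Psi$ are injective because $s$ and $\hat{s}$ are bijections, so two injections between finite sets yield $\#\homs{\mcH,X}{\mcG}=\#\homs{\hat{\mcH},\hat{X}}{\mcG}$.

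For $(\Rightarrow)$, I would first establish the hypergraph analogue of Lemma~\ref{lem:homs linearly independent}: the functions $\#\homs{\mcH,X}{\star}$ are linearly independent across renaming-minimal queries. The proof structure is identical to the graph case; the surjection matrix $L$ is lower triangular with positive diagonal under any linear extension of $\leq$, and the matrix $B$ achieves full rank via multivariate polynomial interpolation applied to a vertex-cloning construction $\mcG^{\vec{z}}$. The only substantive adaptation is extending the clone operation to hyperedges of arbitrary arity---including those with repeated vertex occurrences---which is accomplished exactly as in the construction used in the proof of Lemma~\ref{lem:red_gaifman_graph}. With linear independence in hand, the argument then mirrors Corollary~\ref{cor:counting_equivalence_explicit_graphs}: let $(\mcH_1,X_1)$ and $(\hat{\mcH}_1,\hat{X}_1)$ be the selected renaming-minimal representatives of the renaming-equivalence classes of the two input queries. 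Applying $(\Leftarrow)$ shows each input query is counting-equivalent to its minimal representative, so the hypothesis propagates to $\#\homs{\mcH_1,X_1}{\star}=\#\homs{\hat{\mcH}_1,\hat{X}_1}{\star}$, and full rank of the Hom matrix $A$ forces these minimal representatives to coincide. Hence the original queries share a renaming-equivalence class and the desired surjective homomorphism extensions exist.

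The main obstacle I expect is the hypergraph generalization of Lemma~\ref{lem:homs linearly independent}, specifically verifying that the cloning construction yields a polynomial $\#\homs{\mcH,X}{\mcG^{\vec{z}}}\in\Q[z_1,\dots,z_k]$ whose leading monomial coefficient counts precisely the proper partial homomorphisms, since higher-arity hyperedges interact with vertex duplication in a more intricate way than ordinary edges do. Once this technical point is settled, the lattice-theoretic and linear-algebraic portions of the argument carry over verbatim from the graph case.
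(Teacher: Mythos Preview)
Your proposal is correct and follows precisely the approach the paper itself recommends in the footnote to Lemma~\ref{lem:count_equiv_chen_mengel}: the paper does not spell out a proof but remarks that the graph-case argument of Section~\ref{sec:lin_combs} (Lemma~\ref{lem:homs linearly independent} and Corollary~\ref{cor:counting_equivalence_explicit_graphs}) generalizes straightforwardly to hypergraphs, which is exactly what you do. Your treatment of the forward direction is more explicit than the paper's one-word ``immediate,'' and your careful passage through renaming-minimal representatives before invoking full rank of $A$ is the right way to make Corollary~\ref{cor:counting_equivalence_explicit_graphs} rigorous for non-minimal inputs.
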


We do not want to distinguish between two conjunctive queries that are equal up to consistently renaming both, the quantified as well as the free variables. Hence we say that two conjunctive queries $(\mcH,X)$ and $(\hat{\mcH},\hat{X})$ are \emph{isomorphic} if and only if there is an isomorphism from $\mcH$ to $\hat{\mcH}$ that bijectively maps $X$ to $\hat{X}$. We write $(\mcH,X) \cong (\hat{\mcH},\hat{X})$.

We now prove Lemma~\ref{lem:counting_minimality_graphs} in the more general context of hypergraphs. We start by introducing the necessary preliminaries with respect to cores of hypergraphs and homomorphic equivalence:
Two hypergraphs $\mcH$ and $\hat{\mcH}$ are \emph{homomorphically equivalent} if there exist homomorphisms from $\mcH$ to $\hat{\mcH}$ and from $\hat{\mcH}$ to $\mcH$. $\mcH$ is called a \emph{core} if it is not homomorphically equivalent to a proper subgraph of $\mcH$. As for every proper subgraph of $\mcH$, the identity function is a homomorphism, we obtain
\begin{observation}\label{obs:core_endo_auto}
	A hypergraph $\mcH$ is a core if and only if there exists no homomorphism from $\mcH$ to a proper subgraph. Hence, every endomorphism of a core is an automorphism.
\end{observation}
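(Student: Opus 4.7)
The plan is to prove the equivalence first and then derive the corollary about endomorphisms as an easy consequence. For the forward direction of the equivalence, I would argue contrapositively: suppose there is a homomorphism $h\colon \mcH \to \mcF$ where $\mcF$ is a proper subgraph of $\mcH$. Then the inclusion $\iota\colon \mcF \to \mcH$ is also a homomorphism (subgraphs preserve all relations by definition), so $\mcH$ and $\mcF$ are homomorphically equivalent via $h$ and $\iota$, contradicting the assumption that $\mcH$ is a core. The backward direction is almost immediate from the definition: if $\mcH$ is not a core, then by definition it is homomorphically equivalent to some proper subgraph $\mcF$, and the homomorphism witnessing one half of this equivalence is precisely a homomorphism from $\mcH$ to a proper subgraph.

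For the consequence, let $e$ be an endomorphism of a core $\mcH$ and consider its image $e(\mcH)$, defined as the substructure on vertex set $e(V(\mcH))$ with relations $E_i(e(\mcH)) := \{e(\vec{a}) : \vec{a} \in E_i(\mcH)\}$. This is a subgraph of $\mcH$ because $e$ is a homomorphism, and $e$ itself is a homomorphism from $\mcH$ to $e(\mcH)$. By the equivalence just proven, $e(\mcH)$ cannot be a proper subgraph, which forces $e$ to be surjective on vertices. Since $V(\mcH)$ is finite, a surjective endomorphism of the vertex set is automatically bijective. The induced map on $\arity_i$-tuples is then also a bijection on $V(\mcH)^{\arity_i}$, and because $e$ maps the finite set $E_i(\mcH)$ into itself, this restriction must also be surjective onto $E_i(\mcH)$. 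Hence the inverse map preserves every relation, so $e^{-1}$ is a homomorphism and $e$ is an automorphism.

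The main (minor) subtlety I anticipate is the standard but worth-stating finiteness argument that a bijective endomorphism of a finite relational structure is an automorphism, since surjectivity of $e$ on vertices does not a priori imply surjectivity on each relation $E_i$; one needs the pigeonhole step on the finite sets $E_i(\mcH)$ to conclude. Everything else is a direct unwinding of the definitions of core, subgraph, and homomorphic equivalence.
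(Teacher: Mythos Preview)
Your proof is correct and follows the same approach as the paper, which in fact does not give a proof beyond the single remark preceding the observation that for every proper subgraph the identity (i.e., inclusion) map is a homomorphism; this is exactly your $\iota$. Your detailed justification of the second sentence---that a bijective endomorphism of a finite relational structure is an automorphism via pigeonhole on each finite $E_i(\mcH)$---is more than the paper provides, but entirely in line with what the authors intend as an easy observation.
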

We say that a subgraph $\hat{\mcH}$ of $\mcH$ is a core of $\mcH$ if $\hat{\mcH}$ and $\mcH$ are homomorphically equivalent and $\hat{\mcH}$ is a core.
\begin{lemma}[See e.g. Lemma~13.9 in \cite{flumgrohe}] \label{lem:hom_cores_iso}
	Let $\mcH$ and $\mcG$ be homomorphically equivalent and let $\hat{\mcH}$ and $\hat{\mcG}$ be cores of $\mcH$ and $\mcG$, respectively. Then $\hat{\mcH}$ and $\hat{\mcG}$ are isomorphic. In particular, all cores of a hypergraph are isomorphic.
\end{lemma}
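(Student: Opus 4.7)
The plan is to use that cores of homomorphically equivalent hypergraphs are themselves homomorphically equivalent, and then to exploit Observation~\ref{obs:core_endo_auto} that every endomorphism of a core is an automorphism. Concretely, since $\hat{\mathcal{H}}$ is homomorphically equivalent to $\mathcal{H}$, and $\mathcal{H}$ is homomorphically equivalent to $\mathcal{G}$, and $\mathcal{G}$ is homomorphically equivalent to $\hat{\mathcal{G}}$, transitivity (obtained by composing witnessing homomorphisms in either direction) yields homomorphisms $f\colon \hat{\mathcal{H}} \to \hat{\mathcal{G}}$ and $g\colon \hat{\mathcal{G}} \to \hat{\mathcal{H}}$.

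Next I would look at the compositions $g\circ f$ and $f\circ g$. The former is an endomorphism of the core $\hat{\mathcal{H}}$, so by Observation~\ref{obs:core_endo_auto} it is an automorphism of $\hat{\mathcal{H}}$; the latter is an endomorphism of the core $\hat{\mathcal{G}}$, so it is an automorphism of $\hat{\mathcal{G}}$. From these two facts, $f$ is both injective (since $g\circ f$ is) and surjective (since $f\circ g$ is), and hence a bijection between $V(\hat{\mathcal{H}})$ and $V(\hat{\mathcal{G}})$.

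To upgrade $f$ from bijective homomorphism to isomorphism, I need $f^{-1}$ to be a homomorphism. Set $\sigma := (g\circ f)^{-1}$, which by the definition of automorphism (a bijective homomorphism whose inverse is also a homomorphism, as stated in Section~\ref{sec:prelims}) is a homomorphism from $\hat{\mathcal{H}}$ to itself. Then $\sigma \circ g$ is a homomorphism $\hat{\mathcal{G}}\to\hat{\mathcal{H}}$, and as set-maps $\sigma \circ g = f^{-1}$ because $\sigma\circ g\circ f = \mathrm{id}_{\hat{\mathcal{H}}}$. Thus $f^{-1}$ is a homomorphism, so $f$ is an isomorphism that maps $\hat{\mathcal{H}}$ onto $\hat{\mathcal{G}}$. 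The "in particular" clause follows by taking $\mathcal{H}=\mathcal{G}$.

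The only step that requires any care is the last one: producing a \emph{homomorphism} inverse to $f$, rather than just the set-theoretic inverse. The trick of writing $f^{-1} = (g\circ f)^{-1}\circ g$ avoids ever having to verify directly that $g$ is bijective or that its inverse preserves hyperedges; everything is reduced to the fact that the inverse of the automorphism $g\circ f$ of $\hat{\mathcal{H}}$ is, by definition, itself a homomorphism. No hypergraph-specific argument beyond what is already in Section~\ref{sec:more_prelims} is needed, so the proof carries over verbatim from the classical graph case.
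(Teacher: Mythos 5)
Your proof is correct. Note that the paper does not actually prove this lemma --- it cites Lemma~13.9 of Flum and Grohe --- and your argument is precisely the standard one: compose the witnessing homomorphisms to get $f\colon\hat{\mcH}\to\hat{\mcG}$ and $g\colon\hat{\mcG}\to\hat{\mcH}$, use Observation~\ref{obs:core_endo_auto} to conclude that $g\circ f$ and $f\circ g$ are automorphisms, deduce bijectivity of $f$, and recover a homomorphism inverse as $(g\circ f)^{-1}\circ g$. All steps are valid for arbitrary relational structures, so nothing further is needed.
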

\begin{corollary}\label{cor:min_hom_is_unique}
	All minimal elements in a single equivalence class with respect to homomorphic equivalence are isomorphic.
\end{corollary}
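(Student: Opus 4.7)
The plan is to reduce the statement directly to Lemma~\ref{lem:hom_cores_iso}, via the observation that any minimal representative of a homomorphic equivalence class is already a core.

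First I would unpack the notion of minimality. A ``minimal'' element of a homomorphic equivalence class is a hypergraph~$\mcH$ in the class such that no proper subgraph of~$\mcH$ also lies in the class, that is, no proper subgraph of~$\mcH$ is homomorphically equivalent to~$\mcH$. This is exactly the defining property of a core, so every minimal element is a core. Formally: if $\hat\mcH\subsetneq\mcH$ were homomorphically equivalent to~$\mcH$, then $\hat\mcH$ would be a strictly smaller element of the same equivalence class, contradicting minimality of~$\mcH$; so by definition $\mcH$ is a core.

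Next I would use Lemma~\ref{lem:hom_cores_iso} to finish. Let $\mcH_1$ and $\mcH_2$ be two minimal elements of the same homomorphic equivalence class. Then $\mcH_1$ and $\mcH_2$ are homomorphically equivalent, and by the previous paragraph both are cores. A core is, in particular, a core of itself, so $\mcH_1$ is a core of~$\mcH_1$ and $\mcH_2$ is a core of~$\mcH_2$. Applying Lemma~\ref{lem:hom_cores_iso} to the pair $\mcH_1,\mcH_2$ with these choices of cores yields that $\mcH_1$ and $\mcH_2$ are isomorphic, which is what we wanted.

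There is no real obstacle here; the corollary is essentially a restatement of Lemma~\ref{lem:hom_cores_iso} once we make the short observation that minimal representatives coincide with cores. The only thing worth being careful about is that ``minimal'' should be read in the subgraph ordering (which is the natural reading given the earlier use of subgraph-minimal representatives and Observation~\ref{obs:core_endo_auto}); if minimality were instead interpreted only by vertex count, one would still conclude that the element must be a core, since any proper homomorphically equivalent subgraph obtained by iterated vertex- or edge-deletion would eventually yield a strictly smaller element with no more vertices, and the argument goes through unchanged.
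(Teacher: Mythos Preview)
Your proposal is correct and matches the paper's intent: the corollary is stated without proof, immediately after Lemma~\ref{lem:hom_cores_iso}, precisely because the argument you give---minimal representatives are cores, and Lemma~\ref{lem:hom_cores_iso} then forces isomorphism---is the intended one-line deduction. Your careful remark on the reading of ``minimal'' is apt but not strictly needed here, since either interpretation collapses to the core property.
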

Now Lemma~\ref{lem:hom_cores_iso} and Corollary~\ref{cor:min_hom_is_unique} allow us to speak of \emph{the} core of a hypergraph --- we write $\mathsf{core}(\mcH)$--- and \emph{the} minimal representative of a homomorphic equivalence class. Our goal is to achieve a similar result for equivalence of conjunctive queries. To this end, we use the \emph{augmented core} of a graphical conjunctive query which refines the notion of a core.
\begin{definition}
	Given a conjunctive query $(\mcH,X)$, we obtain an augmented hypergraph $\aug(\mcH,X)$ from $\mcH$ by adding a new relation $\raug(\mcH) = \{(x,x') \in X^2 ~|~x \neq x'\}$. Note that this also adds a new relation symbol $\raug$ to the signature.
	We let $\mcH_c$ be the hypergraph obtained from $\mathsf{core}(\aug(\mcA,X))$ by removing $\raug(\mcH)$ and we define the \emph{augmented core} of $(\mcH,X)$ to be $\acore(\mcH,X):=(\mcH_c,X)$.
\end{definition}
We note that $\acore$ is well-defined, i.e., that every element $x\in X$ is contained in the core of $\mathsf{aug}(\mcH,X)$. To see this, we observe that $\raug(\mcH)$ induces a clique without loops on $X$ and hence, any subgraph $\hat{\mcH}$ of $\mcH$ such that there is a homomorphism from $\mcH$ to $\hat{\mcH}$ must contain every element $x\in X$.
\begin{lemma}\label{lem:main_augmented_structs}
	Let $(\mcH,X)$ and $(\hat{\mcH},\hat{H})$ be two graphical conjunctive queries. Then it holds that $(\mcH,X)$ and $(\hat{\mcH},\hat{H})$ are equivalent if and only if $\aug(\mcH,X)$ and $\aug(\hat{\mcH},\hat{H})$ are homomorphically equivalent. Moreover, $\aug(\mcH,X)$ is minimal with respect to homomorphic equivalence if and only if $(\mcH,X)$ is minimal.
\end{lemma}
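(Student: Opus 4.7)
The plan is to exploit Lemma~\ref{lem:count_equiv_chen_mengel}, which characterizes equivalence of conjunctive queries via mutual surjective maps $X\to\hat X$ and $\hat X\to X$ that extend to homomorphisms, together with the key observation that the auxiliary relation $\raug$ is precisely the complete irreflexive relation on the free variables. This relation forces any homomorphism between augmented structures to inject $X$ into $\hat X$, and a symmetric injection in the other direction then upgrades both to bijections --- after which the restrictions to $\mcH$ and $\hat\mcH$ are homomorphisms extending bijective (in particular surjective) maps of the free variables.

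For the first equivalence, in the forward direction I would take the surjective maps $s:X\to\hat X$ and $\hat s:\hat X\to X$ guaranteed by Lemma~\ref{lem:count_equiv_chen_mengel}; composition shows these are already bijections, so the extending homomorphisms $h,\hat h$ preserve $\raug$ automatically and witness homomorphic equivalence of $\aug(\mcH,X)$ and $\aug(\hat\mcH,\hat X)$. Conversely, starting from homomorphisms $h:\aug(\mcH,X)\to\aug(\hat\mcH,\hat X)$ and $\hat h:\aug(\hat\mcH,\hat X)\to\aug(\mcH,X)$, the fact that $\raug$ is an all-pairs clique on $X$ (and sits only on the free variables) forces $h(X)\subseteq\hat X$ with $h|_X$ injective, and symmetrically for $\hat h$. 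Composing yields $|X|\le|\hat X|\le|X|$, so $h|_X$ is a bijection; dropping $\raug$, the original $h,\hat h$ remain homomorphisms of $\mcH,\hat\mcH$ that extend surjections of the free variables, and Lemma~\ref{lem:count_equiv_chen_mengel} delivers equivalence of the queries.

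For the minimality part, I would argue that the map $(\mcH,X)\mapsto\aug(\mcH,X)$ sends proper sub-queries to proper subgraphs of the augmentation, and conversely that any proper hom-equivalent subgraph of $\aug(\mcH,X)$ has the form $\aug(\mcH',X)$ for a proper sub-query $(\mcH',X)$. The forward direction is immediate: removing vertices or edges from $\mcH$ and restricting accordingly yields a proper subgraph of $\aug(\mcH,X)$ with the same $\raug$. For the converse, suppose $\mcG\subsetneq\aug(\mcH,X)$ is homomorphically equivalent to $\aug(\mcH,X)$; applying the injectivity-on-$X$ argument to a homomorphism $\aug(\mcH,X)\to\mcG$, the image of $X$ must be an $|X|$-element $\raug(\mcG)$-clique, forcing $X\subseteq V(\mcG)$ and $\raug(\mcG)=\raug(\aug(\mcH,X))$. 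Thus $\mcG=\aug(\mcH',X)$ for a proper sub-query, which by the first half of the lemma is equivalent to $(\mcH,X)$, contradicting minimality; the other direction of the biconditional is analogous.

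The main obstacle I foresee is conceptual rather than technical: being careful that the notion of ``minimal'' being used (subgraph-minimality among equivalent queries) interacts correctly with the fact that two equivalent queries may formally have different vertex sets, so that ``$\aug(\mcH,X)$ is minimal in its homomorphic equivalence class'' really coincides with ``$\aug(\mcH,X)$ is a core'' in the sense of Observation~\ref{obs:core_endo_auto}. Once this is pinned down, the injectivity-on-$X$ trick of the first part does all the real work in both directions of both equivalences.
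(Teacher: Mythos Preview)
Your proposal is correct and follows essentially the same route as the paper: both arguments hinge on the observation that $\raug$ is the complete irreflexive relation on the free variables, so any homomorphism between augmented structures is injective on $X$ and lands in $\hat X$; with maps in both directions this upgrades to bijections, and Lemma~\ref{lem:count_equiv_chen_mengel} closes the loop. For minimality, both you and the paper argue by contrapositive that a proper hom-equivalent subgraph of $\aug(\mcH,X)$ must retain all of $X$ and all of $\raug$, hence is itself an augmentation of a proper sub-query. Your write-up is in fact slightly more careful than the paper's on the injectivity/surjectivity point (the paper asserts surjectivity of a single homomorphism where really only injectivity is immediate), but the underlying argument is identical.
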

\begin{proof}
	Every homomorphism from $\aug(\mcH,X)$ to $\aug(\hat{\mcH},\hat{H})$ must surjectively map $X$ to $\hat{X}$ as $\raug$ induces a clique (without self-loops) on $X$ and $\hat{X}$. On the other hand, every homomorphism from $\mcH$ to $\hat{\mcH}$ that surjectively maps $X$ to $\hat{X}$ is a homomorphism from $\aug(\mcH,X)$ to $\aug(\hat{\mcH},\hat{H})$.

	If $\aug(\mcH,X)$ is not minimal with respect to homomorphic equivalence then it is not a core. Hence $\aug(\mcH,X)$ is homomorphically equivalent to a proper subgraph $\mcF$ of $\aug(\mcH,X)$. Hence $\mcF$ must contain all vertices in $X$ and $\raug(\mcF) = \raug(\mcH)$, because otherwise there would be no homomorphism from $\aug(\mcH,X)$ to $\mcF$. It follows that $\mcF=\aug(\hat{\mcH},X)$ for some hypergraph $\hat{\mcH}$. By the first part, it follows that $(\mcH,X)$ and $(\hat{\mcH},X)$ are equivalent. As $\hat{\mcH}$ is a proper subgraph of $\mcH$ is follows that $(\mcH,X)$ is not minimal.

	On the other hand, if $(\mcH,X)$ is not minimal, then there exists a conjunctive query $(\hat{\mcH},\hat{H})$ such that $\hat{\mcH}$ is a proper subgraph of $\mcH$ and there are homomorphisms $h$ from $\mcH$ to $\hat{\mcH}$ and $\hat{h}$ from $\hat{\mcH}$ to $\mcH$ that surjectively map $X$ to $\hat{X}$ and $\hat{X}$ to $X$ respectively. Hence $h$ and $\hat{h}$ are also homomorphisms from  $\aug(\mcH,X)$ to $\aug(\hat{\mcH},\hat{H})$ and from $\aug(\hat{\mcH},\hat{H})$ to $\aug(\mcH,X)$, respectively. Therefore $\aug(\mcH,X)$ is not minimal with respect to homomorphic equivalence.
\end{proof}
This, together with Lemma~\ref{lem:hom_cores_iso} and Corollary~\ref{cor:min_hom_is_unique}, immediately implies the following:
\begin{corollary}\label{cor:main_counting_equiv} For all graphical conjunctive queries $(\mcH,X)$ and $(\hat{\mcH},\hat{H})$ is holds that
	\begin{enumerate}
		\item $(\mcH,X) \sim \acore(\mcH,X)$.
		\item If $(\mcH,X)$ is minimal then $(\mcH,X) \cong \acore(\mcH,X)$.
		\item If $(\mcH,X)$ and $(\hat{\mcH},\hat{H})$ are minimal and $(\mcH,X) \sim (\hat{\mcH},\hat{H})$ then $(\mcH,X) \cong (\hat{\mcH},\hat{H})$.
		\item If $(\mcH,X)$ is minimal and $h$ is an endomorphism of $\mcH$ that surjectively maps $X$ to $X$, then $h$ is an automorphism.
	\end{enumerate}
\end{corollary}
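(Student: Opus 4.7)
The overall plan is to leverage Lemma~\ref{lem:main_augmented_structs} as a translation device that converts statements about equivalence/minimality of graphical conjunctive queries into statements about homomorphic equivalence/cores of the augmented hypergraphs, where we may then invoke the standard results (Lemma~\ref{lem:hom_cores_iso}, Corollary~\ref{cor:min_hom_is_unique}, and Observation~\ref{obs:core_endo_auto}). The key technical point we will need repeatedly is that every homomorphism between augmented structures $\aug(\mcH,X)$ and $\aug(\hat{\mcH},\hat X)$ is forced to bijectively map $X$ to $\hat X$, since the relation $\raug$ induces a loopless clique on the free variables; in particular, such homomorphisms are in one-to-one correspondence with homomorphisms $\mcH\to\hat{\mcH}$ surjecting $X$ onto $\hat X$, and isomorphisms $\aug(\mcH,X)\cong\aug(\hat{\mcH},\hat X)$ correspond exactly to isomorphisms $(\mcH,X)\cong(\hat{\mcH},\hat X)$.

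For claim~(1), I would note that by the definition of $\acore$, the hypergraph $\aug(\acore(\mcH,X))$ is (isomorphic to) $\mathsf{core}(\aug(\mcH,X))$, which is homomorphically equivalent to $\aug(\mcH,X)$. Applying the ``if and only if'' direction of Lemma~\ref{lem:main_augmented_structs} then gives $(\mcH,X)\sim\acore(\mcH,X)$. For claim~(2), assuming $(\mcH,X)$ is minimal, the second part of Lemma~\ref{lem:main_augmented_structs} tells us $\aug(\mcH,X)$ is a core, so by Observation~\ref{obs:core_endo_auto} (or directly from the definition of core) it is isomorphic to $\mathsf{core}(\aug(\mcH,X))$. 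Using the key technical point above, this isomorphism restricts to the required isomorphism of queries.

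Claim~(3) follows by combining Lemma~\ref{lem:main_augmented_structs} (both augmented queries are cores and homomorphically equivalent) with Lemma~\ref{lem:hom_cores_iso} (all cores in one homomorphic equivalence class are isomorphic), and then translating back via the technical point. Claim~(4) is the most delicate and is the expected main obstacle: given a minimal $(\mcH,X)$ and an endomorphism $h$ of $\mcH$ with $h(X)=X$ (which forces $h|_X$ to be a bijection by finiteness), I would first promote $h$ to an endomorphism of $\aug(\mcH,X)$ by verifying that it preserves the newly added relation $\raug$, which holds precisely because $h|_X$ is a bijection of $X$. Since $\aug(\mcH,X)$ is a core by minimality, Observation~\ref{obs:core_endo_auto} upgrades this endomorphism to an automorphism of $\aug(\mcH,X)$; restricting away the auxiliary relation $\raug$ then yields that $h$ is an automorphism of $\mcH$.

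The only subtle step is making sure that ``isomorphism of augmented structures'' and ``isomorphism of queries'' agree: the technical point above handles this, but it is worth a careful one-line verification that removing $\raug$ from an automorphism of $\aug(\mcH,X)$ preserves both the $X$-bijectivity condition and edge-preservation in each relation of the original signature $\tau$. Once this is granted, all four claims become immediate consequences of the standard core theory applied to the augmented hypergraphs.
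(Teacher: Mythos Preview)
Your proposal is correct and follows precisely the route the paper intends: the paper's own proof is the single sentence ``This, together with Lemma~\ref{lem:hom_cores_iso} and Corollary~\ref{cor:min_hom_is_unique}, immediately implies the following,'' and what you have written is exactly the unpacking of that sentence via Lemma~\ref{lem:main_augmented_structs} and Observation~\ref{obs:core_endo_auto}. The only point worth making explicit (which you gesture at) is that $\aug(\acore(\mcH,X))$ really coincides with $\mathsf{core}(\aug(\mcH,X))$, which holds because the core must contain all of $X$ together with the full $\raug$-clique on it; once this is noted, all four items are indeed immediate.
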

Note that Lemma~\ref{lem:counting_minimality_graphs} follows from 4. restricted to the signature of graphs.

\subsection{The generalized classification theorem}
Before generalizing our main theorem, we need to point out that all structural parameters we have seen for conjunctive queries over the signature of graphs are generalized to hypergraphs via the Gaifman graph. We furthermore observe that the reduction from color-prescribed to uncolored homomorphisms in Section~\ref{sec:phom_to_hom} as well as the transformation of existential and universal positive queries, possibly including inequalities over the free variables, can be done completely analogously as in the case of graphs. However, we did not give a formal proof of $\#\W$-easiness in case of bounded dominating star size yet, which is hence provided first.

\subsubsection{\#W[1]-easiness for hypergraphs}\label{sec:algo_hyper}
\begin{theorem}[Theorem~\ref{thm:algo_graphs} for hypergraphs]\label{thm:algo_structs}
	Let $\Delta$ be a class of conjunctive queries with bounded dominating star size. Then $\#\homsprob(\Delta)$ is $\#\W$-easy.
\end{theorem}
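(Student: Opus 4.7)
The plan is to adapt the proof sketch of Theorem~\ref{thm:algo_graphs} to hypergraphs, handling higher-arity relations via the Gaifman graph. Let $c$ be the maximum dominating star size among queries in $\Delta$, and fix an input query $\delta \in \Delta$ with associated graphical conjunctive query $(\mathcal{H}, X)$ and $Y := V(\mathcal{H}) \setminus X$. Since the dominating star size is defined through the Gaifman graph~$\mathbb{G}(\mathcal{H})$ (see Definition~\ref{def:gaifman_graph}), we first consider the connected components $Y_1, \dots, Y_\ell$ of the induced subgraph $\mathbb{G}(\mathcal{H})[Y]$. For each~$Y_i$, let $B_i \subseteq X$ be the set of free variables that share some hyperedge with at least one vertex in $Y_i$; equivalently, $B_i$ is the set of free variables adjacent to $Y_i$ in $\mathbb{G}(\mathcal{H})$. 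By the bounded dominating star size assumption, $|B_i| \le c$ for each $i$.

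Next I would carry out the preprocessing phase. For each $i \in [\ell]$ and each tuple $\vec{v} \in V(\mathcal{G})^{B_i}$ (of which there are at most $n^c$), I want to decide whether there exists an assignment to $Y_i$ consistent with $\vec{v}$ that satisfies all atoms of $\delta$ whose variables lie entirely in $B_i \cup Y_i$. This is precisely a model-checking problem for the existential formula
\begin{equation*}
  \exists Y_i : \bigwedge_{\substack{a \text{ atom of } \delta \\ \text{vars}(a) \subseteq B_i \cup Y_i}} a
\end{equation*}
evaluated on $\mathcal{G}$ with the variables in $B_i$ fixed to $\vec{v}$. Crucially, because the $Y_i$ are distinct Gaifman components of $Y$, no atom of $\delta$ can involve quantified variables from two different $Y_i$'s, so the global satisfiability condition on quantified variables decouples into these local conditions. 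Each local check is a parameterized existential first-order model-checking problem and thus reducible to $\W$ (see, e.g., Theorem~7.22 in~\cite{flumgrohe}). This yields, for each $i$, a relation $R_i \subseteq V(\mathcal{G})^{B_i}$ of ``good'' boundary assignments, computable using $O(\ell \cdot n^c)$ oracle calls to a $\#\W$-complete problem.

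Now I would reduce the residual counting task to a $\#\W$-computation. Consider the expanded structure $\mathcal{G}^+$ over the signature of $\mathcal{G}$ augmented by fresh relation symbols $R_i$ interpreted as computed above, and consider the existential-free (i.e., purely free-variable) formula
\begin{equation*}
  \varphi^+ := x_1 \dots x_k : \Big(\bigwedge_{\substack{a \text{ atom of } \delta \\ \text{vars}(a) \subseteq X}} a\Big) \wedge \bigwedge_{i=1}^\ell R_i(\vec{x}|_{B_i}).
\end{equation*}
By construction, an assignment $a : X \to V(\mathcal{G})$ satisfies $\varphi^+$ in $\mathcal{G}^+$ if and only if it extends to a full homomorphism from $\mathcal{H}$ to $\mathcal{G}$, so $\#\homs{\mathcal{H},X}{\mathcal{G}} = \#\varphi^+(\mathcal{G}^+)$. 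Counting answers to a quantifier-free first-order formula is the canonical problem for the class $\#\cc{A[1]} = \#\W$ (see Chapter~14 in~\cite{flumgrohe}), so this step is a single reduction to a problem in $\#\W$. The bounded arity of $\tau$ is what keeps $|\mathcal{G}^+|$ polynomial in $n$ throughout.

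The main technical obstacle is the decoupling argument: one must verify that no atom of $\delta$ can jointly constrain quantified variables from two distinct Gaifman components of $Y$. This is immediate from the definition of the Gaifman graph, since any atom whose tuple contains vertices from both $Y_i$ and $Y_j$ would make these vertices adjacent in $\mathbb{G}(\mathcal{H})[Y]$, forcing $Y_i = Y_j$. The remaining subtlety — that atoms mixing quantified vertices from $Y_i$ with free vertices outside $B_i$ also cannot occur — is similarly immediate: any free variable sharing an atom with $Y_i$ belongs to $B_i$ by definition. Everything else is routine bookkeeping, and the overall running time bound promised in Remark~\ref{rem:running_time_graphs} is obtained by replacing the final $\#\W$-oracle call with standard treewidth-based dynamic programming.
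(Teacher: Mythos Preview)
Your proposal is correct and follows essentially the same approach as the paper: decompose the quantified variables into Gaifman-components $Y_i$, use a $\W$ oracle to precompute for each $i$ the set $R_i$ of boundary tuples admitting an extension to $Y_i$, then count answers to the resulting quantifier-free formula over the expanded structure via $\#\cc{A[1]}=\#\W$. One small inaccuracy: it is the bound on the dominating star size (hence on $|B_i|$, the arities of the new relations $R_i$) that keeps $|\mathcal{G}^+|$ polynomial, not the bounded arity of $\tau$; the latter only ensures the original atoms are well-behaved.
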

\begin{proof}
	We use the identity $\#\W = \#\cc{A}[1]$ (see Theorem~14.17 in~\cite{flumgrohe}). By definition, $\#\mathsf{A}[1]$ is the class of all parameterized counting problems that are parameterized Turing reducible to $\#\homsprob(\Pi_0)$, where $\Pi_0$ is the set of all conjunctive queries without quantified variables \cite{flumgrohe}.

	Given an instance of $\#\homsprob(\Delta)$, i.e., a conjunctive query $(\mcH,X) \in \Delta$ and a hypergraph $\mcG$, we will construct a query $(\hat{\mcH},X) \in \Pi_0$ and a hypergraph $\hat{\mcG}$ such that $\#\homs{\mcH,X}{\mcG} =\#\homs{\hat{\mcH},X}{\hat{\mcG}}$ in $\mathsf{FPT}$ time, using an oracle for $\#\W$ with the additional restriction that the parameter of every query to the oracle only depends on $(\mcH,X)$.

	Let $\mathbb{G}=\mathbb{G}(\mcH)$ be the Gaifman graph of $\mcH$, let $Y$ be the set of quantified variables in $(\mcH,X)$ and let $Y_1,\dots,Y_\ell$ be the connected components of $\mathbb{G}[Y]$. Furthermore, for each $i \in [\ell]$ we set $c_i$ to be the number of vertices in $X$ that are adjacent to a vertex in $Y_i$ in $\mathbb{G}$. Note that every $c_i$ is bounded by some overall constant as $\Delta$ has bounded dominating star size. Next we add new relation symbols
	$\hat{E}_1,\dots,\hat{E}_\ell$ with arities $c_1,\dots,c_\ell$ to the signature. We proceed for each $i \in [\ell]$ as follows: Let $x^i_1,\dots,x^i_{c_i} \in X$ be the elements in $\mcH$ that are adjacent to an element in $Y_i$ in $\mathbb{G}$. For every tuple $\vec{v}=(v_1,\dots,v_{c_i}) \in V(\mcG)^{c_i}$, we check whether there is a homomorphism $h$ from $\mcH[Y_i \cup \{ x^i_1,\dots,x^i_{c_i}\}]$ to $\mcG$ such that $h(x^i_j)=v_j$ for all $j \in [c_i]$. Intuitively, this check is positive if $\vec{v}$ is a possible candidate for the image of the free variables $x^i_1,\dots,x^i_{c_i} \in X$ with respect to ``neighborhood'' $Y_i$. Now we add $\vec{v}$ to $\hat{E}_i(\mcG)$ if and only if the check was positive and remark that we only need to perform $\ell \cdot |\mcG|^{O(1)}$ checks as all $c_i$ are upper bounded by a constant. The resulting hypergraph is $\hat{\mcG}$. We observe that every check can be done by querying an oracle to $\W$ as it can be formulated as an instance of the problem of deciding the existence of a solution to a conjunctive query, which is known to be contained in $\W$ (see e.g. Theorem~7.22 in \cite{flumgrohe}). As we have access to an oracle for $\#\W$, we can certainly simulate an oracle for $\W$ --- if we know the number of solutions we also know whether one exists.

	It remains to show how to construct $(\hat{\mcH},X)$: We fix an ordering of the free variables $X$ of $\varphi$. Next, starting from $\mcH$, for every $i \in [\ell]$, let $x^i_1,\dots x^i_{c_i}$ be the ordered neighbors of vertices in $Y_i$ in $\mathbb{G}$. We then add an atom $(x^i_1,\dots x^i_{c_i}) \in \hat{E}_i$ to $\mcH$.
	In the end, we delete all quantified variables from $\mcH$ along with all edges that contain a quantified variable and denote the resulting hypergraph as $\hat{\mcH}$. Now it can be easily verified that $\#\homs{\mcH,X}{\mcG}$ equals $\#\homs{\hat{\mcH},X}{\hat{\mcG}}$.

	We summarize the reduction:
	\begin{enumerate}
		\item Given $(\mcH,X)$ and $\mcG$, construct $\hat{\mcG}$ by performing at most $\ell \cdot |\mcG|^{\mathsf{dss}(\mcH,X)}$ checks, using the oracle.
		\item Construct $\mcH$.
		\item Query the oracle to obtain $\#\homs{\hat{\mcH},X}{\hat{\mcG}}$ and output the result.
	\end{enumerate}
	This can be done in $\mathsf{FPT}$ time as the dominating star size of $\Delta$ is bounded. For the same reason, the arity of $\mcH$ is bounded by a constant. Furthermore, there is a computable function $t$ such that the size of every oracle query is bounded by $t(|(\mcH,X)|) \cdot \mathsf{poly}(|\mcG|)$ and the parameter of every oracle query is bounded by $t(|(\mcH,X)|)$.
\end{proof}

\subsubsection{The main results}\label{sec:main_result_structs}
\begin{theorem}[Pentachotomy for counting answers to conjunctive queries]\label{thm:main_hyper}
	Let $\Delta$ be a recursively enumerable class of minimal conjunctive queries over arbitrary signatures with bounded arity.
	\begin{enumerate}
		\item If the treewidth of $\Delta$ and $\mathsf{contract}(\Delta)$ is bounded, then $\#\homsprob(\Delta)$ is polynomial-time computable.
		\item If the treewidth of $\Delta$ is unbounded and the treewidth of $\mathsf{contract}(\Delta)$ is bounded, then $\#\homsprob(\Delta)$ is $\W$-equivalent.
		\item If the treewidth of $\mathsf{contract}(\Delta)$ is unbounded and the dominating star size of $\Delta$ is bounded, then $\#\homsprob(\Delta)$ is $\#\W$-equivalent.
		\item If the dominating star size of $\Delta$ is unbounded, then $\#\homsprob(\Delta)$ is $\#\Wtwo$-hard. In particular, given a formula $\delta$ with $\mathsf{dss}(\delta) \geq 3$, computing $\#\homs{\delta}{\star}$ cannot be done in time $O(n^{\mathsf{dss}(\delta) - \varepsilon})$ for any $\varepsilon > 0$ unless SETH fails.
		\item If additionally the linked matching number of $\Delta$ is unbounded, then $\#\homsprob(\Delta)$ is $\#\Atwo$-equivalent.
	\end{enumerate}
	Furthermore, the classification remains true if $\Delta$ is a family of quantum queries, with the exception that the queries in the first case might only be fixed parameter tractable.
\end{theorem}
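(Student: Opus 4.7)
The plan is to lift the graph pentachotomy (Theorem~\ref{thm:main_thm_cq_graphs}) to the hypergraph setting by a two-step strategy: for upper bounds we directly generalize the algorithmic machinery already developed, whereas for lower bounds we route everything through the Gaifman graph via Lemma~\ref{lem:red_gaifman_graph}, exploiting that the relevant structural parameters ($\mathsf{dss}$, $\mathsf{lmn}$, treewidth of the Gaifman graph, treewidth of the contract) are by definition parameters of graphs derived from the hypergraph query.

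First I would dispatch the top three cases. The first two claims, together with the $\#\W$-hardness part of the third, are given directly by the Chen--Mengel--Durand trichotomy (Theorem~\ref{thm:old_classification}), which is already stated for structures of bounded arity. The polynomial-time algorithm in case~(1) follows from standard dynamic programming over tree decompositions of $\mathcal{H}$ and of $\mathsf{contract}(\mathcal{H},X)$, while $\W$-equivalence in case~(2) follows because bounded contract-treewidth together with unbounded query-treewidth captures $\W$-equivalence precisely as in the graph case. The $\#\W$-easiness needed in case~(3) is exactly Theorem~\ref{thm:algo_structs}, after first passing from $\#\homsprob$ to $\#\cphomsprob$ via the color-prescribed reduction; here Corollary~\ref{cor:main_counting_equiv} is what guarantees that the colorful-to-uncolored machinery of Section~\ref{sec:phom_to_hom} works verbatim for minimal hypergraph queries, since the key fact used there (every endomorphism that permutes $X$ is an automorphism) has been generalized in that corollary.

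Next I would handle the hardness results in cases~(4) and~(5). Because $\mathsf{dss}$ and $\mathsf{lmn}$ of a graphical conjunctive query $(\mathcal{H},X)$ are defined in terms of the Gaifman graph $\mathbb{G}(\mathcal{H})$ (see Definition~\ref{def:gaifman_graph}), a class $\Delta$ of hypergraph queries with unbounded $\mathsf{dss}$ (respectively unbounded $\mathsf{lmn}$) gives rise, via $(\mathcal{H},X)\mapsto(\mathbb{G}(\mathcal{H}),X)$, to a class of graphical queries with unbounded $\mathsf{dss}$ (respectively unbounded $\mathsf{lmn}$). Applying Theorems~\ref{thm:wtwo_hardness_graphs} and~\ref{thm:atwo_hardness_graphs} to that class yields $\#\Wtwo$-hardness and $\#\Atwo$-hardness of $\#\cphomsprob$ on the Gaifman side; Lemma~\ref{lem:red_gaifman_graph} then transports these lower bounds to $\#\cphomsprob(\Delta)$, and finally the colorful-to-uncolored reduction from Section~\ref{sec:phom_to_hom} (which generalizes to hypergraphs by Corollary~\ref{cor:main_counting_equiv}) lifts them to $\#\homsprob(\Delta)$. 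The SETH lower bound in case~(4) transfers in the same fashion, because the reduction in Lemma~\ref{lem:red_gaifman_graph} runs in time $O(f(\mathcal{H})\cdot n^{\arity(\tau)})$, which is polynomial for a fixed query and hence tight enough to preserve the $n^{\mathsf{dss}(\delta)-\varepsilon}$ lower bound. The matching $\#\Atwo$-easiness in case~(5) is an immediate consequence of the $\#\Atwo$-normalization theorem of Section~\ref{sec:a2}, which already applies to arbitrary first-order counting problems on structures.

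Finally I would extend to quantum queries. Definition~\ref{def:q_graphs} and the complexity monotonicity lemma (Lemma~\ref{lem:new_pods_monotonicity}) go through unchanged for hypergraphs, since the key algebraic ingredients---linear independence of the functions $\#\homs{\mathcal{H},X}{\star}$ and invertibility of the surjection matrix $L$ in Lemma~\ref{lem:homs linearly independent}---depend only on renaming-equivalence (i.e.\ equivalence after Corollary~\ref{cor:main_counting_equiv}), on the existence of a tensor product on structures (which does exist and preserves homomorphism counts multiplicatively), and on interpolation in finitely many variables. Thus the complexity of $\#\homsprob(\Delta)$ for a class $\Delta$ of quantum queries is determined by the hardest minimal constituent in $\bigcup_{Q\in\Delta}\mathsf{supp}(Q)$, yielding the five cases verbatim; the sole weakening is in case~(1), where a class of quantum queries all of whose constituents are tractable is only fixed-parameter tractable (not polynomial time) because the number of summands depends on the parameter. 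The main obstacle I anticipate is bookkeeping rather than new mathematics: checking that the Gaifman-graph reduction interacts cleanly with the color-prescribed variant, in particular that colorings on the hypergraph side induce admissible colorings on the Gaifman side and vice versa, so that hardness really does transfer through the chain Gaifman graph $\to$ color-prescribed hypergraph $\to$ uncolored hypergraph without losing any structural parameter along the way.
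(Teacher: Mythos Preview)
Your proposal is correct and follows essentially the same route as the paper: cases~1--2 and the hardness half of case~3 from Chen--Mengel, $\#\W$-easiness from Theorem~\ref{thm:algo_structs}, hardness in cases~4--5 via the Gaifman-graph reduction (Lemma~\ref{lem:red_gaifman_graph}) chained with the graph results and the color-prescribed-to-uncolored reduction, $\#\Atwo$-easiness from Section~\ref{sec:a2}, and the quantum extension via complexity monotonicity. The only unnecessary detour is in case~(3): Theorem~\ref{thm:algo_structs} is already stated directly for uncolored $\#\homsprob(\Delta)$, so no passage through the color-prescribed variant is needed there.
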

\begin{proof}
	We have that 1., 2., and $\#\W$-hardness in 3. follow from~\cite{ChenM15}. $\#\W$-easiness in 3. follows from Theorem~\ref{thm:algo_structs}. 4. and hardness in 5. hold as we can apply the corresponding results for the Gaifman graphs (Theorem~\ref{thm:wtwo_hardness_graphs} and Theorem~\ref{thm:atwo_hardness_graphs}) and then reduce to the primal hypergraphs via Lemma~\ref{lem:red_gaifman_graph}. After that we can reduce color-prescribed homomorphisms to uncolored homomorphisms
	completely analogously as we did for graphs (Section~\ref{sec:phom_to_hom}). $\#\Atwo$-easiness follows from the fact that the general problem of counting answers to conjunctive queries is easy for $\#\Atwo$ (see Section~\ref{sec:a2}). Finally, the extension to quantum queries holds by the complexity monotonicity property, which follows by a straight-forward adaption of the proof of Lemma~\ref{lem:new_pods_monotonicity} for hypergraphs. Alternatively, the original but more involved version of Chen and Mengel~\cite{ChenM16} can be applied.
\end{proof}

Now transforming universal and existential positive formulas, possibly with inequalities and non-monotone constraints over the free variables, to a quantum query can also be done completely analogously as for graphs, proving Theorem~\ref{thm:data complexity} and Theorem~\ref{thm:main2_int} in the general case. We point out that, in case of conjunctive queries with inequalities over the free variables, the criterion for a hypergraph to be contained in the support of the linear combination can be stated in terms of vertex contractions along matroid flats similar to the case of graphs (see Theorem~\ref{thm:quantum_graph_for_ineqs}).

\begin{corollary}
	The classification of Theorem~\ref{thm:main_hyper} applies to all recursively enumerable families $\Delta$ of universal and existential positive formulas that might contain inequalities and non-monotone constraints over the free variables. In particular, if $\Delta$ is a set of conjunctive queries with inequalities over the free variables, the criteria for the classification can be stated explicitly as in Theorem~\ref{thm:quantum_graph_for_ineqs}.
\end{corollary}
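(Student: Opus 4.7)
The plan is to mirror, in the hypergraph setting, the three-step transformation from Sections~\ref{sec:cqs_with_ineqs}--\ref{sec:most_general_graphs} that turned each existential/universal positive formula with inequalities and free-variable negations over graphs into a quantum query. Once such a quantum query $Q[\psi,I]$ is produced so that $\#\psi_I(\star)=\#\homs{Q[\psi,I]}{\star}$ holds for every input hypergraph, the quantum-query part of Theorem~\ref{thm:main_hyper} finishes the classification, and the matroid-lattice description in the proof of Theorem~\ref{thm:quantum_graph_for_ineqs} will yield the explicit criterion when only inequalities are present.

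First I would lift the Chen--Mengel transformation (Theorem~\ref{thm:chen_mengel_ep}) from graphs to hypergraphs of bounded arity, giving a quantum query $\qgraphof{\psi}$ for each existential positive $\psi$; the argument in~\cite{ChenM16} is syntax-driven (inclusion--exclusion over conjunctions of atoms), so it transfers verbatim once atoms are allowed to have higher arity. Next I would generalize Corollary~\ref{cor:up_to_ep} by replacing the edge complement of $G$ with the complementary hypergraph $\overline{\mcG}$ already defined in Section~\ref{sec:more_prelims}, together with the all-atom hypergraph $\mathcal{I\!S}_k$ that plays the role of $\mathsf{IS}_k$. After that, the matroid-lattice argument proving Theorem~\ref{thm:quantum_graph_for_ineqs} is purely combinatorial: one contracts free variables along flats of the graphic matroid $M(X,I)$ and applies Rota's NBC theorem to show that each coefficient $\lambda_\rho$ has sign $(-1)^{\mathsf{rk}(\rho)}$ and is therefore non-zero. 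Nothing in that argument uses that edges are binary, so the same proof gives the explicit support description over an arbitrary signature.

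The third and most delicate step is to handle the non-monotone free-variable constraints. Here I would reuse the $\uparrow/\downarrow$ device of Section~\ref{sec:most_general_graphs}: enlarge the signature by an auxiliary symbol $\overline{E}_i$ of the same arity as each $E_i$, define $\mcG_\uparrow$ so that $\overline{E}_i(\mcG_\uparrow)$ is precisely the complement of $E_i(\mcG)$, and note that the three identities of Fact~\ref{fac:help_monotonicity} go through unchanged. This reduces the problem to the positive case handled in the first two steps; an inclusion--exclusion analogous to Lemma~\ref{thm:non_monotone_constraints_cqs} then rewrites the remaining $\neg$-atoms in the resulting conjunctive queries as ordinary positive atoms on $\mcG$. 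The main obstacle, already noted in the graph version, is ensuring that no constituent conjunctive query contains a negation on a quantified variable; this requires checking that the inclusion--exclusion in~\cite{ChenM16} is performed only over subformulas of $\psi_\uparrow/J$, so that the set of variables appearing under a negation is preserved throughout.

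Finally, once $Q[\psi,I]$ has been produced, complexity monotonicity in its hypergraph form (a direct generalization of Lemma~\ref{lem:new_pods_monotonicity}, using tensor products of hypergraphs and the full rank of the matrix $A$ from Lemma~\ref{lem:homs linearly independent}) transfers hardness and easiness between $\#\psi_I(\star)$ and the individual summands $\#\homs{\mathcal{H}_i,X_i}{\star}$ for $(\mathcal{H}_i,X_i)\in\supp(Q[\psi,I])$. Applying Theorem~\ref{thm:main_hyper} to the set of all such summands over $\Delta$ then produces the five cases. In the special case where $\Delta$ consists only of conjunctive queries with inequalities, $\supp(Q[\varphi,I])$ is exactly the set of contractions along flats of $M(X,I)$ up to equivalence, so the five structural conditions (treewidth, contract-treewidth, dominating star size, linked matching number) may be evaluated directly on these contracted queries, yielding the explicit criterion promised by Theorem~\ref{thm:quantum_graph_for_ineqs}.
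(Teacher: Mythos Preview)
Your proposal is correct and follows essentially the same route as the paper. The paper's own justification for this corollary is merely the one-sentence remark preceding it that ``transforming universal and existential positive formulas, possibly with inequalities and non-monotone constraints over the free variables, to a quantum query can also be done completely analogously as for graphs''; you have simply spelled out, step by step, what that analogy amounts to (lifting Theorem~\ref{thm:chen_mengel_ep} and Corollary~\ref{cor:up_to_ep}, the matroid-lattice argument of Theorem~\ref{thm:quantum_graph_for_ineqs}, the $\uparrow/\downarrow$ device and Lemma~\ref{thm:non_monotone_constraints_cqs}, and the hypergraph version of complexity monotonicity already invoked in the proof of Theorem~\ref{thm:main_hyper}).
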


\section{\#A[2]-Normalization and hardness of grates} \label{sec:a2}
In this section, we prove that the problem of counting answers to graphical conjunctive queries is $\#\Atwo$-equivalent.
This result can be seen as a counting analogue of the Normalization Theorem of the $\cc{A}$-Hierarchy due to Flum and Grohe~\cite[Chapter~8]{flumgrohe}. While most of their reductions directly translate to the counting version, one major step can be drastically simplified using the framework of quantum queries, namely the reduction from existential positive formulas to conjunctive queries.

After establishing the normalization result, we are going to show the $\#\Atwo$-hardness of $\#\cphomsprob(\mathsf{Grates})$ for some specific class $\mathsf{Grates}$ of all \emph{grates}. This latter result is the starting point for our reduction in Section~\ref{sec:hard_a_2}.

\subsection{Hardness of counting answers to graphical conjunctive queries}
The class $\#\Atwo$ is formally defined as every problem that has a parameterized parsimonious reduction to the parameterized model counting problem for universal first-order formulas.
From this definition, which we make formal shortly, it is clear that counting answers to graphical conjunctive queries is $\#\Atwo$-easy.
The main difficulty when proving hardness is that the defining problem of~$\#\Atwo$ talks about first-order formulas that may contain relations of unbounded arity, and that may contain (in)equalities and negations (see Chapter~14 in~\cite{flumgrohe}).
More precisely, let~$\Sigma_1$ be the set of all first-order formulas (possibly with (in)equalities and negations) that are of the form
\begin{equation*}
\exists y_1\dots\exists y_\ell: \varphi \,.
\end{equation*}
Here $\varphi$ is quantifier-free and contains a set $X$ of free variables. Similarly, $\Pi_1$ is defined using $\forall$-quantifiers. $\Sigma_1^+$ and $\Pi_1^+$ are the subsets of $\Sigma_1$ and $\Pi_1$ that do not contain negations (when transferred to prenex normal form). Furthermore, given a class $\Phi$ of first-order formulas, we write $\Phi[r]$ for the subset of $\Phi$ that contains only formulas over signatures of arity at most $r$.

Recall that until now, every set $\Phi$ of first-order formulas was assumed to have a constant bound on the arity of every signature of every formula in $\Phi$ and furthermore, that no formula in $\Phi$ contains equalities. In this subsection (and \emph{only} in this subsection), we omit these restrictions. We will emphasize this by using the original notation of Flum and Grohe:\footnote{We refer the reader to Chapt.~4 in~\cite{flumgrohe} for a complete and formal introduction of first-order model checking. We follow their definitions closely, with the exception that we refer to logical structures as (edge-colored) hypergraphs.}

\begin{definition}
	Given a class $\Phi$ of first-order formulas that may contain equalities and that may have unbounded arity, the problem $\#\pmc(\Phi)$ asks, given a formula $\varphi \in \Phi$ with free variables $X$ and a hypergraph $\mathcal{G}$ over the same signature as $\varphi$, to compute the cardinality of the set $\varphi(\mathcal{G})$ of assignments $a:X \rightarrow V(\mathcal{G})$ that make the formula true. The problem is parameterized by $|\varphi|$. In particular, we write $\#\pmc(\Phi[\mathsf{GRAPHS}])$ for the problem of, given a (simple and loopless) graph $G$ and a formula $\varphi \in \Phi$ over the signature of loopless graphs, i.e. $Ezz$ is not allowed as an atom for any variable $z$, computing $\#\varphi(G)$.
\end{definition}
\begin{remark}\label{rem:graph_sig}
	If $\Phi$ is the set of all conjunctive queries, then the problem $\#\pmc(\Phi[\mathsf{GRAPHS}])$ is precisely the problem $\#\homsprob(\mathsf{G})$ where $\mathsf{G}$ is the set of all conjunctive queries over the signature of graphs without self-loops.
\end{remark}
\begin{definition}[{\cite[Chapter~14]{flumgrohe}}]
	The parameterized complexity class $\#\Atwo$ is defined via
	\begin{equation*}
	\#\Atwo = [\#\pmc(\Pi_1)]^{\cc{FPT}} \,,
	\end{equation*}
	where $[\cdot]^{\cc{FPT}}$ is the closure with respect to parameterized parsimonious reductions.
\end{definition}

We are ready to prove a normalization lemma for $\#\Atwo$. To this end, let ${\mathsf{CQ}_=}$ be the set of all conjunctive queries that may contain equalities and $\mathsf{CQ}$ be the subset without equalities.

\begin{lemma}[Normalization for $\#\Atwo$]\label{lem:Atwo_norm}
	It holds that
\begin{equation*}
	\#\pmc(\Pi_1) \leq^{\cc{FPT}}_T \#\pmc(\mathsf{CQ}[\mathsf{GRAPHS}]) \,.
\end{equation*}
\end{lemma}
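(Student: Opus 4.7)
The plan is to follow a counting analogue of the Normalization Theorem for the $\cc{A}$-hierarchy due to Flum and Grohe~\cite{flumgrohe}, chaining together a sequence of parameterized Turing reductions. Most of the steps of their decision-version normalization translate nearly verbatim to the counting setting; exactly one step is significantly simplified by using the quantum-query framework of Section~\ref{sec:lin_combs}.

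Starting from a $\Pi_1$ formula $\varphi$ with $k$ free variables, I would first apply the complementation identity $\#\varphi(\mathcal{G}) = |V(\mathcal{G})|^{k} - \#(\neg\varphi)(\mathcal{G})$ (with $\neg\varphi \in \Sigma_1$) to obtain $\#\pmc(\Pi_1) \leq_T \#\pmc(\Sigma_1)$. Next, I would eliminate negated atoms from $\Sigma_1$ formulas: for every relation symbol $E$ of the signature, I add a fresh symbol $\overline{E}$, replace each subformula $\neg E(\vec{z})$ of the body by $\overline{E}(\vec{z})$, and preprocess the input hypergraph to supply $\overline{E}$ as the complement of $E$---this lifts the $\uparrow/\downarrow$ encoding of Section~\ref{sec:most_general_graphs} to arbitrary bounded arities. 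Negated equalities are handled analogously by introducing a $\mathsf{NEQ}$ relation symbol. After this, the formula is in $\Sigma_1^+$ over an extended but still bounded-arity signature.

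The next step is where quantum queries give a substantial shortcut over the Flum--Grohe approach. I would put the quantifier-free body of the $\Sigma_1^+$ formula in DNF, writing $\varphi = \exists\vec{y}\,\bigvee_{i=1}^{m} C_i$, and apply inclusion-exclusion on the cardinality of a union of solution sets:
\[
  \#\varphi(\mathcal{G}) \;=\; \sum_{\emptyset\neq S\subseteq [m]} (-1)^{|S|+1}\, \#\psi_S(\mathcal{G})\,,
\]
where $\psi_S = \exists\vec{y}^{(S)}: \bigwedge_{i\in S} C_i$ is a conjunctive query obtained by taking fresh copies $\vec{y}^{(i)}$ of the existential block for each disjunct in $S$. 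Each summand is evaluated by a single oracle call, and the number $2^m - 1$ of calls depends only on $|\varphi|$. This yields $\#\pmc(\Sigma_1^+) \leq_T \#\pmc(\mathsf{CQ}_{=})$, where the subscript indicates that the resulting conjunctive queries may still mention the auxiliary $\mathsf{EQ}$ and $\mathsf{NEQ}$ atoms.

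Finally, two clean-up reductions remain. Equalities are removed by variable renaming, and the $\mathsf{NEQ}$ atoms become inequality constraints, which are eliminated by Möbius inversion over the lattice of flats of the graphic matroid on the inequality graph, exactly as in Theorem~\ref{thm:quantum_graph_for_ineqs}. This leaves $\#\pmc(\mathsf{CQ})$ over an arbitrary bounded-arity signature. To reduce to $\#\pmc(\mathsf{CQ}[\mathsf{GRAPHS}])$, I would encode each hyperedge $E_i(v_1,\dots,v_r)$ of the input $\mathcal{G}$ by a rigid graph gadget: a new vertex $u_e$ adjacent to $v_1,\dots,v_r$ through small pendant subgraphs whose shapes uniquely encode the symbol $E_i$ and the slot index $j$, and the query is rewritten to traverse this gadget. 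The main technical obstacle is enforcing the rigidity of the gadget---every conjunctive-query answer in the encoded graph must correspond uniquely to an answer in $\mathcal{G}$---which I would achieve by attaching pendant trees of pairwise non-isomorphic shapes (of size depending only on the parameter) to each incidence, carefully ensuring that no self-loops are created. Composing all of these reductions gives the claimed statement.
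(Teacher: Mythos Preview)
Your outline parallels the paper's proof---complementation from $\Pi_1$ to $\Sigma_1$, stripping negations, passing from existential-positive formulas to conjunctive queries, and eliminating equalities---but you reorder the steps: the paper first invokes Flum and Grohe's chain $\Sigma_1 \to \Sigma_1^+ \to \Sigma_1^+[2] \to \Sigma_1^+[\mathsf{GRAPHS}]$ wholesale (observing that their constructions carry over to counting) and only afterwards applies the Chen--Mengel/quantum-query step to reach $\mathsf{CQ}_=[\mathsf{GRAPHS}]$, whereas you spell out the $\Sigma_1^+\to\mathsf{CQ}$ step yourself over the unreduced signature and defer the graph encoding to the very end.

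There is one concrete gap. You propose to eliminate the $\mathsf{NEQ}$ atoms ``exactly as in Theorem~\ref{thm:quantum_graph_for_ineqs}'', but that theorem applies only to inequalities on \emph{free} variables. After your inclusion-exclusion step the queries $\psi_S$ may contain atoms $\mathsf{NEQ}(y^{(i)}_a,y^{(i)}_b)$ on \emph{quantified} variables (coming from $y_a\neq y_b$ in the original $\Sigma_1$ formula), and for those the contraction identity fails: a partial assignment $a$ can simultaneously admit an extension with $h(y_a)=h(y_b)$ and another with $h(y_a)\neq h(y_b)$, so the two events are not complementary at the level of answers. The easy fix is simply not to invoke M\"obius inversion here: $\mathsf{NEQ}$ is a genuine binary relation symbol with a polynomial-size interpretation, so just carry it into the gadget step. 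A second, more structural concern is your elimination of negated relational atoms by materialising $\overline{E}$: writing down $\overline{E}$ costs $n^{\arity(E)}$ tuples, and since $\Pi_1$ has no universal arity bound (only $\arity(E)\le|\varphi|$) this is XP rather than FPT. The paper sidesteps the issue by importing the Flum--Grohe arity reduction before the conjunctive-query step; in your ordering you would need to perform the reduction to arity~$2$ \emph{before} materialising complements.
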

\begin{proof}
	We construct a sequence of reductions:
	\begin{claim}\label{clm:better_a_two}
		It holds that $\#\pmc(\Pi_1) \leq^{\cc{FPT}}_T \#\pmc(\Sigma_1)$.
	\end{claim}
	\begin{claimproof}
		Let
		\begin{equation*}
		\varphi = x_1 \dots x_k \forall y_1 \dots\forall y_\ell: \psi
		\end{equation*}
		be a formula in $\Pi_1$ such that $\psi$ is quantifier free --- note that $x_1,\dots,x_k$ denote the free variables of~$\psi$ --- and let $\mathcal{G}$ be a hypergraph over the same signature as $\varphi$. Furthermore, let $n = \#V(\mathcal{G})$.

		We define $\varphi' \in \Sigma_1$ as follows:
		\begin{equation*}
		\varphi' = x_1 \dots x_k \exists y_1\dots \exists y_\ell: \neg \psi \,.
		\end{equation*}
		Now it can easily be verified that $\#\varphi(\mathcal{G}) = n^k - \#\varphi'(\mathcal{G})$. This induces the reduction.
	\end{claimproof}

	\begin{claim}
		It holds that $\#\pmc(\Sigma_1) \leq^{\cc{FPT}}_T \#\pmc(\Sigma_1^+[\mathsf{GRAPHS}])$.
	\end{claim}
	\begin{claimproof}
		Flum and Grohe proved the following sequence of reductions for every odd $t \geq 1$ in the decision realm (see Chapter~8 in~\cite{flumgrohe} and \cite{flumgrohe_modelchecking}):
		\begin{equation}
		\pmc(\Sigma_t) \leq^{\cc{FPT}}_T \pmc(\Sigma_t^+)\leq^{\cc{FPT}}_T \pmc(\Sigma_t^+[2]) \leq^{\cc{FPT}}_T \pmc(\Sigma_t^+[\mathsf{GRAPHS}])\,.
		\end{equation}
		A close look reveals that all of the above constructions work as well in the counting world, in particular for $t = 1$.
	\end{claimproof}

	\begin{claim}\label{clm:easy_simplification}
		It holds that $\#\pmc(\Sigma_1^+[\mathsf{GRAPHS}]) \leq^{\cc{FPT}}_T \#\pmc(\mathsf{CQ}_=[\mathsf{GRAPHS}])$.
	\end{claim}
	\begin{claimproof}
		Every formula $\varphi \in \Sigma_1^+[\mathsf{GRAPHS}]$ is an existential-positive formula. Hence we can express $\varphi$ as a linear combination of conjunctive queries over the same signature as in~\cite{ChenM16}. We then use the oracle for $\#\pmc(\mathsf{CQ}_=[\mathsf{GRAPHS}])$ to compute every term in the linear combination.
	\end{claimproof}

	\begin{claim}
		It holds that $\#\pmc(\mathsf{CQ}_=[\mathsf{GRAPHS}]) \leq^{\cc{FPT}}_T \#\pmc(\mathsf{CQ}[\mathsf{GRAPHS}])$.
	\end{claim}
	\begin{claimproof}
		An equality $z_1=z_2$ in a conjunctive query can easily be simulated by substituting every occurrence of $z_2$ by $z_1$ and removing the equality afterwards. If the substitution leads to self-loops we can just output zero, as the input graphs are loopless.
	\end{claimproof}
	This concludes the proof of the normalization lemma.
\end{proof}
The proof of Claim~\ref{clm:easy_simplification} is considerably easier than its analogue in the decision world as we were able to make use of the framework of quantum queries. The proof of Claim~\ref{clm:better_a_two} actually shows that $\#\pmc(\Pi_1)$ and $\#\pmc(\Sigma_1)$ are interreducible from which we conclude that every problem we considered in this paper is $\#\Atwo$-easy.

\subsection{Hardness of counting vertex sets matching to a clique}\label{sec:gamma_hard}
Let $\Gamma$ be the class of the following conjunctive queries
\begin{equation}
	\gamma_k := x_1 \dots x_k \exists y_1\dots\exists y_k : \bigwedge_{i=1}^k Ex_iy_i \wedge \bigwedge_{1 \leq i < j \leq k} Ey_iy_j
\end{equation}

Recall that $\mathsf{G}$ is the set of all conjunctive queries over the signature of graphs without self-loops. Using Remark~\ref{rem:graph_sig} we can state Lemma~\ref{lem:Atwo_norm} in terms of $\#\homsprob$ as follows.

\begin{corollary}\label{cor:graphs_hard}
	The problem $\#\homsprob(\mathsf{G})$ is $\#\Atwo$-equivalent.
\end{corollary}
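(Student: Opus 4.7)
The plan is to extract this corollary directly from the machinery already built in this section, combining Lemma~\ref{lem:Atwo_norm} (the normalization) with Remark~\ref{rem:graph_sig} for hardness, and mimicking Claim~\ref{clm:better_a_two} for easiness.

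For $\#\Atwo$-hardness, I would first invoke Remark~\ref{rem:graph_sig} to identify $\#\pmc(\mathsf{CQ}[\mathsf{GRAPHS}])$ with $\#\homsprob(\mathsf{G})$. Then, since $\#\Atwo$ is defined as $[\#\pmc(\Pi_1)]^{\cc{FPT}}$, i.e.\ the closure of $\#\pmc(\Pi_1)$ under parameterized Turing-reductions, it suffices to exhibit a parameterized Turing-reduction from $\#\pmc(\Pi_1)$ to $\#\homsprob(\mathsf{G})$. This is exactly the content of Lemma~\ref{lem:Atwo_norm}. Since parameterized Turing-reductions compose, every problem in $\#\Atwo$ reduces to $\#\homsprob(\mathsf{G})$.

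For $\#\Atwo$-easiness, I would show $\#\homsprob(\mathsf{G}) \leq^{\cc{FPT}}_T \#\pmc(\Pi_1)$ by literally repeating the argument of Claim~\ref{clm:better_a_two} in the reverse direction. Concretely, given a conjunctive query $\varphi = x_1\dots x_k \exists y_1\dots\exists y_\ell : \psi \in \mathsf{CQ}[\mathsf{GRAPHS}]$ and a graph $G$ on $n$ vertices, I define $\varphi' := x_1\dots x_k \forall y_1\dots\forall y_\ell : \neg\psi$, which lies in $\Pi_1$ (here we legitimately use negation, since $\Pi_1$ allows it). An assignment $a\colon \{x_1,\dots,x_k\} \to V(G)$ satisfies $\varphi$ if and only if it does \emph{not} satisfy $\varphi'$, so
\begin{equation*}
\#\varphi(G) \;=\; n^k - \#\varphi'(G)\,,
\end{equation*}
which is computed by a single oracle call to $\#\pmc(\Pi_1)$ together with a polynomial-time postprocessing step. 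The parameter $|\varphi'|$ is bounded by a computable function of $|\varphi|$, so this is a valid parameterized Turing-reduction, and hence $\#\homsprob(\mathsf{G}) \in \#\Atwo$.

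Combining the two directions yields that $\#\homsprob(\mathsf{G})$ is $\#\Atwo$-equivalent. I do not anticipate any serious obstacle here: the normalization lemma does all the heavy lifting, and the easiness direction is a one-line complementation argument that mirrors the proof of Claim~\ref{clm:better_a_two}. The only subtle point to check is that the oracle query $\varphi'$ produced in the easiness direction really lies in $\Pi_1$ (and not in some smaller class); this is clear because we allow arbitrary negations in the quantifier-free body of formulas in $\Pi_1$, whereas the original $\varphi \in \mathsf{CQ}$ had a purely positive body.
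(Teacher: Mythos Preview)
Your proposal is correct and mirrors the paper's own argument: hardness via Lemma~\ref{lem:Atwo_norm} together with Remark~\ref{rem:graph_sig}, and easiness via the complementation trick of Claim~\ref{clm:better_a_two} (which the paper notes, right after the proof of Lemma~\ref{lem:Atwo_norm}, works in both directions and hence makes all problems considered $\#\Atwo$-easy). One small slip: $[\cdot]^{\cc{FPT}}$ in the paper is the closure under parameterized \emph{parsimonious} reductions, not Turing-reductions; this does not affect your argument, since a parsimonious reduction from any $P\in\#\Atwo$ to $\#\pmc(\Pi_1)$ composed with your Turing-reduction still yields a Turing-reduction from $P$ to $\#\homsprob(\mathsf{G})$.
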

In Section~\ref{sec:minors_and_colors} we proved that $\#\cphomsprob(\Delta)$ reduces to $\#\homsprob(\Delta)$ whenever $\Delta$ contains only minimal queries. The next lemma states that the reverse direction holds unconditionally.
\begin{lemma} \label{lem:hom_to_phom}
	Let $(H,X)$ be a conjunctive query. Then there exists a algorithm~$\mathbb{A}$ with oracle access to $\#\cphoms{H,X}{\star}$ that computes $\#\homs{H,X}{\star}$. Furthermore $\mathbb{A}$ runs in time $O(f(|H,X|)\cdot n^c)$ for some computable function $f$ and some constant $c$ independent of $(H,X)$.
\end{lemma}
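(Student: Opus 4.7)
\medskip

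\noindent\textbf{Proof proposal.} The plan is to construct a single ``colored blow-up'' $G'$ of the input graph $G$ with the property that (partial) color-prescribed homomorphisms from $(H,X)$ into $G'$ correspond bijectively to (partial) uncolored homomorphisms from $(H,X)$ into $G$. Then a single oracle call to $\#\cphoms{H,X}{G'}$ yields $\#\homs{H,X}{G}$.

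Concretely, I would define $G'$ by setting $V(G') := V(G) \times V(H)$ and, for every $uv \in E(G)$ and every $ab \in E(H)$, inserting both edges $(u,a)(v,b)$ and $(u,b)(v,a)$ into $E(G')$. Defining $c\colon V(G')\to V(H)$ by $c(v,a)=a$ makes $c$ a graph homomorphism by construction, so $G'$ is an $H$-colored graph as required by the oracle. The graph $G'$ has $|V(H)|\cdot |V(G)|$ vertices and at most $|V(H)|^2 \cdot|E(G)|$ edges, so it can be built in time $f(|H,X|)\cdot n^{c}$ for some constant~$c$ and computable function~$f$.

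Next, I would exhibit the bijection $\Phi\colon \homs{H,X}{G} \to \cphoms{H,X}{G'}$ given by $\Phi(a)(x) := (a(x), x)$. If $a$ is extended to a homomorphism $h\colon V(H)\to V(G)$, then the lift $h'\colon u\mapsto (h(u),u)$ is color-prescribed by construction, and is a homomorphism since $uv\in E(H)$ and $h(u)h(v)\in E(G)$ together imply $(h(u),u)(h(v),v)\in E(G')$; thus $\Phi(a)\in \cphoms{H,X}{G'}$. Conversely, every partial color-prescribed homomorphism $a'\colon X\to V(G')$ must have $a'(x)\in V(G)\times\{x\}$ by the color-prescription, so projecting onto the first coordinate yields a function $a\colon X\to V(G)$ whose extension $h'$ (color-prescribed, hence of the form $u\mapsto (w_u,u)$) projects to a homomorphism $h\colon u\mapsto w_u$ from $H$ to $G$; this projection is the inverse of $\Phi$. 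Hence $\#\cphoms{H,X}{G'} = \#\homs{H,X}{G}$, which the algorithm outputs.

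There is essentially no hard step here: the construction is elementary and a single oracle query suffices. The only point that needs a moment of care is verifying that the bijection is truly a bijection between \emph{partial} homomorphisms viewed as functions, which follows because the color-prescription uniquely determines the second coordinate of each image $a'(x)=(\cdot,x)$, so distinct $a$ lift to distinct $\Phi(a)$ and vice versa. The running time bound $O(f(|H,X|)\cdot n^{c})$ is immediate from the size of $G'$ and the fact that we perform only one oracle call.
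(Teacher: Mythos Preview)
Your proposal is correct and takes essentially the same approach as the paper: both construct $G'$ as (a description of) the tensor product $G\otimes H$, color it via the projection onto $V(H)$, and observe that color-prescribed partial homomorphisms from $(H,X)$ to $G'$ are in bijection with uncolored partial homomorphisms from $(H,X)$ to $G$ via the first-coordinate projection. The paper additionally singles out the degenerate case where $G$ has no edges (handled by brute force), but your argument already covers this case correctly.
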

\begin{proof}
	Given a graph $G$ for which we want to compute $\#\homs{H,X}{G}$, we construct a $H$-colored graph $G'$ as follows:

	We first copy the vertex set $V(G)$ exactly $\#V(H)$ times and color the copies according to the vertices of $H$. If we write~$v(G')$ for the set of vertices in~$G'$ whose color is~$v\in V(G)$. Hence the vertices of $G'$ are partitioned into $\mathcal{P}=\{v(G')~|~v \in V(H)\}$.
	Now let $\{u,v\}$ be an edge of $H$. Then, for every $a \in u(H')$ and $b \in v(H')$ we add the edge $\{a,b\}$ to $G'$ if and only if (the initial vertices) $a$ and $b$ have been adjacent in $G$.
Note that this reduction yields indeed a $H$-colored graph, except for the case that $G$ contains no edge. In this case, however, we can compute $\#\homs{H,X}{G}$ by ``brute-force'' in linear time in $|V(G)|$. Otherwise, it can easily be verified that $\#\homs{H,X}{G}$ equals $\#\cphoms{H,X}{G'}$.
\end{proof}

The next lemma is required as a starting point for the reduction in Section~\ref{sec:hard_a_2}. The proof is a straight-forward application of the minor reduction of Section~\ref{sec:minors_and_colors}.

\begin{lemma}\label{lem:gamma_hard}
	$\#\cphomsprob(\Gamma)$ is $\#\Atwo$-hard.
\end{lemma}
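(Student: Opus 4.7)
The plan is to exhibit a chain of reductions from a problem already known to be $\#\Atwo$-hard down to $\#\cphomsprob(\Gamma)$. The starting point will be Corollary~\ref{cor:graphs_hard}, according to which $\#\homsprob(\mathsf{G})$ --- counting answers to arbitrary graphical conjunctive queries over the signature of simple loopless graphs --- is $\#\Atwo$-hard. I would first invoke Lemma~\ref{lem:hom_to_phom} to pass to the color-prescribed variant, reducing the task to showing $\#\cphoms{H,X}{\star} \leq^T \#\cphoms{\gamma_k}{\star}$ for every graphical conjunctive query $(H,X)$ and some $k=k(H)$. By Lemma~\ref{lem:minor_closed} --- the ``minor reduction'' mentioned in the proof sketch --- it then suffices to exhibit each $(H,X)$ as a minor of some $\gamma_k$.

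The structural claim I would establish is that every graphical conjunctive query $(H,X)$ with $n:=|V(H)|$ is a minor of $\gamma_n$. Writing $V(H)=X\dotcup Y$ with $X=\{x_1,\dots,x_a\}$ and $Y=\{y_1,\dots,y_b\}$, I would index the free variables of $\gamma_n$ as $\hat{x}_v$ and the quantified variables as $\hat{y}_v$ for $v\in V(H)$, so that the only edges of $\gamma_n$ are the matching $\{\hat{x}_v,\hat{y}_v\}$ together with the clique on the $\hat{y}_v$'s. For each $y_j\in Y$, delete the matching edge $\{\hat{x}_{y_j},\hat{y}_{y_j}\}$ and then the (now isolated) vertex $\hat{x}_{y_j}$; for each $x_i\in X$, contract the matching edge $\{\hat{x}_{x_i},\hat{y}_{x_i}\}$, which by the contraction rule for graphical conjunctive queries yields a free vertex $x_i^\ast$. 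After these operations, every pair of remaining vertices is adjacent --- the $\hat{y}_{y_j}$'s among themselves because of the surviving clique, the $x_i^\ast$'s among themselves because $\hat{y}_{x_i}$ and $\hat{y}_{x_j}$ were clique-adjacent, and each $x_i^\ast$ to each $\hat{y}_{y_j}$ for the same reason. Finally, delete every edge whose endpoints are not adjacent in $H$; under the identification $x_i^\ast\mapsto x_i$ and $\hat{y}_{y_j}\mapsto y_j$, this recovers exactly $(H,X)$.

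The main obstacle, or at least the one place that genuinely requires care, is verifying that this minor construction simultaneously captures all three possible types of edges of $H$: within $Y$, across $X$--$Y$, and within $X$. The first is present from the start as part of the clique on the $\hat{y}_v$'s; the latter two are produced only by the contractions of matching edges of $\gamma_n$, with the $X$--$X$ case being the slightly subtle one, since no two free variables of $\gamma_n$ are adjacent and the needed adjacency must be reconstructed from the clique edges connecting their matched $\hat{y}$-partners. Once this combinatorial check is in place, composing Corollary~\ref{cor:graphs_hard}, Lemma~\ref{lem:hom_to_phom} and Lemma~\ref{lem:minor_closed} delivers the $\#\Atwo$-hardness of $\#\cphomsprob(\Gamma)$.
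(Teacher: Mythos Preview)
Your proposal is correct and follows essentially the same route as the paper's own proof: reduce from $\#\homsprob(\mathsf{G})$ (Corollary~\ref{cor:graphs_hard}) to $\#\cphomsprob(\Gamma)$ by showing that every graphical conjunctive query $(H,X)$ arises as a minor of some $\gamma_k$ and then invoking Lemma~\ref{lem:minor_closed}. The paper goes through $\gamma_{2k}$ and an intermediate ``complete'' query on $k$ free and $k$ quantified vertices, while you work with $\gamma_{|V(H)|}$ and build the minor directly; you are also more explicit about using Lemma~\ref{lem:hom_to_phom} to pass from the uncolored to the color-prescribed problem, a step the paper leaves implicit. These are cosmetic differences, not substantive ones.
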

\begin{proof}
	Given $\gamma_{2k}$, we can contract $y_i$ to $x_i$ for $i=1,\dots, k$ and then delete $x_{k+1},\dots,x_{2k}$. The resulting minor is the conjunctive query with $k$ free variables and $k$ quantified variables containing every edge between two vertices. It can easily be seen that every query with $\leq k$ free and $\leq k$ quantified variables is a minor of this query. Hence, by Lemma~\ref{lem:minor_closed} and Corollary~\ref{cor:graphs_hard}, we have that $\#\cphomsprob(\Gamma)$ is $\#\Atwo$-hard.
\end{proof}

\subsection{Hardness of counting answers to grates}\label{sec:grates_hard}

In what follows we provide the formal proof of Lemma~\ref{lem:grates_are_hard}.
\begin{lemma}[Lemma~\ref{lem:grates_are_hard} restated]\label{lem:grate_reduction}
  Let $\omega_k$ be the \kkite{k}.
  There is a Turing reduction from $\#\cphoms{\gamma_k}{\star}$ to $\#\cphoms{\omega_k}{\star}$ which runs in time $O(k^2\cdot n^2)$, where $n$ is the number of vertices of the input graph.
\end{lemma}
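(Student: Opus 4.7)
The plan is to follow the sketch given in Lemma~\ref{lem:grates_are_hard}: construct, from a $\gamma_k$-colored graph $G$, a $\omega_k$-colored graph $G'$ satisfying $\#\cphoms{\gamma_k}{G} = \#\cphoms{\omega_k}{G'}$. The idea is that both queries demand a matching from $x_1,\dots,x_k$ into a structure on the quantified side (a clique for $\gamma_k$, the diagonal of a triangular half-grid for $\omega_k$), so we leave the $x$-side of $G$ essentially untouched and rebuild the $y$-side so that a $k$-clique in $G$ is replayed as a grid-like witness in $G'$.

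First I would define $G'$ explicitly. Index the quantified variables of $\omega_k$ as $y^i_j$ with $i+j \le k-1$, so the diagonal is $\{y^i_{k-1-i}\}_{i=0}^{k-1}$. For each diagonal position, the color class $y^i_{k-1-i}(G')$ is just a copy of $c_G^{-1}(y_{i+1})$, and the edges between free-variable and diagonal-vertex color classes are inherited verbatim from $G$. For each off-diagonal position $(i,j)$ with $i+j < k-1$, I introduce one fresh vertex $v_{\{u,w\}}^{(i,j)}$ for every edge $\{u,w\} \in E(G)$ with $c_G(u)=y_{i+1}$ and $c_G(w)=y_{k-j}$ (i.e.\ the edge of the clique that the grid position is supposed to represent). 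The grid edges of $\omega_k$ are then realised by adding, for each pair of grid-adjacent positions $(i,j)$ and $(i',j')$, the edges between their $G'$-vertices whose ``shared coordinate'' (the endpoint index that is common to both positions in the grid) lands on the \emph{same} $G$-vertex. This local identification rule is what makes the construction work.

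Next I would prove the bijection $\cphoms{\gamma_k}{G} \leftrightarrow \cphoms{\omega_k}{G'}$. Given $h \in \cphoms{\gamma_k}{G}$, I set $x^i_{k-1-i} \mapsto h(x_{i+1})$, $y^i_{k-1-i} \mapsto h(y_{i+1})$, and each off-diagonal $y^i_j$ to the unique vertex $v_{\{h(y_{i+1}),h(y_{k-j})\}}^{(i,j)}$, which exists because $h(y_{i+1})h(y_{k-j}) \in E(G)$ by $\gamma_k$-ness; the local identification rule is satisfied by construction. Conversely, given $h' \in \cphoms{\omega_k}{G'}$, the grid-adjacency edges force off-diagonal images in each row and column of $\omega_k$ to be consistent with the corresponding diagonal images, so each $h'(y^i_j)$ witnesses an edge between $h'(y^i_{k-1-i})$ and $h'(y^{k-1-j}_j)$ in $G$; traversing the grid column by column shows that the diagonal images form a $k$-clique, yielding the corresponding $h \in \cphoms{\gamma_k}{G}$. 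The two maps are manifestly inverse to each other.

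For the running time, there are $\binom{k+1}{2} = O(k^2)$ grid positions, each contributing at most $|E(G)| \le n^2$ vertices, and the adjacency tests between grid-adjacent positions can be carried out in $O(n^2)$ per pair, giving the claimed $O(k^2 \cdot n^2)$ bound. The main obstacle will be the backward direction of the bijection: the local identification rule only constrains grid-adjacent pairs, so one must argue that ``propagation'' along rows and columns of the triangular half-grid globally forces all off-diagonal images to consistently point to the same diagonal endpoints. This is exactly the content of the triangular layout — every pair of distinct diagonal positions is linked by a unique monotone grid path of off-diagonal positions, and each step of that path can change only one of the two coordinates, which pins down the intended clique edge uniquely.
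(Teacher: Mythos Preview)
Your proposal is correct and follows essentially the same GridTiling-style construction as the paper: diagonal grate positions hold copies of the $y_i$-classes, off-diagonal positions hold vertices encoding edges of $G$ between the appropriate $y$-classes, and grid-adjacent positions are connected exactly when they agree on their shared coordinate, so that propagation along rows and columns forces the diagonal images to form a $k$-clique. The only cosmetic difference is that the paper places a vertex for \emph{every} edge of $E^Y(G)$ at each off-diagonal position (and lets the adjacency rule do all the filtering), whereas you restrict each off-diagonal position to the edges between the two specific $y$-classes it is meant to represent; both variants yield the same bijection and the same running-time bound.
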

\begin{proof}
	We follow the lines of the $\#\W$-hardness proof of the Grid-Tiling problem.

	Let $(H_\gamma,X)$ be the query graph of $\gamma_k$. Without loss of generality we assume that the $k$ free variables of $\gamma_k$ are labeled $x^0_{k-1},x^1_{k-2},\dots,x^{k-1}_0$, that the $k$ quantified variables of $\gamma_k$ are labeled $y^0_{k-1},y^1_{k-2},\dots,y^{k-1}_0$ and that the atoms of $\gamma_k$ are of the form $Ey^i_jx^i_j$. Then it is well-defined to write $(H_\omega,X)$ for the query graph of $\omega_k$, i.e., the free variables coincide. Furthermore we have that the quantified variables of $\gamma_k$ are a subset of the quantified variables of $\omega_k$. In particular it holds that $V(H_\gamma) \subseteq V(H_\omega)$. While we give a formal construction of the reduction in the remainder of the proof, we encourage the reader to first consider the example in Figure~\ref{fig:kite_reduction} to get an intuition.
	\begin{figure}[ptb]
		\centering
		\begin{tikzpicture}[scale=0.5,-,thick]
			\node[circle,inner sep=1pt,fill,label={[label distance = -1mm]315:{\small $1$}}] (1) at (0,8) {};
			\node[circle,inner sep=1pt] (11) at (-0.75,8) {\small $\hdots$};
			\node[circle,inner sep=1pt,fill,label={[label distance = -1mm]315:{\small $2$}}] (2) at (0,4) {};
			\node[circle,inner sep=1pt] (22) at (-0.75,4) {\small $\hdots$};
			\node[circle,inner sep=1pt,fill,label={[label distance = -1mm]315:{\small $3$}}] (3) at (0,0) {};
			\node[circle,inner sep=1pt] (33) at (-0.75,0) {\small $\hdots$};
			\draw(1)--(2);\draw(3)--(2);\draw[bend right=40](1) to (3);

			\node[circle,inner sep=1pt,fill,label={[label distance = -1mm]315:{\small $\alpha$}}] (4) at (4,8) {};
			\node[circle,inner sep=1pt] (44) at (4.75,8) {\small $\hdots$};
			\node[circle,inner sep=1pt,fill,label={[label distance = -1mm]315:{\small $\beta$}}] (5) at (4,4) {};
			\node[circle,inner sep=1pt] (55) at (4.75,4) {\small $\hdots$};
			\node[circle,inner sep=1pt,fill,label={[label distance = -1mm]315:{\small $\gamma$}}] (6) at (4,0) {};
			\node[circle,inner sep=1pt] (66) at (4.75,0) {\small $\hdots$};
			\draw(1)--(4);\draw(5)--(2);\draw(6) to (3);

			\draw (0,0) circle (35pt);
			\draw (0,4) circle (35pt);
			\draw (0,8) circle (35pt);

			\draw (4,0) circle (35pt);
			\draw (4,4) circle (35pt);
			\draw (4,8) circle (35pt);

			\node[circle,inner sep=1pt,fill,label={[label distance = -2mm]315:{\tiny $(1,1)$}}] (7) at (18,8) {};
			\node[circle,inner sep=1pt] (77) at (17.25,8) {\small $\hdots$};
			\node[circle,inner sep=1pt,fill,label={[label distance = -2mm]315:{\tiny $(2,2)$}}] (8) at (18,4) {};
			\node[circle,inner sep=1pt] (88) at (17.25,4) {\small $\hdots$};
			\node[circle,inner sep=1pt,fill,label={[label distance = -2mm]315:{\tiny $(3,3)$}}] (9) at (18,0) {};
			\node[circle,inner sep=1pt] (99) at (17.25,0) {\small $\hdots$};

			\node[circle,inner sep=1pt,fill,label={[label distance = -2mm]135:{\tiny $(1,2)$}}] (13) at (14,6) {};
			\node[circle,inner sep=1pt] (1313) at (14.75,6) {\small $\hdots$};
			\node[circle,inner sep=1pt,fill,label={[label distance = -2mm]225:{\tiny $(2,3)$}}] (14) at (14,2) {};
			\node[circle,inner sep=1pt] (1414) at (14.75,2) {\small $\hdots$};
			\node[circle,inner sep=1pt,fill,label={[label distance = -2mm]135:{\tiny $(1,3)$}}] (15) at (10,4) {};
			\node[circle,inner sep=1pt] (1515) at (10.75,4) {\small $\hdots$};

			\node[circle,inner sep=1pt,fill,label={[label distance = -1mm]315:{\small $\alpha$}}] (10) at (22,8) {};
			\node[circle,inner sep=1pt] (1010) at (22.75,8) {\small $\hdots$};
			\node[circle,inner sep=1pt,fill,label={[label distance = -1mm]315:{\small $\beta$}}] (11) at (22,4) {};
			\node[circle,inner sep=1pt] (1111) at (22.75,4) {\small $\hdots$};
			\node[circle,inner sep=1pt,fill,label={[label distance = -1mm]315:{\small $\gamma$}}] (12) at (22,0) {};
			\node[circle,inner sep=1pt] (1212) at (22.75,0) {\small $\hdots$};

			\draw(7)--(13);\draw(7)--(10);\draw(15)--(13);\draw(15)--(14);\draw(8)--(11);
			\draw(8)--(13);\draw(8)--(14);\draw(9)--(14);\draw(9)--(12);

			\draw (10,4) circle (35pt);
			\draw (14,2) circle (35pt);
			\draw (14,6) circle (35pt);

			\draw (22,0) circle (35pt);
			\draw (22,4) circle (35pt);
			\draw (22,8) circle (35pt);

			\draw (18,0) circle (35pt);
			\draw (18,4) circle (35pt);
			\draw (18,8) circle (35pt);

			\node (100) at (1.4,-1.4) { $y^2_0(G)$};
			\node (100) at (1.4,2.6) { $y^1_1(G)$};
			\node (100) at (1.4,6.6) { $y^0_2(G)$};
			\node (100) at (5.4,-1.4) { $x^2_0(G)$};
			\node (100) at (5.4,2.6) { $x^1_1(G)$};
			\node (100) at (5.4,6.6) { $x^0_2(G)$};

			\node (100) at (19.4,-1.4) { $y^2_0(G')$};
			\node (100) at (19.4,2.6) { $y^1_1(G')$};
			\node (100) at (19.4,6.6) { $y^0_2(G')$};
			\node (100) at (23.4,-1.4) { $x^2_0(G')$};
			\node (100) at (23.4,2.6) { $x^1_1(G')$};
			\node (100) at (23.4,6.6) { $x^0_2(G')$};

			\node (100) at (14,7.75) { $y^0_1(G')$};
			\node (100) at (14,0.25) { $y^1_0(G')$};
			\node (100) at (10,5.75) { $y^0_0(G')$};

		\end{tikzpicture}
		\caption{\label{fig:kite_reduction} Illustration of the construction of $G'$ for $k=3$. The graph $G$ (\emph{left}) is $\gamma_3$-colored and the mapping $a= \{x^0_2 \mapsto \alpha, x^1_1 \mapsto \beta, x^2_0 \mapsto \gamma\}$ is contained in $\cphoms{H_\gamma, X}{G}$. The graph~$G'$ (\emph{right}) is $\omega_3$-colored and $a$ is contained in $\cphoms{H_\omega, X}{G}$ as well.}
	\end{figure}
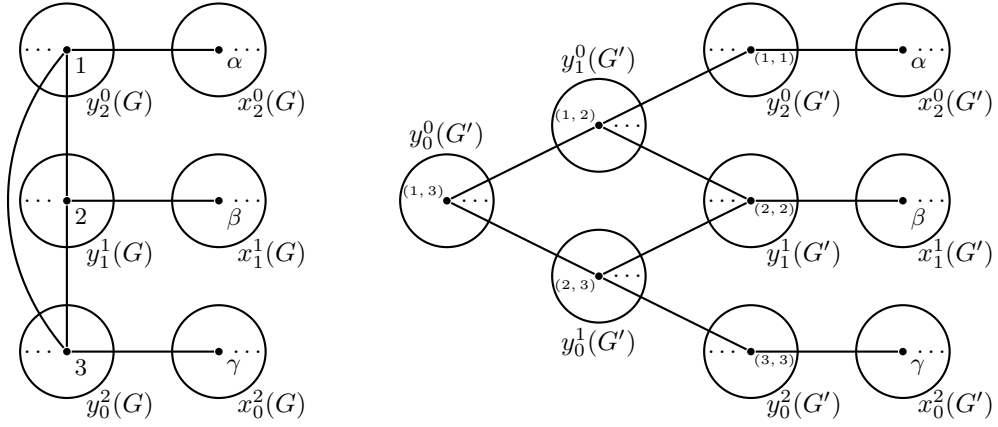

	Now let $G$ be a $\gamma_k$-colored graph for which we want to compute $\#\cphoms{H_\gamma, X}{G}$. Moreover we let $E^Y(G)$ denote the subset of edges of $G$ that connect two vertices colored with quantified variables of $\gamma_k$. We construct a $\omega_k$-colored graph $G'$ as follows:
	\begin{enumerate}
		\item For every pair $i,j$ with $i+j = k-1$ we add the vertices in $x^i_j(G)$ to $G'$ and preserve the color.
		\item For every pair $i,j$ with $i+j = k-1$ and for every vertex $u \in y^i_j(G)$ we add the vertex $(u,u)$ to $G'$ and preserve its color.
		\item For every edge $\{u,v\} \in E(G)$ such that $u$ is colored with $y^i_j$ and $v$ is colored with $x^{i'}_{j'}$ for some $i,j,i',j'$, we add the edge $\{(u,u),v\}$ to $G'$.
		\item For every pair $i,j$ with $i+j < k-1$ we add vertices $\{(u,u') ~|~\{u,u'\} \in E^Y(G) \}$ and color them with $y^i_j$. Note that this yields two vertices $(u,u')$ and $(u',u)$ for every edge $\{u,u'\}$ between vertices colored with quantified variables  in $G$.
		\item For every pair $i,j$ with $i+j < k$ and $j < k-1$, we add an edge between $(u,u')\in y^i_j(G')$ and $(v,v') \in y^i_{j+1}(G')$ if and only if $u=v$.
		\item Similarly, for every pair $i,j$ with $i+j < k$ and $i < k-1$, we add an edge between $(u,u')\in y^i_j(G')$ and $(v,v') \in y^{i+1}_{j}(G')$ if and only if $u'=v'$.
	\end{enumerate}
	
	\noindent Now $G'$ is $\omega_k$-colored and we claim that
	\begin{equation*}
	\cphoms{H_\gamma, X}{G} = \cphoms{H_\omega, X}{G'}\,.
	\end{equation*}
	For the first direction let $a : X \rightarrow V(G)$ in $\cphoms{H_\gamma, X}{G}$. Write $v(G)$ for the set of vertices in~$G$ that have color~$v$. As $a$ can be extended to a homomorphism $h$ that satisfies $h(v)\in v(G)$ for all $v \in V(H_M)$, it holds that $a(x^i_j) \in x^i_j(G)$ for every pair $i,j$ with $i+j = k-1$. Now let $u^i_j=h(y^i_j)$ for every pair $i,j$ with $i+j = k-1$. We construct a homomorphism $h':V(H_\omega) \rightarrow V(G')$ as follows: (1) $h'$ coincides with $h$ (or $a$, respectively) on $X$, (2) $h'(y^i_j) = (u^i_j,u^i_j)$ for every pair $i,j$ with $i+j = k-1$ and (3) $h'(y^i_j) = (h(y^i_{j+1}),h(y^{i+1}_j))$ for every pair $i,j$ with $i+j < k-1$. Note that Step~(3) is well-defined as the image of vertices in $H_\gamma$ corresponding to quantified variables is a clique (of size $k$) in $G$, because otherwise $h$ would be no homomorphism. By construction of $G'$ it holds that $h'$ is a homomorphism satisfying $h'(v) \in v(G')$ for every $v \in V(H_\omega)$. Hence $a \in \cphoms{H_\omega, X}{G'}$.

	For the other direction let $a: X \rightarrow V(G')$ in $\cphoms{H_\omega, X}{G'}$ and let $h': V(H_\omega) \rightarrow V(G')$ be the homomorphism extending $a$ that satisfies $h'(v) \in v(G')$ for every $v \in V(H_\omega)$. Now let $(u^i_j,u^i_j)=h'(y^i_j)$ for every pair $i,j$ with $i+j = k-1$. We construct a homomorphism $h:V(H_\gamma) \rightarrow V(G)$ as follows: (1) $h$ coincides with $h'$ (or $a$, respectively) on $X$ and (2) $h(y^i_j) = u^i_j$ for every pair $i,j$ with $i+j = k-1$. Now it can easily be seen that $h$ is indeed a homomorphism: There must be an edge between every pair of vertices $h(y^i_j)$ and $h(y^{i'}_{j'})$ for $i+j=i'+j'=k-1$ as otherwise there would be no path through the half-grid in $G'$ connecting $(u^i_j,u^i_j)$ and $(u^{i'}_{j'},u^{i'}_{j'})$. Hence $a \in \cphoms{H_\gamma, X}{G}$.

	We conclude that $\mathbb{A}$ just constructs $G'$ from $G$, which takes time $O(k^2\cdot n^2)$, and then queries the oracle for $G'$.
\end{proof}

\section{Conclusions}
We established a comprehensive classification of the complexity of counting answers to conjunctive queries and linear combinations thereof. Depending on the structural parameters of the class of allowed queries, the problem is either fixed-parameter tractable, $\W$-equivalent, $\#\W$-equivalent, $\#\Wtwo$-hard or $\#\Atwo$-equivalent. This classification, however, leaves out a gap between the latter two cases. 
More precisely, the following question remains open.
\begin{center}
\textit{Does a class of conjunctive queries $\Delta$ exist for which $\#\homsprob(\Delta)$ is $\#\Wtwo$-hard\newline but neither equivalent for $\#\Wtwo$ nor for $\#\Atwo$?}
\end{center}
We conjecture a positive answer; the interested reader is encouraged to make themself familiar with the parameterized complexity class $\mathsf{W}_{\mathsf{func}}[2]$ (see e.g.\ \cite[Chapter~8.8]{flumgrohe}). This class has a canonical counting version which we call $\#\mathsf{W}_{\mathsf{func}}[2]$ and which interpolates between $\#\Wtwo$ and $\#\Atwo$. In particular, we conjecture that there exists a class of conjunctive queries $\Delta$ for which $\#\homsprob(\Delta)$ is $\#\mathsf{W}_{\mathsf{func}}[2]$-equivalent. Consequently, a negative answer to the previous question would imply that either $\#\mathsf{W}_{\mathsf{func}}[2] = \#\Wtwo$ or $\#\mathsf{W}_{\mathsf{func}}[2] = \#\Atwo$, which seems to be very unlikely (see e.g.\ the discussion of $\mathsf{W}_{\mathsf{func}}[2]$ in~\cite[Chapter~8.8]{flumgrohe}).

A further question that remains open, and which should be considered a stronger version of the previous question, reads as follows:
\begin{center}
\textit{Does a class of conjunctive queries $\Delta$ exist such that $\Delta$ has bounded linked matching number and the problem $\#\homsprob(\Delta)$ is $\#\Atwo$-equivalent?}
\end{center}
In other words, the above question asks whether the absence of a bound on the linked matching number is not only sufficient, but also necessary for $\#\Atwo$-equivalence. In contrast to the previous question, we conjecture a negative answer. Let us provide some intuition for the latter conjecture: It seems that a constant bound on the linked matching number of a class of conjunctive queries $\Delta$ yields a separator decomposition of the quantified variables of queries in $\Delta$ in components that have either small treewidth or a small matching number to the free variables. We conjecture that such a decomposition implies the existence of what is called a $\kappa$-restricted nondeterministic Turing machine $\mathbb{M}$ such that the number of accepting paths of $\mathbb{M}$ on input $(H,X)\in \Delta$ and a hypergraph $\mcG$ is precisely $\#\homs{H,X}{\mcG}$ (see e.g.\ ~\cite[Definition 14.15]{flumgrohe}). If additionally $\#\homsprob(\Delta)$ is $\#\Atwo$-equivalent, this would imply that the set of $\#\Atwo$-equivalent problems is a subset of the set of $\#\mathsf{W[P]}$-equivalent problems; consult e.g.\ ~\cite[Chapter~3 and~14.2]{flumgrohe} for a treatment of the class $\#\mathsf{W[P]}$. However, the latter inclusion seems to be unlikely and we refer the interested reader to~\cite[Chapter~8]{flumgrohe} for a detailed treatment of the corresponding question whether $\Atwo \subseteq \mathsf{W[P]}$ in the decision world. We conclude with the remark that even a proof of $\Atwo \subseteq \#\mathsf{W[P]}$ would be a major breakthrough as it constitutes the first step of a parameterized analogue of Toda's theorem~\cite{Toda91}, which is one of the fundamental open problems in (structural) parameterized counting complexity.

\bibliographystyle{plainurl}
\bibliography{databasemotifs}

\end{document}